\newtheorem{theorem}{Theorem}[section]
\newtheorem{lemma}[theorem]{Lemma}
\newtheorem{corollary}[theorem]{Corollary}
\newtheorem{claim}[theorem]{Claim}
\newtheorem{observation}{Observation}
\newtheorem{definition}[theorem]{Definition}
\newtheorem{example}[theorem]{Example}
\newcommand{\todo}[1]{}
\newcommand{\zoya}[1]{}
\newcommand{\manish}[1]{}
\newcommand{\josh}[1]{}
\newcommand{\erik}[1]{}
\newcommand{\eriknew}[1]{}
\newcommand{\cnote}[1]{}
\newcommand{\scratch}{\text{scratch}}
\newcommand{\truemem}{computation graph memory}
\newcommand{\sintree}[0]{starborescence}
\newcommand{\sintrees}[0]{starborescences}
\newcommand{\Sintree}[0]{Starborescence}
\newcommand{\Sintrees}[0]{Starborescences}
\newcommand{\staff}[0]{\varsigma}
\newcommand{\bend}[0]{B}
\newcommand{\bends}[0]{\mathcal{B}}
\newcommand{\guess}[0]{\theta}
\newcommand{\negativesegments}[1][i]{\mathcal{S}_{#1}^-}
\newcommand{\positivesegments}[1][i]{\mathcal{S}_{#1}^+}
\newcommand{\segments}{\mathcal{S}}
\newcommand{\negativesegment}[0]{\text{seg}^-}
\newcommand{\positivesegment}[0]{\text{seg}^+}
\newcommand{\segment}[0]{\text{seg}}
\newcommand{\negativesegmentsstar}[1][i]{\mathcal{S}_{#1}^{\star -}}
\newcommand{\positivesegmentsstar}[1][i]{\mathcal{S}_{#1}^{\star +}}
\newcommand{\outneighbors}[1]{\delta^+(#1)}
\newcommand{\nodesumcost}[1]{\mathrm{mem}_\text{NS}(#1)}
\newcommand{\nodeweightedprofile}[1]{\mathrm{profile}(#1)}
\newcommand{\nodesumprofile}[1]{\mathrm{profile}(#1)}
\newcommand{\profile}[1]{\mathrm{profile}(#1)}
\newcommand{\memprofile}[1]{\mathrm{profile}(#1)}
\newcommand{\internalprofile}[1]{\mathrm{profile}^*(#1)}
\newcommand{\set}{\mathrm{set}}
\newcommand{\ledot}{\mathrel{\ooalign{$\le$\cr
  \hidewidth\raise0.225ex\hbox{$\cdot\mkern0.5mu$}\cr}}}
\newcommand{\cupdot}{\mathbin{\ooalign{$\cup$\cr\hidewidth\raise0.225ex\hbox{$\cdot$}\hidewidth}}}
\newcommand{\capdot}{\mathbin{\ooalign{$\cap$\cr\hidewidth\raise0.05ex\hbox{$\cdot$}\hidewidth}}}
\newcommand{\sz}[1]{\mathit{size({#1})}}
\newcommand{\scr}[1]{\mathit{scratch({#1})}}
\newcommand{\sizefn}{\mathit{size}}
\newcommand{\scratchfn}{\mathit{scratch}}
\newcommand{\ns}[1]{\hat{#1}}
\newcommand{\nsff}[1]{\mathrm{\nu}({#1})}
\newcommand{\leftpeak}[1]{\mathrm{peak}_\leftarrow\left(#1\right)}
\newcommand{\rightpeak}[1]{\mathrm{peak}_\rightarrow\left(#1\right)}
\newcommand{\leftvalley}[1]{\mathrm{valley}_\leftarrow\left(#1\right)}
\newcommand{\rightvalley}[1]{\mathrm{valley}_\rightarrow\left(#1\right)}
\newcommand{\mincut}{\mathrm{mincut}}
\newcommand{\valley}[1]{\mathrm{valley}[#1]}
\newcommand{\peak}[1]{\mathrm{peak}[#1]}
\newcommand{\fancyg}{\mathcal{G}}
\newcommand{\ignore}[1]{}
\newcommand{\abs}[1]{\left| #1 \right|}
\newcommand{\sorted}[1][S]{\text{sorted}(#1)}
\newcommand{\defeq}{:=}
\newcommand{\memduring}[1]{\mathrm{mem_{during}}\left(#1\right)}
\newcommand{\memafter}[1]{\mathrm{mem_{after}}\left(#1\right)}
\newcommand{\startvertex}[1]{#1^{(s)}}
\newcommand{\finishvertex}[1]{#1^{(f)}}
\newcommand{\releasevertex}[1]{#1^{(r)}}
\DeclareMathOperator{\poly}{poly}
\title{New Tools for Peak Memory Scheduling}
\author{Ce Jin, Manish Purohit, Zoya Svitkina, Erik Vee, Joshua R. Wang}
\date{}
\begin{document}

\setcounter{page}{-1} \clearpage
\maketitle
\thispagestyle{empty}
\begin{abstract}

We study scheduling of computation graphs to minimize peak memory consumption, an increasingly critical task due to the surge in popularity of large deep-learning models. This problem corresponds to the weighted version of the classical one-shot black pebbling game.
We propose the notion of a dominant schedule to capture the idea of finding the ``best'' schedule for a subgraph and introduce new tools to compute and utilize dominant schedules. Surprisingly, we show that despite the strong requirements, a dominant schedule exists for any computation graph; and, moreover, that it is possible to compute the dominant schedule efficiently whenever we can find optimal schedules efficiently for a particular class of graphs (under mild technical conditions).

We apply these new tools to analyze trees and series-parallel graphs. We show that the weighted one-shot black pebbling game is strongly NP-complete even when the graph is an out-tree – or simpler still, a pumpkin, one of the simplest series-parallel graphs. On the positive side, we design a fixed-parameter tractable algorithm to find a dominant schedule (hence also a peak memory minimizing schedule) for series-parallel graphs when parameterized by the out-degree. This algorithm runs in time $2^{O(d \log d)} \cdot \poly(n)$ for series-parallel graphs with $n$ nodes and maximum out-degree $d$; for pumpkins, we can improve the dependence on $d$ to $O(2^d \cdot \poly(n))$.

\newpage 
\tableofcontents 

\end{abstract}
\newpage

\section{Introduction}

Scheduling has been a core problem in theoretical computer science for decades. The recent surge in machine learning’s popularity has pushed memory-efficient scheduling to the forefront. Even seemingly minor improvements can have oversized impact and lead to huge resource savings (such as carbon footprint or money) in real-life applications. It is clear that larger machine learning (ML) models have better quality. But many simply will not fit in memory unless properly optimized, particularly when we expect them to run on handheld devices. 

We tackle this challenging area, producing practical algorithms by developing stronger theoretical tools. We study the problem of scheduling directed acyclic graphs (DAGs) to minimize the peak memory of the schedule. This problem can also be viewed as a weighted version of the classical one-shot black pebbling game. Consider a (directed and acyclic) computation graph where nodes of the graph correspond to operations, and an edge from node $u$ to node $v$ indicates that the output produced by operation $u$ is used by operation $v$. This output must remain in memory until $v$ runs. The goal is to find a schedule, i.e.\ a topological ordering of the nodes, that minimizes the maximum memory consumed over the course of the schedule. (In the language of pebbling games, the schedule is our pebbling strategy, i.e.\ the order in which we place pebbles, and we wish to minimize the maximum total weight of pebbles used at any timestep.)

Unfortunately, one-shot black pebbling (even with unit weights) is NP-hard to approximate within any constant factor (assuming the Small Set Expansion conjecture) \cite{wu2014inapproximability}, so finding peak-minimizing schedules is hard. The good news is that the computation graphs that arise from ML training and inference tasks often have simple, recognizable structures that can be exploited to make optimal scheduling more efficient. For example, \cite{kumar2019efficient}  observe that ML computation graphs typically have low treewidth.  Large ML models are often comprised of simpler building blocks; efficient scheduling algorithms can exploit this structure to first find a sub-schedule for each individual building block, and then to extend those to obtain a complete schedule. In this paper, we introduce tools for peak memory scheduling to exploit such strategies.

We use these tools to provide optimal polynomial-time algorithms for series-parallel (SP) graphs with bounded out-degree.\footnote{In practice, computation graphs arising from ML inference tasks are typically \emph{nearly} (in a non-technical sense) series-parallel, with at most a few nodes having large degree; although there are heuristics to deal with this, such issues are beyond the scope of this paper.} Specifically, we give an exact algorithm running in time $2^{O(d \log d)} \cdot \poly(n)$ for SP graphs of out-degree $d$ and $n$ nodes; for the simpler pumpkin graph (defined in \cref{sec:pumpkin}), we can reduce the time to $O(2^d \cdot \poly(n))$. 
As a complementary result, we show that solving SP graphs optimally in general is strongly NP-hard, implying that such an exponential dependence is necessary. Our hardness results hold even for out-trees and pumpkin graphs.

We note that the peak memory scheduling problem that we study can be approximated within a factor of $O(\log^{3/2} n)$ by using the $O(\sqrt{\log n})$ approximation algorithm for directed balanced separator \cite{ACMM05} and a standard recursive cut technique. We omit the details from this paper.

\subsection{Comparison of Models}\label{sec:comparison}
Scheduling is a well-studied area spanning decades of both theoretical and compiler-focused research. As such, there have been many proposed formalizations. In this paper, we focus primarily on the computation graph model (that we define in \cref{subsec:computation_graphs}) and weighted one-shot black-pebbling (defined in \cref{sec:pebbling}). These models are the same in spirit: both use the same underlying graph and produce schedules to minimize peak memory. The key difference is that computation graphs allow the notion of \emph{scratch} memory, which models the fact that additional memory may be needed while computing a given node. Because of this, computation graphs are somewhat more expressive than the classical pebbling formulation. When scratch memory is set to zero, the two models become equivalent.

Due to this additional expressive power, and because of its familiarity for compiler-focused researchers, we state our positive results and algorithms in terms of computation graphs. (Note that since computation graphs generalize pebbling, our results apply to pebbling as well.) On the other hand, we present our lower bounds in terms of the weighted one-shot black pebbling game. Again, since computation graphs slightly generalize the pebbling game we study, our hardness results immediately apply to computation graphs as well.

For purely technical reasons, our main theoretical results are proven in a third model, node-sum, defined in \cref{sec:preliminaries}. This model serves to reduce the difficulty in reasoning about memory use in graphs, but at first glance might seem slightly unintuitive. In the node sum model, nodes are allowed to have positive or negative weights. The negative weights model the reduction in memory that happens when data is released from memory (or a pebble is removed from the graph). The advantage of this is that the memory at any given time $t$ is simply the sum of nodes scheduled up to time $t$, and thus each node only results in one memory usage change, not the two changes that occur in the computation graph model and weighted one-shot black pebbling (one when allocated and one when deallocated). Furthermore, problems in the other two models can be reduced to the node sum model via graph transformations (which involve splitting an original node into an allocation node, a deallocation node, and possibly another node to handle scratch memory).

The disadvantage of the node sum model is that we must change the underlying graph structure in order to reduce a scheduling instance on a computation graph to it. In particular, additional nodes (with negative weight) are introduced to model deallocation of data from memory. This means that SP computation graphs lose the SP property when converted into the equivalent node sum formulation. However, our results on dominance and its use (``linearization'') hold for general DAGs. Thus, these results are presented in the node sum model (to simplify the proofs), but they also apply to the \truemem\ model.

\subsection{Pebbling Games}\label{sec:pebbling}

There has been a long line of theoretical work on pebbling games on DAGs, dating back to the problem's proposal in the 1970s \cite{PH70, sethi1973complete}.  Although these games have seen several variations emerge over the years, one of their primary use cases is as a useful theoretical model for how much space is necessary to perform a computation. The basic setup of a pebbling game is as follows. A computation involving operations and dependencies between them is encoded as a DAG. To play the pebbling game on this graph, we begin in a state where no nodes have pebbles on them and repeatedly perform one of the following moves.
\begin{itemize}
  \item A pebble can only be added to a node if all of its immediate predecessors have pebbles.
  \item A pebble can be removed from a node at any time.
\end{itemize}
The typical goal of this game is to manage to place a pebble on a specified ``output'' node. 
Given a pebbling strategy, also known as a schedule, its cost is the maximum number of pebbles used at any point in time.

With just this ruleset, we arrive at the simplest variant of the problem, now referred to as the ``black'' pebbling game. Since each pebble incurs the same cost, this variant nicely models the task of register allocation, since we think of each register as holding a singular value. For more complex computer architectures and theoretical applications, the game can be made appropriately more complicated as well. For example, ``black-white'' pebbling games introduce white pebbles to model non-deterministic guesses that must be confirmed \cite{cook1974storage}, and ``red-blue'' pebbling games use red pebbles to represent fast memory and blue pebbles to represent slow memory \cite{jia1981complexity}.

We study the weighted version of one-shot black pebbling, where each node has a non-negative weight. The cost at any time step is then the total weight of pebbled nodes, and the objective is to minimize the maximum cost over all time steps.

\subsection{Technical Overview and Our Contributions}

\paragraph{Dominance.}
Our main technical contribution is the framework of dominance. In \cref{sec:dominant-def}, we give the formal definition of dominance and prove the existence of a (most) dominant schedule in general DAGs. This proof is constructive and shows how dominant schedules can be found using an oracle that can compute peak memory minimizing schedules.

Dominance is a careful strengthening over simply comparing the peak memory of two schedules to decide which is better. Informally, one schedule dominates another if it is possible to align the two schedules, by adding delays, in such a way that when executed simultaneously, the memory consumption of the dominating schedule is at most that of the dominated schedule, at all points in time. This not only implies that the dominating schedule has a smaller peak memory cost, but it is a useful notion when trying to design a recursive algorithm.

The main technical hurdle that we overcome is demonstrating that there always exists a dominant schedule (one that dominates all others). To help illustrate why this result is surprising, let us focus our attention on two key measures of a particular schedule's quality: (i) its \emph{peak} memory, i.e.\ the highest amount of memory it ever uses, and (ii) its \emph{valley} memory, i.e.\ the lowest amount of memory it uses at any time after starting the first operation and before finishing the last operation. When one schedule dominates another, it has both lower peak memory \emph{and} lower valley memory.
So the existence of a dominant schedule implies that there is no trade-off between these two measures. In fact, we could continue to recursively define additional peak and valley measures (which is discussed in \Cref{sec:segments}), and the dominant schedule must simultaneously optimize all of these measures.

\begin{restatable*}{theorem}{existsdominant}
\label{thm:existsdominant}
  For any directed acyclic graph $G = (V, E)$ with node weights $w_v \in \mathbb{R}$, there exists a schedule $\sigma$ of $G$ that dominates all schedules of $G$.
\end{restatable*}

As we mentioned, our proof of this result is constructive. Note that since a dominant schedule is also a peak memory minimizing schedule, it must be hard to find since finding the latter is NP-hard. To get around this issue, we show that if we have an oracle that returns a peak memory minimizing schedule, then a linear number of calls to it suffices to compute a dominant schedule.

\paragraph{Linearization.}
The main consequence of our dominance framework is the ability for recursive algorithms to perform linearization, which is the process of simplifying a graph by replacing a (complicated) subgraph with a simple directed path that corresponds to a schedule of that subgraph.
Intuitively, for linearization to work, the schedule that is used for it must have low peaks (in order not to contribute too much to the overall peak), but it also must have low valleys. The reason that low valleys are useful is that they are good places to pause the execution of this path while other portions of the surrounding graph run. As dominant schedules have both good peaks and valleys, they can be used for linearization.
In \cref{sec:linearization}, we describe the technical condition (\emph{isolation}) under which linearization can be applied. We then prove that replacing an isolated subgraph with the directed path induced by its dominant schedule results in a (simpler) graph whose optimal peak memory is identical to the original. 

\paragraph{Series-parallel graphs.}
We demonstrate the power of the dominance framework by showing how to compute dominant, and hence also peak memory minimizing, schedules for SP computation graphs. SP graphs (formally defined in \cref{def:sp-graph}), an important family of graphs, appear often in the context of computation graphs. Common ML operations like sharding or scatter-gather naturally produce SP graphs and subgraphs.

Series-parallel graphs have a useful recursive structure. This allows us to apply linearization repeatedly, yielding an efficient algorithm when the graph has bounded degree.
\begin{restatable*}{theorem}{spmain}
  Let $G=(V, E)$ be a series-parallel graph with $n$ nodes and let $d$ denote the maximum out-degree of any node in $G$. Then there is an algorithm to find the dominant schedule for $G$ (in the computation graph memory model) in time $2^{O(d \log d)}\cdot \poly(n)$.
\end{restatable*}

In \cref{sec:pumpkin}, we present simpler and faster algorithms for a restricted class of SP graphs and for out-trees.

\paragraph{Hardness.} Finally, we show that the exponential dependence on the out-degree $d$ in SP graphs is unavoidable. In \cref{sec:hardness}, we show that finding an optimal strategy for the weighted one-shot black pebbling game on a SP graph is strongly NP-hard, even on the simplest class of SP graphs (called \emph{pumpkins}, formally defined in \cref{def:pumpkin}) and also on out-trees with a slight tweak to the construction. As a trivial corollary, it is also strongly NP-hard to minimize peak memory in the \truemem\ model, even on pumpkin graphs or out-trees.

\begin{restatable*}{theorem}{pumpkinhardness}
\label{thm:pumpkin-hardness}
It is strongly NP-hard to minimize the pebbling cost in the weighted one-shot black pebbling game on pumpkin graphs.
\end{restatable*}

\begin{restatable*}{corollary}{treehardness}
It is strongly NP-hard to minimize the pebbling cost in the weighted one-shot black pebbling game on out-trees.
\end{restatable*}

\subsection{Related work}

Graph pebbling has a rich history \cite{PH70,sethi1973complete,lengauer1981black,kirousis1986searching} in the theoretical computer science literature, which has investigated the problem as well as key variants such as black-white pebbling (capturing nondeterminism) \cite{cook1974storage,hertel2010pspace} and red-blue pebbling (modeling multi-level memory) \cite{jia1981complexity,demaine2018red}.

Even the simplest variants of this problem tend to be difficult to solve optimally or to approximate well. For example, the black pebbling game is known to be PSPACE-hard~\cite{gilbert1979pebbling}. This version of pebbling permits strategies that use a value, throw it out, then recompute it again later; in compiler literature, this is referred to as \emph{rematerialization}. Here, we study the \emph{one-shot} black pebbling game, which does not allow such recomputation. As noted, this problem is NP-hard, as well as NP-hard even to approximate within any constant factor assuming the Small Set Expansion conjecture \cite{wu2014inapproximability}.

Peak memory scheduling is also closely related to the Minimum-Cut Linear Arrangement (MCLA) problem. Given an edge-weighted (directed or undirected) graph $G$, the MCLA problem asks for an ordering of vertices of $G$ so that the maximum total weight of edges crossing any cut induced by the ordering is minimized. The two problems are reducible to each other, but without preserving the graph structure. On general DAGs, the best known approximation algorithm for MCLA yields an $O(\log^{3/2} n)$ approximation~\cite{leighton1999multicommodity,arora2009expander}. 
When the underlying graph is a tree, \cite{liu1987application} provides an optimal algorithm in this edge-weighted model. For SP graphs, \cite{kayaaslan2018scheduling} introduces a generalization of MCLA  that has weights on both nodes and edges and provides an optimal polynomial-time algorithm. In their memory model, the output of an operation is only consumed by one other operation.
Although that work provides elegant insights into such graphs, this simplified memory model isn't directly applicable to pebbling games or real-world applications. The transformation between the two models does not preserve the SP structure, and the results no longer hold.
\section{Model and Preliminaries}
\label{sec:preliminaries}

We first consider scheduling to minimize peak memory consumption in a simple, idealized memory model, referred to as the \emph{node sum memory model}. In this model, each node either allocates or deallocates memory and edges between nodes specify precedence constraints. This model was also studied by Kayaaslan et al.~\cite{kayaaslan2018scheduling} under the name \emph{cumulative weight} model.

Let $G = (V, E, w)$ be a node-weighted DAG where $w_v \in \mathbb{R}$ (possibly negative) denotes the weight of node $v \in V$. Let $n = |V|$ be the number of nodes in $G$. A schedule $\sigma = (\sigma_1, \sigma_2, \ldots, \sigma_n)$ is a topological ordering of $V$, that is, an ordering of the $n$ nodes such that $(\sigma_j, \sigma_i) \notin E$ for all $1 \leq i < j \leq n$.

For integers $i, j$, we use $i:j$ to denote the set $\{i, i+1, \cdots, j\}$ (which is the empty set if $j < i$). For any sequence $\sigma$, we use subscripts to extract single elements or a contiguous subsequence of elements (when combined with $i:j$ notation). In particular, $\sigma_i$ is the $i$th element of $\sigma$ and $\sigma_{i:j}$ is the subsequence $(\sigma_i, \ldots, \sigma_j)$. Again, we allow $i > j$ (for example, $\sigma_{1:0}$), in which case $\sigma_{i:j}$ is the empty sequence.  For any sequence $\sigma$, let $\set(\sigma) := \{\sigma_1,\sigma_{2},\dots,\sigma_{|\sigma|}\}$ denote the set of its elements. Throughout the paper we use such notation for sequences of vertices (schedules), and also for sequences of memory consumption (called {\em profiles}).
  
\begin{definition}[Node Sum Memory Model]
\label{defn:nodesummodel}
  Let $G = (V, E, w)$ be a node-weighted DAG. 
  For any set $S\subseteq V$, we define
    $\nodesumcost{S} := \sum_{u \in S} w_u$.
  Let $\sigma = (\sigma_1, \sigma_2, \ldots, \sigma_n)$ be an arbitrary schedule for $G$, and for any index $1\leq i \leq n$, let $S_i := \set(\sigma_{1:i})$ be the set of the first $i$ nodes in the schedule.
  Then, the memory used by schedule $\sigma$ at time $i$ is defined to be $\nodesumcost{S_i}$, and we define the
  memory profile of schedule $\sigma$ to be the sequence induced by the above for $i$ going from 0 to $n$:
  \begin{align*}
    \nodesumprofile{\sigma} &:= \left(0 = \nodesumcost{S_0}, 
        \nodesumcost{S_1}, 
        \nodesumcost{S_2}, 
        \dots, 
        \nodesumcost{S_n} \right).
  \end{align*}
 The peak memory of schedule $\sigma$ is defined as expected:
$$\peak{\sigma} = \max_{0 \leq i \leq n} \{\nodesumcost{S_i}\}.$$
\end{definition}

We note that profiles always start with zero (since not scheduling any node always costs zero), but in general, a profile can have negative values. The \emph{Peak Memory Scheduling Problem} is to find a schedule $\sigma$ that minimizes the $\peak{\sigma}$.

\subsection{Dominant Schedules}
\label{sec:dominant-def}

The memory profile of a schedule $\sigma$ indicates the amount of memory consumed at each time point when the underlying tasks are executed in the order dictated by $\sigma$. Informally, we would like an ``ideal'' schedule for a graph $G$ to be one that consumes the least possible amount of memory at all points in time, not just at its peak memory. Such an ideal schedule would allow us to realize a recursive scheduling strategy where we locate an induced subgraph $G_1 \subset G$ (suitably isolated from the rest of $G$), find the ideal schedule $\sigma_1$ for $G_1$, and then only consider schedules for $G$ that are consistent with $\sigma_1$ on $G_1$. The following example demonstrates how a peak memory minimizing schedule for $G_1$ is insufficient.

\begin{example}[Peak Memory Minimizing Schedules are Not Ideal]
\label{ex:not-ideal}
  
  \begin{figure}
\centering
\scalebox{0.8}{
\begin{tikzpicture}[%
  auto,
  scale=0.8,
  hollownode/.style={
    circle,
    draw=black,
    inner sep=1pt,
    minimum size=40pt,
  },
  redblock/.style={
    rectangle,
    solid,
    draw=red,
    fill=white,
    text=black,
    align=center,
  },
  ]
  \node[hollownode] (v1) at (0, 0)     {$v_1: 0$};
  \node[hollownode] (v2) at (-2, -2.5) {$v_2: +10$};
  \node[hollownode] (v3) at (-2, -5)   {$v_3: -9$};
  \node[hollownode] (v4) at (-4, -7.5) {$v_4: +4$};
  \node[hollownode] (v5) at (-4, -10)   {$v_5: -3$};
  \node[hollownode] (v6) at (0, -7.5) {$v_6: +4$};
  \node[hollownode] (v7) at (0, -10)   {$v_7: -3$};
  \node[hollownode] (v8) at (3, -2.5)  {$v_8: +9$};
  \node[hollownode] (v9) at (3, -5)    {$v_9: -5$};

  \draw[->] (v1) -- (v2);
  \draw[->] (v2) -- (v3);
  \draw[->] (v3) -- (v4);
  \draw[->] (v3) -- (v6);
  \draw[->] (v4) -- (v5);
  \draw[->] (v6) -- (v7);
  \draw[->] (v1) -- (v8);
  \draw[->] (v8) -- (v9);

  \draw[red, thick, dotted] (-5.3, -1.2) rectangle (1.3, -11.3) node[redblock] {$G_1$};
\end{tikzpicture}
}
\caption{Example graph to illustrate that not all peak memory minimizing schedules are equal. Nodes are labeled with their name $v_i$ and weight $w_{v_i}$. See \cref{ex:not-ideal}.}
\label{fig:tikz-not-ideal}
\end{figure}
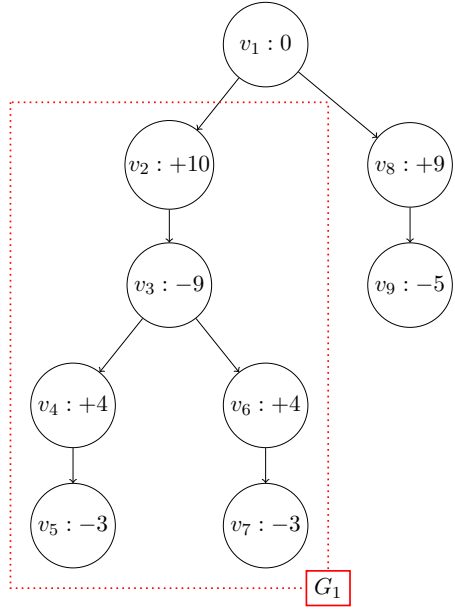
    Consider the graph $G$ shown in Figure \ref{fig:tikz-not-ideal}. 
  In particular, let us consider schedules for the subgraph $G_1$ induced by nodes $\{v_i\}_{i=2}^7$. Note that $\sigma_1 = (v_2, v_3, v_4, v_6, v_5, v_7)$ which has a memory profile of $(0, 10, 1, 5, 9, 6, 3)$ is a peak memory minimizing schedule for $G_1$. However, $\sigma_1^* = (v_2, v_3, v_4, v_5, v_6, v_7)$ with a memory profile of $(0, 10, 1, 5, 2, 6, 3)$ is an even better schedule for $G_1$. To see why, we now zoom back out to the larger graph $G = (V, E)$.
  
  Clearly, any schedule for $G$ must begin with $v_1$, but now $v_8$ and $v_9$ can be interleaved anywhere in the subschedule for $G_1$. 
  Using the better schedule $\sigma_1^*$ for $G_1$, we can extend it into the schedule $\sigma^* = (v_1, v_2, v_3, v_8, v_9, v_4, v_6, v_5, v_7)$ that has a memory profile of $(0, 10, 1, 10, 5, 9, 6, 10, 7)$ and maintains the peak memory of ten. However, there is no way to extend the schedule $\sigma_1$ to maintain the peak memory of 10.
  To see why, observe that $v_8, v_9$ might as well be inserted together, so we can benefit from the $-5$ as soon as possible after incurring the $+9$. However, this pair cannot be inserted before $v_2$, otherwise $v_2$ will create a sum of $14$. It cannot be inserted between $v_3$ and $v_4$ since then (the improperly scheduled) $v_4, v_6$ will create a sum of $13$. Finally, it cannot be inserted at the end, since then $v_8$ will create a sum of $12$.
\end{example}

\Cref{ex:not-ideal} demonstrates a case where it did not suffice just to get the peak memory as small as possible; the rest of the schedule can matter too when extending to the surrounding graph. We will now present a formal definition of one schedule dominating another which is able to guarantee that the dominant schedule will be better when extending to a surrounding graph. Of course, we will also need to impose some conditions on the subgraph being ``isolated'' (formally defined later) from the surrounding graph; otherwise, we could pick any two nodes and deduce which comes first in the optimal schedule. A similar relation was defined by Kayaaslan et al.\ \cite{kayaaslan2018scheduling} and Liu et al.\ \cite{liu1987application} when dealing with memory minimization in the edge-weighted model for series-parallel graphs and trees respectively. In a sense, our definition can be seen as the two-sided generalization (both forward and backward from the minimum memory usage) of their one-sided definition (only forward from the minimum memory usage).

Actually, we present two definitions of dominance and prove them equivalent. Our definitions of dominance will be written in terms of sequences of numbers, but we extend this to two schedules $\sigma$ and $\tau$ by saying $\sigma$ dominates $\tau$ (denoted $\sigma \ledot \tau$) if and only if $\memprofile{\sigma}$ dominates $\memprofile{\tau}$. Finally, we say that a schedule $\sigma$ is \emph{dominant} for a DAG $G$ if and only if $\sigma$ dominates all feasible schedules of $G$.

This first definition makes the connection to extending subschedules clearer. Intuitively, it says that one can maintain two pointers initially at the beginning of each memory profile. At each step, we are allowed to move one of the pointers forward until both pointers arrive at the end of their sequences. Then one memory profile dominates the other if and only if we can maintain at all times, its pointer refers to a value that is at most the value referred to by the other pointer. When applied to extending subschedules, this sequence of pointers inform us how to appropriately replace a dominated schedule with appropriately staggered pieces from the dominant schedule. We sometimes refer to this definition in later proofs as the ``pointer advancement'' argument.

\begin{definition}[Dominance (Pointer Advancement)]
\label{defn:pointerdef}
  Let $A$ and $B$ denote two arbitrary sequences of numbers.
  Sequence $A$ dominates sequence $B$ (denoted $A \ledot B$) if and only if there exists a sequence of pairs of indices $(a_1, b_1), ..., (a_{|A| + |B| - 1}, b_{|A| + |B| - 1})$ such that:
  \begin{itemize}[parsep=0pt,partopsep=0pt]
    \item $(a_1, b_1) = (1, 1)$,
    \item $(a_{|A| + |B| - 1}, b_{|A| + |B| - 1}) = (|A|, |B|)$,
    \item For each $i>1$, either (i) $a_i = a_{i-1} + 1$ and $b_i = b_{i-1}$, or (ii) $a_i = a_{i-1}$ and $b_i = b_{i-1}+1$,
    \item $A[a_i] \le B[b_i] \quad \forall i$.
  \end{itemize}
\end{definition}

The second definition of dominance that we present is more algebraic and hence useful for deducing properties of the dominance relation. Intuitively, it describes a transformation where we take each memory profile and generate a sequence of equal length but with four-dimensional vectors whose coordinates correspond to the maximum/minimum looking forward/backward from that point. Each vector generated from the dominated memory profile must be coordinate-wise dominated by a vector generated from the dominant memory profile.

\begin{definition}[Dominance (Algebraic)]
\label{defn:algdef}
 We say that sequence $A$ \emph{dominates} sequence $B$ (denoted $A\ledot B$) if and only if the following is true.
For all $i \in [|A|]$, there exists a $j \in [|B|]$ such that:
  \begin{itemize}[parsep=0pt,partopsep=0pt]
    \item $\max_{i' \ge i} A[i'] \le \max_{j' \ge j} B[j']$
    \item $\min_{i' \ge i} A[i'] \le \min_{j' \ge j} B[j']$
    \item $\max_{i' \le i} A[i'] \le \max_{j' \le j} B[j']$
    \item $\min_{i' \le i} A[i'] \le \min_{j' \le j} B[j']$ 
  \end{itemize}
\end{definition}

As alluded to by their names, the two definitions are equivalent, but we will defer the proof of \cref{defn:cost-seq-superior} to \cref{sec:missing-proofs}.

\begin{restatable}{lemma}{pointerlemma}
  \label{defn:cost-seq-superior}
  Definitions \ref{defn:pointerdef} and \ref{defn:algdef} are equivalent.
\end{restatable}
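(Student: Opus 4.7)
The plan is to prove both implications; the forward direction (pointer $\Rightarrow$ algebraic) is routine, while the reverse is the main obstacle. For the forward direction, given a pointer path $(a_k, b_k)_{k=1}^{|A|+|B|-1}$, for each $i \in [|A|]$ I would define $j := b_k$ at any step $k$ with $a_k = i$. The four algebraic conditions then follow from monotonicity of the path: to show $\max_{i' \ge i} A[i'] \le \max_{j' \ge j} B[j']$, note that every $i' \ge i$ is the $a$-coordinate of some step $k' \ge k$, hence $b_{k'} \ge j$, giving $A[i'] \le B[b_{k'}] \le \max_{j' \ge j} B[j']$. The min-suffix condition is dual (each $j' \ge j$ is the $b$-coordinate of some step $k'$ with $a_{k'} \ge i$), and the two prefix conditions are obtained analogously.

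For the reverse direction, I would proceed by induction on $|A| + |B|$ with trivial base case $|A| = |B| = 1$. The key inductive step rests on the following dichotomy: if $|A|, |B| \ge 2$ and $A \ledot B$ algebraically, then $A[2] \le B[1]$ or $A[1] \le B[2]$. To prove this, first apply the algebraic condition at $i = 1$: the prefix-min condition $A[1] \le \min_{j' \le j_1} B[j']$ forces $A[1] \le B[1]$, and if $j_1 \ge 2$ we are already done. So assume $j_1 = 1$, and suppose for contradiction that $A[2] > B[1]$ and $A[1] > B[2]$, which imply $B[2] < A[1] \le B[1]$. Apply the algebraic condition at $i = 2$ with some valid $j_2$. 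If $j_2 = 1$, the prefix-max condition demands $\max(A[1], A[2]) \le B[1]$, contradicting $A[2] > B[1]$. If $j_2 \ge 2$, the prefix-min condition gives $\min(A[1], A[2]) \le \min_{j' \le j_2} B[j'] \le B[2]$; since $A[1] > B[2]$, this forces $A[2] \le B[2] < B[1] < A[2]$, a contradiction.

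Given the dichotomy, the inductive step peels off either $A[1]$ (if $A[2] \le B[1]$) or $B[1]$ (if $A[1] \le B[2]$), applies induction to the shorter instance, and prepends the step $(1,1) \to (2,1)$ or $(1,1) \to (1,2)$ respectively (feasible since $A[1] \le B[1]$). The main remaining obstacle is showing that algebraic dominance is preserved by truncation: dropping $A[1]$ can raise the prefix-min of $A$, and so the original valid $j$ at index $i$ may fail the condition $\min_{i' \le i} A[i'] \le \min_{j' \le j} B[j']$ once restricted to $A[2{:}i]$. I would address this index-by-index: when the original $j$ still works (in particular when $A[1]$ is not the unique minimum of $A[1{:}i]$), keep it; otherwise, the feasibility $A[2] \le B[1]$ supplies $j = 1$ as a fallback for the prefix-min condition, and I would verify the other three conditions at $j = 1$ via a case analysis that leverages the algebraic conditions already guaranteed at $i = 1$ in the original instance (which, because $j_1 = 1$, yield $\max_{i'} A[i'] \le \max B[\cdot]$ and $\min_{i'} A[i'] \le \min B[\cdot]$, etc., restricted appropriately). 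This fallback analysis is where the bulk of the remaining technical work lies.
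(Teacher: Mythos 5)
Your forward direction (pointer $\Rightarrow$ algebraic) matches the paper's argument essentially step for step. The reverse direction is where you diverge from the paper, and it contains a genuine gap in the inductive step. The dichotomy you prove ($A[2]\le B[1]$ or $A[1]\le B[2]$) is correct, but the peeling rule built on it --- ``if $A[2]\le B[1]$, drop $A[1]$ and recurse'' --- is unsound, because the local comparison does not guarantee that algebraic dominance survives the truncation. Concretely, take $A=(0,5,5)$ and $B=(5,0,5,5)$. Then $A\ledot B$ (the pointer path $(1,1),(1,2),(1,3),(2,3),(3,3),(3,4)$ is valid), and $A[2]=5\le B[1]=5$, but $A'=(5,5)$ does \emph{not} dominate $B$: at $i=1$ the suffix-min condition $\min_{i'\ge 1}A'[i']=5\le\min_{j'\ge j}B[j']$ forces $j\ge 3$, while the prefix-min condition then demands $5\le\min_{j'\le j}B[j']=0$, which fails. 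Your proposed fallback $j=1$ also fails here, precisely on the suffix-min condition ($5\not\le\min_{j'\ge 1}B[j']=0$), so the deferred ``bulk of the remaining technical work'' cannot be completed as described. (In this example the other peel, dropping $B[1]$, happens to be safe --- but nothing in your local test distinguishes the two.)

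The underlying problem is that the choice of which pointer to advance cannot be made from $A[1],A[2],B[1],B[2]$ alone; it depends on where the remaining minima and maxima of both sequences lie. This is exactly what the paper's construction addresses: it first routes both pointers through the leftmost global minimum of $A$ and the leftmost global maximum of $B$ (stages that are trivially safe because $\min A\le\min B$ and $\max B\ge\max A$), and in the remaining stages it maintains the invariant that $\min_{i'\ge i}A[i']=A[i]\le\min_{j'\ge j}B[j']$ and $\max_{j'\ge j}B[j']=B[j]\ge\max_{i'\ge i}A[i']$, deciding which pointer to move by comparing the suffix-maximum of $A$ against the next attainable maximum of $B$. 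If you want to keep your inductive framing, you would need to replace the local dichotomy with a lookahead test of this kind and strengthen the induction hypothesis accordingly; as written, the induction does not go through.
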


Dominance is a very strong property, and it can easily be observed from either definition that if $\sigma \ledot \tau$, then $\sigma$ enjoys weakly lower peak memory consumption than $\tau$ (however the converse is not true, as \cref{ex:not-ideal} shows). In particular, it is not a priori clear that a dominant schedule always exists for any graph. One of our main technical contributions is to demonstrate the existence of a dominant schedule for any graph.
The following statements are easy consequences of Definitions \ref{defn:pointerdef} and \ref{defn:algdef} respectively.

\begin{lemma}[Concatenation preserves dominance]
\label{lem:concatendominance}
If $A_1 \ledot B_1$ and $A_2 \ledot B_2$, then $A_1 \circ A_2 \ledot B_1 \circ B_2$ where $\circ$ denotes concatenation.
\end{lemma}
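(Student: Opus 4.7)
My plan is to use the pointer advancement definition (\cref{defn:pointerdef}), which is the natural tool for reasoning about concatenations since pointer paths themselves can be concatenated. Given pointer sequences witnessing $A_1 \ledot B_1$ and $A_2 \ledot B_2$, I will construct a single pointer sequence witnessing $A_1 \circ A_2 \ledot B_1 \circ B_2$ by running through the first pointer sequence, inserting a single transition pointer, and then running through a shifted copy of the second pointer sequence.

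More concretely, let $P^{(1)} = ((a^1_k, b^1_k))_{k=1}^{|A_1|+|B_1|-1}$ witness $A_1 \ledot B_1$, ending at $(|A_1|, |B_1|)$, and let $P^{(2)} = ((a^2_k, b^2_k))_{k=1}^{|A_2|+|B_2|-1}$ witness $A_2 \ledot B_2$, starting at $(1,1)$. The shifted sequence $((a^2_k + |A_1|,\, b^2_k + |B_1|))_k$ starts at $(|A_1|+1, |B_1|+1)$, so naively gluing $P^{(1)}$ to this shift creates a forbidden two-coordinate jump from $(|A_1|, |B_1|)$ to $(|A_1|+1, |B_1|+1)$. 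The fix is to insert exactly one intermediate pointer, either $(|A_1|+1, |B_1|)$ or $(|A_1|, |B_1|+1)$, which brings the total length to $(|A_1|+|B_1|-1) + 1 + (|A_2|+|B_2|-1) = |A_1|+|A_2|+|B_1|+|B_2|-1$, exactly as required by \cref{defn:pointerdef}.

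The main obstacle is showing that at least one of these two candidate transitions is actually valid under the dominance inequality. The candidate $(|A_1|+1, |B_1|)$ requires $A_2[1] \le B_1[|B_1|]$, while $(|A_1|, |B_1|+1)$ requires $A_1[|A_1|] \le B_2[1]$. Suppose for contradiction that both fail. Combined with the endpoint inequalities $A_1[|A_1|] \le B_1[|B_1|]$ (the last pair of $P^{(1)}$) and $A_2[1] \le B_2[1]$ (the first pair of $P^{(2)}$), we obtain the cyclic chain
\[
 B_1[|B_1|] \;\ge\; A_1[|A_1|] \;>\; B_2[1] \;\ge\; A_2[1] \;>\; B_1[|B_1|],
\]
which is impossible. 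Hence at least one transition works, and the spliced pointer sequence satisfies the comparison $A[a_i]\le B[b_i]$ at every index by construction (from $P^{(1)}$ on the prefix, from the case analysis at the transition, and from $P^{(2)}$ after the index shift), witnessing $A_1\circ A_2 \ledot B_1 \circ B_2$.
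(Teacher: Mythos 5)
Your proof is correct and follows exactly the route the paper intends: the paper states this lemma as an easy consequence of the pointer-advancement definition (\cref{defn:pointerdef}) without writing out the details, and your argument is the standard gluing of the two witnessing pointer sequences. You also correctly handle the one subtle point --- that the naive concatenation produces a forbidden diagonal step, and that the cyclic-inequality argument guarantees at least one of the two intermediate pointers is valid.
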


\begin{lemma}[Dominance is reflexive and transitive]
\label{lem:superiority-reflexive-transitive}
  Let $A,B,C$ be sequences. Then $A\ledot A$. If $A\ledot B$ and $B\ledot C$, then $A\ledot C$.
\end{lemma}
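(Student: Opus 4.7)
The plan is to prove both claims using the algebraic formulation (\cref{defn:algdef}), which is tailor-made for this kind of composition. Using the pointer-advancement formulation (\cref{defn:pointerdef}) would force one to interleave and re-synchronize three different pointer walks (for $A$, $B$, and $C$), which is notationally awkward; the four independent max/min inequalities in \cref{defn:algdef} chain through trivially.

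For reflexivity $A \ledot A$, for each $i \in [|A|]$ I take the witness $j = i$. Since the two sequences and the two index ranges in each of the four conditions are identical, every inequality collapses to $x \le x$ and holds trivially.

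For transitivity, suppose $A \ledot B$ and $B \ledot C$. Fix any $i \in [|A|]$. Applying $A \ledot B$ at index $i$ produces some index $j \in [|B|]$ satisfying the four inequalities of \cref{defn:algdef}; applying $B \ledot C$ at that index $j$ produces some $k \in [|C|]$ satisfying the four analogous inequalities between $B$ at $j$ and $C$ at $k$. I then use transitivity of $\le$ separately on each of the four comparisons, e.g.
\[
  \max_{i' \ge i} A[i'] \;\le\; \max_{j' \ge j} B[j'] \;\le\; \max_{k' \ge k} C[k'],
\]
and similarly for the other three (swap $\max$ for $\min$, or flip $\ge$ to $\le$). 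This yields an index $k \in [|C|]$ that is a valid witness for $A \ledot C$ at $i$. Since $i$ was arbitrary, $A \ledot C$.

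The ``main obstacle'' is essentially non-existent: the algebraic definition packages exactly the four quantities (left/right max/min) that are each monotone under composition, so the proof is a two-line chain of inequalities for each of the four conditions. The only modest care required is to note that the witness index $j$ chosen for $A \ledot B$ at $i$ may not be the same index one would have chosen to analyze $B \ledot C$ in isolation — but this does not matter, because \cref{defn:algdef} guarantees a valid $C$-witness for \emph{every} index of $B$, in particular for the specific $j$ produced in the first step.
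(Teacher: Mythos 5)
Your proof is correct and matches the paper's intent exactly: the paper states this lemma as an easy consequence of the algebraic definition (\cref{defn:algdef}), and your argument — taking $j=i$ for reflexivity and chaining the existential witnesses through the four monotone inequalities for transitivity — is precisely that consequence spelled out.
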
 
\section{Every Graph has a Dominant Schedule}
\label{sec:dominantexists}
In this section, we show the following theorem, which is one of our main technical contributions. It is proved in subsections \ref{subsec:dominance-prelim} -- \ref{subsec:remove-nonneg}. Our proof is constructive, i.e., we show how to construct a dominant schedule by making at most $O(n)$ calls to an oracle that returns a peak memory minimizing schedule. In subsection \ref{subsec:reduction}, we formalize a notion of \emph{hereditary} graphs to show that if one can find a peak memory minimizing schedule on these graphs, then one can also find a dominant schedule efficiently.

\existsdominant

\subsection{Preliminaries}
\label{subsec:dominance-prelim}
We give a few definitions, introduce some additional notation and state a couple of simple observations that will be used throughout the proof.

\begin{definition}[Topological Cut]
\label{defn:topcut}
  Given a DAG $G = (V, E)$, we say that $(S, T)$ is a {\em topological cut} if $S, T$ partitions the nodes of $V$ so that no edge goes from $T$ to $S$, i.e. there is no $(u,v) \in E$ such that $u\in T$ and $v\in S$. 
  Because $T = V \setminus S$ is implied by $S$, we will often refer to a topological cut $(S,T)$ simply using $S$, its ``lefthand'' side.
\end{definition}

  A schedule $\sigma$ of a DAG induces a sequence of topological cuts, $\set(\sigma_{1:0})$, $\set(\sigma_{1:1})$, \ldots, $\set(\sigma_{1:n})$.
        First we have the following simple observation.
\begin{lemma}
    \label{lemma:capcuptopo}
If $X, Y$ are topological cuts of a DAG $G$, then $X \cap  Y$ and $X \cup Y$ are both topological cuts of $G$.
\end{lemma}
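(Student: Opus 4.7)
The plan is to verify the defining property of a topological cut directly for $X \cap Y$ and $X \cup Y$. As a first step, I would restate the condition contrapositively: a set $S \subseteq V$ is the lefthand side of a topological cut if and only if it is closed under taking predecessors, i.e.\ whenever $v \in S$ and $(u,v) \in E$, we also have $u \in S$. (Otherwise, $u$ would lie in $T = V \setminus S$ and $(u,v)$ would be an edge from $T$ to $S$.) This is the only nontrivial bit of bookkeeping in the proof; the rest is a routine check, so I will use this predecessor-closure formulation throughout.

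For the intersection $X \cap Y$, I would take an arbitrary edge $(u,v) \in E$ with $v \in X \cap Y$ and apply predecessor-closure to $X$ and to $Y$ separately. Since $v \in X$, closure of $X$ gives $u \in X$; since $v \in Y$, closure of $Y$ gives $u \in Y$. Thus $u \in X \cap Y$, confirming that $X \cap Y$ is predecessor-closed and hence a topological cut.

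For the union $X \cup Y$, I would take an arbitrary edge $(u,v) \in E$ with $v \in X \cup Y$. Without loss of generality $v \in X$ (otherwise swap the roles of $X$ and $Y$). Predecessor-closure of $X$ then gives $u \in X \subseteq X \cup Y$, so $X \cup Y$ is also predecessor-closed and hence a topological cut.

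The main ``obstacle'' is really just a sanity check: making sure one consistently uses the correct direction of the topological-cut condition (predecessor-closure of the lefthand side $S$, not successor-closure), which is easy to conflate. Once that direction is fixed, both cases fall out in one line each from the definition, and no additional machinery from the paper is needed.
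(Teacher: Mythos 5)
Your proof is correct and matches the paper's argument essentially verbatim: both arguments verify predecessor-closure of $X \cap Y$ and $X \cup Y$ by noting that every in-neighbor of a node in $X\cap Y$ lies in both $X$ and $Y$, and every in-neighbor of a node in $X\cup Y$ lies in at least one of them. No differences worth noting.
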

\begin{proof}
Let $v$ be an arbitrary node in $X \cap Y$. Let $\delta^-(v)$ be the set of in-neighbors of $v$. Since $X$ and $Y$ are both topological cuts, we have $\delta^-(v) \subseteq X \cap Y$, and hence $X \cap Y$ is also a topological cut. Similarly, let $u$ be an arbitrary node in $X \cup Y$. Then we have $\delta^-(u) \subseteq X$ or $\delta^-(u) \subseteq Y$. Consequently, $\delta^-(u) \subseteq X \cup Y$ and hence $X \cup Y$ is also a topological cut.
\end{proof}

Our proof works with partial schedules, which are simply prefixes of schedules.
\begin{definition}[Partial schedules]
   In a DAG $G=(V,E)$, for a sequence $\sigma$ 
   of distinct nodes $\sigma_1, \sigma_2, \dots, \sigma_k$, we say the sets $S_i = \{\sigma_1, \sigma_2, \dots, \sigma_i\}$ (where $0\le i\le k$) are \emph{induced by}  the sequence $\sigma$.
   We say the sequence $\sigma$ is 
   a \emph{partial schedule} (or sometimes, a valid partial schedule, for emphasis), if for every $0\le i\le k$, the induced set $S_i$ is a topological cut of $G$. 
\end{definition}

We need the following operations on partial schedules. 
Although we use symbols $\cupdot$ and $\capdot$ in analogy to union and intersection for topological cuts, we remark these operations are not commutative, i.e., it may hold that $\sigma \capdot \tau \neq \tau\capdot \sigma$, and likewise, $\sigma \cupdot \tau \neq \tau \cupdot \sigma$. 
\begin{definition}[$\sigma \protect\cupdot \tau$ and $\sigma\protect\capdot \tau$]
Let $\sigma= (\sigma_1, \sigma_2, \dots, \sigma_p)$ be a sequence of distinct nodes.
\begin{itemize}
    \item Given another sequence $\tau = (\tau_1,\tau_2,\dots,\tau_q)$ of distinct nodes, remove from $\tau$ all nodes appearing in $\sigma$, and denote the subsequence of the remaining nodes in $\tau$ (with order unchanged) by $\tau'$. Then, $\sigma\cupdot \tau$ is defined as the concatenation $\sigma \circ \tau'$.
    
    Note that $\sigma\cupdot \tau$ contains distinct nodes.
    \item Given a set $T = \{\tau_1,\tau_2,\dots,\tau_q\}$, remove from $\sigma$ all nodes not appearing in $T$, and the subsequence of the remaining nodes in $\sigma$ (with order unchanged) is denoted by $\sigma\capdot T$.
    
    We extend this definition to sequence $\tau = (\tau_1,\tau_2,\dots,\tau_q)$, and write $\sigma \capdot \tau$ as a shorthand for $\sigma \capdot \set(\tau)$.
\end{itemize}
\end{definition}

\begin{lemma} Let $\sigma$ be a partial schedule of a DAG $G$.
\label{lem:isvalidpartialschedule}
\begin{itemize}
    \item Let $T$ be a topological cut of $G$. Then $\sigma \capdot T$ is a valid partial schedule of $G$.
    \item Let $\tau$ be another partial schedule of $G$. Then $\sigma \cupdot \tau$ is a valid partial schedule of $G$.
\end{itemize}
    \end{lemma}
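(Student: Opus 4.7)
The plan is to reduce both parts of the lemma to \cref{lemma:capcuptopo} by expressing every prefix of $\sigma \capdot T$ (respectively, $\sigma \cupdot \tau$) as a suitable intersection (respectively, union) of topological cuts induced by $\sigma$ and $\tau$.

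For the first part, write $\sigma \capdot T = (\pi_1,\pi_2,\dots,\pi_k)$ and, for each $j$, let $i_j$ be the index with $\pi_j = \sigma_{i_j}$. I claim $\{\pi_1,\dots,\pi_j\} = \set(\sigma_{1:i_j}) \cap T$. The inclusion $\subseteq$ is immediate since each $\pi_\ell$ ($\ell\le j$) lies in $T$ and appears at position $i_\ell \le i_j$ in $\sigma$. For $\supseteq$, any element of $\set(\sigma_{1:i_j})\cap T$ is a node of $\sigma$ at position $\le i_j$ that also lies in $T$, and by construction of $\sigma\capdot T$ such a node is one of $\pi_1,\dots,\pi_j$. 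Since $\sigma$ is a partial schedule, $\set(\sigma_{1:i_j})$ is a topological cut, and $T$ is a topological cut by hypothesis, so \cref{lemma:capcuptopo} gives that their intersection is a topological cut. Hence every prefix of $\sigma\capdot T$ induces a topological cut, which is exactly the definition of a valid partial schedule.

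For the second part, let $\tau' = (\tau'_1,\dots,\tau'_m)$ be the subsequence of $\tau$ obtained by deleting all nodes in $\sigma$, so that $\sigma \cupdot \tau = \sigma \circ \tau'$. A prefix of $\sigma\cupdot \tau$ is either a prefix of $\sigma$ (which induces a topological cut because $\sigma$ is a partial schedule), or has the form $\set(\sigma) \cup \{\tau'_1,\dots,\tau'_j\}$ for some $1\le j \le m$. In the latter case, if $k$ is the index of $\tau'_j$ in the original sequence $\tau$, then by construction $\{\tau'_1,\dots,\tau'_j\} = \set(\tau_{1:k}) \setminus \set(\sigma)$, so
\[
\set(\sigma) \cup \{\tau'_1,\dots,\tau'_j\} \;=\; \set(\sigma) \cup \set(\tau_{1:k}).
\]
Both $\set(\sigma)$ and $\set(\tau_{1:k})$ are topological cuts since $\sigma$ and $\tau$ are partial schedules, so by \cref{lemma:capcuptopo} their union is a topological cut. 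Therefore every prefix of $\sigma \cupdot \tau$ induces a topological cut. The distinctness of the nodes in $\sigma\cupdot \tau$ is built into the definition (nodes of $\tau$ already in $\sigma$ are discarded), so the sequence is a valid partial schedule.

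There is no real obstacle here: the entire content is the identification of each prefix with an intersection or union of topological cuts induced by $\sigma$ and (either $T$ or) $\tau$, after which \cref{lemma:capcuptopo} does the work. The only point requiring care is getting the index bookkeeping right in the second part, specifically identifying the correct $k$ so that $\{\tau'_1,\dots,\tau'_j\} = \set(\tau_{1:k}) \setminus \set(\sigma)$; the argument above handles this directly.
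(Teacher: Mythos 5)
Your proof is correct and follows essentially the same route as the paper: identify each prefix of $\sigma\capdot T$ (resp.\ $\sigma\cupdot\tau$) with an intersection (resp.\ union) of topological cuts induced by the given partial schedules and apply \cref{lemma:capcuptopo}. The paper states this identification more tersely (``possibly with repetition''), while you spell out the index bookkeeping explicitly; the substance is the same.
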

    \begin{proof}
        Let $\sigma = (\sigma_1, \sigma_2, \dots, \sigma_p)$, and let the induced sets be  $S_i = \{\sigma_1, \sigma_2, \dots, \sigma_i\}$ for each $i$.
        
        Notice that the sets induced by the sequence $\sigma \capdot T$ correspond precisely to (possibly with repetition) $S_1 \cap T, S_2 \cap T, \dots, S_p \cap T$, which are all topological cuts, by \cref{lemma:capcuptopo}.
         So $\sigma \capdot T$ is a valid partial schedule.

For the second statement, let  $\tau = (\tau_1, \tau_2, \dots, \tau_q)$,  and $T_i = \{\tau_1, \tau_2, \dots, \tau_i\}$.
        Likewise, consider the sets induced by $S \cupdot T$. These are $S_1, S_2, \dots, S_p$, followed by $S \cup T_1, S \cup T_2,  \dots, S \cup T_q$. Again, by \cref{lemma:capcuptopo}, these are all topological cuts.
         So $\sigma \cupdot \tau$ is a valid partial schedule.
    \end{proof}

    \begin{definition}[Subschedules]
        For two partial schedules $\sigma,\tau$ in a DAG $G$, we say  $\sigma$ is a \emph{subschedule} of $\tau$, if 
        every node in $\sigma$ appears in $\tau$, and the ordering of those nodes agrees with their ordering in $\tau$.
        In that case, $\sigma=\sigma\capdot \tau = \tau \capdot \sigma$. 
    \end{definition}

We note that the definition of dominance (\cref{defn:algdef}) naturally carries over to partial schedules as well. For two partial schedules $\sigma,\tau$ in a DAG $G$,  we write $\sigma \ledot \tau$ if $\sigma$ dominates $\tau$.

\subsection{Constructing a Dominant Schedule}

In this section, we show that a dominant schedule always exists, in a constructive way: we give an algorithm that constructs a schedule for a DAG $G$, and prove that this schedule indeed dominates all possible schedules of $G$.
However, this algorithm does not necessarily have a polynomial time implementation: one key step of the algorithm asks for a   schedule that minimizes the peak (but is not necessarily dominant), which we know must exist but do not know how to compute efficiently.
For some special graph classes, there exists an efficient implementation; we will discuss this matter further in \cref{subsec:reduction}.

For now, we assume that all valid topological cuts of the graph $G$ have nonnegative total weight. This assumption will be relaxed later in \cref{subsec:remove-nonneg}, using an argument similar to \cite{kayaaslan2018scheduling}. We refer to this assumption as \emph{cut-nonnegativity}.

\begin{definition}[Cut-nonnegativity]
We say a DAG $G=(V,E)$ with node weights $w_v$ is \emph{cut-nonnegative} if every topological cut $S\subseteq V$ has nonnegative cost, $\nodesumcost{S}\ge 0$.
\end{definition}

Our procedure $\textsc{FindSchedule}(G)$ that finds a dominant schedule for a cut-nonnegative DAG $G$ is given in \cref{alg:awesomizer}.
\begin{algorithm}
\DontPrintSemicolon
\caption{Constructing a dominant schedule for cut-nonnegative DAGs in the node sum model}
\label{alg:awesomizer}
$\textsc{FindSchedule}(G)$: \tcp*[f]{$G=(V,E)$ is a cut-nonnegative DAG} \\
\Begin{
\lIf{$w_v=0$ for all $v\in V$}{\Return{an arbitrary schedule} \label{line:trivial}\tcp*[f]{trivial degenerate case}}
$\alpha :=$ a schedule of $G$ with minimum $\peak{\alpha}$ \label{line:defnalpha} \tcp*[r]{a peak-minimizing schedule}
$\hat c:= \peak{\alpha}$\label{line:definecost}\tcp*[r]{peak cost}
$h :=$ min $h\in \{0,1,\dots,|V|\}$ such that $\nodesumcost{A_h} = \hat c $\label{line:defn-i}\tcp*[r]{the leftmost peak of $\alpha$}
$L := $ a topological cut of $G$ containing $A_h$ with minimum $\nodesumcost{L}$ \label{line:defn-mincut-L}\\
$\lambda := \alpha \capdot  L$ \label{line:lambda}\tcp*[r]{arrange the nodes of $L$ in the same order as $\alpha$} 
\Return{$\sigma:= \lambda \circ \textsc{FindSchedule}(G[V\setminus L])$} \label{line:recursion}\tcp*[r]{$\circ$ denotes concatenation}
}
\end{algorithm}

Our goal is to prove the following main lemma about the correctness of  $\textsc{FindSchedule}(G)$.
\begin{lemma}
Let $G$ be a cut-nonnegative DAG in the node sum model.
Then, $\textsc{FindSchedule}(G)$ (\cref{alg:awesomizer}) returns a schedule $\sigma$ of $G$ that dominates all schedules of $G$.
\label{lem:awesomizer-correct}
\end{lemma}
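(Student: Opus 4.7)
The plan is to prove the lemma by strong induction on $|V|$. The base case (line~\ref{line:trivial}) is immediate since every profile is identically zero. For the inductive step, I would split the argument into structural preparation followed by an explicit pointer-advancement construction verifying $\sigma = \lambda \circ \sigma' \ledot \tau$ for an arbitrary schedule $\tau$ of $G$.

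For the structural preparation, I would establish three facts. First, the profile of $\lambda$ lies in $[0,\hat c]$, reaches $\hat c$ at position $h$, and ends at $\nodesumcost{L}$. Each prefix set of $\lambda$ equals $A_{h'} \cap L$ for some $h'$ (where $A_{h'} = \set(\alpha_{1:h'})$); this is a topological cut by \cref{lemma:capcuptopo}, yielding the lower bound via cut-nonnegativity. The upper bound $\hat c$ follows from the inclusion-exclusion identity
\[
\nodesumcost{A_{h'} \cap L} + \nodesumcost{A_{h'} \cup L} = \nodesumcost{A_{h'}} + \nodesumcost{L},
\]
together with $\nodesumcost{A_{h'}} \le \hat c$ (peak of $\alpha$) and $\nodesumcost{A_{h'} \cup L} \ge \nodesumcost{L}$ (minimality of $L$, since $A_{h'} \cup L \supseteq A_h$ whenever $h' \ge h$). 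Second, I would verify that $G[V\setminus L]$ is cut-nonnegative: any topological cut $S$ of it extends to the topological cut $L \cup S \supseteq A_h$ of $G$, whose weight is $\ge \nodesumcost{L}$ by minimality, forcing $\nodesumcost{S} \ge 0$. This legitimizes the recursion, so by induction $\sigma'$ dominates all schedules of $G[V\setminus L]$. Third, applying the same inclusion-exclusion trick to $\alpha \capdot (V\setminus L)$ (whose $V\setminus L$-nodes all appear at positions $\ge h+1$ in $\alpha$) produces a schedule of $G[V\setminus L]$ with peak $\le \hat c - \nodesumcost{L}$; dominance of $\sigma'$ then gives $\peak{\sigma'} \le \hat c - \nodesumcost{L}$ in $G[V\setminus L]$, and hence $\peak{\sigma} \le \hat c$ overall.

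To show $\sigma \ledot \tau$, I would use the pointer-advancement definition (\cref{defn:pointerdef}). Setting $\tau_{\overline{L}} := \tau \capdot (V\setminus L)$ (a schedule of $G[V\setminus L]$ by \cref{lem:isvalidpartialschedule}), induction provides a pointer-advancement witness $W'$ certifying $\sigma' \ledot \tau_{\overline{L}}$. The witness $W$ for $\sigma \ledot \tau$ would be constructed in two interlocking portions: a ``$\lambda$-portion'' that advances $\sigma$'s pointer through $\lambda$ while $\tau$'s pointer sweeps $\tau$ (leveraging $\peak{\tau} \ge \hat c$ by peak-minimality of $\alpha$ to ``cover'' $\lambda$'s peak at position $h$, and cut-nonnegativity to cover the lower $\lambda$-values), and a ``$\sigma'$-portion'' that reuses $W'$ offset by $\nodesumcost{L}$ (to account for $\sigma$'s profile being shifted up by $\nodesumcost{L}$ after $\lambda$ completes, which is matched by the weight of $L$-nodes $\tau$ has necessarily processed by the corresponding transition point).

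The main obstacle is the arbitrary interleaving of $L$- and $(V\setminus L)$-nodes in $\tau$, whereas $\sigma$ performs all of $L$ before any of $V\setminus L$. Constructing $W$ requires carefully bookkeeping an ``$L$-debt'' measuring how far $\tau$ has progressed into $L$ relative to where $\sigma$'s $\lambda$-pointer sits, pausing and advancing the two pointers so that the dominance inequality $\sigma[a_k] \le \tau[b_k]$ is preserved at every step. The preparatory facts are tailored to make this possible: the peak $\hat c$ of $\lambda$ is matched by $\peak{\tau}$, all partial profiles on both sides are nonnegative, and the valley $\nodesumcost{L}$ at the $\lambda/\sigma'$ boundary equals the minimum weight of any topological cut containing $A_h$, which guarantees $\tau$'s profile at the corresponding transition position is at least $\nodesumcost{L}$; the remainder of $W$ then flows from the inductive witness $W'$.
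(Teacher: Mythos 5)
Your structural preparation is sound and matches auxiliary facts the paper establishes (cut-nonnegativity of $G[V\setminus L]$, and $\max\nodesumcost{\sigma}=\hat c$ via the inclusion--exclusion bound; cf.\ \cref{lem:peaknotchanged} and the claim inside its proof). The gap is in the main step. The paper does \emph{not} prove $\sigma\ledot\tau$ by directly exhibiting a pointer witness against the original $\tau$; it argues by contradiction, taking a non-dominated $\tau$ with the longest common prefix with $\sigma$ and \emph{rewriting} it into a schedule $\tau'$ with $\tau'\ledot\tau$ and a strictly longer common prefix (\cref{lem:extendtopeak} and \cref{lem:extendtovalley}), until all of $\sigma_{1:|L|}$ is a common prefix and the induction on $G[V\setminus L]$ applies. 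The rewriting is done by the Prefix Lemma (\cref{lem:prefix-lemma}): a block-migration procedure that moves groups of $L$-nodes of nonpositive total weight earlier in $\tau$ and proves this only improves the profile, in two separate stages (first extending the common prefix to the peak $A_h$, then to the valley $L$). That machinery is exactly what your ``$L$-debt bookkeeping'' would have to supply, and it is not a routine bookkeeping step.

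Concretely, the place your witness breaks is the ``$\sigma'$-portion \ldots reuses $W'$ offset by $\nodesumcost{L}$.'' Two problems. First, the $(V\setminus L)$-nodes of $\tau$ may all occur \emph{before} your transition point (the position where $\tau$ first attains value $\ge\hat c$) --- e.g.\ $\tau=(c,d,a,b)$ with $L=\{a,b\}$ in a graph with independent components $a\to b$ and $c\to d$ --- so there is nothing left of $W'$ to replay after the transition. Second, the offset between $\profile{\tau}$ at position $j$ and the profile of $\tau\capdot(V\setminus L)$ at the corresponding position is $\nodesumcost{T_j\cap L}$, not $\nodesumcost{L}$; since $L$ is minimal only among topological cuts \emph{containing $A_h$}, the sub-cut $T_j\cap L$ (which need not contain $A_h$ --- it is empty at $j=0$) can weigh strictly less than $\nodesumcost{L}$, and then the needed inequality $\nodesumcost{L}+\nodesumcost{\set(\sigma'_{1:p})}\le\nodesumcost{T_j}$ fails whenever the inductive inequality is tight. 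Minimality of $L$ gives $\nodesumcost{T_j\cap L}\le\nodesumcost{T_j}$ (the direction used in \cref{lem:mintopodominates}), not the direction you need. To repair this one essentially has to reorder $\tau$ rather than merely align pointers against it, which is what the paper's $\textsc{MOVE}$ algorithm does.
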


To begin with, note that \cref{alg:awesomizer} recursively invokes itself on a subgraph at \cref{line:recursion}. To
show that this is a valid recursion that terminates correctly, we prove that it only recurses on a cut-nonnegative proper subgraph, and returns a valid schedule of $G$.
\begin{lemma}
At \cref{line:recursion} of \cref{alg:awesomizer}, the induced subgraph $G[V\setminus L]$ is strictly smaller than $G$, and is a cut-nonnegative DAG.
Further, the returned sequence $\sigma$ is a valid schedule of $G$.
\end{lemma}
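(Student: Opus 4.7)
The lemma packages three separate claims: (i) $V\setminus L$ is strictly smaller than $V$, (ii) $G[V\setminus L]$ is cut-nonnegative, and (iii) the returned sequence $\sigma$ is a valid schedule of $G$. I will prove them in that order, using induction on $|V|$ so that (iii) may appeal to the recursive call returning a valid schedule of the smaller instance.

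For (i), I only need to handle the case that reaches \cref{line:recursion}, i.e., the non-degenerate case where not all weights are zero. In this case I claim the peak $\hat c>0$. Indeed, each prefix set $A_i=\set(\alpha_{1:i})$ of the peak-minimizing schedule $\alpha$ is a topological cut, so by cut-nonnegativity $0\le \nodesumcost{A_i}\le \hat c$; if $\hat c=0$ then every $\nodesumcost{A_i}=0$, which forces $w_{\alpha_i}=\nodesumcost{A_i}-\nodesumcost{A_{i-1}}=0$ for all $i$, contradicting the non-degenerate assumption. Hence $\nodesumcost{A_h}=\hat c>0$, so $A_h\neq\emptyset$, and $L\supseteq A_h$ yields $|L|\ge 1$.

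For (ii), let $T'$ be any topological cut of $G[V\setminus L]$. The key step is to check that $L\cup T'$ is a topological cut of $G$: for any node $v\in L$, its in-neighbors lie in $L$ since $L$ is a topological cut of $G$; for any node $v\in T'\subseteq V\setminus L$, each in-neighbor $u$ satisfies $u\in L$ or else $u\in V\setminus L$ and then $u\in T'$ since $T'$ is a topological cut of $G[V\setminus L]$. Since $L\cup T'\supseteq L\supseteq A_h$, and $L$ was chosen in \cref{line:defn-mincut-L} as a minimum-weight topological cut of $G$ containing $A_h$, we obtain $\nodesumcost{L\cup T'}\ge \nodesumcost{L}$. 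As $L$ and $T'$ are disjoint, this is equivalent to $\nodesumcost{T'}\ge 0$, establishing cut-nonnegativity of $G[V\setminus L]$.

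For (iii), observe that $\lambda=\alpha\capdot L$ is a valid partial schedule of $G$ by \cref{lem:isvalidpartialschedule} (applied to the topological cut $L$ and the full schedule $\alpha$), and it schedules exactly the nodes of $L$. By the inductive hypothesis combined with (i) and (ii), the recursive call returns a valid schedule $\mu$ of $G[V\setminus L]$. To confirm $\sigma=\lambda\circ\mu$ is a valid schedule of $G$, I only need to rule out an edge going from the second block to the first; but such an edge would go from $V\setminus L$ to $L$, which is impossible because $L$ is a topological cut of $G$. All intra-block edges are respected by $\lambda$ and $\mu$ individually, so the concatenation is a topological ordering of $V$.

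The only mildly delicate step is (ii): the argument uses not merely cut-nonnegativity of $G$ but the \emph{minimality} of $L$ over topological cuts containing $A_h$. The other two parts are straightforward unpackings of the definitions, and I do not anticipate any obstacle beyond bookkeeping.
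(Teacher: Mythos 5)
Your proof is correct and follows essentially the same route as the paper's: the same positivity argument for $\hat c$ to get $|L|\ge 1$, the same appeal to the minimality of $L$ among topological cuts containing $A_h$ to establish cut-nonnegativity of $G[V\setminus L]$ (you state it directly where the paper argues by contradiction, and you helpfully spell out why $L\cup T'$ is a topological cut of $G$), and the same use of \cref{lem:isvalidpartialschedule} plus the absence of edges from $V\setminus L$ to $L$ for validity of the concatenation. No gaps.
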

\begin{proof}
Assume $w_v$ is not constant zero for all $v$; otherwise we would return a schedule immediately at \cref{line:trivial}. 
Then, we must have $\peak{\alpha}>0$. Otherwise, by the cut-nonnegativity of $G$, we can only have $\nodesumcost{A_i} = 0$ for all $0\le i\le |V|$, which would imply $w_{\alpha_i} = \nodesumcost{A_i} -\nodesumcost{A_{i-1}} = 0$ for all vertices $\alpha_i$, a contradiction.

In particular, we have $\nodesumcost{A_0} = 0 \neq \peak{\alpha} $, and hence $h\neq 0$ (\cref{line:defn-i}).
Note $A_h \subseteq L$ from the definition of $L$ at \cref{line:defn-mincut-L}. Hence,  $|L|\ge h\ge 1$, which means $G[V\setminus L]$ is a proper subgraph of $G$.

Now, suppose for contradiction that $G[V\setminus L]$ is not cut-nonnegative, and hence it has a topological cut $L' \subseteq V\setminus L$ such that $\nodesumcost{L'} <0$. Then, observe that $L\cup L'$ is also a topological cut of $G$ such that $A_h \subseteq L \subseteq L\cup L'$,  and has smaller cost as $\nodesumcost{L\cup L'} = \nodesumcost{L} + \nodesumcost{L'}< \nodesumcost{L}$, contradicting the optimality of $L$ defined at \cref{line:defn-mincut-L}.

To show that $\sigma$ is a valid schedule of $G$, we first note that $\lambda = \alpha \capdot L$ is a valid partial schedule of $G$, due to \cref{lem:isvalidpartialschedule}. Since $L$ is a topological cut, there are no edges directed from $V\setminus L$ to $L$. Hence, by induction, the concatenation $\sigma = \lambda \circ \textsc{FindSchedule}(G[V\setminus L])$ is a schedule of $G$.
\end{proof}

The proof of \cref{lem:awesomizer-correct} is based on the following two lemmas. The symbols $h$ and $L$ in the lemma statements are defined in \cref{line:defn-i} and \cref{line:defn-mincut-L} in \cref{alg:awesomizer} respectively.

\begin{lemma}[Extend common prefix to peak]
\label{lem:extendtopeak}
On a cut-nonnegative DAG $G$, suppose $\sigma = \textsc{FindSchedule}(G)$ does not dominate some schedule $\tau$.
Then, there exists a schedule $\tau'$ such that $\tau'_{1:h}=\sigma_{1:h}$, and $\sigma$ does not dominate $\tau'$.
\end{lemma}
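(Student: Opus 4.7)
The plan is to set $\tau' := \alpha_{1:h} \cupdot \tau$ and show it satisfies both requirements. By \cref{lem:isvalidpartialschedule}, $\tau'$ is a valid schedule of $G$: $\alpha_{1:h}$ is a valid partial schedule (being a prefix of $\alpha$), so $\alpha_{1:h} \cupdot \tau$ is a valid partial schedule, and since $\tau$ enumerates all nodes, so does $\tau'$. By the definition of $\cupdot$, the first $h$ entries of $\tau'$ are $\alpha_{1:h}$. Because $A_h \subseteq L$, the first $h$ entries of $\lambda = \alpha \capdot L$ are also $\alpha_{1:h}$; hence $\sigma_{1:h} = \lambda_{1:h} = \alpha_{1:h} = \tau'_{1:h}$, giving the prefix property.

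To show $\sigma \not\ledot \tau'$, I invoke the transitivity of $\ledot$ (\cref{lem:superiority-reflexive-transitive}): its contrapositive says that if $\sigma \not\ledot \tau$ and $\tau' \ledot \tau$, then $\sigma \not\ledot \tau'$. So the lemma reduces to proving $\tau' \ledot \tau$, which I will establish by exhibiting an explicit pointer sequence per \cref{defn:pointerdef}. Let $k^*$ be the smallest index with $A_h \subseteq \set(\tau_{1:k^*})$, i.e., the first time $\tau$ has placed every node of $A_h$. The key structural fact is that $\profile{\tau'}[i] = \profile{\tau}[i]$ for all $i \ge k^*$, because at time $k^*$ both schedules have scheduled the common set $\set(\tau_{1:k^*}) = A_h \cup Q$ (with $Q := \set(\tau_{1:k^*}) \setminus A_h$) and both thereafter append the suffix of $\tau$ in the same order. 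Hence, once the pointers reach $(k^*, k^*)$ they advance in lockstep, and the nontrivial work is confined to $[0, k^*]$. There I plan to (i) hold $a = 0$ and advance $b$ using cut-nonnegativity ($\profile{\tau'}[0] = 0 \le \profile{\tau}[b]$), then (ii) advance $a$ from $0$ to $h$ while parking $b$ at a position where $\profile{\tau}[b] \ge \hat c$ (such a position exists because $\peak{\tau} \ge \peak{\alpha} = \hat c$), using the leftmost-peak property $\profile{\tau'}[a] \le \hat c$ for $a \le h$, and finally (iii) close the remaining gap up to $(k^*, k^*)$.

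The expected main obstacle is sub-phase (iii): for $a \in [h, k^*]$ the profile values are $\profile{\tau'}[a] = \nodesumcost{A_h \cup Q_{a-h}}$, where $Q_j$ denotes the first $j$ elements of $Q$ in $\tau$'s order. Each such set is a topological cut containing $A_h$, so the minimality of $L$ only supplies a lower bound $\nodesumcost{L} \le \nodesumcost{A_h \cup Q_{a-h}}$, whereas what I need is an \emph{upper} bound matched against values of $\profile{\tau}$ on $[0, k^*]$. I would attack this by combining cut-nonnegativity with the topological cuts $\set(\tau_{1:j}) \cap L$ and $\set(\tau_{1:j}) \cup L$ (valid by \cref{lemma:capcuptopo}) to assign a monotonically non-decreasing $b(a)$ to each $a \in [h, k^*]$ satisfying $\profile{\tau'}[a] \le \profile{\tau}[b(a)]$, thereby completing the pointer sequence and establishing $\tau' \ledot \tau$.
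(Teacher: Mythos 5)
Your reduction of the lemma to the single unconditional claim $\tau' \ledot \tau$ for $\tau' := \alpha_{1:h}\cupdot\tau$ is where the proof breaks: that claim is false, and the ``obstacle'' you flag in sub-phase (iii) is not a technical hurdle to be engineered around but the reason the construction cannot work. Writing $T_j = \set(\tau_{1:j})$, the cuts of $\tau'$ at positions $a\in h:k^*$ have cost $\nodesumcost{A_h\cup T_j} = \nodesumcost{T_j} + \hat c - \nodesumcost{A_h\cap T_j}$; since $A_h$ is by construction a \emph{maximum}-cost cut of $\alpha$ (with $\nodesumcost{A_h}=\hat c$), forcibly prepending it can add up to $\hat c$ on top of cuts of $\tau$ that are already near $\tau$'s peak. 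Concretely: take two disjoint paths $x_1\to x_2$ and $y_1\to y_2$ with weights $+10,-10,+10,-10$ (this graph is cut-nonnegative). Then $\alpha=(x_1,x_2,y_1,y_2)$ is peak-minimizing with $\hat c=10$, $h=1$, $A_h=\{x_1\}$, and for $\tau=(y_1,y_2,x_1,x_2)$ your construction gives $\tau'=(x_1,y_1,y_2,x_2)$ with profile $(0,10,20,10,0)$, whereas $\profile{\tau}=(0,10,0,10,0)$. Since $\tau'$ has strictly larger peak, $\tau'\not\ledot\tau$, and no assignment $b(a)$ in your sub-phase (iii) can exist. (In this toy example $\sigma$ happens to dominate $\tau$, but your argument for $\tau'\ledot\tau$ nowhere uses the hypothesis $\sigma\not\ledot\tau$ — it is invoked only through transitivity at the very end — so the counterexample refutes exactly the statement you set out to prove.) The parts of your write-up that are correct are the prefix verification $\tau'_{1:h}=\sigma_{1:h}$, the reduction via transitivity, and the observation that the profiles of $\tau'$ and $\tau$ coincide from $k^*$ onward; but even granting those, the pointer sequence cannot be completed.

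The paper's proof uses an entirely different mechanism precisely to avoid moving a positive-cost set forward. It chooses $i$ to be the \emph{last} index with $\nodesumcost{T_i}\ge\hat c$, forms $\sigma'=\move{\tau_{1:i},\sigma_{1:j},\tau}$ via the Prefix Lemma — so $\sigma'$ consists of $\tau_{1:i}$ together with only nonpositive-cost blocks pulled forward, which is safe to prepend — and sets $\tau'=\sigma'\cupdot\tau$; dominance $\tau'\ledot\tau$ then follows from Property~4 of the Prefix Lemma once one checks $\sigma_{1:j}\ledot\tau_{1:i}$ (a short pointer argument using cut-nonnegativity and $\nodesumcost{S_k}\le\hat c$). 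The fact that $\sigma_{1:h}$ is a prefix of $\sigma'$ is then proved \emph{a posteriori} (\cref{claim:case2claim2}): because everything after position $i$ in $\tau'$ stays below $\hat c$ while $\tau'$ must still attain $\hat c$ somewhere, the peak of $\tau'$ is forced inside $\sigma'$, and a chain of inequalities pins it to $S_\ell$ with $\ell\ge h$, whence $A_h\subseteq\set(\sigma')$. In short, the common prefix $\sigma_{1:h}$ emerges as a consequence of moving $\tau$'s own high-water-mark material to the front, rather than being inserted there by fiat; your approach would need to be restructured along these lines to go through.
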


\begin{lemma}[Extend common prefix to valley]
\label{lem:extendtovalley}
On a cut-nonnegative DAG $G$, suppose $\sigma = \textsc{FindSchedule}(G)$ does not dominate some schedule $\tau$ that satisfies $\tau_{1:h}=\sigma_{1:h}$.
Then, there exists a schedule $\tau'$ such that $\tau'_{1:|L|}=\sigma_{1:|L|}$, and $\sigma$ does not dominate $\tau'$.
\end{lemma}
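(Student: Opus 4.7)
The plan is to exhibit $\tau'$ explicitly as $\tau' := \lambda \circ (\tau \capdot (V \setminus L))$, front-loading the nodes of $L$ using the algorithm's order $\lambda = \sigma_{1:|L|}$ and then executing the $V \setminus L$ nodes in the same relative order as in $\tau$. Validity follows from \cref{lem:isvalidpartialschedule} (both $\lambda$ and $\tau \capdot (V \setminus L)$ are valid partial schedules and $L$ is a topological cut), and by construction $\tau'_{1:|L|} = \sigma_{1:|L|}$. The task then reduces to proving $\tau' \ledot \tau$; by the contrapositive of transitivity (\cref{lem:superiority-reflexive-transitive}), this immediately yields $\sigma \not\ledot \tau \Rightarrow \sigma \not\ledot \tau'$.

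I would prove $\tau' \ledot \tau$ by building an explicit pointer-advancement witness (\cref{defn:pointerdef}). Write $P = \memprofile{\tau'}$ and $Q = \memprofile{\tau}$, and let $m$ be the smallest index such that $L \subseteq \set(\tau_{1:m})$. The path has four segments. First, from $(0,0)$ to $(h,h)$: since $\tau_{1:h} = \sigma_{1:h} = \lambda_{1:h}$, the profiles agree pointwise on $[0, h]$ and every value there is at most $\hat c$; the route $(0,0) \to (0, h) \to (h, h)$ works using cut-nonnegativity of $G$ for the first leg (so $P[0] = 0 \le Q[j]$) and $P[k] = Q[k] \le \hat c = Q[h]$ for the second. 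Second, from $(h, h)$ to $(|L|, h)$, advance only the $P$-pointer: for each $k \in [h, |L|]$, $\set(\lambda_{1:k}) = A_{j_k} \cap L$ for the corresponding index $j_k$ in $\alpha$, and since $A_{j_k} \cup L$ is a topological cut containing $A_h$ (so its cost is $\ge \nodesumcost{L}$ by cut-minimality of $L$), inclusion-exclusion yields $P[k] = \nodesumcost{A_{j_k}} + \nodesumcost{L} - \nodesumcost{A_{j_k} \cup L} \le \hat c = Q[h]$.

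Third, from $(|L|, h)$ to $(m, m)$, simulate $\tau$'s execution step by step while maintaining the invariants $a - |L| = q(b)$ and $p(b) \ge h$, where $p(b), q(b)$ count the $L$- and $(V \setminus L)$-nodes among $\tau_{1:b}$. When $\tau_{b+1} \in L$, advance only $Q$; the invariant is preserved, and $P[a] \le Q[b+1]$ because $\set((\tau \capdot L)_{1:p(b+1)})$ is a topological cut containing $A_h$ (its size is $p(b+1) \ge h+1$, and its first $h$ elements coincide with $A_h$), so its cost is $\ge \nodesumcost{L}$ by cut-minimality. When $\tau_{b+1} \in V \setminus L$, both $P$ and $Q$ next append the same node $\tau_{b+1}$, so the endpoint $(a+1, b+1)$ preserves $P \le Q$; for the single intermediate lattice point, advance $Q$ first if $w_{\tau_{b+1}} \ge 0$ (so $P[a] \le Q[b] \le Q[b+1]$) and advance $P$ first otherwise (so $P[a+1] \le P[a] \le Q[b]$). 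Fourth, from $(m, m)$ to $(n, n)$: at position $m$ both profiles equal $\nodesumcost{L} + \nodesumcost{(\tau \capdot (V \setminus L))_{1:m - |L|}}$, and both schedules continue with the exact same suffix of $\tau \capdot (V \setminus L)$, so profile values agree at matching indices; stepping with the same sign-based intermediate choice as in segment three carries us to $(n, n)$.

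The main obstacle will be segment three: carefully mirroring the interleaving of $L$- and $(V \setminus L)$-advances in $\tau$ while tracking the two invariants and applying the sign-based choice of pointer order at each $(V \setminus L)$-advance. All three key inequalities ride on the cut-minimality of $L$ from \cref{line:defn-mincut-L} and the cut-nonnegativity of $G$ assumed in this subsection; no further structural hypotheses are needed.
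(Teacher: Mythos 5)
Your proposal is correct, and it constructs exactly the same witness as the paper: the paper takes $\tau' = \sigma' \cupdot \tau$ with $\sigma' = MOVE(\tau_{1:h}, \sigma_{1:|L|}, \tau)$ and then proves (its Claim that $\sigma' = \sigma_{1:|L|}$, via minimality of $L$) that this is precisely $\lambda \circ (\tau \capdot (V\setminus L))$. Where you differ is in how dominance $\tau' \ledot \tau$ is established. The paper routes it through the general prefix lemma (\cref{lem:prefix-lemma}): it only needs to verify the hypothesis $\sigma_{1:|L|} \ledot \tau_{1:h}$ (a two-step pointer argument using \cref{lem:peaknotchanged}), and Property~\ref{prefix_lemma:t_dominates} of that lemma does the rest. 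You instead build the full pointer-advancement witness for $\tau' \ledot \tau$ from scratch; your four segments are all sound, and the key inequalities ($\nodesumcost{A_{j}\cap L}\le\nodesumcost{A_{j}}$ and $\nodesumcost{L\cup T_b}\le\nodesumcost{T_b}$ via submodular-style inclusion--exclusion against the minimality of $L$ from \cref{line:defn-mincut-L}) are exactly the ones the paper uses inside \cref{lem:peaknotchanged} and \cref{lem:mintopodominates}. Your version is more self-contained and makes the dependence on cut-nonnegativity and cut-minimality fully explicit; the paper's version is shorter here but only because the heavy lifting (the MOVE algorithm with its PrefixPositive/SuffixNegative block analysis) is amortized across this lemma, \cref{lem:extendtopeak}, and \cref{lem:mintopodominates} --- note that your explicit-prefix strategy would not transfer to \cref{lem:extendtopeak}, where the moved prefix genuinely cannot be written down in advance. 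One small repair: $V\setminus L$ is not a topological cut, so \cref{lem:isvalidpartialschedule} does not directly show $\tau\capdot(V\setminus L)$ is a valid partial schedule; instead observe that your $\tau'$ equals $\sigma_{1:|L|}\cupdot\tau$ and invoke the second bullet of that lemma.
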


We defer the proofs of lemmas \cref{lem:extendtopeak} and \cref{lem:extendtovalley} to \cref{subsec:prefixextension}.
Next we show how they imply that $\sigma= \textsc{FindSchedule}(G)$ dominates all schedules of $G$ (\cref{lem:awesomizer-correct}).

\begin{proof}[Proof of \cref{lem:awesomizer-correct}]
To prove the correctness of $\textsc{FindSchedule}(G)$, we use
induction on the size of $G$, and hence assume as inductive hypothesis that $\textsc{FindSchedule}(G')$ is correct on all cut-nonnegative DAG $G'$ with strictly fewer nodes than $G$. In particular, 
$\textsc{FindSchedule}(G[V\setminus L])$ at \cref{line:recursion} is correct.
We also assume the input $G$ is not in the trivial case defined at \cref{line:trivial}; otherwise, all schedules have constant $0$ memory profiles, so it is valid to return an arbitrary one.

Suppose for contradiction that $G$ has a schedule $\tau$, such that $\sigma =\textsc{FindSchedule}(G)$ does not dominate  $\tau$. Over all such schedules, choose $\tau$ to have the longest prefix in common with $\sigma$, and let $\sigma_{1:k}=\tau_{1:k}$ be that common prefix. 

Recall from \cref{line:recursion} that $\sigma = \lambda \circ \textsc{FindSchedule}(G[V\setminus L])$, where $\set(\lambda)=L$.  We consider the following three cases.
\begin{itemize}
\item \textbf{Case $k\ge |L|$:}

 Then, $\set(\tau_{1:|L|}) = \set(\sigma_{1:|L|}) =  L $, and hence the suffix $\tau_{|L|+1:|V|}$ is a schedule for $G[V\setminus L]$. By inductive hypothesis, $\sigma_{|L|+1:|V|} = \textsc{FindSchedule}(G[V\setminus L])$ is correct, and hence dominates $\tau_{|L|+1: |V|}$.
By prepending a common prefix $\tau_{1:|L|} = \sigma_{1:|L|}$, this implies $\sigma$ dominates $\tau$ (by \cref{lem:concatendominance} and \cref{lem:superiority-reflexive-transitive}), a contradiction. 
    \item \textbf{Case $h\le k< |L|$:}
    
    Then, $\tau_{1:h}=\sigma_{1:h}$, and we can invoke \cref{lem:extendtovalley} to obtain another schedule $\tau'$ that shares a strictly longer prefix $\tau'_{1:|L|} = \sigma_{1:|L|}$, such that $\sigma$ does not dominate $\tau'$. This contradicts the definition of $\tau$.
    \item \textbf{Case $0\le k< h$:}
    
     We can invoke \cref{lem:extendtopeak} to obtain another schedule $\tau'$ that shares a strictly longer prefix $\tau'_{1:h} = \sigma_{1:h}$, such that $\sigma$ does not dominate $\tau'$. This contradicts the definition of $\tau$.
\end{itemize}

We have reached contradictions in both cases. Hence, $\sigma$ dominates all schedules of $G$.
\end{proof}

\subsection{Removing the Cut-Nonnegativity Assumption}
\label{subsec:remove-nonneg}

Finally, we explain how to relax the cut-nonnegativity assumption in \cref{lem:awesomizer-correct}, using an argument similar to the earlier work \cite{kayaaslan2018scheduling}.
In the general case, we break the given DAG $G$ into two parts using a minimum topological cut. Then, the right part and the (reversed and negated) left part are both cut-nonnegative, and can be scheduled using \cref{lem:awesomizer-correct}. Finally we concatenate them into a schedule of $G$. The pseudocode is given in \cref{alg:awesomizergeneral}.
\begin{algorithm}
\DontPrintSemicolon
\caption{Constructing a dominant schedule for general DAGs in the node sum model}
\label{alg:awesomizergeneral}
$\textsc{FindScheduleGeneral}(G)$: \tcp*[f]{$G=(V,E)$ is a DAG} \\
\Begin{
$L:=$ a minimum topological cut of $G$\\
$R:= V\setminus L$\\
$\rho := \textsc{FindSchedule}(G[R])$\\
$G^{\mathrm{rev}}:= \textsc{NegateCostAndReverseEdges}(G)$ \tcp*[r]{reverse the direction of edges in $G$, and negate all node costs $w_v$}
$\lambda:= \textsc{Reverse}(\textsc{FindSchedule}(G^{\mathrm{rev}}[L]))$\\
\Return{$\lambda \circ \rho$}
}
\end{algorithm}

To prove the correctness of \cref{alg:awesomizergeneral}, a key claim is the following (proved in \cref{subsec:mincut}).
\begin{restatable}[A dominant schedule passes through min-cut]{lemma}{mincutlemma}
Let $\tau$ be any schedule for DAG $G$, and let $S$ be a minimum topological cut of $G$.  Define partial schedule $\sigma = \tau \capdot S$. Then, $\sigma \cupdot \tau \ledot \tau$.
\label{lem:mintopodominates}
\end{restatable}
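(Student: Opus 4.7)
My plan is to establish $\sigma\cupdot\tau\ledot\tau$ via the pointer-advancement characterization (\cref{defn:pointerdef}). Writing $\pi := \sigma\cupdot\tau$ and letting $P_\pi, P_\tau$ denote the corresponding memory profiles, I aim to construct a monotone path from $(0,0)$ to $(n,n)$ in the $(x,y)$-grid whose every visited cell satisfies $P_\pi[x]\le P_\tau[y]$.

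The key ingredient is a pair of facts that hold because $S$ is a minimum topological cut. For every topological cut $C$ of $G$, both $C\cap S$ and $C\cup S$ are topological cuts (by \cref{lemma:capcuptopo}), so minimality of $S$ gives
\[
(\text{a})\ \nodesumcost{C\cap S}\ge \nodesumcost{S}, \qquad (\text{b})\ \nodesumcost{C\setminus S} = \nodesumcost{C\cup S}-\nodesumcost{S}\ge 0.
\]
Applied to $C=\set(\tau_{1:y})$, these yield two useful corollaries: (i) $P_\pi[|S|]=\nodesumcost{S}\le P_\tau[y]$ for every $y$, so any cell of the form $(|S|, y)$ is valid; and (ii) writing $p_1<\cdots<p_{|S|}$ for the positions in $\tau$ of the $S$-nodes and $q_1<\cdots<q_{n-|S|}$ for those of the non-$S$-nodes, each cell $(x,y)$ with $y\in[p_x, p_{x+1}-1]$ (when $x\le|S|$) or with $y\in[q_k, q_{k+1}-1]$ (when $x = |S|+k > |S|$) is valid, because in either regime $P_\tau[y]-P_\pi[x]$ reduces to the cost of a set that is bounded below using (a) or (b), since $\set(\tau_{1:y})$ and $\set(\pi_{1:x})$ agree on their $S$-parts or on their $(V\setminus S)$-parts respectively.

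I plan to assemble the path by processing $\tau$'s nodes in chronological order, advancing $y$ at each step and additionally advancing $x$ either synchronously or eagerly. When $\tau$ schedules a node in $S$, the same node is also the next node of $\pi$, so both pointers must advance by one; I split this double move into two elementary moves and choose the order by the sign of the node's weight (advance $y$ first when the weight is $\ge 0$ since $P_\tau$ has then grown, otherwise $x$ first since $P_\pi$ has then shrunk) so that the unique intermediate cell stays valid. When $\tau$ schedules a node outside $S$, only $y$ advances; if doing so would drop $P_\tau$ below the current $P_\pi[x]$, I first advance $x$ eagerly into $\pi$'s $T$-phase, using the unconstrained flexibility at $x=|S|$ granted by (i) together with the $T$-phase case of (ii). The main obstacle I expect is ruling out a ``stuck'' configuration in which neither elementary advance is valid, that is, both $P_\pi[x+1]>P_\tau[y]$ and $P_\pi[x]>P_\tau[y+1]$ hold simultaneously; applying (a) and (b) to the topological cuts $\set(\pi_{1:x})\cap\set(\tau_{1:y})$, $\set(\pi_{1:x+1})\cup\set(\tau_{1:y+1})$, and related combinations should yield a contradiction, closing the construction.
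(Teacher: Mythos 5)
Your preliminary inequalities and the two ``track'' families of valid cells are all correct, and they are exactly the inequalities the paper's proof runs on: writing $T_y=\set(\tau_{1:y})$, minimality of $S$ gives $\nodesumcost{S\cap T_y}\le\nodesumcost{T_y}$ (your $S$-phase cells $(x_1(y),y)$ with $x_1(y)=|S\cap T_y|$) and $\nodesumcost{S\cup T_y}\le\nodesumcost{T_y}$ (your $T$-phase cells $(x_2(y),y)$ with $x_2(y)=|S|+|T_y\setminus S|$). The genuine gap is the crossing from the first track to the second. First, your trigger never fires while you sit on the $S$-track: if $\tau_{y+1}\notin S$ then $P_\pi[x]=\nodesumcost{S\cap T_{y+1}}\le\nodesumcost{T_{y+1}}=P_\tau[y+1]$, so advancing $y$ never drops $P_\tau$ below the current $P_\pi[x]$. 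Your strategy therefore rides the $S$-track to the very end and only then tries to push $x$ past $|S|$, where it can genuinely get stuck. Concretely, take $V=\{s_1,a,b\}$ with the single edge $(a,b)$ and weights $w_{s_1}=-5$, $w_a=10$, $w_b=-10$; then $S=\{s_1\}$ is a minimum topological cut, and for $\tau=(a,b,s_1)$ we get $P_\tau=(0,10,0,-5)$ and $P_\pi=(0,-5,5,-5)$. Your rules reach the cell $(1,3)$, but the next cell $(2,3)$ has $P_\pi[2]=\nodesumcost{\{s_1,a\}}=5>-5=P_\tau[3]$. So the ``stuck configuration'' you hope to refute by contradiction actually occurs; it is not impossible, it is evidence that the path was routed wrongly. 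Second, even when a crossing is attempted, validity of the two endpoints does not give validity of the cells strictly between the tracks: those have $P_\pi$-values $\nodesumcost{S\cap T_{p_{x'}}}$ with $p_{x'}>y$ or $\nodesumcost{S\cup T_{q_{k'}}}$ with $q_{k'}<y$, each bounded by $\nodesumcost{T_j}$ only for the wrong index $j\neq y$, and nothing forces them below $P_\tau[y]$.

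The fix is to pin the crossing to the row $y=i$ where $\nodesumcost{T_i}=\max_j\nodesumcost{T_j}$. There every intermediate cell is valid, since its $P_\pi$-value is at most $\nodesumcost{T_j}$ for some $j$ and hence at most $\nodesumcost{T_i}=P_\tau[i]$. The path is then: follow the $S$-track from $(0,0)$ to $(x_1(i),i)$ (using your sign-of-weight rule for the simultaneous advances), sweep $x$ from $x_1(i)$ to $x_2(i)$ within row $i$, and follow the $T$-track to $(n,n)$; on the $T$-track an $S$-node of $\tau$ advances only $y$, and the new cell is valid because $\nodesumcost{S\cup T_{y+1}}-\nodesumcost{T_{y+1}}=\nodesumcost{S}-\nodesumcost{S\cap T_{y+1}}\le 0$. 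This is exactly the paper's argument in \cref{subsec:mincut}: it establishes the $S$-track-plus-crossing as the statement $\sigma\ledot\tau_{1:i}$ for that maximizing $i$, and delegates the $T$-track portion to Properties 3 and 4 of the Prefix Lemma (\cref{lem:prefix-lemma}). Your self-contained grid formulation goes through once the crossing row is chosen to be the argmax rather than determined lazily.
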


The following statement implies our main \cref{thm:existsdominant}.
\begin{theorem}
Let $G$ be a DAG in the node sum model.
Then, $\textsc{FindScheduleGeneral}(G)$ (\cref{alg:awesomizergeneral}) returns a schedule $\sigma$ of $G$ that dominates all schedules of $G$.
\label{thm:awesomizer-correct-bothdirections}
\end{theorem}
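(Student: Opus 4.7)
The plan is to show that $\sigma = \lambda \circ \rho$ produced by \cref{alg:awesomizergeneral} dominates an arbitrary schedule $\tau$ of $G$ by first replacing $\tau$ with a ``canonicalized'' schedule $\tau'$ that respects the min-cut split, then invoking \cref{lem:awesomizer-correct} on each side and gluing the results via concatenation.

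\emph{Step 1 (the recursive calls are well-defined).} First I would verify that both inputs to $\textsc{FindSchedule}$ satisfy cut-nonnegativity, so that \cref{lem:awesomizer-correct} applies. For $G[R]$: if $S\subseteq R$ is a topological cut of $G[R]$, then $L\cup S$ is a topological cut of $G$ with $\nodesumcost{L\cup S}=\nodesumcost{L}+\nodesumcost{S}$; minimality of $L$ forces $\nodesumcost{S}\ge 0$. For $G^{\mathrm{rev}}[L]$ with negated weights: a topological cut $T\subseteq L$ of $G^{\mathrm{rev}}[L]$ corresponds (via complement inside $L$) to the topological cut $L\setminus T$ of $G$; minimality of $L$ gives $\nodesumcost{L\setminus T}\ge\nodesumcost{L}$, i.e.\ $\sum_{v\in T}w_v\le 0$, so $T$ has nonnegative cost in the negated-weight graph. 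Thus $\rho$ is dominant for $G[R]$ and $\lambda^{R}:=\textsc{FindSchedule}(G^{\mathrm{rev}}[L])$ is dominant for $G^{\mathrm{rev}}[L]$.

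\emph{Step 2 (reverse-and-negate preserves dominance).} Next I would lift the dominance of $\lambda^R$ in $G^{\mathrm{rev}}[L]$ to dominance of $\lambda$ in $G[L]$ under the original weights. The key computation is that for any schedule $\alpha$ of $G[L]$, writing $W_L=\sum_{v\in L}w_v$ and letting $P_\alpha$ be its profile in $G[L]$, the profile of $\alpha^R$ in $G^{\mathrm{rev}}[L]$ (with weights $-w$) equals $P_\alpha(|L|-i)-W_L$. So passing from $\lambda^R\ledot \alpha^R$ in $G^{\mathrm{rev}}[L]$ to $\lambda\ledot \alpha$ in $G[L]$ amounts to observing that (i) adding the same constant to both sides of $\ledot$ preserves it (clear from either \cref{defn:pointerdef} or \cref{defn:algdef}), and (ii) reversing both sequences preserves it (exhibit the reversed pointer path in \cref{defn:pointerdef}). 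Hence $\lambda$ dominates every schedule of $G[L]$ in its original weighting.

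\emph{Step 3 (canonicalize $\tau$ through the min-cut).} Now fix any schedule $\tau$ of $G$ and set $\tau_L:=\tau\capdot L$, $\tau_R:=\tau\capdot R$. By \cref{lem:mintopodominates} applied with $S=L$, the partial-schedule operation yields $\tau':=(\tau\capdot L)\cupdot \tau=\tau_L\circ \tau_R$, and $\tau'\ledot\tau$. Note $\tau_L$ is a schedule of $G[L]$ and $\tau_R$ is a schedule of $G[R]$.

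\emph{Step 4 (concatenate).} By Step 2, $\lambda\ledot\tau_L$ in $G[L]$; by Step 1, $\rho\ledot\tau_R$ in $G[R]$. Both $\sigma$ and $\tau'$ have profiles of the form $P_{\text{left}}\circ(W_L+P_{\text{right}}[1{:}\,])$, where the same constant $W_L$ is added to both right-hand halves (again invariant under dominance). Applying \cref{lem:concatendominance} to the two halves yields $\sigma\ledot\tau'$. Combining with Step 3 by transitivity (\cref{lem:superiority-reflexive-transitive}) gives $\sigma\ledot\tau$, as required; since $\tau$ was arbitrary, $\sigma$ is dominant.

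The main obstacle is Step 2: articulating cleanly why reversing edges and negating weights translates dominance correctly. The reversal of indices is a pure bookkeeping step (verify the pointer path reverses in \cref{defn:pointerdef}), but one has to be careful that the profile identity $P_{\alpha^R}^{G^{\mathrm{rev}}}(i)=P_\alpha^{G[L]}(|L|-i)-W_L$ holds including the boundary values $i=0$ and $i=|L|$, and that the additive shift by $-W_L$ drops out. Once this bridge is in place, Step 4 is a straightforward gluing and the rest is the same pattern as the proof of \cref{lem:awesomizer-correct}.
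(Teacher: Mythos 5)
Your proposal is correct and follows essentially the same route as the paper: verify cut-nonnegativity of $G[R]$ and $G^{\mathrm{rev}}[L]$, transfer dominance across reverse-and-negate via the shifted/reversed profile identity, reduce to schedules passing through the min-cut via \cref{lem:mintopodominates}, and glue the two halves. The only cosmetic difference is that you invoke \cref{lem:concatendominance} (with the constant-shift bookkeeping) where the paper runs the pointer-advancement argument directly across the cut; these are interchangeable.
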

\begin{proof}
We first show that $G[R]$ and $G^{\mathrm{rev}}[L]$ are indeed cut-nonnegative.  
\begin{itemize}
    \item Suppose for contradiction that $G[R]$ has a topological cut $S\subseteq R$ with $\nodesumcost{S}<0$. Then, $L\cup S$ is a topological cut of $G$ with $\nodesumcost{L\cup S} = \nodesumcost{L} + \nodesumcost{S}<\nodesumcost{L}$, contradicting the fact that $L$ is a minimum topological cut of $G$. Hence $G[R]$ is cut-nonnegative.
    \item Similarly, suppose for contradiction that $G^{\mathrm{rev}}[L]$ has a topological cut $T\subseteq L$ with negative cost. Since we negated the node costs in $G^{\mathrm{rev}}$, the total cost of $T$ in the original graph $G$ is positive, $\nodesumcost{T}>0$. Since we also reversed the edge directions in $G^{\mathrm{rev}}$, we know that $L\setminus T$ is a topological cut in $G$, and has total cost $\nodesumcost{L\setminus T} = \nodesumcost{L} - \nodesumcost{T} < \nodesumcost{L}$, a contradiction. Hence $G^{\mathrm{rev}}[L]$ is cut-nonnegative.
\end{itemize} 
Hence, by \cref{lem:awesomizer-correct}, $\lambda= \textsc{Reverse}(\textsc{FindSchedule}(G^{\mathrm{rev}}[L]))$ is a dominant schedule of $G[L]$, and $\rho = \textsc{FindSchedule}(G[R])$ is a dominant schedule of $G[R]$. To see the former claim,  simply notice that for $0\le i\le |L|$, $\nodesumcost{L_i}  = \nodesumcost{L} - \sum_{j=i+1}^{|L|} w(\lambda_j)= \nodesumcost{L} + \sum_{j=1}^{|L|-i} (-w(\textsc{Reverse}(\lambda)_j))$.

It remains to  show $\lambda\circ \rho \ledot\tau$  for any given schedule $\tau$ of $G$. By  \cref{lem:mintopodominates}, we can without loss of generality assume $\tau$ passes through the minimum topological cut $L$, i.e., $\set(\tau_{1:|L|}) = L$ and $\set(\tau_{|L|+1:|V|}) = R$. Then, we use the following pointer-advancing argument to show $\lambda\circ \rho \ledot\tau$: by $\lambda \ledot \tau_{1:|L|}$, we can move both pointers to $L = \set(\tau_{1:|L|})$, and they have the same cost at this point. Then, by $\rho \ledot \tau_{|L|+1:|V|}$, we can continue to move both  pointers to the end.
\end{proof}

\subsection{Finding a Dominant Schedule Efficiently}
\label{subsec:reduction}
We have presented our algorithm (\cref{alg:awesomizer} and \cref{alg:awesomizergeneral}) for constructing a dominant schedule for a DAG. 
For general DAGs, this algorithm may not have a polynomial-time implementation, due to the bottleneck of finding a peak-minimizing schedule at \cref{line:defnalpha} of \cref{alg:awesomizer}. However, for some special graph classes of interest, we can design such an efficient peak-minimizing algorithm, and then it will imply an efficient algorithm for finding a dominant schedule. In the following we describe this reduction more formally.

We restrict our attention to \emph{single-source single-sink DAGs} $G=(V,E)$, where there exists a unique source node $s\in V$ such that all $v\in V$ are reachable from $s$, and a unique sink node $t\in V$ such that all $v\in V$ can reach $t$ (we allow $s=t$, in which case $|V|=1$).
Every DAG $G$ can be transformed to a single-source single-sink DAG, by the following operations:
\begin{definition}[Adding a source/sink]
   Given a DAG $G=(V,E)$, define another DAG $G' = \textsc{AddSource}(G)$ as follows:
   \begin{itemize}
    \item Let $G'=(V',E')$, where $V' = \{s\} \cup V$ contains original nodes and an additional source $s$, and \[E' = E \cup \bigcup_{v\in V: \delta^-(v)=\varnothing}\{(s,v)\}\]
    consists of original edges and additional edges from the new source to every old source in $G$.
   \end{itemize}

We symmetrically define $\textsc{AddSink}(G)$, which adds a sink node $t$ to $G$.
\end{definition}

\begin{definition}[Hereditary class]
    \label{defn:hereditary}
A class $\mathcal{G}$ of single-source single-sink DAGs is called \emph{right-hereditary} (resp.\ \emph{left-hereditary}), if for every graph $G=(V,E)$ in $\mathcal{G}$ and every  topological cut $U\subseteq V$ of $G$,  the DAG $G'=\textsc{AddSource}(G[V\setminus U])$  (resp.\ $G'=\textsc{AddSink}(G[U])$) also belongs to $\mathcal{G}$.

We say $\mathcal{G}$ is hereditary if $\mathcal{G}$ is both left- and right-hereditary.
\end{definition}

Now our reduction (in the node sum memory model) can be summarized in the following theorem.

\begin{restatable}{theorem}{reductiontopeakminimization}
\label{thm:reductiontopeakminimization}
For a hereditary class $\mathcal{G}$ of single-source single-sink DAGs,  suppose there is an algorithm $A$ that, given a DAG $G\in \mathcal{G}$ on $n$ nodes and $m$ edges in the node sum model, finds a schedule $\sigma$ of $G$ that minimizes peak memory $\nodesumcost{\sigma}$ in $T(n,m)$ time.

Then, there is another algorithm that can find a dominant schedule of $G\in \mathcal{G}$ in the node sum model, in $O(n\cdot (T(n+1,m+n) + m^{1+o(1)}))$ time.
\end{restatable}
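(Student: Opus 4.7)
The plan is to show that every step of \cref{alg:awesomizer} and \cref{alg:awesomizergeneral} can be implemented within the stated running time, using the given oracle $A$ for the peak-minimizing subroutine at \cref{line:defnalpha} and a max-flow subroutine for the minimum topological cut at \cref{line:defn-mincut-L}. Two issues must be handled: first, the subgraphs on which the algorithm recurses need not lie in $\mathcal{G}$ (they may acquire multiple sources or sinks), so $A$ cannot be invoked directly; second, the minimum-cut step is a minimum-weight closure problem in which node weights may be negative.

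For the first issue, I would attach a fresh node of weight $0$ via the transformations $\textsc{AddSource}$ and $\textsc{AddSink}$ to every source (respectively sink) of the subgraph. By the hereditary assumption on $\mathcal{G}$, the augmented DAG --- which has $n+1$ nodes and at most $m+n$ edges --- lies in $\mathcal{G}$, so $A$ produces a peak-minimizing schedule in time $T(n+1, m+n)$. The added node is forced to the first (respectively last) position, and since its weight is $0$ it contributes only a prepended (or appended) $0$ to the memory profile; stripping it off recovers a peak-minimizing schedule of the original subgraph. The same trick applies to $G[R]$ (right-heredity) and $G^{\mathrm{rev}}[L]$ (left-heredity, after reversal) in \cref{alg:awesomizergeneral}.

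For the min-cut primitive, both \cref{line:defn-mincut-L} (constrained to contain the down-set $A_h$) and the unconstrained min-cut in \cref{alg:awesomizergeneral} are instances of the classical minimum-weight closure problem on a DAG. This reduces to a single $s$--$t$ minimum cut on an auxiliary graph with $n + O(1)$ nodes and $m + O(n)$ edges via the standard Picard--Rhys reduction, with the constraint $A_h \subseteq L$ enforced by linking every node of $A_h$ to the auxiliary source by an infinite-capacity edge. Negative node weights are absorbed by a uniform offset before the reduction, and the resulting max-flow is computed in time $m^{1+o(1)}$ by the almost-linear-time algorithm of Chen et al.

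Finally, the recursion in \cref{alg:awesomizer} strictly shrinks the graph (by at least $|L| \ge h \ge 1$ nodes per level), so there are at most $n$ recursive calls; each level incurs one oracle call and one min-cut call, yielding total time $O(n \cdot (T(n+1, m+n) + m^{1+o(1)}))$, and \cref{alg:awesomizergeneral} adds only a constant number of primitive calls, absorbed into the bound. The main technical obstacle I expect is confirming that the recursive subproblems remain amenable to $A$: specifically, that the heredity of $\mathcal{G}$ suffices to keep each augmented subgraph in $\mathcal{G}$, and that cut-nonnegativity propagates into $G[V \setminus L]$. The latter is already established in \cref{subsec:dominance-prelim} via the minimality of $L$, and the former is exactly what the two-sided (left/right) heredity definition was crafted to guarantee.
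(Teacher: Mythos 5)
Your proposal is correct and follows the paper's proof essentially step for step: the weight-$0$ $\textsc{AddSource}$/$\textsc{AddSink}$ augmentation justified by heredity (together with the invariant that each recursive input is the right part of a topological cut of the original graph), the reduction of the constrained and unconstrained minimum topological cuts to $s$--$t$ max-flow, and the $O(n)$ recursion depth yielding the stated bound. One minor wording slip: a uniform additive offset on node weights does \emph{not} preserve minimum topological cuts (cuts of different sizes shift by different amounts), so negative weights must instead be handled by the sign-split inside the closure reduction itself --- negative-weight nodes attached to the auxiliary source and positive-weight nodes to the auxiliary sink, exactly as the paper does --- but since you also invoke that standard reduction, this is a phrasing issue rather than a gap.
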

\begin{proof}
   Given $G\in \mathcal{G}$, we first invoke $\textsc{FindScheduleGeneral}(G)$ in \cref{alg:awesomizergeneral}, which uses the minimum topological cut of $G$  to decompose the problem.
  Here, the minimum topological cut can be computed in $m^{1+o(1)}$ time by simply reducing to the minimum directed $s't'$-cut problem (which can be solved by computing max-flow \cite{maxflow}) with added source $s'$ and sink $t'$: for every $v\in V(G)$ we encode its weight $w_v$ by connecting an edge from $v$ to $t'$ with capacity $w_v$ if $w_v\ge 0$, or connecting an edge from $s'$ to $v$ with capacity $-w_v$ if $w_v<0$, and for every $(u,v)\in E(G)$ we connect an edge from $u$ to $v$ with capacity $+\infty$ to force the topological requirement.
  
   The found minimum topological cut in \cref{alg:awesomizergeneral} then decomposes the original problem into two subproblems, $\textsc{FindSchedule}(G[R])$ and $\textsc{FindSchedule}(G^{\mathrm{rev}}[L])$. We focus on the former one (the latter one can be analyzed symmetrically).

   Since $\mathcal{G}$ is right-hereditary, we have that $\textsc{AddSource}(G[R]) \in \mathcal{G}$. In the recursive algorithm $\textsc{FindSchedule}(G[R])$ (\cref{alg:awesomizer}),  we maintain the invariant that the input $G'$ to $\textsc{FindSchedule}(\cdot )$ is always the induced subgraph of the right-part of some topological cut of the original input graph $G$, and hence satisfies 
$G'\in \mathcal{G}$.
   At \cref{line:defnalpha}  we can find a peak-minimizing schedule of $G'$ as follows: invoke algorithm $A$ on $\textsc{AddSource}(G')$ which is a graph with at most $n+1$ nodes and $m+n$ edges (setting the added source to have weight $0$), and remove the added source from the computed schedule. The correctness is obvious since the added source has to be the first one in any schedule of $\textsc{AddSource}(G')$. \cref{line:defn-mincut-L} of \cref{alg:awesomizer} can also be solved in $m^{1+o(1)}$ time \cite{maxflow}.  The other steps can be implemented in linear time. 

   Since the recursion terminates after at most $n$ steps, the total time complexity of\\ $\textsc{FindScheduleGeneral}(G)$ is $O(n\cdot (T(n+1,m+n) + m^{1+o(1)}))$.
\end{proof} 
\section{Application: Computation Graphs}
\label{subsec:computation_graphs}

Despite the idealism of the node sum model we have been studying so far, it is still powerful enough to capture other models of computation. In this section, we present a more realistic model for computation graphs and provide a reduction to node sum that allows our results to transfer. We also note that it generalizes one-shot black pebbling.

A {\em computation graph} is a triple $\fancyg = (G, \sizefn, \scratchfn)$ where $G = (V,E)$ is an (unweighted) DAG and $\sizefn, \scratchfn$ are both functions mapping $V \rightarrow \mathbb{R}$. Each vertex $v \in V$ is meant to denote an operation, and an edge $(u, v) \in E$ denotes that operation $v$ uses the output of operation $u$. For simplicity (and consistent with real-life implementations), we have assumed that each operation produces an indivisible output.\footnote{Although this assumption can be relaxed, it significantly increases the complexity of the needed structures without fundamentally changing the expressive power of the model.} The function $\sizefn$ maps vertices to the size of their output, whereas $\scratchfn$ maps vertices to additional working memory (``scratch'' memory) needed while running. We constrain $\sizefn(v)$ to be nonnegative. On the other hand, we will permit $\scratchfn(v)$ to be negative, but only to a limited degree: 
\begin{equation}
\label{eqn:scratchcondition}
  \scratch(v) \ge \max \left\{
    -\sizefn(v), \sum_{u \in V: \delta^+(u)= \{v\}} -\sizefn(u)
  \right\}
\end{equation}
The first part of this constraint, $\scratchfn(v) \ge - \sizefn(v)$, means that starting operation $v$ (and hence allocating space for its scratch and output) cannot reduce memory usage. The second part of the constraint, $\scratchfn(v) \ge \sum_{u \in V : \delta^+(u) = \{v\}} -\sizefn(u)$, is meant as a matching condition on the ending of operation $v$. No matter how we choose to schedule the graph, we know that all $u$ whose outputs are only used by $v$ can have their outputs freed after we finish operation $v$. If we pick a schedule which only frees this minimal set of nodes, then ending the operation deallocates the scratch and all of these outputs; the condition says that such an end should not cause memory to increase.

One practical situtation where such a need for negative scratch memory arises is when an operation can safely edit its inputs in-place. For example, if we have a tensor $A$ and want to double it entry-wise to get $2A$, this can be safely done in-place without allocating an extra copy of $A$. However, this is only safe if no other operations want to use $A$, which is the reason we wound up with a condition like $\delta^+(u) = \{v\}$. In this case, operation $u$ generates $A$ and operation $v$ uses $A$ to generate $2A$, and we can model $v$'s in-place construction of $2A$ by setting $\scratchfn(v) = -\sizefn(v)$ (which also equals $-\sizefn(u)$).

There are also some technical reasons why it is convenient for our scratch memory to be negative but not too negative. We utilize negative scratch for the following natural conversion we want to perform: given a memory profile, construct a computation graph which is a path and whose only schedule generates that memory profile. In particular, this is used to let our lower bound constructions focus on memory profiles and it lets us use dominant schedules to replace subgraphs with paths. Unfortunately, without negative scratch, some memory profiles would not permit this conversion. On the other hand, we utilize the fact that scratch is not too negative when reducing computation graphs to node sum graphs, since it lets us argue that some of the generated node sum nodes have negative weights and hence might as well be executed as soon as possible.

Putting all these pieces regarding output sizes, scratch memory, and dependencies together, the memory consumed while executing a schedule behaves (informally) as follows. Suppose we want to run some operation $u$ of our schedule (still a topological ordering of $G$). Each operation $v$ that has been executed prior to $u$ has produced some output that normally occupies $\sizefn(v)$ units of memory. However, not all of these are still in memory; only the ones depended on by future operations (including $u$ and onwards) must still be occupying memory.

In addition to its inputs, the operation $u$ itself must allocate space for its output, which takes $\sizefn(u)$ units of memory. It must also allocate additional scratch memory, taking $\scratchfn(u)$ units of memory. We can then finally run $u$. Afterwards, we might be able to release some of $u$'s input (those that are not used by later operations), as well as its temporary scratch memory. With this sequence of events in mind, we are now ready to formally define how to compute the memory profile of a particular schedule.

\begin{definition}[Computation Graph Memory Model]
\label{def:irl}
Let $\fancyg = (G, \sizefn, \scratchfn)$ be a computation graph, and let $\sigma$ be a schedule for $G$. Let $S_i$ be the set of nodes in $\sigma_{1:i}$, i.e.\ the first $i$ nodes, and let $T_i$ be the set of nodes in $\sigma_{i:n}$. For each $i$, define
\begin{align*}
  \memduring{\sigma, \sigma_i} &\defeq
    \left( \sum_{v \in S_{i-1} : \outneighbors{v} \cap T_i \ne \varnothing} \sz{v} \right)+ \sz{\sigma_i} + \scr{\sigma_i}, \\
  \memafter{\sigma, \sigma_i} &\defeq
    \sum_{v \in S_i : \outneighbors{v} \cap T_{i+1} \ne \varnothing} \sz{v}
  \text{.}
\end{align*}
Note that the above are implicitly functions of the entire computation graph, though we suppress the $\fancyg$ for readability. We sometimes abuse notation and suppress the dependence on $\sigma$ to say $\memduring{\sigma_i} = \memduring{\sigma, \sigma_i}$ when the schedule $\sigma$ is clear from context.

We define the \emph{memory profile} of a schedule $\sigma$ to be the sequence induced by the above:
{\small
  \begin{align*}
    \nodeweightedprofile{\sigma} &\defeq
      \left( 0,
        \memduring{\sigma_1},
        \memafter{\sigma_1},
        \memduring{\sigma_2},
        \memafter{\sigma_2},
        \ldots,
        \memduring{\sigma_n},
        \memafter{\sigma_n}
      \right)
    \text{.}
  \end{align*}
}

We define the \emph{peak memory} to be the maximum memory over the memory profile. Due to \cref{eqn:scratchcondition}, we know that $\memduring{\sigma_i} \ge \memafter{\sigma_i}, \forall i$ and hence peak memory occurs during some operation:
\begin{align*}
  \peak{\sigma} &\defeq \max_i \{ \memduring{\sigma_i} \}
  \text{.}
\end{align*}
\end{definition}

Our formal problem of interest using this model is, as before, to minimize this peak memory:
\begin{definition}[Peak Memory Scheduling Problem]
\label{def:peak_memory_scheduling} 
\ \\
Given a computation graph $\fancyg = (G, \sizefn, \scratchfn)$, the {\em Peak Memory Scheduling Problem} is to find a schedule $\sigma$ for $G$ with the smallest possible value of $\peak{\sigma}$.
\end{definition}

\begin{observation}\nonumber
The weighted one-shot black pebbling game is equivalent to the peak memory scheduling problem where the scratch of every node in the computation graph is set to 0.
\end{observation}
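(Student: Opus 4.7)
The plan is to prove this observation by exhibiting explicit, cost-preserving conversions in both directions between schedules of the computation graph (with $\scratchfn \equiv 0$) and pebbling strategies. Throughout I identify each node $v$ with a pebble of weight $w_v = \sz{v}$.

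First I would go from schedules to pebblings. Given a schedule $\sigma$, construct a pebbling strategy that, for $i = 1, \dots, n$, (i) places a pebble on $\sigma_i$ (valid because $\sigma$ is topological, so all predecessors are already pebbled), and then (ii) removes the pebble from every already-pebbled node all of whose out-neighbors lie in $S_i$. Tracking the set of currently-pebbled nodes through this process, the weight right after step (i) is exactly $\memduring{\sigma_i}$ with $\scratchfn \equiv 0$, namely $\sum_{v \in S_{i-1},\, \outn{v} \cap T_i \neq \varnothing} \sz{v} + \sz{\sigma_i}$, and the weight right after step (ii) is exactly $\memafter{\sigma_i}$. Hence the peak pebble weight equals $\peak{\sigma}$.

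Second I would go from pebblings to schedules. Given any pebbling strategy for $G$, define $\sigma$ by listing the nodes in the order in which their pebbles are first placed. Since pebble placement requires all in-neighbors to be pebbled at that moment, $\sigma$ is a topological ordering. The key step is to argue that at every time in the pebbling, the set of currently-pebbled nodes contains $\{v \in S_{i-1} : \outn{v} \cap T_i \neq \varnothing\}$ whenever we are about to place $\sigma_i$: indeed, once a pebble is removed it can never be replaced (one-shot), so if some successor $u$ of $v$ is still un-pebbled at that moment, the pebble on $v$ must still be present, or else the strategy could never legally pebble $u$ later. This containment, combined with the placement of $\sigma_i$'s pebble, shows the pebbling's memory weight at that instant is at least $\memduring{\sigma_i}$; an analogous argument right after placing $\sigma_i$ gives at least $\memafter{\sigma_i}$. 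Taking maxima, the pebbling's peak weight is at least $\peak{\sigma}$.

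Combining the two directions shows that the optimal pebbling cost equals the optimal peak memory in the computation graph model with $\scratchfn \equiv 0$, and moreover that the conversion is constructive in both directions, so the problems are equivalent. The only subtle point — and what I expect to be the only place a careless proof can slip — is the one-shot argument in the second direction: it is what prevents a clever pebbler from ``saving'' by briefly dropping a pebble that will later be needed, and is what forces the pebbling's profile to pointwise dominate the computation-graph profile rather than merely match it at peaks.
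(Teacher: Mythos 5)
Your proof is correct and follows the same correspondence the paper sketches in the remark after the observation: placing a pebble matches $\mathrm{mem_{during}}$ and removing no-longer-needed pebbles matches $\mathrm{mem_{after}}$. You simply carry out in detail the two directions the paper leaves implicit, including the one-shot argument showing an arbitrary pebbling's weight pointwise dominates the induced schedule's profile, which is exactly the right justification.
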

In particular, starting to compute a node corresponds to placing a new pebble (resulting in the cost of $\mathrm{mem_{during}}$); finishing a node corresponds to removing pebbles that are no longer needed (resulting in the cost of $\mathrm{mem_{after}}$).

\subsection{Reducing Computation Graphs to Node Sum}
\label{subsec:equivnodesumnw}

We now present a reduction from \truemem\ model to the node sum memory model, which will allow us to leverage our node sum results (i.e., dominance) for the {\em Peak Memory Scheduling Problem} on computation graphs. It is important to note that our reduction is not guaranteed to preserve graph structure. For example, even if our computation graph is series-parallel, its corresponding node sum graph may not be so.

Perhaps the most salient mismatch between the two models is that computation graphs have to consider memory during and after a node, whereas node sum only generates a single memory number for each node. Hence, in order to generate profiles of similar complexity, our reduction will need to split each computation graph node into multiple nodes. At first glance, we need at least two nodes: one to handle ``starting'' the operation by allocating for its scratch and output and one to handle ``finishing'' the operation by deallocating its scratch and any inputs that are no longer needed. However, determining whether an input is no longer needed depends on all consumers of that input, so we also introduce a ``release'' node that can determine when all these consumers finish. We refer to the ``finish'' and ``release'' nodes in this reduction as \emph{virtual} nodes.

There is a small issue with this plan to make a ``start'', ``finish'', and ``release'' node for each operation. If both our sizes and scratch were nonnegative, then we would know that the start node has nonnegative weight and the finish and release nodes have nonpositive weight, so we can argue that any time a schedule executes the start node, it might as well immediately execute the finish node and any available release nodes. However, when the scratch function is negative, then the end node will have positive weight and schedules have incentive to not immediately execute the finish node. To solve this, we take advantage of \cref{eqn:scratchcondition} and observe that we can match its negativity with the release nodes of operations that depend only on this operation. Hence for an operation with only one out-neighbor, we fold its release node into the finish node of its out-neighbor.

\begin{figure}
\centering
\begin{tikzpicture}[%
  auto,
  scale=0.8,
  hollowcluster/.style={
    ellipse,
    draw=black,
    inner sep=1pt,
    minimum height=100pt,
    minimum width=25pt,
  },
  shadednode/.style={
    circle,
    draw=black,
    fill=black!10,
    inner sep=1pt,
    minimum size=25pt,
  },
  ]
  \node[hollowcluster] (u1) at (0, 0) {$\delta^-(v)$};
  \node[shadednode] (v1) at (2, 0) {$v$};

  \draw[->] (u1) -- (v1);
  \draw[->] (u1.north east) -- (v1);
  \draw[->] (u1.south east) -- (v1);

  \draw[dashed] (6, -3) -- (6, 3);

  \node[hollowcluster] (u2) at (8, 0) {$\finishvertex{\left[\delta^-(v)\right]}$};
  \node[shadednode] (v2) at (11, 0) {$\startvertex{v}$};
  \node[shadednode] (v3) at (13, 0) {$\finishvertex{v}$};

  \draw[->] (u2) -- (v2);
  \draw[->] (u2.north east) -- (v2);
  \draw[->] (u2.south east) -- (v2);
  \draw[->] (v2) -- (v3);
\end{tikzpicture}
\caption{When $|\delta^+(v)| = 0$, $v$ gets a start node and a finish node. Its would-be release node gets folded into its finish node. As a result, the start node has weight $w_{\startvertex{v}} = \scratchfn(v) + \sizefn(v)$ and the finish node has weight $w_{\finishvertex{v}} = - \scratchfn(v) - \sizefn(v) - \sum_{u : \delta^+(u) = \{v\}} \sizefn(u)$.}
\label{fig:tikz-reduction0}
\end{figure}
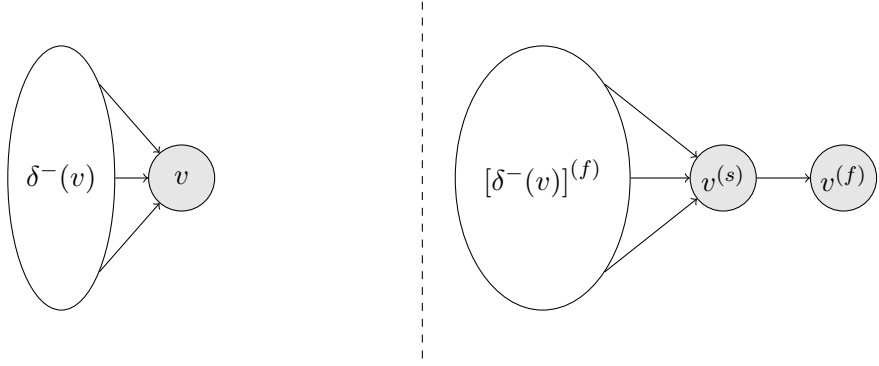
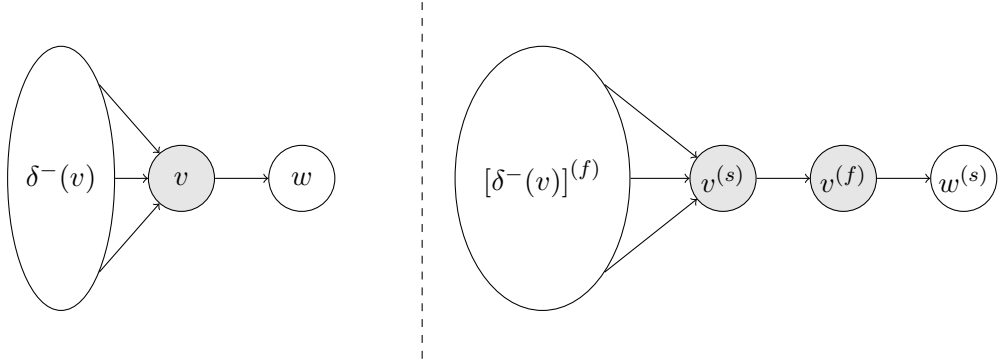
\begin{figure}
\centering
\begin{tikzpicture}[%
  auto,
  scale=0.8,
  hollownode/.style={
    circle,
    draw=black,
    inner sep=1pt,
    minimum size=25pt,
  },
  hollowcluster/.style={
    ellipse,
    draw=black,
    inner sep=1pt,
    minimum height=100pt,
    minimum width=25pt,
  },
  shadednode/.style={
    circle,
    draw=black,
    fill=black!10,
    inner sep=1pt,
    minimum size=25pt,
  },
  ]
  \node[hollowcluster] (u1) at (0, 0) {$\delta^-(v)$};
  \node[shadednode] (v1) at (2, 0) {$v$};
  \node[hollownode] (w1) at (4, 0) {$w$};

  \draw[->] (u1) -- (v1);
  \draw[->] (u1.north east) -- (v1);
  \draw[->] (u1.south east) -- (v1);
  \draw[->] (v1) -- (w1);

  \draw[dashed] (6, -3) -- (6, 3);

  \node[hollowcluster] (u2) at (8, 0) {$\finishvertex{\left[\delta^-(v)\right]}$};
  \node[shadednode] (v2) at (11, 0) {$\startvertex{v}$};
  \node[shadednode] (v3) at (13, 0) {$\finishvertex{v}$};
  \node[hollownode] (w2) at (15, 0) {$\startvertex{w}$};
  
  \draw[->] (u2) -- (v2);
  \draw[->] (u2.north east) -- (v2);
  \draw[->] (u2.south east) -- (v2);
  \draw[->] (v2) -- (v3);
  \draw[->] (v3) -- (w2);
\end{tikzpicture}
\caption{When $|\delta^+(v)| = 1$, $v$ gets a start node and a finish node. Its would-be release node gets folded into the finish node of its unique consumer. As a result, the start node has weight $w_{\startvertex{v}} = \scratchfn(v) + \sizefn(v)$ and the finish node has weight $w_{\finishvertex{v}} = - \scratchfn(v) - \sum_{u : \delta^+(u) = \{v\}} \sizefn(u)$.}
\label{fig:tikz-reduction1}
\end{figure}
\begin{figure}
\centering
\begin{tikzpicture}[%
  auto,
  scale=0.8,
  hollownode/.style={
    circle,
    draw=black,
    inner sep=1pt,
    minimum size=25pt,
  },
  hollowcluster/.style={
    ellipse,
    draw=black,
    inner sep=1pt,
    minimum height=100pt,
    minimum width=25pt,
  },
  shadednode/.style={
    circle,
    draw=black,
    fill=black!10,
    inner sep=1pt,
    minimum size=25pt,
  },
  ]
  \node[hollowcluster] (u1) at (0, 0) {$\delta^-(v)$};
  \node[shadednode] (v1) at (2, 0) {$v$};
  \node[hollowcluster] (w1) at (4, 0) {$\delta^+(v)$};

  \draw[->] (u1) -- (v1);
  \draw[->] (u1.north east) -- (v1);
  \draw[->] (u1.south east) -- (v1);
  \draw[->] (v1) -- (w1);
  \draw[->] (v1) -- (w1.north west);
  \draw[->] (v1) -- (w1.south west);

  \draw[dashed] (6, -3) -- (6, 3);

  \node[hollowcluster] (u2) at (8, 0) {$\finishvertex{\left[\delta^-(v)\right]}$};
  \node[shadednode] (v2) at (11, 0) {$\startvertex{v}$};
  \node[shadednode] (v3) at (13, 0) {$\finishvertex{v}$};
  \node[hollowcluster] (w2) at (16, 0) {$\startvertex{\left[\delta^+(v)\right]}$};
  \node[shadednode] (v4) at (19, 0) {$\releasevertex{v}$};

  \draw[->] (u2) -- (v2);
  \draw[->] (u2.north east) -- (v2);
  \draw[->] (u2.south east) -- (v2);
  \draw[->] (v2) -- (v3);
  \draw[->] (v3) -- (w2);
  \draw[->] (v3) -- (w2.north west);
  \draw[->] (v3) -- (w2.south west);
  \draw[->] (w2) -- (v4);
  \draw[->] (w2.north east) -- (v4);
  \draw[->] (w2.south east) -- (v4);
\end{tikzpicture}
\caption{When $|\delta^+(v)| \ge 2$, $v$ gets a start node, a finish node, and a release node. As a result, the start node has weight $w_{\startvertex{v}} = \scratchfn(v) + \sizefn(v)$, the finish node has weight $w_{\finishvertex{v}} = - \scratchfn(v) - \sum_{u : \delta^+(u) = \{v\}} \sizefn(u)$, and the release node has weight $w_{\releasevertex{v}} = -\sizefn(v)$.}
\label{fig:tikz-reduction2}
\end{figure}

We are now ready to formally present our reduction. Figures \ref{fig:tikz-reduction0}, \ref{fig:tikz-reduction1}, and \ref{fig:tikz-reduction2} illustrate the key cases of this transformation. Given a computation graph $\fancyg = (G, \sizefn, \scratchfn)$, we generate a weighted node sum graph $\nsff{\fancyg}$ using \Cref{alg:reduction}.

For a node sum graph, define an \emph{eager} schedule as one in which each node with nonpositive weight is scheduled immediately after its last in-neighbor. (If multiple nonpositive nodes share the same last in-neighbor, they can be arranged in arbitrary order immediately following it.) To analyze the reduction, we start with two auxiliary lemmas, and then present the main one (\cref{lem:model-reduction}).

\begin{algorithm}
  \caption{Reduction $\nsff{\fancyg}$} \label{alg:reduction}
  \KwData{Computation graph $\fancyg = (G, \sizefn, \scratchfn)$}
  \KwResult{Weighted node sum graph $\ns{G} = (\ns{V}, \ns{E}, \ns{w})$}
  Initialize $\ns{V} \leftarrow \emptyset, \ns{E} \leftarrow \emptyset$\;
  \For{computation graph vertex $v \in V$} {
    Add vertices $\ns{V} \leftarrow \ns{V} \cup \{\startvertex{v}, \finishvertex{v}\}$\;
    Set weight $w_{\startvertex{v}} \leftarrow \scratchfn(v) + \sizefn(v)$\;
    Set weight $w_{\finishvertex{v}} \leftarrow -\scratchfn(v) - \sum_{u : \delta^+(u) = \{v\}} \sizefn(u)$\; \label{line:virtual1}
    Add edge $\ns{E} \leftarrow \ns{E} \cup \{(\startvertex{v}, \finishvertex{v})\}$\;
    \For{in-neighbor $u \in \delta^-(v)$} {
      Add edge $\ns{E} \leftarrow \ns{E} \cup \{(\finishvertex{u}, \startvertex{v})\}$\;
    }
    \Switch{$|\delta^+(v)|$} {
      \Case{$|\delta^+(v)| = 0$} {
        Update weight $w_{\finishvertex{v}} \leftarrow -\scratchfn(v) - \sizefn(v) - \sum_{u : \delta^+(u) = \{v\}} \sizefn(u)$\;  \label{line:virtual2}
      }
      \Case{$|\delta^+(v)| = 1$} {
        Perform no additional updates for this case\;
      }
      \Case{$|\delta^+(v)| \ge 2$} {
        Add additional vertex $\ns{V} \leftarrow \ns{V} \cup \{\releasevertex{v}\}$\;
        Set weight $w_{\releasevertex{v}} \leftarrow -\sizefn(v)$\; \label{line:virtual3}
        \For{out-neighbor $x \in \delta^+(v)$} {
          Add edge $\ns{E} \leftarrow \ns{E} \cup \{(\startvertex{x}, \releasevertex{v})\}$\;
        }
      }
    }
  }
  \KwRet{$(\ns{V}, \ns{E}, \ns{w})$}
\end{algorithm}

\begin{lemma}\label{lem:eager}
    Any schedule for a node sum graph can be transformed into an eager schedule that dominates the original.
\end{lemma}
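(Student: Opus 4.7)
The plan is to iteratively rewrite $\sigma$ via a ``leftward jump'' operation that repairs one non-eager nonpositive node at a time; each jump yields a schedule dominating its predecessor, so by transitivity of $\ledot$ (\cref{lem:superiority-reflexive-transitive}) the final eager schedule dominates $\sigma$.

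Given a current schedule $\tau$ and a nonpositive node $v$ at position $j$ whose latest in-neighbor $u$ sits at position $i$ with $i<j-1$ (so $v$ fails eagerness), form $\tau'$ by deleting $v$ from position $j$ and reinserting it at position $i+1$, shifting the entries originally at $i+1,\dots,j-1$ one step to the right. This is still a topological ordering: all in-neighbors of $v$ lie at positions $\le i$ and continue to precede $v$, while the displaced entries cannot be descendants of $v$ in $G$ (otherwise $\tau$ itself would not have been topological), so they may safely follow $v$ in $\tau'$.

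To verify $\tau'\ledot\tau$, compare the profiles $P^+=\nodesumprofile{\tau'}$ and $P=\nodesumprofile{\tau}$. They agree for indices $\le i$ and $\ge j$, while for $i+1\le k\le j$ one has $P^+[k]=P[k-1]+w_v\le P[k-1]$ because $w_v\le 0$, with the boundary equality $P^+[j]=P[j]$ (since $P[j]=P[j-1]+w_v$ as well). This powers a pointer-advancement walk in the sense of \cref{defn:pointerdef}: walk both pointers jointly through $[0,i]$; advance the $P^+$-pointer alone to state $(i+1,i)$ using $P^+[i+1]\le P[i]$; advance both jointly through $(i+2,i+1),\dots,(j,j-1)$, each intermediate step verified by $P^+[k]\le P[k-1]$; advance the $P$-pointer alone to $(j,j)$ using $P^+[j]=P[j]$; then walk jointly through the common suffix.

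The main obstacle is termination, since a jump of $v$ can break the eagerness of a later node $z$ whose latest in-neighbor in $\tau$ was $v$ itself. Use the potential $\Phi(\tau)=\sum_{v\,:\,w_v\le 0}(\text{position of }v\text{ in }\tau)$: a jump changes $\Phi$ by $-k_+$, where $k_+$ is the number of strictly positive-weight nodes displaced from the window $[i+1,j-1]$, so $\Phi$ strictly decreases whenever that window contains a positive node. In the remaining ``potential-preserving'' case the window is entirely nonpositive and $u,\tau_{i+1},\dots,\tau_{j-1},v$ is a run of nonpositive nodes rooted at $u$; after the jump, $v$ joins this run and every member simultaneously satisfies eagerness (either as a sibling in the $u$-rooted group sharing last in-neighbor $u$, or as an internal chain link). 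Cascading is handled by reapplying the rewrite to any newly non-eager node; a standard bookkeeping argument---positions are bounded below by $1$ and each jump strictly alters the schedule---shows the procedure halts after polynomially many jumps, producing the desired eager schedule dominating the original.
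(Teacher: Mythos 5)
Your proof is correct and follows essentially the same route as the paper's: move each nonpositive node earlier (you do one block jump to just after its last in-neighbor, the paper chains single-position swaps), observe that the profile on the affected window becomes $P[k-1]+w_v \le P[k-1]$, verify dominance by pointer advancement, and conclude by transitivity. The only glossed spots are minor and match the paper's own level of detail: the intermediate state in each "joint advance" step needs the sign-based case split (advance the dominated pointer first iff the common node has positive weight), and the termination of the cascading rewrites is asserted rather than fully proven --- though the paper itself does not prove termination either.
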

\begin{proof}
  Since dominance is transitive (Lemma \ref{lem:superiority-reflexive-transitive}), it suffices to show that moving a nonpositive weight node earlier in the schedule (respecting precedence to keep the schedule valid) results in a schedule that dominates the original. In fact, it suffices to move the node a single node earlier in the schedule, since we can just chain such moves to achieve a larger move.
  
  So, for a schedule $\ns{\sigma}$, consider moving a single nonpositive node $\ns{\sigma}_{i+1}$ one node earlier in the schedule. According to the node sum memory model (\Cref{defn:nodesummodel}), the memory usage at time $i$ is just the sum of the weights of the first $i$ nodes in the schedule $\ns{\sigma}$. So the only memory usage term that changes is $\nodesumcost{S_i}$ (which has the term $w_{\ns{\sigma}_i}$ replaced with $w_{\ns{\sigma}_{i+1}}$). We cut out three memory usage terms of interest: $\nodesumcost{S_{i-1}}, \nodesumcost{S_i}, \nodesumcost{S_{i+1}}$.
  Previously, they are equal to $\nodesumcost{S_{i-1}}$, $\nodesumcost{S_{i-1}} + w_{\ns{\sigma}_i}$, $\nodesumcost{S_{i-1}} + w_{\ns{\sigma}_i} + w_{\ns{\sigma}_{i+1}}$. They will now be equal to $\nodesumcost{S_{i-1}}$, $\nodesumcost{S_{i-1}} + w_{\ns{\sigma}_{i+1}}$, $\nodesumcost{S_{i-1}} + w_{\ns{\sigma}_i} + w_{\ns{\sigma}_{i+1}}$. We will show that the latter sequence dominates the former using the pointer advancement definition (\Cref{defn:pointerdef}).

  Both pointers will begin at the start of their respective sequences, which is safe because the sequences start with the same term: $\nodesumcost{S_{i-1}} \le \nodesumcost{S_{i-1}}$. We then increment the pointer for the latter sequence, which is safe because $w_{\ns{\sigma}_{i+1}} \le 0$ implies $\nodesumcost{S_{i-1}} + w_{\ns{\sigma}_{i+1}} \le \nodesumcost{S_{i-1}}$. The next change to both sequences is now the same (adding $w_{\ns{\sigma}_i}$), and we want to increment both pointers. We increment the former sequence's pointer first if this quantity is positive and second otherwise, which guarantees the intermediate comparison works out. Finally, it is safe to increment the pointer for the former sequence because the sequences end with the same term: $\nodesumcost{S_{i-1}} + w_{\ns{\sigma}_i} + w_{\ns{\sigma}_{i+1}} \le \nodesumcost{S_{i-1}} + w_{\ns{\sigma}_i} + w_{\ns{\sigma}_{i+1}}$.

  To finish the proof, we observe that all sequence terms before and after the three terms we just focused on are the same. Hence the prefix dominates by reflexivity as does the suffix (\Cref{lem:superiority-reflexive-transitive}). We can then combine the prefix with the three terms with the suffix by concatenating and invoking \Cref{lem:concatendominance}.
\end{proof}

\begin{lemma}\label{lem:reduction-dominance}
  Suppose we have a computation graph $\fancyg$ and the corresponding node sum graph $\nsff{\fancyg}$ from the reduction in \Cref{alg:reduction}. Any eager schedule of the node sum graph $\nsff{\fancyg}$ both dominates and is dominated by the (computation graph) schedule of $\fancyg$ that orders nodes $v$ according to the order of their corresponding start nodes.
\end{lemma}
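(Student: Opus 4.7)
The plan is to explicitly compute the memory profile of an eager node sum schedule and show it matches the computation graph profile at two landmark positions per computation graph node, with all intermediate values sandwiched between consecutive landmarks; dominance in both directions then follows from the pointer-advancement characterization of \Cref{defn:pointerdef}.

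First I would verify the weight signs produced by \Cref{alg:reduction}. Using \cref{eqn:scratchcondition}, one checks $w_{\startvertex{v}} = \scratchfn(v) + \sizefn(v) \ge 0$ in every case, while $w_{\finishvertex{v}} \le 0$ and $w_{\releasevertex{v}} = -\sizefn(v) \le 0$; the two clauses of \cref{eqn:scratchcondition} are used exactly to cover the $|\delta^+(v)| = 0$ and $|\delta^+(v)| \ge 1$ finish cases respectively. Consequently, the only positive-weight nodes are start nodes, so an eager schedule decomposes as $\startvertex{v_1}, B_1, \startvertex{v_2}, B_2, \ldots, \startvertex{v_n}, B_n$, where each block $B_i$ bundles $\finishvertex{v_i}$ together with every release node $\releasevertex{u}$ whose last-scheduled out-neighbor is $v_i$ (which is exactly the set of $u \in \delta^-(v_i)$ with $|\delta^+(u)| \ge 2$ and $\delta^+(u) \subseteq S_i$).

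Next I would compute the node sum memory at the two landmark positions inside each block. Letting $S_i = \{v_1,\dots,v_i\}$ and $T_i = \{v_i,\dots,v_n\}$ as in \Cref{def:irl}, summing all node sum weights scheduled through $\startvertex{v_i}$ and cancelling should yield
\[
  \text{memory after } \startvertex{v_i} \;=\; \sizefn(v_i) + \scratchfn(v_i) + \sum_{\substack{v \in S_{i-1} \\ \delta^+(v) \cap T_i \ne \varnothing}} \sizefn(v) \;=\; \memduring{\sigma_i},
\]
and an analogous calculation should give that the memory at the end of block $B_i$ equals $\memafter{\sigma_i}$. I expect this cancellation bookkeeping to be the main obstacle: the three different finish-weight formulas in \Cref{alg:reduction} must be handled uniformly, and one has to check that the folded release contribution $-\sizefn(u)$ absorbed into $w_{\finishvertex{v_i}}$ for each single-consumer input $u$ of $v_i$ produces the same net deallocation as an explicit $\releasevertex{u}$ would in the $|\delta^+(u)| \ge 2$ case, and that the special term $-\sizefn(v_i)$ in the $|\delta^+(v_i)|=0$ branch correctly indicates that $v_i$'s own output is no longer needed.

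Finally, since every weight inside $B_i$ is nonpositive, the node sum memory within $B_i$ is monotonically non-increasing from $\memduring{\sigma_i}$ down to $\memafter{\sigma_i}$, so every intermediate value lies in $[\memafter{\sigma_i}, \memduring{\sigma_i}]$. To show the node sum profile dominates the computation graph profile, I align each intermediate value inside $B_i$ with the (larger) $\memduring{\sigma_i}$ entry of the computation graph profile, then match landmarks with landmarks. For the reverse dominance, I align each intermediate value with the $\memafter{\sigma_i}$ entry of the computation graph profile, which is valid because $\memafter{\sigma_i} \le \memduring{\sigma_i}$ by \Cref{def:irl}. Concatenating the per-block alignments via \Cref{lem:concatendominance} (with reflexivity on the common initial $0$) then yields both required dominances simultaneously.
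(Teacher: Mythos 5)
Your proposal is correct and follows essentially the same route as the paper's proof: group each start node with the nonpositive-weight virtual nodes that immediately follow it, show the first and last memory values of each group equal $\memduring{\sigma_i}$ and $\memafter{\sigma_i}$ with all intermediate values sandwiched between them, and obtain both directions of dominance by the pointer-advancement argument (parking the computation-graph pointer at $\memduring{\sigma_i}$ for one direction and at $\memafter{\sigma_i}$ for the other) before concatenating via \Cref{lem:concatendominance}. The only difference is presentational (you place $\startvertex{v_i}$ just before the block $B_i$ rather than inside the group), which does not change the argument.
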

\begin{proof}
  We argue this by comparing subsequences of the memory profiles and then concatenating them all. Specifically, we group every start node $\startvertex{v}$ in the node sum schedule with the virtual nodes that run immediately after (due to the eagerness, this is the finish node $\finishvertex{v}$ and all release nodes $\releasevertex{u}$ for which this is the last consumer). Each of these nodes has an associated memory usage in the node sum model, $\nodesumcost{S_i}$, and since these nodes are contiguous, these memory usages are as well. We compare this with the corresponding node $v$ in the original computation graph and its two associated memory costs, $\memduring{\sigma, v}, \memafter{\sigma, v}$. We want to show that each node sum subsequence both dominates and is dominated by its two corresponding computation graph terms. Importantly, the concatenation we perform to arrive at the entire memory profiles agrees on the order because we are considering the unique computation graph schedule that agrees on this order.

  We claim by induction over $v$ in $\sigma$ order that the first term of our node sum memory subsequence is $\memduring{\sigma, v}$ and the last term is $\memafter{\sigma, v}$. The first term matches because $w_{\startvertex{v}} = \scratchfn(v) + \sizefn(v)$ is the correct update to get from the previous $\memafter{\sigma, \sigma_i}$ to the current $\memduring{\sigma, \sigma_{i+1}}$. To get the last term to match, we want to show that all the virtual nodes involved always subtract off the sizes of released nodes as well as $\scratchfn(v)$. Our finish node always includes $\scratchfn(v)$. Any released nodes $u$ that have $|\delta^+(u)| = 0$ (this could only be $v$ itself) are handled by subtracting the size from the weight of the finish node. Any released nodes $u$ that have $|\delta^+(u)| = 1$ are handled by the $\sum_{u : \delta^+(u) = \{v\}} \sizefn(u)$ term in the finish node. Any released nodes $u$ that have $|\delta^+(u)| \ge 1$ are handled by their release node being a virtual node that is in this group.

  Hence the two sequences we want to compare agree on their first and last terms. The node sum subsequence additionally has intermediate terms, but they are at most the first term and at least the last term because we have already shown that virtual nodes have nonpositive weight, and each successive term is the weight of a virtual node plus the previous term. It is now easy to argue dominance in both directions using the pointer advancement definition of dominance (\Cref{defn:pointerdef}). To show that the node sum subsequence is dominant, completely increment its pointer first. To show that it is dominated, increment the computation graph pointer first.
\end{proof}

\begin{lemma}
\label{lem:model-reduction}
  Suppose we have a computation graph $\fancyg$ and the corresponding node sum graph $\nsff{\fancyg}$ from the reduction in \Cref{alg:reduction}. If we have a schedule $\sigma$ for $\fancyg$ then (in linear time) we can find a schedule $\ns{\sigma}$ for $\nsff{\fancyg}$ such that $\nodesumprofile{\ns{\sigma}} \ledot \nodeweightedprofile{\sigma}$ (the memory profiles are computed respective to their relative graphs). Additionally, if we have a schedule $\ns{\sigma}$ for $\nsff{\fancyg}$ then (in linear time) we can find a schedule $\sigma$ for $\fancyg$ such that $\nodeweightedprofile{\sigma} \ledot \nodesumprofile{\ns{\sigma}}$ (again, memory profiles are computed respective to their relative graphs).
\end{lemma}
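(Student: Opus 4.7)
The plan is to prove both directions by constructing the desired schedule explicitly and then reducing the dominance claim to Lemma \ref{lem:reduction-dominance}, with Lemma \ref{lem:eager} bridging the gap on the node-sum side.

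For the first direction, given a computation graph schedule $\sigma = (\sigma_1,\ldots,\sigma_n)$, I would build $\ns{\sigma}$ by processing the $\sigma_i$ in order: for each $v = \sigma_i$, append $\startvertex{v}$, then immediately append $\finishvertex{v}$, and then append $\releasevertex{u}$ (when it exists, i.e.\ when $|\delta^+(u)|\ge 2$) for every in-neighbor $u$ of $v$ such that $v$ is the last out-neighbor of $u$ in the order $\sigma$. This is manifestly linear time. I would verify the schedule is valid by checking every edge added in Algorithm \ref{alg:reduction}: the edges $\startvertex{v} \to \finishvertex{v}$ and $\finishvertex{u}\to \startvertex{v}$ hold from the grouping and the topological order of $\sigma$, and the edge $\startvertex{x}\to \releasevertex{v}$ holds because $\releasevertex{v}$ is only emitted when its triggering consumer is the last $x\in\delta^+(v)$ in $\sigma$, so every other such $\startvertex{x}$ has already been placed. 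The resulting $\ns{\sigma}$ is eager by construction (every nonpositive virtual node is scheduled immediately after its latest in-neighbor), and the order of its start nodes matches $\sigma$. Applying Lemma \ref{lem:reduction-dominance} gives $\nodesumprofile{\ns{\sigma}} \ledot \nodeweightedprofile{\sigma}$ directly.

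For the second direction, given any node-sum schedule $\ns{\sigma}$, I would first apply Lemma \ref{lem:eager} to obtain an eager schedule $\ns{\sigma}'$ with $\nodesumprofile{\ns{\sigma}'} \ledot \nodesumprofile{\ns{\sigma}}$. The process of rearranging nonpositive-weight nodes earlier does not alter the relative order of the start nodes (they have nonnegative weight $\scratchfn(v)+\sizefn(v) \ge 0$ by \eqref{eqn:scratchcondition}, so they are never moved). Then I would define $\sigma$ by listing each $v\in V$ in the order that $\startvertex{v}$ appears in $\ns{\sigma}'$. Validity of $\sigma$ as a topological ordering of $G$ follows because an edge $(u,v)\in E$ produces the chain $\startvertex{u}\to \finishvertex{u}\to \startvertex{v}$ in $\ns{G}$, so $\startvertex{u}$ precedes $\startvertex{v}$ in $\ns{\sigma}'$. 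Lemma \ref{lem:reduction-dominance} then gives $\nodeweightedprofile{\sigma} \ledot \nodesumprofile{\ns{\sigma}'}$, and by transitivity (Lemma \ref{lem:superiority-reflexive-transitive}) we conclude $\nodeweightedprofile{\sigma} \ledot \nodesumprofile{\ns{\sigma}}$. Both steps run in linear time.

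The main obstacle is purely bookkeeping: making sure the explicit construction in the first direction respects every class of edge from Algorithm \ref{alg:reduction}, especially the release-node edges whose existence depends on the case split on $|\delta^+(v)|$. Once eagerness and start-node ordering are established, the heavy lifting is done by Lemmas \ref{lem:eager} and \ref{lem:reduction-dominance}, and no further manipulation of memory profiles is needed here.
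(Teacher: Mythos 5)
Your proposal is correct and follows essentially the same route as the paper: the same explicit start/finish/release construction for the forward direction, and for the reverse direction the same chain of converting to an eager schedule via Lemma \ref{lem:eager}, restricting to start nodes, and combining Lemma \ref{lem:reduction-dominance} with transitivity of dominance.
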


\begin{proof}
  Given a computation graph schedule $\sigma$ for $\fancyg$, the (eager) node sum schedule $\ns{\sigma}$ for $\nsff{\fancyg}$ is constructed as follows. For each node $v$ in $\sigma$, append $\startvertex{v}$, then $\finishvertex{v}$, then any release nodes all of whose in-neighbors have already been scheduled, to $\ns{\sigma}$.  Conversely, given a schedule $\ns{\sigma}$ for $\nsff{\fancyg}$, we produce the schedule $\sigma$ by arranging the nodes of $\fancyg$ in the order of their start nodes in $\ns{\sigma}$. It can be checked that these schedules respect all the edges of their corresponding graphs. The conversions can be performed in linear time.

  To analyze memory profiles, we first show that virtual nodes are assigned nonpositive weight. To do so, we focus our attention on the lines of the algorithm which assign weight to virtual nodes, namely lines \ref{line:virtual1}, \ref{line:virtual2}, and \ref{line:virtual3}. Regarding line \ref{line:virtual1}, which sets $w_{\finishvertex{v}} \leftarrow -\scratchfn(v) - \sum_{u:\delta^+(u) = \{v\}}\sizefn(u)$, the nonpositivity follows by subtracting $\scratchfn(v)$ from both sides of  \cref{eqn:scratchcondition}. The nonpositivity for line \ref{line:virtual2} follows by observing that this line chooses almost the same quantity as line \ref{line:virtual1}, just less $\sizefn(v)$ which is nonnegative by assumption. Line \ref{line:virtual3} is trivial, because it sets $w_{\releasevertex{v}} \leftarrow -\sizefn(v)$ and again $\sizefn(v)$ is nonnegative by assumption.
  
  Now, given a computation graph schedule $\sigma$, our construction of the node sum one produces an eager schedule $\ns{\sigma}$ with start nodes in the same order, which dominates $\sigma$ by \cref{lem:reduction-dominance}. For the other direction, given a node sum schedule $\ns{\sigma}$ for $\nsff{\fancyg}$, we can analytically convert it to an eager schedule $\ns{\sigma}'$ and then convert $\ns{\sigma}'$ into a computation graph schedule $\sigma$ by restricting it to start nodes. (Algorithmically, we can just directly restrict $\ns{\sigma}$ to start nodes). Then $\sigma$ dominates $\ns{\sigma}'$ by \cref{lem:reduction-dominance} and $\ns{\sigma}'$ dominates $\ns{\sigma}$ by \cref{lem:eager}. Transitivity of dominance (\Cref{lem:superiority-reflexive-transitive}) then gives the result.
\end{proof}
\section{Linearization}
\label{sec:linearization}

When dealing with recursive graph structures such as SP graphs or even trees, it is tempting to first optimally schedule subgraphs recursively and then merge these schedules to obtain the final solution. For example, if a graph $G$ is obtained via a parallel composition of two subgraphs $G_1$ and $G_2$ (formally defined in Definition \ref{def:sp-graph}), then one approach to schedule $G$ is to first find optimal schedules $\sigma$ and $\tau$ for $G_1$ and $G_2$, respectively, and then find an optimal way to merge $\sigma$ and $\tau$ to find the optimal schedule for $G$. Unfortunately, such a strategy fails due to the nuances of the computation graph memory model.
Indeed, it is possible that no optimal schedule for $G$ includes $\sigma$ or $\tau$ as a subsequence. 

\subsection{Linearization}
\label{subsec:linearization}

In this section, we introduce the algorithmic technique termed \emph{linearization}, that allows us to simplify the structure of an input DAG by replacing a subgraph of the DAG by a simple path. The key idea is to find a \emph{dominant} schedule of the subgraph and then add a path graph corresponding to this dominant schedule. We call this process \emph{linearization} and show that it preserves dominant schedules for the input DAG provided that the subgraph being replaced is \emph{isolated} from the rest of the graph. Linearization is a very useful tool since it allows one to effectively design a recursive algorithm for peak memory scheduling (especially for recursive graph structures). It also leads to effective heuristics in practice.

We first define the notion of isolated subgraphs. 
\begin{definition}[Isolated subgraph]
\label{defn:isolated}
Let $G=(V,E)$ be a DAG. For $U\subseteq V$, we say the induced subgraph $G[U]$ is an \emph{isolated subgraph} if and only if
\begin{itemize}
    \item $G[U]$ is a DAG with a unique source $s$, and a unique sink $t$, where $s\neq t$.
    \item Any path in $G$ starting from some $x \in V \setminus U$ and ending at some $y\in U$ must either (i) go through $s$, or (ii) contain $t 
 = y$ as the unique node in $U$. 
    \item Any path in $G$ starting from some $x \in U$ and ending at some $y \in V \setminus U$ must go through $t$.
\end{itemize}
\end{definition}

Linearization replaces an isolated subgraph by a path corresponding to its dominant schedule. However, we need to be a little careful about the definition of dominance here. First, note that since an isolated subgraph has a unique source node $s$ and a unique sink node $t$, all valid schedules must start with $s$ and end with $t$. 

\begin{definition}[Internal Memory Profile]
\label{defn:internalmemoryprofile}
Consider any graph $G'$ that has a single source $s$ and single sink $t$, and let $\sigma = (s = \sigma_1, \sigma_2, \ldots, \sigma_{|U|} = t)$ be any valid schedule for $G'$. Recall that 
{\small
\[\nodeweightedprofile{\sigma} := \left( 0, 
        \mathrm{mem_{during}}(\sigma_1),
        \mathrm{mem_{after}}(\sigma_1),
        \mathrm{mem_{during}}(\sigma_2),
        \mathrm{mem_{after}}(\sigma_2),
        \ldots,
        \mathrm{mem_{during}}(\sigma_n),
        \mathrm{mem_{after}}(\sigma_n)
    \right)\] 
}
denotes the sequence of memory consumptions incurred by schedule $\sigma$.
Then let \[\internalprofile{\sigma} :=  \left( 
        \sz{s} = \mathrm{mem_{after}}(\sigma_1),
        \mathrm{mem_{during}}(\sigma_2),
        \mathrm{mem_{after}}(\sigma_2),
        \ldots,
        \mathrm{mem_{during}}(\sigma_n),
        \sz{t}
    \right)\] be the memory profile corresponding to $\sigma$ that begins after the source $s$ has already been scheduled and ends after $t$ has been scheduled but still held in memory. We refer to $\internalprofile{\sigma}$ as the \emph{internal memory profile} of $\sigma$.
\end{definition}

Now we are ready to state the following exchange lemma.
\begin{restatable}[Exchange lemma]{lemma}{exchangelemma}
\label{lem:exchangelinearize}
Let $G=(V,E)$ be a computation graph with a single source and a single sink, and let $G[U]$ be an isolated subgraph of $G$. 
Let $\pi$ and $\pi'$ be two schedules of $G[U]$, such that $\internalprofile{\pi'}\ledot \internalprofile{\pi}$.  
Then, for any schedule $\sigma$ of $G$ that contains $\pi$ as a subschedule, there exists another schedule $\sigma'$ of $G$ that contains $\pi'$ as a subschedule, such that $\internalprofile{\sigma'}\ledot \internalprofile{\sigma}$.
\end{restatable}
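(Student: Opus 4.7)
The plan is to use the pointer-advancement characterization (\cref{defn:pointerdef}) to convert a witness of $\internalprofile{\pi'} \ledot \internalprofile{\pi}$ into a witness for $\internalprofile{\sigma'} \ledot \internalprofile{\sigma}$. I build $\sigma'$ by swapping $\pi$ for $\pi'$ inside $\sigma$ and repositioning the non-$U$ events guided by that witness.

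First, write $\sigma = X_0 \circ \pi_1 \circ X_1 \circ \pi_2 \circ \ldots \circ \pi_m \circ X_m$, where $m = |U|$ and each $X_i$ is a (possibly empty) block of non-$U$ nodes. A short case analysis using the isolation of $G[U]$ shows that for $1 \le i \le m-1$, nodes in $X_i$ have no $U$-ancestors (otherwise $t$ would be an ancestor, contradicting that the node appears in $\sigma$ before $t = \pi_m$) and no $U$-descendants other than possibly $t$ (since every edge leaving $U$ originates at $t$). Hence the middle blocks $X_1, \ldots, X_{m-1}$ can be freely regrouped and placed anywhere strictly between $\pi'_1 = s$ and $\pi'_m = t$, so long as the relative order of non-$U$ nodes inherited from $\sigma$ is preserved.

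Next, extract a pointer-advancement witness $(a_j, b_j)_j$ for $\internalprofile{\pi'} \ledot \internalprofile{\pi}$ and construct $\sigma'$ by replaying it: each $a$-advance emits the next event of $\pi'$, and each $b$-advance past the ``after $\pi_i$'' slot triggers the emission of the entire block $X_i$ into $\sigma'$ at that moment; the blocks $X_0$ and $X_m$ sit at the beginning and end of $\sigma'$, as in $\sigma$. Validity of $\sigma'$ then follows from $\pi'$'s validity on $G[U]$, preservation of the $\sigma$-order of non-$U$ nodes, and the block-placement freedom established above.

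To verify $\internalprofile{\sigma'} \ledot \internalprofile{\sigma}$, I extend $(a_j, b_j)$ to a witness for the full profiles by, at each block-emission moment, inserting a synchronized run of advances that traverses the profile slots contributed by $X_i$ on both sides. The justification is that between emission boundaries, memory decomposes cleanly as a $U$-contribution (equal to the current slot of $\internalprofile{\pi}$ or $\internalprofile{\pi'}$, since intermediate $U$-nodes have no edges to non-$U$) plus a non-$U$-contribution depending only on which non-$U$ nodes have been scheduled; since corresponding boundaries in $\sigma'$ and $\sigma$ share the same prefix of $X$-blocks, the non-$U$ contributions match, and $\internalprofile{\pi'}[a_j] \le \internalprofile{\pi}[b_j]$ lifts to the desired inequalities. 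The main obstacle will be the boundary accounting at $s$ and $t$: non-$U$ predecessors of $s$ or $t$ may be released exactly when $s$ or $t$ runs, and $t$'s output can persist in memory past $\pi_m$ until a non-$U$ consumer in $X_m$ runs. These two transitions break the clean decomposition and require a direct memory calculation to confirm the inequality, which I expect to follow from the ``during $t$'' and ``after $t$'' values in $\pi$'s internal profile matching their $\sigma$ counterparts up to a common non-$U$ additive term.
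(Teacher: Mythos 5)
Your proposal is correct and follows essentially the same route as the paper: both read a pointer-advancement witness for $\internalprofile{\pi'}\ledot\internalprofile{\pi}$ from \cref{defn:pointerdef}, use it to interleave $\pi'$ with the blocks of non-$U$ nodes of $\sigma$, and verify dominance via the additive split of memory into a $U$-contribution (a slot of the internal profile of $\pi$ or $\pi'$) plus a non-$U$ contribution that agrees at corresponding boundaries. The only machinery you elide is normalizing the witness so that blocks are emitted only at ``after'' slots rather than mid-operation (the paper's \cref{cor:oddeven}, via \cref{lem:oddhigher}); with that in place, the boundary accounting at $s$ and $t$ that you flag resolves exactly as you expect, since the profiles of $\sigma$ and $\sigma'$ coincide before $s$ and after $t$.
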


Using \cref{lem:exchangelinearize}, one can show the following linearization result. Figure \ref{fig:linearization} gives an illustration of linearization. 
\begin{figure}[htbp]
    \centering
    \includegraphics[width=0.9\textwidth]{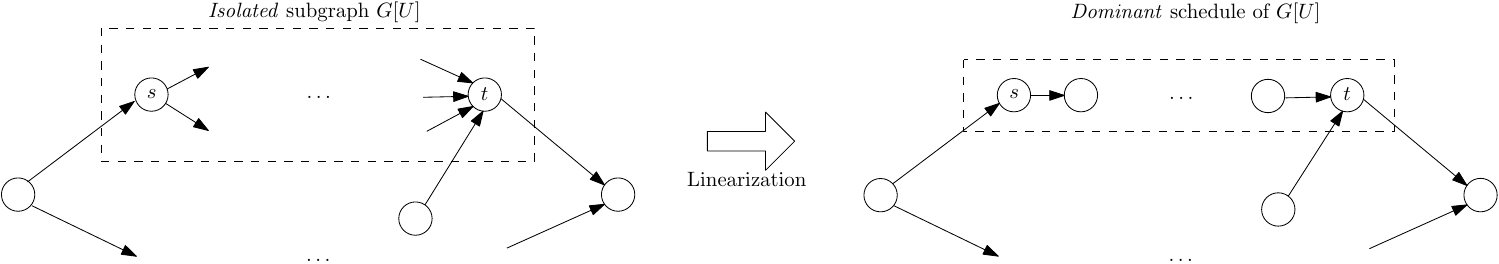}
    \caption{Linearization of an isolated subgraph by its dominant schedule}
    \label{fig:linearization}
\end{figure}

\begin{restatable}[Linearization]{lemma}{linearization}
\label{lem:linearization}
Let $G=(V,E)$ be a computation graph in the \truemem\ model, let $G[U]$ be an isolated subgraph of $G$ and let $\pi$ denote a dominant schedule for $G[U]$. Then we may replace the subgraph $G[U]$ with a path graph on $|U|$ nodes (with suitable node weights) corresponding to the schedule $\pi$, thus transforming $G$ into another (simpler) graph $G'$ such that: 
\begin{itemize}
    \item any valid schedule in $G'$ is also a valid schedule in $G$.
    \item the dominant schedule of $G'$ is a dominant schedule of $G$.
\end{itemize}
\end{restatable}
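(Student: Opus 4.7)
The proof naturally decomposes into the two bullets. For the first claim, I would construct $G'$ explicitly: remove all internal edges of $G[U]$ and replace them with a Hamiltonian path $\pi_1 \to \pi_2 \to \cdots \to \pi_{|U|}$ respecting the order of $\pi$, while leaving untouched every edge of $G$ with at least one endpoint outside $U$. Because $G[U]$ is isolated, every external edge meets $U$ only at $s = \pi_1$ or at $t = \pi_{|U|}$, which remain as the endpoints of the path, so the external connections still attach at the correct nodes. Any topological order $\sigma$ of $G'$ must list the nodes of $U$ in the order $\pi$ (enforced by the path) and respect the unchanged external edges; since $\pi$ is by assumption a valid schedule of $G[U]$, this order is automatically consistent with every internal precedence of $G[U]$ within $G$, so $\sigma$ is a valid schedule of $G$.

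For the second claim, the key preparatory step is to choose sizes and scratch values on the path nodes so that the (forced) schedule $\pi$ of the path $P_\pi$ reproduces $\internalprofile{\pi}$ coordinate-by-coordinate. Concretely, set $\sz{\pi_j}$ to equal the $\mathrm{mem_{after}}$ value of $\pi$ at position $j$ in $G[U]$, and then solve for $\scr{\pi_j}$ to match the $\mathrm{mem_{during}}$ value; negative scratch, permitted by \cref{eqn:scratchcondition}, gives the flexibility to do this. With these weights in place, I would prove the following invariant: for any schedule $\sigma''$ of $G$ whose $U$-subschedule equals $\pi$, the memory profile of $\sigma''$ computed in $G$ agrees with the memory profile of $\sigma''$ computed in $G'$. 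External contributions are identical because external edges are unchanged; for the $U$-contribution, isolation guarantees that every non-sink $U$-node has all its consumers inside $U$, so the aggregate live $U$-memory at any step is governed exactly by $\internalprofile{\pi}$ in $G[U]$, while in $P_\pi$ the same aggregate is concentrated in the single size $\sz{\pi_j}$ by construction.

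With the invariant in hand, the rest is quick. Let $\sigma^*$ be a dominant schedule of $G'$, which exists by \cref{thm:existsdominant}; part (i) shows $\sigma^*$ is also a valid schedule of $G$. For an arbitrary schedule $\sigma$ of $G$, let $\pi_\sigma := \sigma \capdot U$ be its $U$-subschedule. Since $\pi$ is dominant for $G[U]$, we have $\pi \ledot \pi_\sigma$, and the exchange lemma (\cref{lem:exchangelinearize}, applied after adding a super-source and super-sink to $G$ if needed to meet its hypotheses) yields a schedule $\sigma''$ of $G$ that contains $\pi$ as a subschedule and whose profile dominates that of $\sigma$. By the invariant, $\sigma''$ is also a valid schedule of $G'$ with an identical profile, so $\sigma^* \ledot \sigma''$ by dominance in $G'$, and transitivity (\cref{lem:superiority-reflexive-transitive}) gives $\sigma^* \ledot \sigma$. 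The main obstacle is really the invariant of the middle paragraph: the path construction feels natural, but justifying it requires careful tracking of which $U$-nodes are live between consecutive $U$-steps in $\sigma''$ and checking that their aggregate size can be encoded into one path node via adjusted scratch, which hinges on isolation forbidding external consumers of non-sink $U$-nodes and on the flexibility afforded by negative scratch.
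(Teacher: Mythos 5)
Your proposal is correct and follows essentially the same route as the paper: you construct a path whose forced schedule reproduces $\internalprofile{\pi}$ (the paper packages this as \cref{lem:chain}, which you re-derive by solving for sizes and scratch), and then combine the exchange lemma (\cref{lem:exchangelinearize}) with the dominance of the $G'$-optimal schedule and transitivity (\cref{lem:superiority-reflexive-transitive}) exactly as the paper does. Your extra care about the profile-agreement invariant and the single-source/single-sink hypothesis of the exchange lemma is a welcome tightening of details the paper leaves implicit, but it is not a different argument.
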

The proofs of \cref{lem:exchangelinearize} and \cref{lem:linearization} rely on the ``pointer advancement'' definition of dominance and crucially exploit the property 
that one can replace any valid schedule for a subgraph with a dominant schedule while always maintaining a lower memory cost. We defer the full proofs to \cref{sec:linearization-proofs}.
\section{Anatomy of a Dominant Schedule}
\label{sec:segments}

After invoking linearization via \cref{lem:linearization}, we get to replace an isolated subgraph with a directed path. There is only one order to schedule the nodes in this directed path, but how do we interleave these nodes with the surrounding graph? For example, if we had multiple such paths, how would we combine their unique schedules? This question was previously studied by Liu \cite{liu1987application} for in-trees. We generalize the merging strategy in that work to apply to any subschedules, not just subschedules that begin with zero memory usage.

\subsection{Isolated Subgraphs for Node Sum}

We previously defined isolated subgraphs for the \truemem\  model in \cref{defn:isolated}. Since we are now working with the node sum model, the definition can be a bit looser:
\begin{definition}[Isolated Subgraph for Node Sum]
\label{def:isolated-node-sum}
  Let $G=(V,E)$ be a DAG. For $U\subseteq V$, we say the induced subgraph $G[U]$ is an \emph{isolated subgraph for node sum} if and only if
  \begin{itemize}
    \item $G[U]$ is a DAG with a unique source $s$ and a unique sink $t$, where $s\neq t$.
    \item Any path in $G$ starting from some $x \in V \setminus U$ and ending at some $y\in U$ must either (i) go through $s$, or (ii) contain $t = y$ as the unique node in $U$. 
    \item Any path in $G$ starting from some $x \in U$ and ending at some $y \in V \setminus U$ must either (i) go through $t$, or (ii) contain $s = x$ as the unique node in $U$.
  \end{itemize}
\end{definition}

The key difference is that we permit edges to go from our subgraph to the surrounding graph if they originate from \emph{either} $s$ or $t$. For \truemem\ model, this was not allowed, because such edges would affect when $s$'s output could be released. However, in node sum, the effect of $s$ on memory is restricted to the moment it occurs.

\subsection{Segmentation}

We introduce the notion of a canonical segmentation of the unique schedule. We show that when choosing a schedule for the surrounding graph, each segment we construct can be assumed to have its nodes appear consecutively, i.e.\ treated as a single scheduling unit.

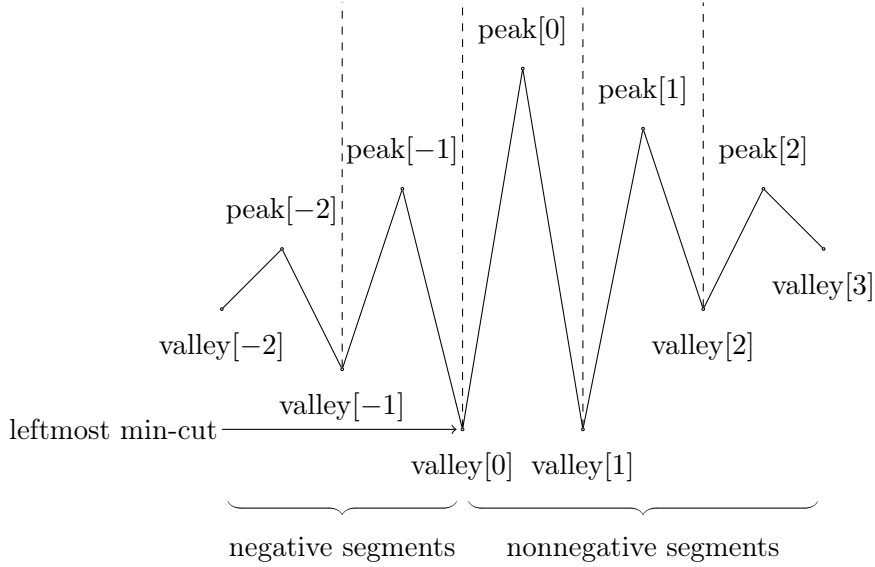
\begin{figure}
\centering
\begin{tikzpicture}[%
  auto,
  scale=0.8,
  void/.style={
    circle,
    draw=black,
    inner sep=0pt,
  },
  ]
  \node[void] (v-2) at (-5, 2) {}; \node [below=0.15cm of v-2] {$\valley{-2}$};
  \node[void] (p-2) at (-4, 3) {}; \node [above=0.15cm of p-2] {$\peak{-2}$};
  \node[void] (v-1) at (-3, 1) {}; \node [below=0.15cm of v-1] {$\valley{-1}$};
  \node[void] (p-1) at (-2, 4) {}; \node [above=0.15cm of p-1] {$\peak{-1}$};
  \node[void] (v+0) at (-1, 0) {}; \node [below=0.15cm of v+0] {$\valley{ 0}$};
  \node[void] (p+0) at ( 0, 6) {}; \node [above=0.15cm of p+0] {$\peak{ 0}$};
  \node[void] (v+1) at ( 1, 0) {}; \node [below=0.15cm of v+1] {$\valley{ 1}$};
  \node[void] (p+1) at ( 2, 5) {}; \node [above=0.15cm of p+1] {$\peak{ 1}$};
  \node[void] (v+2) at ( 3, 2) {}; \node [below=0.15cm of v+2] {$\valley{ 2}$};
  \node[void] (p+2) at ( 4, 4) {}; \node [above=0.15cm of p+2] {$\peak{ 2}$};
  \node[void] (v+3) at ( 5, 3) {}; \node [below=0.15cm of v+3] {$\valley{ 3}$};
  
  \draw (v-2) -- (p-2) -- (v-1) -- (p-1) -- (v+0) -- (p+0) -- (v+1) -- (p+1) -- (v+2) -- (p+2) -- (v+3);
  
  \draw [decorate, decoration = {calligraphic brace, amplitude=5pt}] (-1.1, -1.2) --  (-4.9, -1.2);
  \node (negative) at (-3, -2) {negative segments};
  \draw [decorate, decoration = {calligraphic brace, amplitude=5pt}] (4.9, -1.2) --  (-0.9, -1.2);
  \node (negative) at (2, -2) {nonnegative segments};
  
  \draw[dashed] (v-1) -- (-3, 7.1);
  \draw[dashed] (v+0) -- (-1, 7.1);
  \draw[dashed] (v+1) -- (1, 7.1);
  \draw[dashed] (v+2) -- (3, 7.1);
  
  \draw[->] (-5, 0) -- (-1.1, 0);
  \node (mincut) at (-6.8, 0) {leftmost min-cut};
\end{tikzpicture}
\caption{High-level view of a segment decomposition as given by \cref{defn:segments}. The decomposition begin at the leftmost min-cut, then recursively finds peaks and valleys going in both directions. Boundaries between segments are denoted with dashed vertical lines. In general, a memory profile might not transition in perfectly straight lines between peaks and valleys, but it must stay between the closest peak and valley due to \cref{lem:segment-peak-valley}.}
\label{fig:tikz-segments}
\end{figure}

Let $G = (V, E)$ be a weighted DAG in the node sum memory model. Suppose the induced subgraph $G[U]$ for some $U \subseteq V$ is an isolated subgraph for node sum and is a directed path $\pi = (\pi_1, \pi_2, \dots, \pi_k)$.

\begin{definition}[Directional Peaks and Valleys]
\label{defn:directional-peaks-valleys}
  Suppose we have a weighted directed path $\pi = (\pi_1, \pi_2, \cdots, \pi_k)$ in the node sum memory model.

  For $1 \le \ell \le r \le k-1$, the left-peak over the subpath $\pi_{\ell:r}$ is the smallest index $i \in \ell:r$ achieving peak memory: $\leftpeak{\pi_{\ell:r}} := \min \arg \max_{i \in \ell:r} \nodesumcost{\pi_{1:i}}$. Similarly, the right-peak is the largest index: $\rightpeak{\pi_{\ell:r}} := \max \arg \max_{i \in \ell:r} \nodesumcost{\pi_{1:i}}$.
  
  For $1 \le \ell \le r \le k$, the left-valley over the subpath $\pi_{\ell:r}$ is the smallest index $i \in \ell:r$ achieving minimum memory: $\leftvalley{\pi_{\ell:r}} := \min \arg \min_{i \in \ell:r} \nodesumcost{\pi_{1:i}}$. Similarly, the right-valley is the largest index: $\rightvalley{\pi_{\ell:r}} := \max \arg \min_{i \in \ell:r} \nodesumcost{\pi_{1:i}}$.
\end{definition}

Let $\mincut = \leftvalley{\pi_{1:(k-1)}}$. \zoya{why does it go until k-1 and not k?} We will define two groups of segments: \emph{negative segments} that partition $(\pi_1, ..., \pi_{\mincut})$, and \emph{nonnegative segments} that partition $(\pi_{\mincut+1}, ..., \pi_k)$\footnote{To contrast with previous work on in-trees \cite{liu1987application}, the minimum cut in in-trees always occurrs at the beginning of the schedule (with zero memory usage) and hence only nonnegative segments are constructed.}.

Segments are defined recursively. We start with the (earliest) point having minimum memory, then proceed from it in both directions. Going right, find the highest peak and then the lowest valley after it. The segment goes from the min-cut to that valley. To define the next segment, find the highest peak after the current valley, and the lowest valley after it. This defines the nonnegative segments, whose peaks get lower and valleys get higher, the farther we get from the min-cut toward the last node $\pi_k$. Thus, such a nonnegative segment $(i, j)$ between two valleys has $\nodesumcost{\pi_{1:j}} \ge \nodesumcost{\pi_{1:i}}$, hence the name ``nonnegative''\footnote{Only the first nonnegative segment can have equality and hence be a ``zero'' segment; this is an artifact of starting from the earliest point with minimum memory.}. A symmetrical process happens going left from the min-cut, toward the first node $\pi_1$. In that case a segment $(i, j)$ between two valleys has $\nodesumcost{\pi_{1:j}} < \nodesumcost{\pi_{1:i}}$, so it is called ``negative''.

\newcommand{\firstvalley}{a}
\newcommand{\lastvalley}{z}

Formally, these segments are defined as follows.
\begin{definition}[Segments]
\label{defn:segments}
  We first define nonnegative segments. Let $\valley{0} := \mincut = \leftvalley{\pi_{1:(k-1)}}$. Then define:
  \begin{align*}
    \peak{i} &:= \rightpeak{\pi_{ \valley{i} : (k-1)}},\\
    \valley{i+1} &:= \rightvalley{\pi_{ \peak{i} : (k-1)}},
  \end{align*}
  until some $\valley{\lastvalley} = k-1$. The nonnegative segments are:
  \begin{align*}
    \positivesegments[] &:= \left\{
      (\valley{0} + 1):\valley{1},
      (\valley{1} + 1):\valley{2},
      \cdots,
      (\valley{\lastvalley - 1} + 1):\valley{\lastvalley}
    \right\}
  \end{align*}
  
  We next define negative segments. We keep $\valley{0} := \mincut$ and define:
  \begin{align*}
    \peak{i-1} &:= \leftpeak{\pi_{ 1:\valley{i} }},\\
    \valley{i-1} &:= \leftvalley{\pi_{ 1:\peak{i-1} }},
  \end{align*}
  until some $\valley{\firstvalley} = 1$. The negative segments are:
  \begin{align*}
    \negativesegments[] &:= \left\{
      (\valley{\firstvalley} + 1):\valley{\firstvalley + 1},
      \cdots,
      (\valley{-2} + 1): \valley{-1},
      (\valley{-1} + 1): \valley{0}
    \right\}
  \end{align*}
\end{definition}

By this definition, each segment $\segment$ is of the form $(\valley{i}+1):\valley{i+1}$ and contains $\peak{i}$. We next state a technical lemma that helps guarantee the memory usage along a segment stays bounded between its two valleys and its peak. The proof of this lemma is deferred to \cref{sec:appendix:segments}.

\begin{restatable}{lemma}{segmentpeakvalley}
\label{lem:segment-peak-valley}
  For all $i \in (\firstvalley:\lastvalley-1)$: 
  \begin{align*}
    \valley{i}   &= \leftvalley{\pi_{\valley{i}:\peak{i}}} \\
    \peak{i}     &= \rightpeak{\pi_{\valley{i}:\peak{i}}} \\
    \peak{i}     &= \leftpeak{\pi_{\peak{i}:\valley{i+1}}} \\
    \valley{i+1} &= \rightvalley{\pi_{\peak{i}:\valley{i+1}}} \\
  \end{align*}
\end{restatable}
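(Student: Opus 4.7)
The plan is to reduce each of the four equalities to the same elementary observation: if an index $j^{\star}$ achieves an extremum of $\nodesumcost{\pi_{1:\cdot}}$ on a range $R$ and $j^{\star} \in R' \subseteq R$, then $j^{\star}$ still achieves the same extremum on $R'$; and if in addition $j^{\star}$ sits at the endpoint of $R'$ dictated by the tie-breaking convention (leftmost vs.\ rightmost), then $j^{\star}$ is exactly the index selected by \cref{defn:directional-peaks-valleys} applied to $R'$.

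First I would establish the monotonicity $\valley{i} \le \peak{i} \le \valley{i+1}$ for every $i \in [\firstvalley,\lastvalley-1]$ by unfolding the definitions in \cref{defn:segments}. For non-negative $i$ this reads off directly from $\peak{i} := \rightpeak{\pi_{\valley{i}:(k-1)}}$ and $\valley{i+1} := \rightvalley{\pi_{\peak{i}:(k-1)}}$; for negative $i$, substituting $j = i-1$ in the recursive definitions yields the symmetric pair $\valley{j} = \leftvalley{\pi_{1:\peak{j}}}$ and $\peak{j} = \leftpeak{\pi_{1:\valley{j+1}}}$, giving the same chain of inequalities. Consequently both intervals $\valley{i}:\peak{i}$ and $\peak{i}:\valley{i+1}$ are non-empty, and each is contained in the range over which the relevant index was originally defined to be an extremum (e.g., $\valley{i}:\peak{i} \subseteq \peak{i-1}:(k-1)$ when $i \ge 1$, $\subseteq 1:(k-1)$ when $i = 0$, and $\subseteq 1:\peak{i}$ when $i \le -1$; symmetric containments hold for $\peak{i}:\valley{i+1}$).

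With these containments in hand, each of the four equalities follows by the same two-line verification. For the first equality, $\valley{i}$ still achieves the minimum of $\nodesumcost{\pi_{1:\cdot}}$ on $\valley{i}:\peak{i}$ by restriction, and because $\valley{i}$ is the left endpoint of this sub-range, it is automatically the \emph{leftmost} minimizer, as required by $\leftvalley$. The second and third equalities follow identically, using that $\peak{i}$ is a maximizer on each of the larger defining ranges and sits at the right endpoint of $\valley{i}:\peak{i}$ (yielding $\rightpeak$) and at the left endpoint of $\peak{i}:\valley{i+1}$ (yielding $\leftpeak$).

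The only slightly fussy point is the fourth equality in the regime $i = -1$: here $\valley{0}$ was originally chosen via $\leftvalley$ on $1:(k-1)$, whereas the claim asserts $\valley{0}$ is the \emph{rightmost} minimizer on $\peak{-1}:\valley{0}$. This is nonetheless immediate because $\valley{0}$ is the right endpoint of the sub-range, so ``rightmost minimizer'' and ``is a minimizer'' coincide. The remaining sub-cases of the fourth equality ($i \ge 0$ using $\valley{i+1} = \rightvalley{\pi_{\peak{i}:(k-1)}}$, and $i \le -2$ using $\valley{i+1} = \leftvalley{\pi_{1:\peak{i+1}}}$) work identically. The main obstacle is therefore just bookkeeping across the three index regimes for each of the four claims; no tool beyond \cref{defn:segments,defn:directional-peaks-valleys} is needed.
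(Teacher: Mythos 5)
Your proposal is correct and follows essentially the same route as the paper's proof: in each case you restrict the extremality of the index from the range over which it was defined in \cref{defn:segments} to the (contained) subrange in the claim, and observe that since the index sits at the appropriate endpoint of the subrange, the leftmost/rightmost tie-breaking in \cref{defn:directional-peaks-valleys} is automatic. Your explicit treatment of the monotonicity $\valley{i}\le\peak{i}\le\valley{i+1}$ and of the $i=-1$ sub-case of the fourth equality is a slightly more careful write-up of the same argument.
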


The following lemma states that schedules for the surrounding graph should keep our segments intact.
The proof of this lemma is deferred to \cref{sec:appendix:segments}.

\begin{restatable}{theorem}{segmentconsecutive}
    
\label{thm:segment-consecutive}
  Let $G = (V, E)$ be a weighted DAG in the node sum memory model. Suppose the induced subgraph $G[U]$ is an isolated subgraph for node sum (\cref{def:isolated-node-sum}), for some $U \subseteq V$. Additionally it is a directed path $\pi = (\pi_1, \pi_2,\dots , \pi_k)$. Then the partition of this path into segments, as defined above, satisfies the following: for any schedule $\sigma$ of $G$, there is a schedule $\sigma'$ of $G$ such that $\sigma' \ledot \sigma$ and each segment appears consecutively in $\sigma'$.
\end{restatable}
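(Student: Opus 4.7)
The plan is an iterative exchange argument: at each step we pick a segment that is not yet contiguous in the current schedule, modify the schedule so that this segment becomes contiguous while dominating the previous schedule, and repeat until all segments are contiguous. Since the number of non-contiguous segments strictly decreases, this terminates with the desired $\sigma'$.

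Consider a segment $\segment=(\valley{i}{+}1):\valley{i+1}$ that is not contiguous in the current schedule $\sigma$. By \cref{defn:segments} we have $\valley{\firstvalley}=1$ and $\valley{\lastvalley}=k-1$, so the two endpoints $\pi_1$ and $\pi_k$ of the path are never in any segment; every node of $\segment$ is therefore an interior node of the path $\pi$, and by isolation (\cref{def:isolated-node-sum}) such a node has no edges to or from $V\setminus U$. Hence any non-$U$ ``interloper'' currently scheduled strictly between two segment nodes of $\segment$ in $\sigma$ can be moved to either side of $\segment$ without breaking the topological order of $G$, provided the relative order of non-$U$ nodes is preserved. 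We form $\sigma''$ as follows: let $X^-$ (resp.\ $X^+$) consist of the interlopers lying before (resp.\ after) $\pi_{\peak{i}}$ in $\sigma$; move all of $X^-$ to the positions immediately preceding $\pi_{\valley{i}+1}$ (in the original relative order) and all of $X^+$ to the positions immediately following $\pi_{\valley{i+1}}$. All edits stay inside the window of $\sigma$ strictly between $\pi_{\valley{i}}$ and the node just after $\pi_{\valley{i+1}}$, so no segment that was already contiguous in $\sigma$ has its contiguity disturbed.

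To show $\sigma'' \ledot \sigma$ we invoke the pointer-advancement formulation (\cref{defn:pointerdef}). Outside the edited window, the two memory profiles coincide, and both pointers advance in lockstep. A key observation is that at the position of $\pi_{\peak{i}}$ both schedules have scheduled the same set of nodes (the pre-window content, plus $X^-$, plus $\{\pi_{\valley{i}+1},\dots,\pi_{\peak{i}}\}$), so their memory values agree there; likewise at the final position of the window both schedules have scheduled pre-window content together with $X^-\cup X^+$ and the whole segment. Using the interval bounds of \cref{lem:segment-peak-valley} — namely that the path cost stays in $[c_s,c_p]$ over $[\valley{i}:\peak{i}]$ and in $[c_e,c_p]$ over $[\peak{i}:\valley{i+1}]$ — one can construct two ``halves'' of the matching (ascent and descent) that each align a contiguous block of $\sigma''$-positions against a monotone sweep of $\sigma$-positions of no smaller memory; between the $\sigma''$-positions of segment nodes, the $\sigma$-pointer is advanced through whichever interlopers of $X^-$ (respectively $X^+$) have not yet been matched in order to keep $\sigma$'s memory values dominating.

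The main technical obstacle is gluing the two halves of the matching at the peak and at the two ends of the window. After $\sigma''$'s pointer reaches $\pi_{\peak{i}}$ the two profiles must agree, and the $\sigma$-pointer has to have landed precisely on $\pi_{\peak{i}}$ as well; the same synchronisation is needed at the far end of the window. What makes this possible is the extremal choice encoded in \cref{defn:directional-peaks-valleys} and \cref{lem:segment-peak-valley}: $\peak{i}$ is the \emph{rightmost} argmax on the ascending side and the \emph{leftmost} argmax on the descending side, and $\valley{i},\valley{i+1}$ are the analogous extremal valleys. Without these extremal choices, the $\sigma$-pointer could be forced past a valley that sits strictly below any value $\sigma''$ could provide, breaking the advancement invariant. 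Carrying out this peak/valley bookkeeping rigorously — in particular, verifying that at every instant the value currently held by the $\sigma''$-pointer is at most some still-unmatched value reachable by the $\sigma$-pointer — is the heart of the proof.
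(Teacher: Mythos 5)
Your proposal is correct and takes essentially the same approach as the paper: although you describe the rearrangement as pushing the interlopers out to the two ends of the segment while the paper describes it as collapsing the segment onto the position of its peak, the two constructions produce the identical schedule, and your dominance argument (pointer advancement synchronized at the peak, with \cref{lem:segment-peak-valley} bounding the path's contribution on the ascent and descent) is exactly the paper's. The one detail you leave implicit is that when both pointers step through a shared non-segment node the order of the two advances must be chosen by the sign of that node's weight, which the paper spells out but which your ``keep $\sigma$'s memory values dominating'' phrasing correctly gestures at.
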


\subsection{Merging Segments}

\newcommand{\sortvalue}[1]{\mathrm{value}_\mathrm{sort}\left(#1\right)}

We now describe how to combine two (or more) sets of segments. We claim that they should be sorted according to the following criteria.
\begin{definition}[Segment Sort Value]
  Suppose we have a nonnegative segment $\positivesegment = \valley{i}:\valley{i+1}$ with peak $\peak{i}$. Its sort value is:
  \begin{align*}
    \sortvalue{\positivesegment}
      &:= \left(
                    1 + \nodesumcost{\pi_{1:\peak{i}}} -
                    \nodesumcost{\pi_{1:\valley{i+1}}}
                  \right)^{-1}
  \end{align*}
  Suppose we have a negative segment $\negativesegment = \valley{i}:\valley{i+1}$ with peak $\peak{i}$. Its sort value is:
  \begin{align*}
    \sortvalue{\negativesegment}
      &:= \left(
                    -1 - \nodesumcost{\pi_{1:\peak{i}}} +
                    \nodesumcost{\pi_{1:\valley{i}}}
                  \right)^{-1}
  \end{align*}
\end{definition}

We claim that in the absence of any precedence constraints, a (merged) set of segments should be sorted in nondecreasing order. \zoya{Would it simplify things to get rid of the $^{-1}$ and sort non-increasing?} Intuitively, this definition of sort value has the following characteristics:
\begin{itemize}
  \item Since the peak of a segment uses more memory than either valley, nonnegative segments get a positive sort value and negative segments get a negative sort value. Hence all negative segments occur before all nonnegative segments.
  \item Among nonnegative segments, segments with a larger gap between their peak and their right valley occur before segments with a smaller such gap.
  \item Among negative segments, segments with a smaller gap between their peak and their left valley occur before segments with a larger such gap.
\end{itemize}

\begin{lemma}
\label{lem:segmentation-is-sorted}
  Segments produced by \cref{defn:segments} have strictly increasing sort values.
\end{lemma}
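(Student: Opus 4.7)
My plan is to reduce the inequality on sort values to a pair of monotonicity statements about the quantities $M(i) := \nodesumcost{\pi_{1:i}}$ evaluated at successive peaks and valleys, and then to match the three possible types of consecutive-segment pairs (nonneg/nonneg, neg/neg, and the single neg/nonneg transition at the mincut) to the right sign and reciprocal-flipping case. Throughout, I will use the ``rightmost'' (resp.\ ``leftmost'') tiebreaking in \cref{defn:directional-peaks-valleys} to upgrade all weak inequalities to strict ones.

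First I would prove the monotonicity lemma that drives everything: (i) on the nonnegative side, $M(\peak{0})>M(\peak{1})>\cdots$ and $M(\valley{0})<M(\valley{1})<\cdots$; (ii) on the negative side, $M(\peak{-1})>M(\peak{-2})>\cdots$ and $M(\valley{0})<M(\valley{-1})<\cdots$. Each such inequality is an easy consequence of the recursive definition together with the rightmost/leftmost tiebreaking. For instance, $\peak{i+1}$ lies in $\valley{i+1}{:}(k-1)\subset \valley{i}{:}(k-1)$ with $\peak{i+1}>\peak{i}$, so if $M(\peak{i+1})\ge M(\peak{i})$ then $\peak{i+1}$ would also achieve the maximum on $\pi_{\valley{i}:(k-1)}$ at a strictly larger index, contradicting $\peak{i}=\rightpeak{\pi_{\valley{i}:(k-1)}}$; the other three statements are symmetric.

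Next I would handle the three cases. For two consecutive nonnegative segments $i$ and $i+1$: by monotonicity, $M(\peak{i})-M(\valley{i+1})>M(\peak{i+1})-M(\valley{i+2})$ strictly, so $1+M(\peak{i})-M(\valley{i+1})>1+M(\peak{i+1})-M(\valley{i+2})>0$, and taking reciprocals of positive numbers flips the inequality to give $\sortvalue{\positivesegment_i}<\sortvalue{\positivesegment_{i+1}}$. For two consecutive negative segments $i$ and $i{+}1$ (with $i+1$ closer to $0$), the same monotonicity gives $M(\peak{i+1})-M(\valley{i+1})>M(\peak{i})-M(\valley{i})$, hence $-1-M(\peak{i+1})+M(\valley{i+1})<-1-M(\peak{i})+M(\valley{i})<0$; taking reciprocals of \emph{negative} numbers again flips the inequality, which this time yields $\sortvalue{\negativesegment_i}<\sortvalue{\negativesegment_{i+1}}$. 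Finally, for the boundary between the last negative segment and the first nonnegative one, the denominator of a negative segment is $\le -1<0$ while that of a nonnegative segment is $\ge 1>0$, so every negative segment has a sort value in $[-1,0)$ and every nonnegative one in $(0,1]$, and the needed inequality follows for free.

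I do not expect any real obstacle; the only thing to be careful about is the sign convention in the reciprocal (the inequality flips for positive denominators and flips again for negative ones, which is exactly why the definition of $\sortvalue{\cdot}$ uses opposite-sign offsets $\pm 1$ on the two sides). Once the monotonicity lemma is stated, the rest is a one-line calculation in each case, so the proof should be short.
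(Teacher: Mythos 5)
Your proof is correct and follows essentially the same route as the paper's: the rightmost/leftmost tiebreaking forces strict monotonicity of the peak memories (and monotonicity of the valley memories), hence strictly monotone peak-to-valley gaps, hence strictly increasing sort values after the $\pm 1$ offset and the reciprocal, with the negative-versus-nonnegative separation handled by the sign of the sort value. One small overstatement: $\nodesumcost{\pi_{1:\valley{0}}} < \nodesumcost{\pi_{1:\valley{1}}}$ need not be strict (the paper notes the first nonnegative segment can be a ``zero'' segment), but this is harmless since $\nodesumcost{\pi_{1:\valley{0}}}$ never enters any sort value and the strictness of the gap comparison already comes from the peaks.
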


\begin{proof}
  By construction, all negative segments come before all nonnegative segments, so we only need to compare within each type.
  
  For nonnegative segments, the peaks have strictly decreasing memory usage because each peak was chosen to be the largest index possible that had that memory usage. Additionally, the valleys have weakly increasing memory usage (in fact, only the first two can tie; the rest are also chosen to be the largest index possible). As a result, the gaps between peak and right valley must be strictly decreasing. Since the sort values invert these gaps, they are strictly increasing.
  
  For negative segments, the peaks have strictly decreasing memory usage \emph{going from high index to low index} (remember negative segments are defined by a ``reverse'' process). The valleys have strictly increasing memory usage \emph{going from high index to low index}. As a result, the gaps between peak and left valley must be strictly decreasing \emph{going from high index to low index}, and strictly increasing in the normal order. Since the sort values both negate and invert these gaps, they are strictly increasing as well.
  
  This completes the proof.
\end{proof}

\newcommand{\sigmasort}{\sigma_{\mathrm{sort}}}

Finally we state the main theorem of this section. The proof of this theorem builds on the lemmas established in this section, and we defer the full proof to \cref{sec:appendix:segments}.

\begin{restatable}{theorem}{segmentmerge}
\label{thm:segment-merge}
   Given a set of directed paths and subsets of their associated segments, $\{(\pi, \segments)\}$, consider any schedule $\sigmasort$ obtained by arranging them in nondecreasing sort order (breaking ties arbitrarily). Let $\sigma$ be any schedule which arranges them in any order (but keeps nodes from the same segment in order). Then $\sigmasort \ledot \sigma$.
\end{restatable}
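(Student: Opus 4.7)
The plan is to establish the theorem via an exchange argument: I show that any schedule $\sigma$ satisfying the hypotheses can be transformed into $\sigmasort$ by a sequence of adjacent segment transpositions, each of which produces a dominating schedule. Since dominance is transitive (\cref{lem:superiority-reflexive-transitive}) and preserved under concatenation (\cref{lem:concatendominance}), and since every swap affects only the two-segment window in which it occurs, it suffices to verify dominance for a single adjacent swap.

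First, I would argue that only swaps between segments from different paths are needed. By \cref{lem:segmentation-is-sorted}, segments drawn from a single path already appear in strictly increasing sort-value order in any schedule that respects within-path order (which is the case for both $\sigma$ and $\sigmasort$). Hence any adjacent inversion in $\sigma$ with respect to sort order must involve two segments from distinct paths, between which there is no precedence constraint, so the swap is always valid. A standard bubble-sort argument then reduces the theorem to the following adjacent-swap lemma: for any two adjacent segments $S_1,S_2$ with $\sortvalue{S_1} > \sortvalue{S_2}$ drawn from different paths, the schedule obtained by swapping them dominates the original.

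The bulk of the work is this adjacent-swap lemma, which I would prove by case analysis. Let $m$ denote the memory usage just before $S_1$ in the original schedule, and let $a_i, p_i, b_i$ be the left-valley, peak, and right-valley memory values of $S_i$ in its source path. Set $P_i \defeq p_i - a_i$, $Q_i \defeq p_i - b_i$, and $\Delta_i \defeq b_i - a_i = P_i - Q_i$, so that $P_i, Q_i \ge 0$. The inversion $\sortvalue{S_1} > \sortvalue{S_2}$ falls into exactly one of three cases: (A) both segments nonnegative with $Q_1 < Q_2$; (B) both segments negative with $P_1 > P_2$; (C) $S_1$ nonnegative and $S_2$ negative. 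By \cref{lem:segment-peak-valley}, within each segment the rising half (from left-valley to peak) takes values in $[m', m'+P_i]$ and the falling half (from peak to right-valley) takes values in $[m'+\Delta_i, m'+P_i]$, where $m'$ is the memory at the start of that segment in the embedded schedule. The pointer-advancement check then reduces to aligning the five extreme points of each profile over the two-segment window: start ($=m$), first peak, middle valley, second peak, end ($=m+\Delta_1+\Delta_2$). The crucial algebraic inequalities are: in Case A, $m+\Delta_2+P_1 \le m+\Delta_1+P_2$ is equivalent to $Q_1 \le Q_2$; in Case B, $m+P_2 \le m+P_1$ is equivalent to $P_1 \ge P_2$, with a symmetric bound on the second peak; in Case C, $\Delta_2 < 0$ directly bounds both peaks of the swapped profile by the corresponding peaks of the original.

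The main obstacle will be making the pointer-advancement sequence of \cref{defn:pointerdef} fully rigorous, so that it correctly threads through \emph{all} intermediate memory values of each half-segment and not only the five extreme points. The essential tool here is \cref{lem:segment-peak-valley}: once a pointer enters a rising (resp.\ falling) half-segment, every intermediate memory value along it is weakly bounded by the two endpoints of that half. This lets me park one profile's pointer at an appropriate extreme point while sweeping the other profile's pointer through an entire half-segment, certifying the dominance inequality at every intermediate step. Once the three cases of the adjacent-swap lemma are verified in this way, concatenation and transitivity immediately yield $\sigmasort \ledot \sigma$.
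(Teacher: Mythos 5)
Your proposal is correct and follows essentially the same route as the paper's proof: a bubble-sort of adjacent segment exchanges, localized via concatenation and transitivity, with the same three-case split on segment signs and the same key inequalities (your $Q_1\le Q_2$, $P_1\ge P_2$, and $\Delta_2<0$ conditions match the paper's peak comparisons exactly), and with \cref{lem:segment-peak-valley} doing the same job of confining intermediate memory values within each half-segment during pointer advancement. The only step the paper takes that you omit is to first invoke \cref{thm:segment-consecutive} so that each segment's nodes actually appear consecutively in $\sigma$ (the hypothesis only guarantees their relative order), after which segments can legitimately be treated as atomic units for the swaps.
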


\section{Series-Parallel Graphs}
\label{sec:sp}

In this section, we present our algorithm for series-parallel (SP) computation graphs. Although there is known to be a polynomial-time algorithm in the edge-weighted model \cite{kayaaslan2018scheduling},\footnote{As we mention earlier, results for the edge-weighted model do not apply here. In particular, pebbling and computation graphs that are series-parallel are not series-parallel in the edge-weighted model.} it is NP-hard to compute a peak-minimizing schedule for SP graphs in the computation graph model (which we show in Section~\ref{sec:hardness}). As a consequence, our algorithm presented in this section will have an exponential dependence on $d$, the maximum out-degree of the SP graph.
We first formally define (two-terminal) SP directed graphs~\cite{DBLP:journals/siamcomp/ValdesTL82}. 

\begin{definition}[Series-Parallel Graph]
\label{def:sp-graph}
A \emph{series-parallel} graph $G$ with source $s$ and sink $t$ is a DAG, recursively defined to be one of the following:
\begin{itemize}
  \item \textbf{Base case:} A graph with two vertices $s$ and $t$ and a directed edge $(s, t)$.
  \item \textbf{Series composition:} The series composition of two SP graphs, $G_1$ with source $s_1$ and sink $t_1$, and $G_2$ with source $s_2$ and sink $t_2$, is formed by identifying $t_1 = s_2$, and has source $s = s_1$ and sink $t = t_2$.
  \item \textbf{Parallel composition:} The parallel composition of two SP graphs, $G_1$ with source $s_1$ and sink $t_1$, and $G_2$ with source $s_2$ and sink $t_2$, is formed by identifying the source $s = s_1 = s_2$ and the sink $t = t_1 = t_2$. If any parallel edges are formed as a result, delete one of the two parallel edges.
\end{itemize}
\end{definition}

A \emph{series-parallel decomposition} of an SP graph is a tree demonstrating how \cref{def:sp-graph} applies to this graph. Leaves of the tree correspond to the edges of the SP graph (representing the base case), and internal tree nodes correspond to the intermediate SP graphs obtained by a series or a parallel composition of their children, with the root of the tree corresponding to the final graph.

\begin{restatable}{theorem}{spmain}
\label{thm:spmain}
    Let $G=(V, E)$ be a series-parallel graph with $n$ nodes and let $d$ denote the maximum out-degree of any node in $G$. Then there is an algorithm to find the dominant schedule for $G$ (in the computation graph memory model) in time $2^{O(d \log d)}\cdot poly(n)$.
\end{restatable}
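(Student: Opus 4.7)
My plan is to build the SP decomposition tree of $G$ in polynomial time~\cite{DBLP:journals/siamcomp/ValdesTL82} and compute the dominant schedule bottom-up along this tree. At each internal node of the decomposition, corresponding to an SP subgraph $H$, I produce the dominant schedule of $H$, represented after \cref{lem:linearization} as a weighted directed path, and pass it up to $H$'s parent in the decomposition; the inductive hypothesis then lets me treat each child of a composition as if it were already a single weighted path. The isolation precondition for linearization is satisfied because the source and sink of any SP subgraph are its only interface to the rest of $G$, using the more permissive node-sum formulation of isolation in \cref{def:isolated-node-sum}, which permits direct edges from the subgraph's source to a sibling in a parallel composition.

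The base case (a single edge) is trivial. For series composition $G_1 \circ G_2$, the two recursively linearized paths share the composition vertex; concatenating them at that vertex yields a single path whose memory profile is the concatenation of the two, and \cref{lem:concatendominance} certifies that this is dominant for the composed subgraph. The interesting case is parallel composition $G_1 \parallel G_2$: after linearizing the two children we have paths $\pi^{(1)}, \pi^{(2)}$ sharing source $s$ and sink $t$, and we must interleave their interiors into a dominant schedule for $H$.

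To perform this interleaving, I plan to pass through the node-sum reduction of \cref{alg:reduction}, under which $s$ acquires a virtual release node $\releasevertex{s}$ whose in-edges come from the start nodes of every out-neighbor of $s$ in the composed subgraph, and whose negative weight releases $\sizefn(s)$ only after \emph{all} of $s$'s out-neighbors have begun. This global precedence is the source of the exponential dependence on $d$: in every schedule, the first-start events of $s$'s $k \le d$ out-neighbors occur in some permutation, and the earliest time $\releasevertex{s}$ may fire is determined by the last of them. My algorithm enumerates all $k! \le d!$ such orderings. For each fixed ordering, the remaining task is a precedence-respecting merge of the segment decompositions (\cref{defn:segments}) of the two paths; by \cref{thm:segment-merge}, combined with the within-path sortedness of \cref{lem:segmentation-is-sorted}, this merge is dominated by sorting all segments by their sort value. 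Taking the best schedule over the $k!$ enumerations yields a dominant schedule for $H$, which is then linearized for the next level up. Each composition step costs $2^{O(d \log d)} \cdot \poly(n)$ and there are $O(n)$ composition steps, yielding the claimed overall runtime.

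The main obstacle I expect is proving that enumeration over just the $k!$ boundary orderings (paired with segment-sorted merges) is sufficient to produce a truly dominant schedule, not merely a peak-minimizing one. The plan is: given an arbitrary schedule $\tau$ for the composed subgraph, read off the induced ordering of the first-start events of $s$'s out-neighbors, locate the matching enumerated ordering, and then use the exchange lemma (\cref{lem:exchangelinearize}) together with \cref{thm:segment-merge} to rewrite the remainder of $\tau$ into segment-sorted form without weakening dominance. Executing this argument cleanly against the backdrop of the virtual release and finish nodes introduced by \cref{alg:reduction} is the delicate step, since these are negative-weight nodes constrained by many predecessors; I expect to invoke \cref{lem:eager} to normalize their placement to their earliest feasible position, so that each such release node is absorbed into a single segment of the enumerated interleaving and does not disrupt the sort.
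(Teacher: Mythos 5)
There is a genuine gap at the heart of your recursion: you cannot linearize the two children of a parallel composition separately. The child $G_1$ (which contains the shared source $s$) is \emph{not} an isolated subgraph of the composition in the sense of \cref{defn:isolated}, because $s$ has out-edges into the sibling $G_2$; and this is not a technicality that the node-sum relaxation (\cref{def:isolated-node-sum}) repairs. In the \truemem\ model, the moment at which $\sizefn(s)$ is released depends on \emph{all} of $s$'s out-neighbors, including those in $G_2$, so the dominant schedule of $G_1$ computed standalone bakes the release of $s$ into the wrong position of the path produced by \cref{lem:chain}; equivalently, the node-sum reduction (\cref{alg:reduction}) of $G_1$ standalone gives $\releasevertex{s}$ a different in-neighborhood than the reduction of the composed graph, so the standalone subproblem is simply not an induced subgraph of the real one. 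Moreover, \cref{lem:linearization} and \cref{lem:exchangelinearize} are only proved under the strict isolation of \cref{defn:isolated}; the paper never establishes a linearization lemma for node-sum-isolated subgraphs that are general DAGs (only \cref{thm:segment-consecutive}, which assumes the subgraph is already a path). This is precisely why the paper does \emph{not} walk the SP decomposition bottom-up: \cref{alg:sintree} deliberately picks the source $s$ that is \emph{last} in topological order among nodes of out-degree $\ge 2$ and takes the \emph{highest}-level parallel composition rooted at it, so that the entire starborescence (all branches at once, with an in-tree below $s$) is isolated in the strict sense and can be scheduled as one unit before any linearization happens across it.

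Two further points. First, your set of guessed events is too small in general: the paper's key nodes include not only the $d$ out-neighbors of $\finishvertex{s}$ but also the up to $d-1$ in-tree merge nodes of the starborescence, and the ordering of those must be guessed as well. Second, even after fixing an ordering of the key nodes, ``sort all segments by sort value and merge'' is not a valid final step: \cref{thm:segment-merge} applies only in the absence of precedence constraints between the merged pieces, whereas after fixing the staff each bend's segments are confined between specific staff nodes. Resolving that residual problem is exactly the content of the caduceus algorithm (\cref{thm:caduceus}, via the greedy segment-moving procedure and \cref{claim:greedy}), followed by \cref{thm:reductiontopeakminimization} to upgrade the peak-minimizing oracle to a dominant-schedule finder. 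Your high-level intuition --- that the exponential cost comes from the coupling of $s$'s release to all branches, handled by enumeration --- matches the paper's, but the decomposition you build it on does not support the inductive step.
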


Our algorithmic plan is to perform two reductions. Note that the first reduction stays in the \truemem\ model, but the second one transitions to the node sum model.  First, using our framework of linearization, we show that it suffices to schedule several instances of special cases of SP graphs, which we call \sintrees{}, in the \truemem\  model. Next, for each \sintree{}, we consider $2^{O(d \log d)}$ guesses for the order in which a set of its key nodes appears in the optimal schedule. For each such guess, we reduce the \sintree{} instance in the \truemem\ model to a \emph{caduceus} graph instance in the node sum model. Then we solve this caduceus instance using a greedy algorithm. \cref{thm:spmain} will follow from \cref{thm:starborescencetosp}, \cref{thm:caduceustosb}, and \cref{cor:caduceus}, which formalize these reductions below.

The two new graph classes are formally defined as follows. 
\begin{definition}[\Sintree{}]
   A \sintree{} is an SP graph $G = (V, E)$, with source $s$ and sink $t$, such that removing $s$, i.e. $G[V \setminus \{s\}]$, yields an in-tree\footnote{An in-tree is a directed graph such that the underlying undirected graph forms a tree and each node has at most one out-neighbor. Strictly speaking, in-trees are usually referred to as anti-arborescences, but that does not portmanteau as well with the star emanating from $s$.} rooted at $t$. 
\end{definition}

\begin{definition}[Caduceus\protect\footnotemark \ Graph]\footnotetext{The caduceus is a symbol of Hermes in Greek mythology and depicts a short staff that is entwined by two serpents.}
  A graph $G = (V, E)$ is caduceus if it can be written in the following form. The core of the graph is the staff $\staff$, which is a directed path $\staff = (v_1, v_2, \dots, v_p)$. The remainder of the graph consists of $b$ additional bends $\bends$; each bend $\bend \in \bends$ is a directed path that starts and ends on the staff: $\bend = (v_\ell, u_1, \dots, u_k, v_r)$ for some $\ell < r$. Each bend has at least one internal bend node and these internal bend nodes $u_1, \dots, u_k$ are distinct over all bends and are distinct from the nodes of the staff.
\end{definition}

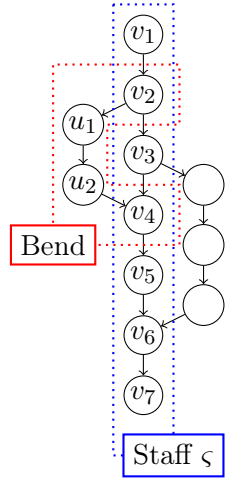
\begin{figure}
\centering
\begin{tikzpicture}[%
  auto,
  scale=0.8,
  hollownode/.style={
    circle,
    draw=black,
    inner sep=1pt,
  },
  blueblock/.style={
    rectangle,
    solid,
    draw=blue,
    fill=white,
    text=black,
    align=center,
  },
  redblock/.style={
    rectangle,
    solid,
    draw=red,
    fill=white,
    text=black,
    align=center,
  },
  ]
  \node[hollownode] (v1) at ( 0, 6) {$v_1$};
  \node[hollownode] (v2) at ( 0, 5) {$v_2$};
  \node[hollownode] (v3) at ( 0, 4) {$v_3$};
  \node[hollownode] (v4) at ( 0, 3) {$v_4$};
  \node[hollownode] (v5) at ( 0, 2) {$v_5$};
  \node[hollownode] (v6) at ( 0, 1) {$v_6$};
  \node[hollownode] (v7) at ( 0, 0) {$v_7$};
  
  \node[hollownode] (u1) at (-1, 4.5) {$u_1$};
  \node[hollownode] (u2) at (-1, 3.5) {$u_2$};
  
  \node[hollownode] (uu1) at (1, 3.5) {\phantom{$u_1$}};
  \node[hollownode] (uu2) at (1, 2.5) {\phantom{$u_2$}};
  \node[hollownode] (uu3) at (1, 1.5) {\phantom{$u_3$}};
  
  \draw[->] (v1) -- (v2);
  \draw[->] (v2) -- (v3);
  \draw[->] (v3) -- (v4);
  \draw[->] (v4) -- (v5);
  \draw[->] (v5) -- (v6);
  \draw[->] (v6) -- (v7);
  
  \draw[->] (v2) -- (u1);
  \draw[->] (u1) -- (u2);
  \draw[->] (u2) -- (v4);
  
  \draw[->] (v3) -- (uu1);
  \draw[->] (uu1) -- (uu2);
  \draw[->] (uu2) -- (uu3);
  \draw[->] (uu3) -- (v6);
  
  \draw[blue, thick, dotted] (-0.5, 6.5) rectangle (0.5, -1) node[blueblock] {Staff $\staff$};
  \draw[red, thick, dotted] (-1.5, 2.5) node[redblock] {Bend} -- (-1.5, 5.5) -- (0.6, 5.5) -- (0.6, 4.5) -- (-0.6, 4.5) -- (-0.6, 3.5) -- (0.6, 3.5) -- (0.6, 2.5) -- cycle;
\end{tikzpicture}
\caption{A caduceus graph with a seven-node staff $\staff$ (marked in blue) and two bends $\bends$ (one bend $\bend \in \bends$ is marked in red).}
\label{fig:tikz-caduceus}
\end{figure}
A caduceus graph is depicted in \cref{fig:tikz-caduceus}.

\subsection{Reducing SP Graphs to \Sintrees{}}

We now present the first reduction. Staying in the \truemem\ model, it uses linearization to reduce an SP graph problem to a set of $O(n)$ \sintree{} problems.

\begin{restatable}{theorem}{starborescencetosp}
\label{thm:starborescencetosp}
  Suppose we have an algorithm that, given a \sintree{} with $n$ nodes and maximum out-degree $d$ in the \truemem\  model, finds a dominant schedule in $f(n,d)$ time. Then there is an algorithm that, given an SP graph with $n$ nodes and maximum out-degree $d$ in the \truemem\ model, finds a dominant schedule in $O(n^2 + n \cdot f(n, d))$ time.
\end{restatable}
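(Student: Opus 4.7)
The plan is to process the series-parallel decomposition tree of $G$ bottom-up, using \cref{lem:linearization} to replace each SP subgraph by a single directed path, and invoking the assumed starborescence subroutine exactly at parallel-composition nodes. First, compute the SP decomposition tree $T$ of $G$ in linear time using the algorithm of Valdes, Tarjan, and Lawler. Every internal node of $T$ corresponds to a two-terminal SP subgraph $G[U]$; from the recursive definition of SP graphs, only the source $s_U$ and sink $t_U$ of such a subgraph can be incident to edges outside $U$, and $t_U$ is a sink inside $G[U]$. Hence $G[U]$ satisfies the conditions of \cref{defn:isolated} and is an isolated subgraph, which permits linearization.

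Next, process $T$ bottom-up, maintaining the invariant that after visiting an internal node the subgraph it represents has been replaced in the working graph by a directed path from its source to its sink. A series composition of two already-linearized paths sharing an endpoint is itself a single directed path; its unique topological schedule is trivially dominant, so no starborescence call is needed. A binary parallel composition of two already-linearized paths $P_1,P_2$ sharing source $s$ and sink $t$ forms a pumpkin: two internally-disjoint $s$-to-$t$ paths. Deleting $s$ leaves these two paths, each ending at $t$, which is an in-tree rooted at $t$, so the pumpkin is a starborescence, whose maximum out-degree is $2 \le d$. Invoke the assumed subroutine on it in time $f(n,d)$ to obtain a dominant schedule, and apply \cref{lem:linearization} to replace the pumpkin by the corresponding path.

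By iterated application of \cref{lem:linearization}, every linearization preserves the dominant schedules of the surrounding graph, so after processing the root of $T$ the resulting linear order is a dominant schedule of the original SP graph $G$. For the running time, $T$ has at most $n-1$ internal nodes, hence at most $n-1$ calls to the starborescence subroutine, each on a pumpkin with at most $n$ vertices and out-degree at most $d$, contributing $O(n\cdot f(n,d))$. The remaining bookkeeping --- maintaining the path representations via linked lists, performing series concatenations, constructing each pumpkin subgraph together with its weights, and splicing the returned schedule back into the working graph --- costs $O(n)$ per node of $T$ for an additional $O(n^2)$, giving the claimed $O(n^2 + n\cdot f(n,d))$ bound.

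The main point to verify is that each SP subgraph we work on is isolated not only in the original $G$ but also in the \emph{current} working graph after previous linearizations have modified other parts of $G$. This holds because each prior linearization touches only vertices interior to an already-processed, disjoint (or nested and already fully processed) subgraph and preserves that subgraph's two terminals, so the edges connecting any currently-unprocessed SP subgraph of $T$ to the outside world are unchanged and its isolation property is preserved. I expect this isolation-preservation check and the verification that a pumpkin is indeed a starborescence to be routine; the only real care needed is bookkeeping to stay within the stated running-time bound.
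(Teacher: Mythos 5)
There is a genuine gap: not every internal node of the SP decomposition tree is an isolated subgraph in the sense of \cref{defn:isolated}, and your bottom-up scheme applies \cref{lem:linearization} to subgraphs that fail it. Concretely, the third condition of \cref{defn:isolated} requires that every path leaving $U$ pass through the sink $t_U$. But if a parallel composition node $G[U]$ sits below another parallel composition with the same source (which is exactly how a vertex of out-degree $d>2$ arises from nested binary compositions), then $s_U$ has out-edges to branches outside $U$, and a path from $s_U\in U$ directly into another branch leaves $U$ without passing through $t_U$. So the inner two-branch ``pumpkin'' you linearize is not isolated, and \cref{lem:linearization} gives you no guarantee. This is not a technicality: in the computation graph model, $\sz{s_U}$ is held until \emph{all} of $s_U$'s consumers have started, so the internal profile of the two-branch subgraph (which releases $s_U$ after its two apparent consumers) does not reflect its memory behaviour inside $G$, and the dominant schedule of the sub-pumpkin need not extend to a dominant schedule of $G$. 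A quick sanity check confirms the approach cannot work: a $d$-branch pumpkin is built from $d-1$ nested binary parallel compositions, so your algorithm would solve it using only two-branch subproblems in polynomial time, contradicting \cref{thm:pumpkin-hardness}.

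The paper's proof avoids this by never splitting the out-star of a high-out-degree vertex: it takes $s$ to be the \emph{last} vertex in a topological order with out-degree at least two and linearizes the \emph{highest-level} parallel composition having $s$ as its source. This guarantees (i) all of $s$'s out-edges stay inside the chosen subgraph, so the third isolation condition holds, and (ii) every other vertex of the subgraph has out-degree at most one, so the subgraph is a \sintree{} --- with out-degree up to $d$, which is precisely why the subroutine must handle out-degree $d$ and why the exponential dependence on $d$ is unavoidable. Your series-composition observation and the $O(n)$-iterations accounting are fine, but the decomposition step needs to be replaced by the top-down, last-high-out-degree-vertex selection (or an equivalent device ensuring the source's entire out-star is absorbed) rather than a bottom-up sweep of the decomposition tree.
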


\begin{figure}
\centering
\begin{tikzpicture}[%
  auto,
  scale=0.8,
  hollownode/.style={
    circle,
    draw=black,
    inner sep=1pt,
  },
  shadednode/.style={
    circle,
    draw=black,
    fill=yellow,
    inner sep=1pt,
  }
  ]
  \node[hollownode] (v0) at ( 0, 4) {0};
  \node[hollownode] (v1) at (-1, 3) {1};
  \node[shadednode] (v2) at ( 1, 3) {2};
  \node[hollownode] (v3) at (-2, 2) {3};
  \node[shadednode] (v4) at ( 0, 2) {4};
  \node[shadednode] (v5) at ( 1, 2) {5};
  \node[shadednode] (v6) at ( 2, 2) {6};
  \node[hollownode] (v7) at (-1, 1) {7};
  \node[shadednode] (v8) at ( 1, 1) {8};
  \node[hollownode] (v9) at ( 0, 0) {9};
  
  \draw[->] (v0) -- (v1);
  \draw[->] (v0) -- (v2);
  \draw[->] (v1) -- (v3);
  \draw[->] (v1) -- (v7);
  \draw[->] (v2) -- (v4);
  \draw[->] (v2) -- (v5);
  \draw[->] (v2) -- (v6);
  \draw[->] (v3) -- (v7);
  \draw[->] (v4) -- (v8);
  \draw[->] (v5) -- (v8);
  \draw[->] (v6) -- (v8);
  \draw[->] (v7) -- (v9);
  \draw[->] (v8) -- (v9);
\end{tikzpicture}
\caption{Locating a \sintree{} in an SP graph. The nodes are labeled with their order from an initial topological sort. Node $2$ is the last node with out-degree at least two. Algorithm~\ref{alg:sintree} identifies the set of yellow nodes.}
\label{fig:tikz-sintree}
\end{figure}
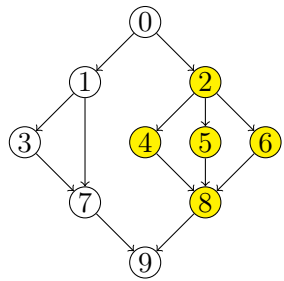

\begin{proof}
  We repeatedly perform the following steps, until the whole graph becomes a path:
  \begin{enumerate}
    \item identify a \sintree{}, which is an isolated subgraph, in the SP graph,
    \item apply the assumed subroutine on it to get a dominant schedule,
    \item use Linearization (Lemma \ref{lem:linearization}) to replace the \sintree{} with a path where the nodes are ordered according to the dominant schedule.
  \end{enumerate}
  In each iteration, we remove at least one parallel composition from our SP graph. Since our original graph can involve at most $O(n)$ parallel compositions, we only need to perform this sequence of steps $O(n)$ times. It remains to explain how to perform each of these computations in $O(n + f(n, d))$ time.
  
  \begin{algorithm}[H]
  \SetAlgoLined
  \KwData{SP-Graph $G = (V, E)$ which is not entirely series (i.e., not a path)}
  \KwResult{A subset of vertices $U$ such that $G[U]$ is a \sintree{} and is an isolated subgraph (\cref{defn:isolated})}
  
  Let $\sigma$ be a topological sort of $V$\;
  Let $s$ be the last node in $\sigma$ with $\abs{\outneighbors{s}} \ge 2$\;
  Compute a series-parallel decomposition of $G$\;
  \Return the set of nodes $U$ of the highest-level parallel composition node in the decomposition tree that has $s$ as a source\; \label{algline:source}
  \caption{$\textsc{LocateStarborescence}(G)$}
  \label{alg:sintree}
\end{algorithm}
  In order to identify a \sintree{}, we run \cref{alg:sintree}. We first prove the correctness of this algorithm. We begin doing so by explaining why \cref{algline:source} is guaranteed that $s$ is a source of some composition node. This is because $s$ has out-degree at least two, the base case for SP graphs only has nodes with out-degree at most one, and only parallel composition increases the out-degree of nodes (and only the source).
  
  We now show that $G[U]$ is a \sintree{}. The key observation is that, since $s$ is the source of $G[U]$ and $s$ was chosen to be the last node in a topological sort with out-degree at least two, all other nodes of $U$ must have out-degree at most one. Hence they must form an in-tree whose sink $t$ is the same as that of the composition identified by the algorithm.
  
  Next, we show that $G[U]$ is an isolated subgraph (\cref{defn:isolated}). Recall that $s$ and $t$ are the source and sink of the highest-level composition identified by the algorithm. From the point of view of one of the subgraphs, compositions only add edges via identifying two nodes together, and the only nodes that are so identified are sources and sinks. Hence any path in $G$ starting from some $x \in V \setminus U$ and ending at some $y \in U$ must go through $s$ or contain $t = y$ as the unique node in $U$.
  
  We have almost proven the third property in \cref{defn:isolated} as well; we already know that a path starting from $U$ must go through either $s$ or $t$ and want to show that it only goes through $t$. Since $s$ is a source, it gains in-edges when involved in a series composition and out-edges when involved in a parallel composition. However, it cannot be involved in any parallel compositions because it would remain the source and our algorithm chose the highest-level composition that had $s$ as a source. Hence $s$ cannot gain out-edges and only $t$ can. We have shown that $G[U]$ is an isolated subgraph.
  
  Next, we consider the runtime of \cref{alg:sintree}. We first remark that for SP graphs, the number of edges $m$ is $O(n)$ (this can be observed by undirecting their edges and either noting that the resulting graph is planar or has constant treewidth).
  With this in mind, our topological sort and scanning for $s$ take $O(n)$ time. Next, computing a series-parallel decomposition can be done in linear time as well \cite{DBLP:journals/siamcomp/ValdesTL82}. Since such decompositions only involve $O(n)$ compositions, we can find the highest-level composition with $s$ as a source in $O(n)$ time as well.
  
  The other two steps are easy to do accounting for; applying the subroutine requires $f(n, d)$ time by definition, and Linearization (\cref{lem:linearization}) takes $O(m + n) = O(n)$ time. This completes the proof.
\end{proof}

\subsection{Reducing \Sintrees\ to Caduceus Graphs}

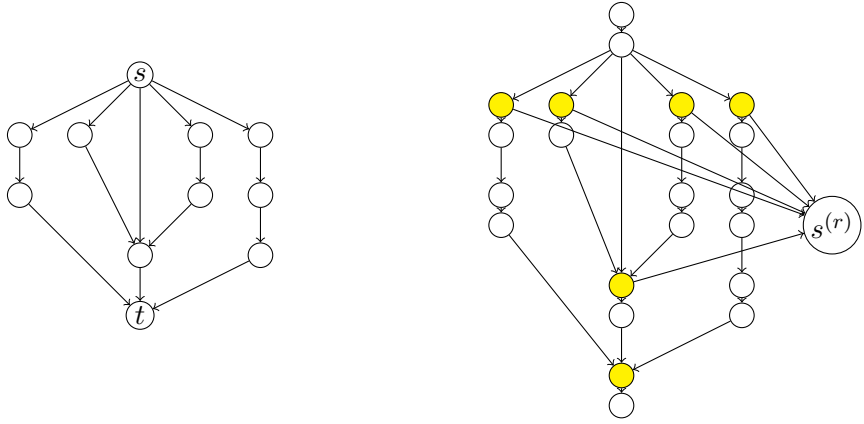
\begin{figure}
\centering
\begin{tikzpicture}[%
  auto,
  scale=0.8,
  hollownode/.style={
    circle,
    draw=black,
    inner sep=1pt,
  },
  shadednode/.style={
    circle,
    draw=black,
    fill=yellow,
    inner sep=1pt,
  }
  ]

  \node[hollownode]  (s) at ( 0, 5) {$s$};
  \node[hollownode] (v0) at (-2, 4) {\vphantom{0}};
  \node[hollownode] (v1) at (-1, 4) {\vphantom{0}};
  \node[hollownode] (v2) at ( 1, 4) {\vphantom{0}};
  \node[hollownode] (v3) at ( 2, 4) {\vphantom{0}};
  \node[hollownode] (v4) at (-2, 3) {\vphantom{0}};
  \node[hollownode] (v5) at ( 1, 3) {\vphantom{0}};
  \node[hollownode] (v6) at ( 2, 3) {\vphantom{0}};
  \node[hollownode] (v7) at ( 0, 2) {\vphantom{0}};
  \node[hollownode] (v8) at ( 2, 2) {\vphantom{0}};
  \node[hollownode] (t) at ( 0, 1) {$t$};
  
  \draw[->]  (s) -- (v0);
  \draw[->]  (s) -- (v1);
  \draw[->]  (s) -- (v2);
  \draw[->]  (s) -- (v3);
  \draw[->]  (s) -- (v7);
  \draw[->] (v0) -- (v4);
  \draw[->] (v1) -- (v7);
  \draw[->] (v2) -- (v5);
  \draw[->] (v3) -- (v6);
  \draw[->] (v4) -- (t);
  \draw[->] (v5) -- (v7);
  \draw[->] (v6) -- (v8);
  \draw[->] (v7) -- (t);
  \draw[->] (v8) -- (t);

  \node[hollownode]  (ss) at (8, 6) {\vphantom{0}};
  \node[hollownode]  (sf) at (8, 5.5) {\vphantom{0}};
  \node[hollownode] (sr) at (11.5, 2.5) {$\releasevertex{s}$};
  \node[shadednode] (v0s) at (6, 4.5) {\vphantom{0}};
  \node[hollownode] (v0f) at (6, 4) {\vphantom{0}};
  \node[shadednode] (v1s) at (7, 4.5) {\vphantom{0}};
  \node[hollownode] (v1f) at (7, 4) {\vphantom{0}};
  \node[shadednode] (v2s) at (9, 4.5) {\vphantom{0}};
  \node[hollownode] (v2f) at (9, 4) {\vphantom{0}};
  \node[shadednode] (v3s) at (10, 4.5) {\vphantom{0}};
  \node[hollownode] (v3f) at (10, 4) {\vphantom{0}};
  \node[hollownode] (v4s) at (6, 3) {\vphantom{0}};
  \node[hollownode] (v4f) at (6, 2.5) {\vphantom{0}};
  \node[hollownode] (v5s) at (9, 3) {\vphantom{0}};
  \node[hollownode] (v5f) at (9, 2.5) {\vphantom{0}};
  \node[hollownode] (v6s) at (10, 3) {\vphantom{0}};
  \node[hollownode] (v6f) at (10, 2.5) {\vphantom{0}};
  \node[shadednode] (v7s) at (8, 1.5) {\vphantom{0}};
  \node[hollownode] (v7f) at (8, 1) {\vphantom{0}};
  \node[hollownode] (v8s) at (10, 1.5) {\vphantom{0}};
  \node[hollownode] (v8f) at (10, 1) {\vphantom{0}};
  \node[shadednode] (ts) at (8, 0) {\vphantom{0}};
  \node[hollownode] (tf) at (8, -0.5) {\vphantom{0}};

  \draw[->] (ss) -- (sf);
  \draw[->] (v0s) -- (v0f);
  \draw[->] (v1s) -- (v1f);
  \draw[->] (v2s) -- (v2f);
  \draw[->] (v3s) -- (v3f);
  \draw[->] (v4s) -- (v4f);
  \draw[->] (v5s) -- (v5f);
  \draw[->] (v6s) -- (v6f);
  \draw[->] (v7s) -- (v7f);
  \draw[->] (v8s) -- (v8f);
  \draw[->] (ts) -- (tf);
  
  \draw[->]  (sf) -- (v0s);
  \draw[->]  (sf) -- (v1s);
  \draw[->]  (sf) -- (v2s);
  \draw[->]  (sf) -- (v3s);
  \draw[->]  (sf) -- (v7s);
  \draw[->] (v0f) -- (v4s);
  \draw[->] (v1f) -- (v7s);
  \draw[->] (v2f) -- (v5s);
  \draw[->] (v3f) -- (v6s);
  \draw[->] (v4f) -- (ts);
  \draw[->] (v5f) -- (v7s);
  \draw[->] (v6f) -- (v8s);
  \draw[->] (v7f) -- (ts);
  \draw[->] (v8f) -- (ts);

  \draw[->] (v0s) -- (sr);
  \draw[->] (v1s) -- (sr);
  \draw[->] (v2s) -- (sr);
  \draw[->] (v3s) -- (sr);
  \draw[->] (v7s) -- (sr);
        
\end{tikzpicture}
\caption{A \sintree{} in the \truemem\ model (left) and its transformation to the node sum model (right). The key nodes of the node sum graph $G$ are highlighted.}
\label{fig:tikz-key-nodes}
\end{figure}

\begin{restatable}{theorem}{caduceustosb}
\label{thm:caduceustosb}
  Suppose we have an algorithm that given a caduceus graph, finds a dominant schedule in the node sum model in $f(n)$ time. Then there is an algorithm that given a \sintree{} with maximum out-degree $d$, finds a dominant schedule in the \truemem\   model in $2^{O(d \log d)} \cdot O(n^2 + f(n))$ time.
\end{restatable}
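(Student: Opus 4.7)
The plan is a two-step reduction. First, apply \cref{alg:reduction} to convert the \sintree{} $G$ (in the \truemem{} model) into its node sum counterpart $\ns G$; by \cref{lem:model-reduction} any dominant schedule for $\ns G$ can be translated back to a dominant schedule for $G$ in linear time, so it suffices to find a dominant schedule of $\ns G$. Second, identify the \emph{key nodes} of $\ns G$: the start nodes $u^{(s)}$ of the out-neighbors of $s$ together with $t^{(s)}$, shown shaded on the right side of \cref{fig:tikz-key-nodes}. Since $s$ has out-degree at most $d$, there are at most $d+1$ key nodes.

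The algorithm enumerates all $(d+1)! = 2^{O(d\log d)}$ orderings $\pi$ of the key nodes. For each $\pi$, construct a modified graph $\ns G_\pi$ by adding a chain of precedence edges that forces $\pi$ and stripping transitively-implied direct edges (such as $s^{(f)} \to u_i^{(s)}$ for $i \ge 2$ and $u_i^{(s)} \to s^{(r)}$ for $i<d$). The central structural claim is that $\ns G_\pi$, after applying \cref{lem:linearization} to each non-branching subpath of the underlying in-tree, is a caduceus graph. Its staff follows $s^{(s)} \to s^{(f)}$, then the key nodes in $\pi$-order, then $s^{(r)}$, then the start/finish nodes of the in-tree branching points (ordered as forced by $\pi$ together with the in-tree hierarchy), and finally $t^{(f)}$; each bend is a side excursion of the form $u^{(s)} \to u^{(f)} \to \cdots$ that climbs the in-tree from an out-neighbor of $s$ until reaching the next staff node. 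Interior nodes across bends are disjoint because each bend occupies a distinct segment of the in-tree strictly below the next branching point.

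Running the assumed $f(n)$-time caduceus subroutine on each $\ns G_\pi$ yields a dominant schedule $\sigma_\pi$ for $\ns G_\pi$. By \cref{thm:existsdominant} a dominant schedule $\sigma^\star$ of $\ns G$ exists; let $\pi^\star$ be the order in which $\sigma^\star$ visits the key nodes. Since $\sigma^\star$ is itself a valid schedule of $\ns G_{\pi^\star}$ and every schedule of $\ns G_{\pi^\star}$ is a valid schedule of $\ns G$, the dominant schedule $\sigma_{\pi^\star}$ dominates $\sigma^\star$, which in turn dominates every other schedule of $\ns G$, so by transitivity (\cref{lem:superiority-reflexive-transitive}) $\sigma_{\pi^\star}$ is a dominant schedule for $\ns G$. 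The algorithm returns any $\sigma_\pi$ with the smallest peak memory (which necessarily equals $\peak{\sigma^\star}$) and converts it back to the \truemem{} model via \cref{lem:model-reduction}.

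The main obstacle is rigorously establishing the structural claim that $\ns G_\pi$ is a caduceus. The subtle point is accommodating the in-tree branching points, where paths from multiple out-neighbors of $s$ merge: these nodes must lie on the staff so that several bends can share them as endpoints (bends may share staff endpoints but never interior nodes). Handling this cleanly may require extending the guess to also specify the order of the at most $d-1$ branching points, which still keeps the total number of guesses at $2^{O(d\log d)}$. One must also verify that $s^{(r)}$ fits on the staff (its in-degree constraints are already implied once the key-node chain precedes it) and that each finish node sits on exactly one bend. The per-guess overhead—linearization, transitive edge reduction, caduceus solve, and pairwise dominance comparison via \cref{defn:algdef}—totals $O(n^2 + f(n))$, giving the claimed overall runtime $2^{O(d\log d)} \cdot O(n^2 + f(n))$.
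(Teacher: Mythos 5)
Your proposal follows essentially the same route as the paper: reduce to node sum via \cref{alg:reduction}, guess the order of a set of key nodes, enforce each guess with added precedence edges, and solve the resulting caduceus instance. The ``main obstacle'' you flag is not optional, though --- the paper resolves it exactly the way you hedge: the key nodes \emph{must} include the at most $d-1$ in-tree branching points (nodes of in-degree at least two) in the guessed ordering, since otherwise several bends would merge at a non-staff node and the graph would not be a caduceus. With that extension your construction coincides with the paper's (the paper does not need your intermediate linearization of non-branching subpaths; those subpaths simply become bends).

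One genuine slip: you say the algorithm ``returns any $\sigma_\pi$ with the smallest peak memory.'' That does not suffice --- the theorem asks for a \emph{dominant} schedule, and a peak-minimizing schedule need not be dominant (this is precisely \cref{ex:not-ideal}). Your own argument shows that $\sigma_{\pi^\star}$ dominates all schedules of $\ns{G}$, but the algorithm does not know $\pi^\star$, so it must compare the candidates $\{\sigma_\pi\}$ under the dominance relation (e.g.\ via \cref{defn:algdef}) and return the one that dominates the rest; your runtime accounting already budgets for this comparison, so the fix is only to the selection criterion, not the complexity.
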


\begin{proof}
  We start by transforming the \sintree{} from the \truemem\  model into the node sum model using \Cref{alg:reduction}. Let $G$ be the resulting graph (see \cref{fig:tikz-key-nodes}). Note that in $G$, only $s$ will have a release node, as it is the only node with out-degree greater than one. This release node $\releasevertex{s}$ will have start nodes of $s$'s out-neighbors as its in-neighbors. Aside from this one release node, the structure of the graph stays similar to what it was, only with each node $v$ replaced by an edge $(\startvertex{v}, \finishvertex{v})$.

  Next, we designate a subset of the nodes of $G$ as \emph{key nodes}. In particular, the following two types of nodes are key nodes: (i) any out-neighbor of $\finishvertex{s}$ and (ii) any node with an in-degree of at least two, except for the release node $\releasevertex{s}$.
  We observe that there are at most $O(d)$ key nodes. The out-degree of $\finishvertex{s}$, which is the same as the out-degree of the original root node $s$ in the \sintree{}, is $d$, so there are $d$ key nodes of the first type. To bound the number of key nodes of the second type, recall that removing $s$ from the \sintree{} forms an in-tree with at most $d$ leaves. Such a tree can have at most $d-1$ nodes with in-degree of at least two. 
  As the key nodes of the second type correspond to such in-tree nodes, there are also at most $d-1$ of them. Hence, there are at most $O(d)$ key nodes total.

  Next, we ``guess'' the order in which the key nodes appear in the dominant schedule for the node sum graph $G$. Specifically, we try all possible topologically-consistent orderings of the key nodes. Since there are $O(d)$ key nodes, there are $2^{O(d \log d)}$ possible orderings of them (and even fewer topologically consistent ones). We also insert $\releasevertex{s}$ into the guessed ordering, immediately following its last in-neighbor. (Since all its in-neighbors are key nodes of the first type, they are all included in the ordering. Furthermore, by \cref{lem:eager}, inserting $\releasevertex{s}$ as soon as possible preserves a dominant schedule.)

  For each such guessed ordering, we add edges to a copy of $G$ that enforce this ordering. (Recall that in the node sum model, edges don't influence memory consumption, they only enforce precedence constraints.) We then find the dominant schedule on each such modified graph (as described below). Since each guess is topologically consistent, all these schedules are feasible for $G$. At the end, we compare all these schedules to find the overall dominant one. 
  To verify correctness, consider a dominant schedule of $G$, which exists by \cref{thm:existsdominant}. That dominant schedule agrees with one of our guesses of ordering key nodes. Hence the schedule we generate for that guess dominates this overall-dominant schedule, which in turn dominates all schedules. Hence at least one of the compared schedules dominates the rest. 

  The remaining task is, given a guessed ordering $\sigma$, to find a dominant schedule of $G$ which is consistent with $\sigma$. We do this by a reduction to a caduceus graph. To get the staff of the caduceus graph, we include $\startvertex{s}$, then $\finishvertex{s}$, then the nodes of $\sigma$ (including $\releasevertex{s}$, which immediately follows its last in-neighbor), then any chain that may follow the last key node in $G$. If any consecutive nodes of this staff don't have an edge, we add that edge. Next, we observe that all the nodes of $G$ which are not part of the staff are on disjoint directed paths between different staff nodes. Thus, the resulting graph is a caduceus. We find a dominant schedule for it in $f(n)$ time using the hypothesized algorithm.

  In terms of running time, in addition to calling our hypothesized algorithm $2^{O(d \log d)}$ times, we need to transform the graphs as described above, which involves $O(n)$ work per guess. Finally, we need to compare schedules to figure out which one is dominant. For the sake of simplicity, we invoke \cref{defn:algdef} to do so in quadratic time, but this can be optimized to linear time using \cref{sec:segments}.
\end{proof}

\subsection{Solving Caduceus Graphs}

We have now transformed our SP graph problem in the \truemem\ model into many caduceus graph problems in the node sum model. We complete our SP graph algorithm by proving the following theorem in this subsection.
\begin{restatable}{theorem}{caduceus}
\label{thm:caduceus}
  There is an algorithm that given a caduceus graph $(\staff, \bends)$ in the node sum model, finds a memory-optimal schedule in $O(np \log n)$ time, where $p$ is the length of the staff $\staff$. 
\end{restatable}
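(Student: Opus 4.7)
The plan is to reduce the problem to interleaving atomic segments and then run a sort-value-based greedy sweep over the staff. First, observe that for each bend $\bend = (v_\ell, u_1, \dots, u_k, v_r) \in \bends$, the induced subgraph on $\{v_\ell, u_1, \dots, u_k, v_r\}$ is an isolated subgraph for node sum (Definition~\ref{def:isolated-node-sum}): every path from outside into an internal $u_i$ enters through $v_\ell$; every path from an internal $u_i$ out of the bend leaves through $v_r$; paths ending at $v_r$ from outside satisfy clause~(ii) with $y=t=v_r$; paths starting at $v_\ell$ to outside satisfy clause~(ii) with $x=s=v_\ell$. Hence by Theorem~\ref{thm:segment-consecutive} we may restrict to schedules in which the segments of each bend (computed via Definition~\ref{defn:segments}) appear consecutively, and by Lemma~\ref{lem:segmentation-is-sorted} the segments of a single bend already sit in strictly increasing sort order along that bend, so ordering segments of a bend by $\sortvalue{\cdot}$ automatically enforces the bend's internal path order.

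Given these atomic segments, the algorithm is a left-to-right sweep of the staff $v_1, \dots, v_p$ driven by a priority queue $Q$ keyed by $\sortvalue{\cdot}$, containing at most one ``frontier'' segment per currently-active bend. When the sweep reaches $v_i$:
\begin{itemize}
  \item if $v_i = v_\ell$ for some bend $\bend$, push $\bend$'s first segment into $Q$;
  \item if $v_i = v_r$ for some bend $\bend$, repeatedly pop $\bend$'s frontier segment, emit it, and advance $\bend$'s segment pointer until all of $\bend$'s segments have been emitted;
  \item otherwise, pop and emit every segment whose sort value is at most the ``effective sort value'' attributed to the upcoming staff node treated as a trivial one-node segment;
  \item emit $v_i$, and whenever a segment of some bend is popped, push that bend's next segment (if any) into $Q$.
\end{itemize}

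Correctness follows by an exchange argument in the style of Theorem~\ref{thm:segment-merge}. Let $\sigma^\star$ be the schedule produced by the greedy and let $\sigma$ be any other feasible schedule; locate the first position where $\sigma$ and $\sigma^\star$ disagree. At that position the greedy emits an atom $Y$ of minimum sort value among those feasible, while $\sigma$ emits a different atom $X$ with $\sortvalue{Y} \le \sortvalue{X}$; feasibility of the greedy implies $Y$ is also available in $\sigma$ at that position, and no precedence edge forbids swapping $X$ and $Y$. Theorem~\ref{thm:segment-merge} applied locally then shows the swap produces a schedule $\sigma'$ with $\sigma' \ledot \sigma$. Iterating the swap finitely many times converts $\sigma$ into $\sigma^\star$ while weakly improving the profile at every step, so $\sigma^\star$ attains the optimal peak memory.

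For the running time, each bend contributes at most $O(n)$ atomic segments overall, and each segment is pushed and popped from $Q$ once, contributing $O(n \log n)$ heap work. The per-staff-node threshold comparison and the flush at right endpoints can each trigger up to $O(n)$ pops across the queue; amortizing these over the $p$ staff steps yields the stated $O(np \log n)$ bound. The main obstacle I expect is pinning down the ``effective sort value'' attributed to a staff node in the threshold test of the third bullet and checking the exchange argument in the corner case where an entire bend's segments have sort values exceeding every staff contribution in its window; in that case the greedy is forced to defer the whole bend to the flush at $v_r$, and the exchange must still certify that this deferral is dominant relative to any alternative placement.
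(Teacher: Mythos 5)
Your setup is right and matches the paper's: each bend is an isolated subgraph for node sum, so \cref{thm:segment-consecutive} lets you treat its segments as atomic, and \cref{lem:segmentation-is-sorted} guarantees that sorting by $\sortvalue{\cdot}$ respects each bend's internal order. But the core of the problem is not how to \emph{order} segments that share a gap between two adjacent staff nodes (that part is indeed settled by \cref{thm:segment-merge}); it is \emph{which} of the gaps $v_i, v_{i+1}$ each segment should be assigned to, and this is where your proposal has a genuine gap. \cref{thm:segment-merge} is explicitly a statement about merging a fixed multiset of segments ``in the absence of any precedence constraints''; it gives no way to compare a segment against a staff node, because a staff node is a single fixed-position weight, not a peak--valley segment, and moving a segment across a staff node changes the gap assignment rather than a local ordering. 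Your own flagged obstacle --- defining the ``effective sort value'' of a staff node --- is not a loose end but the missing idea: no such local quantity can exist, because the correct gap for a segment depends on the global peak constraint. A negative segment still has a peak before its valley, so whether it can sit in an early gap depends on how much memory the surrounding schedule is carrying there; symmetrically for nonnegative segments. A single left-to-right sort-value sweep makes this decision locally and will in general produce a schedule whose peak exceeds the optimum, and the exchange step ``swap $X$ and $Y$ and apply \cref{thm:segment-merge} locally'' is unjustified exactly when one of the two atoms is a staff node.

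The paper resolves this with a threshold-feasibility algorithm rather than a one-pass greedy: fix a target peak $\guess$, initialize every negative segment as early as possible (in the gap after its bend's left endpoint) and every nonnegative segment as late as possible (in the gap before its bend's right endpoint), and move a segment to an adjacent gap only when it participates in a peak exceeding $\guess$; if a staff node itself peaks above $\guess$, or a forced move is infeasible, declare $\guess$ unachievable. Correctness rests on an invariant (the paper's \cref{claim:greedy}) that throughout the run every negative segment sits weakly earlier, and every nonnegative segment weakly later, than in \emph{any} feasible schedule with peak at most $\guess$, which yields a cut-by-cut comparison showing the algorithm's memory at each aligned point is at most that of the feasible schedule --- so every forced move is safe. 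The binary search over $\guess$ is then eliminated by lowering $\guess$ to the current peak after each round, keeping the total number of segment moves at $O(np)$ and giving $O(np\log n)$ with a max-segment-tree. To repair your proof you would need to replace the sort-value sweep with something equivalent to this extremal-initialization-plus-forced-moves scheme and prove the corresponding dominance invariant; the segmentation and within-gap sorting you have are necessary but not sufficient.
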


Recall that an algorithm that computes memory-optimal schedules can be used as a black-box to compute a dominant schedule via \cref{thm:reductiontopeakminimization}. This requires the hereditary property (\cref{defn:hereditary}), which we verify now for caduceus graphs.
\begin{claim}
  \label{claim:caduceushereditary}
The caduceus graphs form a hereditary class of single-source single-sink DAGs.
\end{claim}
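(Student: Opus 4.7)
I will exhibit an explicit caduceus structure on $\textsc{AddSource}(G[V\setminus U])$ and $\textsc{AddSink}(G[U])$ for every caduceus $G$ with staff $\staff=(v_1,\dots,v_p)$ and bends $\bends$, and every topological cut $U$; then verify the defining conditions of a caduceus.

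\textbf{Step 1: Parametrize topological cuts.} The first step is to observe that any topological cut $U$ of $G$ is determined by a single index $i\in\{0,\dots,p\}$ together with, for each bend $B=(v_\ell,u_1,\dots,u_{k_B},v_r)\in\bends$, a count $j_B\in\{0,\dots,k_B\}$: here $i$ is the largest index with $v_i\in U$, and $j_B$ is the number of internal bend nodes of $B$ in $U$. The topological-cut condition then forces (i)~$j_B\ge 1\Rightarrow \ell\le i$ and (ii)~$r\le i\Rightarrow j_B=k_B$. I call a bend \emph{complete} if $r\le i$, \emph{active partial} if $\ell\le i<r$ and $j_B\ge 1$, and otherwise irrelevant (it contributes no new structure to $G[U]$ or $G[V\setminus U]$).

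\textbf{Step 2: Left-hereditary direction.} Using this parametrization, the sinks of $G[U]$ are precisely the tail nodes $u_{j_B}$ of active partial bends, together with $v_i$ whenever no active partial bend starts at $v_i$. If $v_i$ is a sink, I take the new staff to be $(v_1,\dots,v_i,t)$ (using the edge $v_i\to t$ added by $\textsc{AddSink}$); the bends become the complete bends of $G$ (unchanged) and one bend $(v_\ell,u_1,\dots,u_{j_B},t)$ per active partial bend. Otherwise I \emph{adopt} one active partial bend $B^\star$ starting at $v_i$ into the staff, producing $(v_1,\dots,v_i,u_1^\star,\dots,u_{j^\star}^\star,t)$; complete bends of $G$ remain bends, and every other active partial bend $B$ becomes a bend $(v_\ell,u_1,\dots,u_{j_B},t)$ whose constraint $\ell<r$ holds since $\ell\le i<i+j^\star+1$ (the position of $t$). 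Each new bend has at least one internal bend node, and internal-node distinctness and disjointness from the staff carry over directly from $G$.

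\textbf{Step 3: Right-hereditary direction and edge cases.} A symmetric argument handles $\textsc{AddSource}(G[V\setminus U])$: the sources of $G[V\setminus U]$ are the head nodes $u_{j_B+1}$ of active partial bends with $j_B<k_B$, plus $v_{i+1}$ when no active partial bend ends at $v_{i+1}$; otherwise I adopt one such bend into the staff. The boundary cases $U=\varnothing$ and $U=V$ yield the single-node graphs $\{t\}$ and $\{s\}$, which are degenerate caduceuses with $p=1$ and no bends. The main obstacle, and the one place that needs real work, is the adoption step. The key observation that makes it go through is that once the interior of one partial bend is spliced into the staff, every remaining active partial bend's free endpoint is exactly $t$ (resp.\ $s$), a node now sitting on the new staff, so the orphans automatically become valid bends; verifying the bend inequality $\ell<r$ and internal-node distinctness for these orphaned partial bends -- especially when multiple partial bends share the endpoint $v_i$ or $v_{i+1}$ -- is the main bookkeeping burden of the proof.
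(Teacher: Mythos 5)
Your proof is correct and follows the same basic route as the paper's: a topological cut of a caduceus consists of a staff prefix, some complete bends, and some prefixes of bends with dangling ends, and $\textsc{AddSink}$ (resp.\ $\textsc{AddSource}$) closes the dangling ends into bends terminating at the new sink (resp.\ starting at the new source). Where you go beyond the paper is the ``adoption'' step, and this addresses a real subtlety that the paper's one-sentence argument elides: when an active partial bend starts at $v_i$, the node $v_i$ is \emph{not} a sink of $G[U]$, so $\textsc{AddSink}$ adds no edge $v_i \to t$ and the naive staff $(v_1,\dots,v_i,t)$ is not a directed path (e.g.\ staff $(v_1,v_2)$ with bend $(v_1,u_1,v_2)$ and $U=\{v_1,u_1\}$ yields the path $v_1\to u_1\to t$, whose staff must pass through $u_1$). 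Rerouting the staff through one such partial bend, and checking that the remaining partial bends still satisfy $\ell<r$ against the new position of $t$, is exactly the right fix, and your verification of it is sound.

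One small slip: in Step~3 your characterization of the sources of $G[V\setminus U]$ as heads of \emph{active} partial bends with $j_B<k_B$ omits bends with $\ell\le i<r$ and $j_B=0$, whose first internal node $u_1$ is also a source of $G[V\setminus U]$ (such bends are not ``irrelevant'' on the right side, contrary to your parenthetical in Step~1). This does not affect the argument---the genuinely symmetric statement is obtained by reversing all edges and exchanging $U$ with $V\setminus U$, under which a bend with $j_B$ internal nodes in $U$ has $k_B-j_B$ in $V\setminus U$---but the ``active partial'' terminology from Step~2 does not transfer verbatim.
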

\begin{proof}
  By symmetry, we only need to verify left-heredity. Given a caduceus graph $G=(V,E)$ and a topological cut $U$ of $G$, the induced subgraph $G[U]$ contains a prefix of the staff, some bends, and some prefixes of bends with a dangling end. Then, $\textsc{AddSink}(G[U])$ appends a sink at  the end of the staff, which also connects to the dangling nodes and turns them into bends. So $\textsc{AddSink}(G[U])$ is a caduceus graph.
\end{proof}

The following is a corollary of \cref{thm:caduceus}, \cref{thm:reductiontopeakminimization} (proved in \cref{subsec:reduction}),
and \cref{claim:caduceushereditary}. Combined with  \cref{thm:starborescencetosp} (reduction from SP graphs to \sintrees{}) and \cref{thm:caduceustosb} (reduction from \sintrees{} to caduceus graphs), it completes the proof of \cref{thm:spmain} (algorithm for SP graphs).
\begin{corollary}
\label{cor:caduceus}
  There is an algorithm that given a caduceus graph $(\staff, \bends)$ in the node sum model, finds a dominant schedule in $O(n^2 p \log n)$ time, where $p$ is the length of the staff $\staff$.
\end{corollary}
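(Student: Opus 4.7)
My plan is to combine the segmentation machinery from Section~\ref{sec:segments} with a sweep-line greedy over the staff. Each bend $\bend = (v_\ell, u_1, \dots, u_k, v_r)$ contributes an internal directed path $u_1, \dots, u_k$ whose nodes must appear strictly between $v_\ell$ and $v_r$ in every valid schedule; apart from this, the internal bend nodes have in-degree and out-degree $1$, so the bend (together with its endpoints) is an isolated subgraph for node sum in the sense of \cref{def:isolated-node-sum}. For each bend I would compute its segmentation using \cref{defn:segments}, producing a collection of negative and nonnegative segments with the sort values defined in Section~\ref{sec:segments}. The staff is a directed path with fixed order $v_1, \dots, v_p$, each node acting as its own trivial segment.

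By \cref{thm:segment-consecutive} applied to each bend, I can restrict attention to schedules that keep every bend segment consecutive, and by \cref{thm:segment-merge} within that class, arranging segments in nondecreasing sort order is dominant whenever no cross-path precedence forbids it. The only precedence constraints in the caduceus are (a) the staff is executed in order and (b) the segments of each bend appear in order and strictly between $v_\ell$ and $v_r$. To realize the sort-ordered merge I would sweep from $v_1$ to $v_p$, maintaining a min-heap (keyed by sort value) of currently active bend segments: when the sweep reaches $v_\ell$, insert the first segment of that bend; after extracting and scheduling a segment, push the next segment of the same bend (if any); before scheduling $v_r$, flush any remaining segments of bends ending there. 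At each step, the algorithm compares the heap minimum against the next staff node's sort value and schedules whichever is smaller.

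The main obstacle is correctness: \cref{thm:segment-merge} is proved for paths without cross-path precedence, whereas here bend segments are confined to the staff interval $[v_\ell, v_r]$. The argument I expect to need is a local exchange showing that whenever two segments are simultaneously available (both present in the heap, or one in the heap and one being the next staff segment), swapping them to respect sort order does not increase the peak. Since a bend segment is confined only in \emph{when} it may begin, but never in its relative order against other currently-available segments, an exchange argument analogous to the one behind \cref{thm:segment-merge} should carry through, yielding a schedule whose peak matches that of an unconstrained sort-ordered merge of the same segment multiset.

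For the running time, segmenting all bends takes $O(n)$ total time. The sweep processes each of the $p$ staff positions once; each of the $O(n)$ bend segments undergoes one insertion and one extraction at $O(\log n)$ each, and identifying and flushing segments at endpoints is amortized $O(1)$ per segment. This yields $O((n+p)\log n)$ in a tight implementation; the claimed $O(np \log n)$ bound comfortably absorbs a looser implementation that, for example, re-examines the heap at every staff position or recomputes sort values on the fly when inserting segments from newly-activated bends.
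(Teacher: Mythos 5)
There is a genuine gap: your construction targets the wrong objective. Throughout the proposal you argue only about the \emph{peak} (``swapping them to respect sort order does not increase the peak'', ``a schedule whose peak matches that of an unconstrained sort-ordered merge''), but the corollary asks for a \emph{dominant} schedule, which is strictly stronger than a peak-minimizing one (cf.\ \cref{ex:not-ideal}: a peak-minimizing schedule need not be dominant, and dominance also constrains valleys and all the recursively defined peaks and valleys). The paper's route is to first prove \cref{thm:caduceus}, which yields a \emph{memory-optimal} (peak-minimizing) schedule for a caduceus in $O(np\log n)$ time, then verify that caduceus graphs form a hereditary class (\cref{claim:caduceushereditary}), and finally invoke the black-box reduction of \cref{thm:reductiontopeakminimization}, which wraps the peak-minimizing oracle inside the recursion of \cref{alg:awesomizer}/\cref{alg:awesomizergeneral} and calls it $O(n)$ times. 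That extra factor of $n$ is exactly why the corollary's bound is $O(n^2p\log n)$; your estimate of $O((n+p)\log n)$ with the remark that $O(np\log n)$ ``comfortably absorbs'' slack shows that this entire layer is missing from your argument. Without it, you have at best a candidate for \cref{thm:caduceus}, not for \cref{cor:caduceus}.

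Even as a proof of peak-minimality, the single-pass heap merge is not justified. \cref{thm:segment-merge} applies to merging segment collections with no external precedence, whereas here each bend's segments are confined to the staff interval $[v_\ell, v_r]$ and the staff itself is a rigid skeleton whose nodes are not segments of any path in the sense of \cref{defn:segments}, so ``the next staff node's sort value'' is undefined. The real difficulty, which the paper's \cref{alg:caduceus} addresses with a threshold-guessing greedy and the exchange invariant of \cref{claim:greedy}, is deciding \emph{between which pair of adjacent staff nodes} each segment should be placed: a nonnegative segment may need to be deferred past several staff nodes to avoid a peak, and this decision depends on the accumulated memory level at that point of the staff, not on a local comparison of sort values among currently available segments. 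Your proposed local exchange handles the order of simultaneously available segments but not this placement decision, which is where the $O(np)$ segment moves (and the $p$ factor in the runtime) come from.
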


The rest of the section discusses how to find memory-optimal schedules for caduceus graphs in the node sum model, proving \cref{thm:caduceus}. 

\begin{proof}[Proof of \cref{thm:caduceus}]

We know that any schedule $\sigma$, when restricted to the nodes of the staff $\staff$, must contain them in the same order. Hence the remaining decision to make is, for each bend $\bend \in \bends$, how are its non-staff nodes $u_1, \ldots, u_k$ interleaved between $v_\ell$ and $v_r$?
 
We simplify the problem by dividing the non-staff nodes of each bend $\bend$ into segments using \cref{defn:segments}.  In particular, note that each bend is an isolated subgraph for node sum (\cref{def:isolated-node-sum}) since its nonstaff nodes are unique to that bend. Due to \cref{thm:segment-consecutive}, we can assume that the optimal schedule keeps these segments intact, so we are now trying to decide how these segments get interleaved between $v_\ell$ and $v_r$. It suffices to determine which segments occur between each pair of adjacent (on the staff) staff nodes. This is because we can also assume without loss of generality that our optimal schedule has the segments between each pair of adjacent staff nodes sorted according to the merge criteria in \cref{thm:segment-merge}. Importantly, this will not violate any precedence constraints, because two segments from the same bend will be in their original order if sorted due to \cref{lem:segmentation-is-sorted}.
So suppose that we have determined, for each $i$, that a set of bends' negative segments  $\negativesegments[i]$, as well as a set of their nonnegative segments $\positivesegments[i]$, should go between staff nodes $v_i$ and $v_{i+1}$. Then the resulting schedule will be $(v_1, \negativesegments[1], \positivesegments[1], v_2, \negativesegments[2], \positivesegments[2], \dots)$, where the segments in each $\negativesegments, \positivesegments$ are sorted according to \cref{thm:segment-merge}. \cref{alg:caduceus} allocates all the segments of all the bends into such sets.

\begin{algorithm}[htbp]
  \SetAlgoLined
  \KwData{Caduceus Graph $G = (V, E)$ with staff $\staff$, bends $\bends$, node-sum weights $w_v \in \mathbb{R}$, and memory threshold $\guess$.}
  \KwResult{A schedule $\sigma$ that uses at most $\guess$ memory if one exists; otherwise, ``no schedule''.}
  
  For $i$ from $1$ to $p - 1$, we initialize the set of negative segments between $v_i$ and $v_{i+1}$ to be $\negativesegments = \varnothing$ and the set of nonnegative segments between them to be $\positivesegments = \varnothing$\;
  \For{bend $\bend \in \bends$} {
    Unpack the bend as $\bend = (v_\ell, u_1, \cdots, u_k, v_r)$\;
    Segment the nonstaff nodes $(u_1, \cdots, u_k)$ using \cref{defn:segments}\;
    Add all produced negative segments to $\negativesegments[\ell]$\;
    Add all produced nonnegative segments to $\positivesegments[r-1]$\;
  }
  \tcc{For the rest of the algorithm, our current schedule is assumed to be $(v_1, \negativesegments[1], \positivesegments[1], v_2, \negativesegments[2], \positivesegments[2], \dots)$ where the segments in each $\negativesegments, \positivesegments$ are sorted according to \cref{thm:segment-merge}.}
  \While{the peak of the current schedule exceeds threshold $\guess$} {
  \label{algline:while}
    Choose an arbitrary peak\;
    \If{the peak occurs at a staff node $v_i$} {
      \label{algline:staff}
      \Return ``no schedule''\;
    }
    \If{the peak occurs in a negative segment $\negativesegment$ in some $\negativesegments$} {
      Attempt to move $\negativesegment$ and any subsequent negative segments from the same bend in $\negativesegments$ from $\negativesegments$ to $\negativesegments[i+1]$. If this is not possible (either because the bend ends at $r = i + 1$ or the bend has a nonnegative segment in $\positivesegments$), then return ``no schedule''\; \label{algline:move1}
    }
    \ElseIf{the peak occurs in a nonnegative segment $\positivesegment$ in some $\positivesegments$} {
      Attempt to move $\positivesegment$ and any preceding nonnegative segments from the same bend in $\positivesegments$ from $\positivesegments$ to $\positivesegments[i-1]$. If this is not possible (either because the bend begins at $\ell = i$ or the bend has a negative segment in $\negativesegments$), then return ``no schedule''\; \label{algline:move2}
    }
  }
  \Return the schedule induced by the $\negativesegments, \positivesegments$\;
  \caption{$\textsc{OptimalCaduceusHelper}(G, w, \guess)$}
  \label{alg:caduceus}
\end{algorithm}
We argue that \cref{alg:caduceus} correctly determines how to place segments between adjacent staff nodes in at most $\guess{}$ memory. Importantly, we have just established that any schedule $\sigma^{\star}$ can be transformed into sets of negative segments $\{\negativesegmentsstar\}$ and sets of nonnegative segments $\{\positivesegmentsstar\}$ without increasing its peak memory. 

Informally, the key insight that we use to analyze our algorithm is that compared to $\sigma^\star$, our algorithm will place every negative segment ``earlier'' and every nonnegative segment ``later''. By earlier/later, we mean where segments are placed relative to staff nodes.
To make this more precise, let $S = \bigcup_{i=1}^{p-1} (\negativesegments \cup \positivesegments)$ be the set of all segments. We can sort this set using \cref{thm:segment-merge}, breaking ties arbitrarily (but consistently); let the sorted sequence of segments be $\sorted$. Consider the sequence derived by flattening $(v_1, \sorted, v_2, \sorted, v_3, \cdots, v_{p-1}, \sorted, v_p)$ into a sequence of vertices; we can think of it as a pseudo-schedule where we have inserted $S$ in sorted order between every adjacent pair of staff nodes. We will refer to this flattened sequence as $A$. Next, we will produce a sequence $B$ of the same length as $A$ but which matches the current state of our algorithm's schedule, namely $(v_1, \negativesegments[1], \positivesegments[1], \dots)$. We do so by swapping some vertices in $A$ with a skip symbol $\epsilon$. Specifically, for the $S$ between $v_i$ and $v_{i+1}$ we will skip all vertices from segments in $S \setminus \negativesegments[i] \setminus \positivesegments[i]$; we do this for all $S$ in $A$. By construction, $B$ is just (the flattened version of) $(v_1, \negativesegments[1], \positivesegments[1], \dots)$ with additional skip $\epsilon$ symbols inserted. We do the same thing for $\sigma^\star$ to get sequence $C$.

Now that the algorithm's schedule and $\sigma^\star$ are properly aligned, we can prove the following:
\begin{claim}
\label{claim:greedy}
  At any point in the algorithm, for any negative segment $\negativesegment$, its nodes appear (weakly) earlier in $B$ than in $C$. For any nonnegative segment $\positivesegment$, its nodes appear (weakly) later in $B$ than in $C$. Furthermore, the algorithm never even attempts to move segments in a way that would violate this.
  
  Additionally, consider any staff node, or the first node of any appearance of a segment in $A$. Suppose it occurs at index $j$. Then the cut formed by whatever nodes are within the first $j-1$ entries of $B$ has less memory usage than the cut formed by whatever nodes are within the first $j-1$ entries of $C$.
\end{claim}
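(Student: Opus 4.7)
The plan is to prove both parts jointly by induction on the iterations of the while loop in \cref{alg:caduceus}. Throughout I will use $\phi_B(\cdot), \phi_C(\cdot) \in \{1,\dots,p-1\}$ for the position index at which $B$ and $C$ place a given segment, so the position invariant reads $\phi_B(\negativesegment) \le \phi_C(\negativesegment)$ for every negative segment and $\phi_B(\positivesegment) \ge \phi_C(\positivesegment)$ for every nonnegative segment. I treat $\sigma^{\star}$ (and hence $C$) as a fixed witness with $\peak{\sigma^{\star}} \le \guess$, used only to derive contradictions with the algorithm's forced behaviour.

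The base case is immediate: each negative segment of a bend $(v_\ell,\ldots,v_r)$ is initialized into $\negativesegments[\ell]$, its earliest legal group, and each nonnegative segment into $\positivesegments[r-1]$, its latest, so the position invariant holds trivially. The cut invariant follows from the position invariant by a purely combinatorial calculation that I plan to do once and reuse: parametrize each relevant index $j$ of $A$ by a lexicographic pair $(i,k)$ (the group index together with the sort rank inside that group), and observe that a segment $\text{seg}$ belongs to $B$'s cut at $j$ iff $(\phi_B(\text{seg}),\text{sort}(\text{seg})) < (i,k)$ lexicographically, and analogously for $C$. Combined with $\phi_B \le \phi_C$ on negative segments and $\phi_B \ge \phi_C$ on nonnegative segments, this yields that every negative segment in $C$'s cut also lies in $B$'s cut, and every nonnegative segment in $B$'s cut also lies in $C$'s cut; hence the difference $B$'s cut memory minus $C$'s cut memory equals a nonpositive quantity minus a nonnegative one, and is $\le 0$.

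For the inductive step I focus on the case where the chosen peak sits inside a negative segment $\negativesegment$ at position $i$ in $B$; the nonnegative case is entirely symmetric, and the staff-node peak case is handled by the same cut-invariant transfer below. Let $M_B, M_C$ be the cut memories of $B$ and $C$ at the index in $A$ immediately before $\negativesegment$'s slot in the $i$-th copy of $\sorted$, and let $P - V_L$ be the prefix weight from $\negativesegment$'s left valley up to its peak. The inductive cut invariant gives $M_B \le M_C$, and the reason the algorithm selected this peak gives $M_B + (P - V_L) > \guess$. The inductive position invariant forces $\phi_C(\negativesegment) \ge i$; if equality held then $M_C$ would also equal $\sigma^{\star}$'s memory just before $\negativesegment$, making $\sigma^{\star}$'s memory at $\negativesegment$'s peak at least $M_C + (P - V_L) \ge M_B + (P - V_L) > \guess$, contradicting $\peak{\sigma^{\star}} \le \guess$. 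Hence $\phi_C(\negativesegment) \ge i+1$ and the move to $\negativesegments[i+1]$ preserves the invariant. For any subsequent negative segment from the same bend that is moved along, the within-bend order (consistent with the global sort by \cref{lem:segmentation-is-sorted}) forces its $\phi_C$ to be at least $\phi_C(\negativesegment) \ge i+1$ as well.

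For the remaining subcases I proceed in the same spirit. The staff-node peak case uses the cut invariant at the index immediately after $v_i$ to get $\sigma^{\star}$'s memory at $v_i \ge B$'s memory at $v_i > \guess$, so this case never actually arises under $\peak{\sigma^{\star}} \le \guess$. For the two ``not possible'' subcases of the negative move, if the bend ends at $r = i+1$ then every segment of the bend must be scheduled at position $\le i$, contradicting $\phi_C(\negativesegment) \ge i+1$; if instead the bend currently has a nonnegative segment in $\positivesegments[i]$, then the inductive invariant gives $\phi_C$ of that nonnegative segment $\le i$, and the bend-internal precedence $\negativesegment \prec \positivesegment$ yields $\phi_C(\negativesegment) \le i$, again a contradiction. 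The main obstacle, and where I expect the bookkeeping to require the most care, is setting up the lexicographic cut-inclusion parametrization cleanly enough to handle indices that sit at the start of a segment slot (rather than at a staff node); once that template is in place, the rest of the induction is a routine exchange-style argument in the spirit of Liu's in-tree analysis \cite{liu1987application}.
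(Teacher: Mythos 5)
Your proposal is correct and follows essentially the same route as the paper's proof: a joint induction in which the cut invariant is derived from the position invariant by the sign/inclusion argument on segments, and the inductive step rules out a first violation by locating the segment at the same index in $B$ and $C$ and transferring the over-$\guess$ peak to $\sigma^{\star}$. Your explicit treatment of the ``not possible'' subcases and the staff-node case is slightly more detailed than the paper's (which defers those to the wrap-up after the claim), but the underlying argument is identical.
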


\begin{proof}
  First, we show that for any point in the algorithm for which the first half of the claim holds, the second half of the claim directly follows. Let $S_B$ be the nodes within the first $j-1$ entries of $B$ and $S_C$ be the nodes within the first $j-1$ entries of $C$.
  
  The first case is that we have some staff node $v_i$. Then we know:
  \begin{align*}
    S_B &= \bigcup_{k=1}^{i-1} \left[ \{v_k\} \cup \negativesegments[k] \cup \positivesegments[k] \right] \\
    S_C &= \bigcup_{k=1}^{i-1} \left[ \{v_k\} \cup \negativesegmentsstar[k] \cup \positivesegmentsstar[k] \right] \\
    \nodesumcost{S_B} &= \sum_{k=1}^{i-1} \left[
      w_{v_i} + \sum_{v \in \negativesegments[k]} w_v + \sum_{v \in \positivesegments[k]} w_v
    \right] \\
    \nodesumcost{S_C} &= \sum_{k=1}^{i-1} \left[
      w_{v_i} + \sum_{v \in \negativesegmentsstar[k]} w_v + \sum_{v \in \positivesegmentsstar[k]} w_v
    \right] \\
    \nodesumcost{S_B} - \nodesumcost{S_C} &= \sum_{k=1}^{i-1} \left[
      \sum_{v \in \negativesegments[k]} w_v +
      \sum_{v \in \positivesegments[k]} w_v -
      \sum_{v \in \negativesegmentsstar[k]} w_v -
      \sum_{v \in \positivesegmentsstar[k]} w_v
    \right] \\
      &= \underbrace{\sum_{k=1}^{i-1} \left[
           \sum_{v \in \negativesegments[k]} w_v -
           \sum_{v \in \negativesegmentsstar[k]} w_v
         \right]}_\text{negative segments difference} +
         \underbrace{\sum_{k=1}^{i-1} \left[
           \sum_{v \in \positivesegments[k]} w_v -
           \sum_{v \in \positivesegmentsstar[k]} w_v
         \right]}_\text{nonnegative segments difference}
  \end{align*}
  But the first half of the claim implies that this final quantity is negative! Negative segments affect the negative segments difference summation. Each negative segment can either (i) have none of its nodes show up in any $\negativesegments[k]$ or $\negativesegmentsstar[k]$, (ii) have all of its nodes show up in some $\negativesegments[k]$ and some $\negativesegmentsstar[k]$, or (iii) have all of its nodes show up in some $\negativesegments[k]$ but none of its nodes show up in any $\negativesegmentsstar[k]$. Together, the nodes of a negative segment have negative weight, so the ``negative segments difference'' double-summation is nonpositive.
  
  The reasoning for nonnegative segments is almost identical; each nonnegative segment can either (i) have none of its nodes show up in any $\positivesegments[k]$ or $\positivesegmentsstar[k]$, (ii) have all of its nodes show up in some $\positivesegments[k]$ and some $\positivesegmentsstar[k]$, or (iii) have none of its nodes show up in any $\positivesegments[k]$ but all of its nodes show up in some $\positivesegmentsstar[k]$. Together, the nodes of a positive segment have positive weight, so the ``positive segments difference'' double-summation is nonpositive as well.
  
  We have shown that $\nodesumcost{S_B} \le \nodesumcost{S_C}$ for staff nodes $v_i$. It remains to prove it for the first node of segments as well. The argument is essentially the same as the previous case, with some added complexity due to some segments falling before our segment of interest in sort order and some segments falling after our segment of interest in sort order.
  
  This time, let $v_i$ be the staff node preceding the appearance of our segment, and let $S_\leftarrow$ be the set of segments that come before our segment in sort order and $S_\rightarrow = S$ be the remainder of the segments (and remember that we broke ties arbitrarily but consistently). We know:
  
  \begin{align*}
    S_B &=
      \bigcup_{k=1}^{i-1} \left[ (\negativesegments[k] \cup \positivesegments[k]) \cap S_\rightarrow \right]
      \cup
      \bigcup_{k=1}^i \left[ \{v_k\} \cup ((\negativesegments[k] \cup \positivesegments[k]) \cap S_\leftarrow) \right] \\
    S_C &=
      \bigcup_{k=1}^{i-1} \left[ (\negativesegmentsstar[k] \cup \positivesegmentsstar[k]) \cap S_\rightarrow \right]
      \cup
      \bigcup_{k=1}^i \left[ \{v_k\} \cup ((\negativesegmentsstar[k] \cup \positivesegmentsstar[k]) \cap S_\leftarrow) \right] \\
    \nodesumcost{S_B} &=
      \sum_{k=1}^{i-1} \left[
        \sum_{v \in \negativesegments[k] \cap S_\rightarrow} w_v +
        \sum_{v \in \positivesegments[k] \cap S_\rightarrow} w_v
      \right] \\
    &\hphantom{{}={}} + \sum_{k=1}^i \left[
        w_{v_i} +
        \sum_{v \in \negativesegments[k] \cap S_\leftarrow} w_v +
        \sum_{v \in \positivesegments[k] \cap S_\leftarrow} w_v
      \right] \\
    \nodesumcost{S_C} &=
      \sum_{k=1}^{i-1} \left[
        \sum_{v \in \negativesegmentsstar[k] \cap S_\rightarrow} w_v +
        \sum_{v \in \positivesegmentsstar[k] \cap S_\rightarrow} w_v
      \right] \\
    &\hphantom{{}={}} + \sum_{k=1}^i \left[
        w_{v_i} +
        \sum_{v \in \negativesegmentsstar[k] \cap S_\leftarrow} w_v +
        \sum_{v \in \positivesegmentsstar[k] \cap S_\leftarrow} w_v
      \right] \\
    \nodesumcost{S_B} - \nodesumcost{S_C} &= 
      \underbrace{\sum_{k=1}^{i-1} \left[
        \sum_{v \in \negativesegments[k] \cap S_\rightarrow} w_v -
        \sum_{v \in \negativesegmentsstar[k] \cap S_\rightarrow} w_v
      \right] + \sum_{k=1}^i \left[
        \sum_{v \in \negativesegments[k] \cap S_\leftarrow} w_v -
        \sum_{v \in \negativesegmentsstar[k] \cap S_\leftarrow} w_v
      \right]}_\text{negative segments difference} \\
    &\hphantom{{}={}} +
      \underbrace{\sum_{k=1}^{i-1} \left[
        \sum_{v \in \positivesegments[k] \cap S_\rightarrow} w_v -
        \sum_{v \in \positivesegmentsstar[k] \cap S_\rightarrow} w_v
      \right] + \sum_{k=1}^i \left[
        \sum_{v \in \positivesegments[k] \cap S_\leftarrow} w_v -
        \sum_{v \in \positivesegmentsstar[k] \cap S_\leftarrow} w_v
      \right]}_\text{nonnegative segments difference}
  \end{align*}
  We finish by executing the previous argument separately for segments in $S_\leftarrow$ and $S_\rightarrow$. In the end, this final quantity is still nonpositive and we still get $\nodesumcost{S_B} \le \nodesumcost{S_C}$.
  
  All that remains is to prove the first half of the claim. We do so by induction. Initially, for each bend $\bend = (v_\ell, u_1, \dots, u_k, v_r)$, the algorithm places each of its negative segments into $\negativesegments[\ell]$ and each of its nonnegative segments into $\positivesegments[r]$. Since the schedule $\sigma^\star$ is feasible, it must also respect the two staff nodes $v_\ell, v_r$ bounding the possible placements of these segments.
  
  Now for the inductive step. We assume that the algorithm is currently in a state where both halves of the claim are satisfied, and then we consider how the algorithm attempts to move segments in either \cref{algline:move1} or \cref{algline:move2}. We show that the first half of the claim remains satisfied (which implies the second half the claim again).

  For the sake of contradiction, suppose \cref{algline:move1} attempts to move a negative segment $\negativesegment$ (and possibly subsequent negative segments from the same bend) from $\negativesegments$ to $\negativesegments[i+1]$. Since $\sigma^\star$ is a feasible schedule, it must keep all negative segments from this bend in the order they were produced by segmentation. Hence, if any negative segments in this move violate the claim, then the original one (whose peak currently exceeds $\guess$) violates the claim.
  
  In order for this to be the first time that the first half of the claim is violated, it must be that $\negativesegment \in \negativesegmentsstar$. In other words, this segment appears at the same spot in both $B$ and $C$. By our inductive hypothesis, the second half of the claim implies that that the memory used immediately before this segment in $B$ is at most the memory used immediately before this segment in $C$. Since the nodes of the segment are the same in both, if the segment exceeds $\guess$ memory usage in $B$ then it must do so in $C$ as well. But $\sigma^\star$ was defined to be a schedule that never exceeds $\guess$ memory usage, so we have a contradiction.
  
  The nonnegative segment argument is identical. Suppose \cref{algline:move2} attempts to move a positive segment $\positivesegment$ (and possibly preceding positive segments from the same bend) from $\positivesegments$ to $\positivesegments[i-1]$. We know $\sigma^\star$ is feasible so it must keep positive segments from this bend in the order they were produced by segmentation. Again, if any positive segments violate the claim, then the original one does.
  
  We can align again by observing that $\positivesegment \in \positivesegmentsstar$ for this to be the first violation. Since the segment appears at the same spot in both $B$ and $C$, we know that the memory used immediately before the segment in $B$ is at most the memory used immediately before the segment in $C$. Hence exceeding $\guess$ memory usage in $B$ means we exceed $\guess$ memory usage in $C$, which is still a contradiction.
  
  This completes the proof of the claim.
\end{proof}

\begin{algorithm}[htbp]
  \SetAlgoLined
  \KwData{Caduceus Graph $G = (V, E)$ with staff $\staff$, bends $\bends$ and node-sum weights $w_v \in \mathbb{R}$}
  \KwResult{A memory-optimal schedule $\sigma$}
  
  For $i$ from $1$ to $p - 1$, initialize the negative segments between $v_i$ and $v_{i+1}$ to be $\negativesegments = \varnothing$ and the nonnegative segments between them to be $\positivesegments = \varnothing$\;
  \For{bend $\bend \in \bends$} {
    Unpack the bend as $\bend = (v_\ell, u_1, \cdots, u_k, v_r)$\;
    Segment the nonstaff nodes $(u_1, \cdots, u_k)$ using \cref{defn:segments}\;
    Add all produced negative segments to $\negativesegments[\ell]$\;
    Add all produced nonnegative segments to $\positivesegments[r-1]$\;
  }
  \tcc{For the rest of the algorithm, our current schedule is assumed to be $(v_1, \negativesegments[1], \positivesegments[1], v_2, \negativesegments[2], \positivesegments[2], \dots)$ where the segments in each $\negativesegments, \positivesegments$ are sorted according to \cref{thm:segment-consecutive}}
  \While{true} {
    Save a checkpoint $\kappa$ of the current schedule\;
    Set $\guess$ to be the peak of the current schedule\;
    \tcc{We attempt to achieve peak memory strictly less than $\guess$.}
    \While{the peak of the current schedule is at least threshold $\guess$} {
      Choose an arbitrary peak\;
      \If{the peak occurs at a staff node $v_i$} {
        \Return checkpoint $\kappa$\;
      }
      \If{the peak occurs in a negative segment $\negativesegment$ in some $\negativesegments$} {
        Attempt to move $\negativesegment$ and any subsequent negative segments from the same bend in $\negativesegments$ from $\negativesegments$ to $\negativesegments[i+1]$. If this is not possible (either because the bend ends at $r = i + 1$ or the bend has a nonnegative segment in $\positivesegments$), then return checkpoint $\kappa$\;
      }
      \ElseIf{the peak occurs in a nonnegative segment $\positivesegment$ in some $\positivesegments$} {
        Attempt to move $\positivesegment$ and any preceding nonnegative segments from the same bend in $\positivesegments$ from $\positivesegments$ to $\positivesegments[i-1]$. If this is not possible (either because the bend begins at $\ell = i$ or the bend has a negative segment in $\negativesegments$), then return checkpoint $\kappa$\;
      }
    }
  }
  \caption{$\textsc{OptimalCaduceus}(G, w)$}
  \label{alg:caduceus2}
\end{algorithm}

With \cref{claim:greedy} proven, finishing the proof of \cref{thm:caduceus} is straightforward. We first prove correctness. Since the while loop on \cref{algline:while} only terminates when the schedule uses at most $\guess$ peak memory, the algorithm never returns such a schedule when none exists. Additionally, the algorithm is guaranteed to return such a schedule when one exists; \cref{claim:greedy} implies that if such a schedule exists then (i) all of algorithm's attempted moves are safe and (ii) the memory immediately before a staff node is at most the memory before the same staff node in any feasible schedule $\sigma^\star$ that uses at most $\guess$ memory, so the if-clause on \cref{algline:staff} will not trigger.

We conclude by accounting for running time. Since segmentation via \cref{defn:segments} takes linear time, segmenting all bends can be done in $O(n)$ time. In the rest of the algorithm, the total number of segment moves is bounded by $O(np)$; there are at most $O(n)$ segments (since each is associated with at least one node and they do not share nodes) and each moves at most $p-2$ times. We may have to locate the peak $O(np)$ times to do so. This can be done naively by performing a linear scan each time (for $O(n^2p)$ runtime), but can be sped up by creating a max-segment-tree over the pseudo-schedule $A$, which improves the runtime to $O(np \log n)$.

The final factor in the runtime is the issue of binary searching over the guess $\theta$. Naively, this incurs an additional $O(\log (nW))$, where $W$ is the largest node weight (and would require the weights to be integer). However, we can avoid this entirely by restructuring our algorithm into \cref{alg:caduceus2}. Our algorithm now brute-forces over possible $\guess$ values from high to low. Importantly, it does not need to guess that many values, since improving the guess is only possible by moving a segment, which we have already argued only occurs $O(np)$ times. In fact, since the algorithm avoids resetting the current schedule, the total number of segment moves over all values of $\guess$ is still $O(np)$ and again a max-segment-tree lets us achieve only a $O(np \log n)$ runtime. This completes the proof of \cref{thm:caduceus}.
\end{proof} 

\section{Pumpkins and Trees}
\label{sec:pumpkin}
In this section, we study the simpler case of pumpkin graphs, which are a special case of SP-graphs. These are called parallel-chain graphs by Kayaaslan et al.~\cite{kayaaslan2018scheduling}.

\begin{definition}[Pumpkin graph]
\label{def:pumpkin}
A chain is a directed path $(s,v_1,\dots,v_{l},t)$ with distinct nodes $(l\ge 0)$.
   A \emph{pumpkin graph} is an SP-graph formed by the parallel composition of $d$ ($d\ge 1$) chains $b_1,b_2,\dots,b_d$ sharing the source $s$ and sink $t$. Each chain $b_i$ is called a branch of the pumpkin graph.
\end{definition}

We will prove the following theorem. Recall the notion of internal memory profile (denoted as $\internalprofile{\cdot}$) from \cref{defn:internalmemoryprofile}, which is defined on single-source single-sink DAGs (which include pumpkin graphs).

\begin{restatable}{theorem}{mainpumpkin}
    \label{thm:main-pumpkin}
   Given a pumpkin graph $G$ with $d$ branches in the \truemem\  model, there is an $O(2^d dn)$-time algorithm that finds a schedule $\sigma$ such that  $\internalprofile{\sigma}\ledot \internalprofile{\tau}$ for all possible schedules $\tau$ of $G$.
 In particular, it finds a schedule that minimizes the peak memory. 
\end{restatable}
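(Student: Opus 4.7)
The plan is to reduce to the node sum model and exploit the pumpkin's simple parallel-branch structure through a subset enumeration over how each branch interacts with the release node of $s$.

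First, I would convert $G$ to its node sum counterpart $\ns{G}$ via \cref{alg:reduction}. The resulting $\ns{G}$ has a very specific shape: an initial pair $\startvertex{s}, \finishvertex{s}$; for each branch $i$, a directed path $\startvertex{v_1^i}, \finishvertex{v_1^i}, \ldots, \finishvertex{v_{l_i}^i}$ going from $\finishvertex{s}$ into $\startvertex{t}$; a release node $\releasevertex{s}$ of weight $-\sizefn(s)$ whose in-neighbors are exactly $\{\startvertex{v_1^i}\}_{i=1}^d$; and a closing pair $\startvertex{t}, \finishvertex{t}$. By \cref{lem:model-reduction}, it suffices to find a dominant schedule of $\ns{G}$ (w.r.t.\ internal profile), since we can translate back to the \truemem\ model preserving dominance.

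Next, I would apply the anatomy-of-a-schedule tools of \cref{sec:segments}. Each branch (as a path between $\finishvertex{s}$ and $\startvertex{t}$) is an isolated subgraph for node sum in the sense of \cref{def:isolated-node-sum}, so it can be partitioned into negative and nonnegative segments per \cref{defn:segments}. By \cref{thm:segment-consecutive}, we may restrict attention to schedules that keep each segment contiguous; by \cref{lem:eager}, the negative-weight node $\releasevertex{s}$ can be placed immediately after the last of its in-neighbors $\startvertex{v_1^i}$ is scheduled. Together, these reductions collapse the decision space to: for each branch, how many of its segments occur before $\releasevertex{s}$, subject to the branch's first segment contributing at least its first node.

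The heart of the argument—and the main obstacle—is an exchange argument reducing the naively $\prod_i(\text{number of segments in branch } i)$ possibilities down to $2^d$. I would argue that for each branch $i$, the only meaningful binary choice is whether the branch commits its entire \emph{negative part} (all segments up to and including the one containing the branch's own leftmost min-cut) before $\releasevertex{s}$, or whether only $\startvertex{v_1^i}$ is placed before $\releasevertex{s}$ as the bare minimum required by the precedence constraint. The justification appeals to the sort-value characterization preceding \cref{lem:segmentation-is-sorted}: negative segments prefer to come earlier and nonnegative segments prefer to come later, and the relative sort order of segments is unaffected by sliding $\releasevertex{s}$ through them, so any intermediate choice can be shifted toward one of these two extremes via a pointer-advancement argument on the two internal profiles, in the spirit of the proof of \cref{thm:segment-merge}.

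Finally, for each of the $2^d$ subsets $S \subseteq [d]$ indicating which branches commit their negative part before $\releasevertex{s}$, the algorithm constructs a candidate schedule in $O(dn)$ time: merge the committed negative segments of branches in $S$ (sorted by sort value as in \cref{thm:segment-merge}) between $\finishvertex{s}$ and $\releasevertex{s}$; then merge all remaining segments (the nonnegative parts of branches in $S$, together with the post-first-node portions of branches outside $S$) between $\releasevertex{s}$ and $\startvertex{t}$; finally bracket by $\startvertex{s},\finishvertex{s}$ and $\startvertex{t},\finishvertex{t}$. Comparing the $2^d$ resulting internal profiles coordinate-wise using the algebraic characterization of \cref{defn:algdef} selects the dominant candidate in an additional $O(2^d \cdot n)$ time, yielding the claimed $O(2^d dn)$ overall. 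Dominance over all schedules of $G$ then follows from chaining the exchange argument with the within-phase dominance guaranteed by \cref{thm:segment-merge}.
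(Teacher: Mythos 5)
Your proposal breaks at the step that collapses each branch's placement to a binary choice. You claim that for each branch the only meaningful decision is whether to schedule its \emph{entire} negative part before $\releasevertex{s}$ or only the bare minimum $\startvertex{v_1^i}$, and that any intermediate choice can be exchanged toward one of these extremes. This is false, and the paper's own hardness construction (\cref{thm:pumpkin-hardness}) is a witness: there, each branch's profile consists entirely of negative segments (valleys strictly decreasing from $m a_i$ down to $a_i$, peaks strictly increasing), so your ``negative part'' is the whole branch, yet the peak-optimal schedule must stop different branches at \emph{different intermediate valleys} (branch $i$ in group $S_j$ stops at its $j$-th-from-last valley) before $s$ is released. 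Both of your extremes are strictly worse: committing a whole branch incurs its largest peak $\approx \tfrac12(m-1)(m+2)B$ on top of $W$ and the other branches' held valleys, while committing only first nodes holds $\sum_i m a_i = m^2 B$ on top of $W$; either exceeds the budget $W + \tfrac12 m(m+1)B$ for $m\ge 2$. The trade-off between a branch's increasing peaks and decreasing valleys is exactly what encodes 3-Partition, so no dominance-preserving exchange can push an intermediate stopping point to an endpoint. Consequently the $2^d$ candidates you enumerate need not contain a dominant (or even peak-optimal) schedule.

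The paper's proof uses the $2^d$ subsets differently: it is a dynamic program over sub-pumpkins $G|_S$, where $OPT[S]$ is a schedule of $G|_S$ with dominant internal profile (existence from \cref{thm:existsdominant}). For $|S|\ge 2$ it guesses which branch $b_i$ contributes the node scheduled immediately after $s$; once that branch has started, the release of $s$ is governed by the remaining branches, so the candidate is $s \circ b_i^{(1)} \circ \mathrm{merge}(OPT^-[S\setminus b_i],\, b_i^{(+)}) \circ t$, with the merge of \cref{thm:segment-merge} deciding \emph{how far} $b_i$ advances relative to every event in $OPT[S\setminus b_i]$ — including the release of $s$ — rather than forcing a two-way choice. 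This is where the per-branch ``stopping point'' is optimized. You would also need to handle the degenerate branch $s\to t$ separately (then $s$ is never released early and the problem reduces to Liu's in-tree algorithm), and to justify that the node-sum translation of \cref{lem:model-reduction} preserves dominance of \emph{internal} profiles; but these are secondary to the incorrect binary-choice reduction.
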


We remark that our algorithm for $d$-branch pumpkin graphs is faster than that for general SP-graphs of maximum out-degree $d$ described in \cref{sec:sp}. 
Our algorithm for pumpkin graphs is based on dynamic programming over all the $2^d$ subsets of the $d$ branches, and the correctness of the dynamic program crucially relies on the existence of a dominant schedule for each subproblem.

\begin{proof}[Proof of \cref{thm:main-pumpkin}]
Let $s,t$ be the source and sink of the pumpkin $G$, and $b_1, b_2, \dots , b_d$ be its $d$ branches.  

We first deal with the degenerate case where some branch is a single $(s,t)$ edge. In this case, for any schedule $\sigma$ of $G$, in $\internalprofile{\sigma}$ the the memory  of $s$ is never released until $t$ is finished in the end, so $w_s$ should contribute throughout the profile (except the last entry $w_t$).
Hence, by shifinting the profile by $w_s$, the task becomes equivalent to the following: without  loss of generality assume $w_s = 0$,  and find  a schedule $\sigma$ of $G$ that has the dominant $\internalprofile{\sigma}$ (excluding the final $w_t$ entry).   This is the problem of scheduling an in-tree (since $s$ has no effect on the memory profile now), which was already solved by Liu \cite{liu1987application} in linear time.\footnote{In more details, Liu \cite{liu1987application} works with the edge-weighted model where memory cost is associated with edges instead of nodes. However, for in-trees, this edge-weighted model is equivalent to our \truemem\  model, since we can move the weight $w_v$ of node $v$ to the edge connecting $v$ and its parent in the tree.} 
Hence, in the following, we assume this degenerate case does not happen, and that each branch of the pumpkin has at least one internal node.

For each non-empty subset $S \subseteq \{b_1, b_2, \dots , b_d\}$, define $G|_S$ as the pumpkin graph with only branches in $S$.  For a schedule $\sigma = (\sigma_1=s,\sigma_2,\sigma_3,\dots,\sigma_{n'-1},\sigma_{n'}=t)$ of $G|_S$, we focus on the internal memory profile of $\sigma$ on $G|_S$, denoted $\internalprofile{\sigma}$. Recall by definition that $\internalprofile{\sigma}$ starts with $w_s$ and ends at $w_t$.
We define $OPT[S]$ to be the schedule $\sigma$ of $G|_S$ that has a dominant internal memory profile.
Such schedule must exist, due to \cref{thm:existsdominant}.

We use a dynamic programming (DP) to compute $OPT[S]$ for all non-empty $S\subseteq \{b_1,\dots,b_d\}$ (in non-decreasing order of $|S|$), and the final answer is $OPT[\{b_1,b_2,\dots,b_d\}]$, where we use $\circ$ to denote concatenation.  To compute $OPT[S]$ by dynamic programming, there are two cases:
\begin{itemize}
    \item $|S|=1$. This case has a single branch, and $OPT[S]$ is simply the unique schedule of this branch.
Observe that in the internal memory profile of this schedule, the memory cost of  node $s$ is released once the first node from branch $b_i$ is finished.  

    \item $|S|\ge 2$. 
  In this case,  note that in the memory profile of any schedule of $G|_S$, the memory cost of node $s$ is released once all the first nodes from all the branches from $S$ are finished. 
    
    We guess which branch from $S$ goes first, that is, we guess that the node scheduled right after $s$ comes from branch $b_i$ for some $i\in S$. 
    Let branch $b_i$ be decomposed as $s \circ b_i^{(1)} \circ b_i^{(+)} \circ t$, where $b_i^{(1)}$ denotes the first (internal) node on branch $b_i$.
    The purpose of this guess is so that the releasing of node $s$ is no longer dependent on its out neighbor $b_i^{(1)}$, but only dependent on the remaining out neighbors of $s$ (because $b_i^{(1)}$ finishes earlier than them).  This observation allows us to use the optimal substructure of the subproblem  $OPT[S \setminus b_i]$.
    Now our plan is to optimally interleave $b_i^{(+)}$ with the dominant schedule of $OPT[S \setminus b_i]$, and the releasing of node $s$ is coveniently taken care of by the memory profile of $OPT[S \setminus b_i]$.

Let schedule    $OPT[S \setminus b_i]$ be decomposed as 
\[OPT[S \setminus b_i] = s\circ OPT^{-}[S \setminus b_i] \circ t. \]
    
    Then, we have the following candidate schedule for $G|_S$:
   \[ s \circ  b_i^{(1)} \circ merge(OPT^-[S \setminus b_i], b_i^{(+)}) \circ t.\]
   where $merge$ is implemented by the merge operation from \cref{thm:segment-merge} in $O(n)$ time.
   
   Finally, we take the dominant one among candidate schedules (defined above) over all $i\in S$. Such dominant schedule must exist due to \cref{thm:existsdominant}.
\end{itemize}

The overall running time of this dynamic programming algorithm is $O(2^d dn)$.
\end{proof}

Next, we show an algorithm for scheduling out-trees with bounded out-degrees, in the \truemem\  model.

\begin{restatable}{theorem}{mainouttree}
    \label{thm:main-out-tree}
   Given an out-tree $G$ with maximum out-degree $d$ in the \truemem\  model, there is an $O(2^d poly(n))$-time algorithm that finds a schedule $\sigma$ of $G$ that  minimizes the peak memory. 
\end{restatable}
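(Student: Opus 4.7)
The plan is to prove \cref{thm:main-out-tree} by a bottom-up extension of the pumpkin algorithm from \cref{thm:main-pumpkin}. Root the out-tree at its unique source and process its nodes in post-order so that every subtree is collapsed into a single path before its parent is visited. At an internal node $v$ with children $u_1,\ldots,u_k$ (so $k\le d$), inductively assume each $T_{u_i}$ has already been replaced by a directed path $p_i$ whose first node is $u_i$ and whose ordering encodes a dominant schedule of $T_{u_i}$. The local structure around $v$ is then a \emph{spider}: node $v$ with $k$ outgoing paths $p_1,\dots,p_k$. To schedule the spider, introduce a fresh virtual sink $t^*_v$ with $\sizefn(t^*_v) = \scratchfn(t^*_v) = 0$ and an edge from the last node of each $p_i$ to $t^*_v$. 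The resulting graph is a $k$-branch pumpkin with source $v$ and sink $t^*_v$, and every branch contains at least the internal node $u_i$, so \cref{thm:main-pumpkin} returns a schedule with dominant internal memory profile in $O(2^k k n_v)$ time, where $n_v = |T_v|+1$. Removing $t^*_v$ from the returned schedule and using it to replace $T_v$ by a new path propagates the induction to $v$'s parent; the schedule produced at the root is the output.

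For correctness, the key observation is that padding the spider with $t^*_v$ preserves peak memory exactly: $t^*_v$ has zero size and scratch, and $\memduring{t^*_v} = \sum_i \sizefn(\text{last node of }p_i)$ is already resident just before $t^*_v$ would run in any schedule. Hence a schedule that minimizes peak on the pumpkin corresponds to one that minimizes peak on the spider. Combined with the inductive hypothesis---that $p_i$ encodes a dominant schedule for $T_{u_i}$ and may therefore be substituted for $T_{u_i}$ via \cref{lem:linearization} without harming the optimum---the schedule at the root is peak-minimizing on the whole out-tree. For running time, the pumpkin call at $v$ costs $O(2^{k_v} k_v n_v) \le O(2^d d n_v)$; since $\sum_v n_v \le n^2$, the total time is $O(2^d d n^2) = O(2^d \poly(n))$.

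The main obstacle is that a subtree $T_{u_i}$ of an out-tree typically has several leaves and hence no unique sink, so \cref{defn:isolated} does not apply verbatim and \cref{lem:linearization} cannot be invoked off-the-shelf when collapsing such a subtree in the outer recursion. The virtual-sink $t^*_v$ fixes this at the pumpkin step, but one must also justify linearization inside the outer out-tree. The situation is mild: since an out-tree admits no path from a subtree back out, the third clause of \cref{defn:isolated} is vacuous, and a minor extension of the proofs of \cref{lem:exchangelinearize} and \cref{lem:linearization} to ``out-tree isolated'' subgraphs (unique source, no outgoing edges, multiple sinks allowed) goes through via the same pointer-advancement argument. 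Verifying this extension carefully, and matching the pumpkin's dominant-internal-profile guarantee to precisely the notion of dominance required by the extended linearization lemma, is the only substantively new ingredient beyond the machinery already developed in the paper.
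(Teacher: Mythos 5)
Your overall plan (collapse subtrees bottom-up into paths and solve each local star-of-paths as a pumpkin) is in the spirit of the paper's proof, but the specific device you use to create a single sink has a real flaw. You claim that attaching a zero-weight virtual sink $t^*_v$ to the ends of the paths ``preserves peak memory exactly,'' justifying this only by noting that $\memduring{t^*_v}$ adds nothing \emph{at the moment $t^*_v$ runs}. That is not where the distortion occurs: in the computation graph model, adding the edge $(\ell_i, t^*_v)$ forces the output $\sz{\ell_i}$ of each branch-end $\ell_i$ to be retained from the time $\ell_i$ runs until $t^*_v$ runs at the very end, whereas in the actual out-tree a leaf's output is freed immediately ($\memafter{}$ sums only over nodes with unscheduled out-neighbors). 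The pumpkin objective therefore differs from the spider objective by a schedule-dependent additive term $\sum_{i:\ell_i \text{ finished}} \sz{\ell_i}$, and the pumpkin-optimal order need not be spider-optimal when the $\ell_i$ have positive size. This is harmless for paths produced by \cref{lem:chain} from a profile ending at $0$ (their last node has size $0$), but at the base of your recursion the branches end in \emph{original leaves} of $G$, whose sizes are arbitrary nonnegative numbers. A second, self-acknowledged gap is the extension of \cref{defn:isolated}, \cref{lem:exchangelinearize}, and \cref{lem:linearization} to unique-source, multi-sink subgraphs: you assert it ``goes through'' but this is precisely the nontrivial bookkeeping your argument rests on, and the notion of dominance you would need for a multi-sink subgraph (the internal profile of \cref{defn:internalmemoryprofile} is defined only for single-sink graphs) is never pinned down.

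Both problems are solved at once by the paper's construction: glue to $G$ a reversed copy $G'$ of itself with all weights set to zero, adding an edge from each leaf of $G$ to its copy in $G'$. This yields a single-source single-sink SP graph with the same optimal peak (each leaf's zero-weight successor can be run immediately, releasing the leaf's output right away), so the standard \cref{lem:linearization} applies verbatim, and every smallest isolated subgraph encountered is a genuine pumpkin with at most $d$ branches, handled by \cref{thm:main-pumpkin}. If you want to keep your bottom-up formulation, the minimal repair is to append a zero-size, zero-scratch child to every original leaf before starting (so all branch ends have size $0$), and to actually carry out the multi-sink exchange argument rather than defer it.
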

\begin{proof}
For convenience, we can complete this out-tree $G$ into an SP graph without affecting the minimum possible peak memory, by the following tree-gluing operation: let $G'$ be a copy of $G$, but with all node-weights set to zero, and all edges reversed (so $G'$ is an in-tree). Then, for each leaf $v$ of the tree $G$, we connect an edge from $v$ in $G$ to the copy of $v$ in $G'$. 

It is clear that this new graph is a single-source single-sink SP graph, and this transformation does not affect the optimal solution: the new nodes in $G'$ do not contribute to the memory cost, and once a leaf node from $G$ finishes we can immediately finish its successor (its own copy in $G'$) and release it.

Now we can use the linearization technique \cref{subsec:linearization} to solve this SP-graph. Unlike the case of general SP-graphs treated in \cref{sec:sp}, our SP-graph here has a simpler structure: every time we attempt to linearize a isolated subgraph (\cref{defn:isolated}) (assuming it is the smallest isolated subgraph in the graph), we know that this isolated subgraph must be a pumpkin (which is a much simpler special case of the starborescence from \cref{sec:sp}).  This is due to the fact that this SP-graph resulted from gluing two identical trees. Hence, we can directly invoke the pumpkin scheduler (\cref{thm:main-pumpkin}) to perform  linearization. The number of branches in this pumpkin does not exceed the maximum out-degree $d$ of the original tree. Hence, the entire linearization procedure takes time $O(2^d poly(n))$.
\end{proof} 

\section{Hardness}
\label{sec:hardness}

In this section, we show that the weighted one-shot black pebbling problem is strongly NP-hard even when the computation graph is a \emph{pumpkin graph}. As a corollary, we note that the same reduction also shows strong NP-hardness on \emph{out trees}.
We will reduce from the 3-Partition problem, which previous work has shown to be strongly NP-hard \cite{garey1979computers}.

\begin{definition}[3-Partition]
  Let $a_1, \ldots, a_n$ be $n$ positive integers where $n = 3m$. The 3-Partition problem is to determine whether there is a partition of the numbers into $m$ disjoint groups such that the sum of each group is exactly equal. \footnote{The 3-Partition problem usually also requires each group to contain exactly 3 numbers, but the unrestricted version is also strongly NP-hard.}
\end{definition}

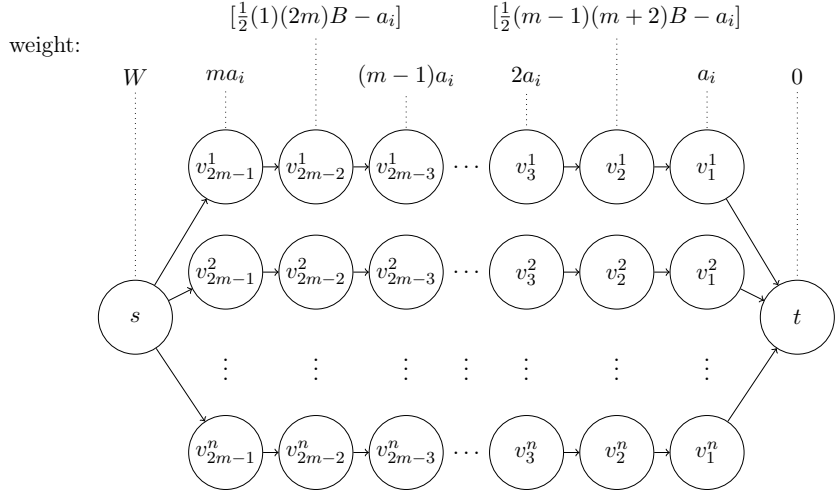
\begin{figure}
\centering
\scalebox{.8}{
\begin{tikzpicture}[%
  auto,
  scale=1.0,
  hollownode/.style={
    circle,
    draw=black,
    inner sep=1pt,
    minimum size=35pt,
  },
  ]
  \node[hollownode]   (s) at ( 0, 0) {$s$};

  \node[hollownode] (v11) at (1.5,   2.5) {$v^1_{2m-1}$};
  \node[hollownode] (v21) at (1.5,  0.75) {$v^2_{2m-1}$};
  \node             (v31) at (1.5, -0.75) {$\vdots$};
  \node[hollownode] (v41) at (1.5, -2.25) {$v^n_{2m-1}$};

  \node[hollownode] (v12) at (  3,   2.5) {$v^1_{2m-2}$};
  \node[hollownode] (v22) at (  3,  0.75) {$v^2_{2m-2}$};
  \node             (v32) at (  3, -0.75) {$\vdots$};
  \node[hollownode] (v42) at (  3, -2.25) {$v^n_{2m-2}$};

  \node[hollownode] (v13) at (4.5,   2.5) {$v^1_{2m-3}$};
  \node[hollownode] (v23) at (4.5,  0.75) {$v^2_{2m-3}$};
  \node             (v33) at (4.5, -0.75) {$\vdots$};
  \node[hollownode] (v43) at (4.5, -2.25) {$v^n_{2m-3}$};

  \node             (v14) at (5.5,   2.5) {$\hdots$};
  \node             (v24) at (5.5,  0.75) {$\hdots$};
  \node             (v34) at (5.5, -0.75) {$\vdots$};
  \node             (v44) at (5.5, -2.25) {$\hdots$};

  \node[hollownode] (v15) at (6.5,   2.5) {$v^1_3$};
  \node[hollownode] (v25) at (6.5,  0.75) {$v^2_3$};
  \node             (v35) at (6.5, -0.75) {$\vdots$};
  \node[hollownode] (v45) at (6.5, -2.25) {$v^n_3$};

  \node[hollownode] (v16) at (  8,   2.5) {$v^1_2$};
  \node[hollownode] (v26) at (  8,  0.75) {$v^2_2$};
  \node             (v36) at (  8, -0.75) {$\vdots$};
  \node[hollownode] (v46) at (  8, -2.25) {$v^n_2$};

  \node[hollownode] (v17) at (9.5,   2.5) {$v^1_1$};
  \node[hollownode] (v27) at (9.5,  0.75) {$v^2_1$};
  \node             (v37) at (9.5, -0.75) {$\vdots$};
  \node[hollownode] (v47) at (9.5, -2.25) {$v^n_1$};

  \node[hollownode] (t) at (11, 0) {$t$};
  
  \draw[->] (s) edge (v11) (v11) edge (v12) (v12) edge (v13) (v15) edge (v16) (v16) edge (v17) (v17) edge (t);
  \draw[->] (s) edge (v21) (v21) edge (v22) (v22) edge (v23) (v25) edge (v26) (v26) edge (v27) (v27) edge (t);
  \draw[->] (s) edge (v41) (v41) edge (v42) (v42) edge (v43) (v45) edge (v46) (v46) edge (v47) (v47) edge (t);

  \node       at (-1.5, 4.5) {weight:};
  \node (v00) at (   0,   4) {$W$};
  \node (v01) at ( 1.5,   4) {$m a_i$};
  \node (v02) at (   3,   5) {$[\frac12(1)(2m)B - a_i]$};
  \node (v03) at ( 4.5,   4) {$(m-1) a_i$};
  \node (v05) at ( 6.5,   4) {$2 a_i$};
  \node (v06) at (   8,   5) {$[\frac12(m-1)(m+2)B - a_i]$};
  \node (v07) at ( 9.5,   4) {$a_i$};
  \node (v08) at (  11,   4) {$0$};

  \draw[dotted] (v00) -- (s);
  \foreach \x in {1, 2, 3, 5, 6, 7}
    \draw[dotted] (v0\x) -- (v1\x);
  \draw[dotted] (v08) -- (t);
\end{tikzpicture}
}
\caption{Reduction from 3-Partition to Weighted One-Shot Black Pebbling on a Pumpkin.}
\label{fig:tikz-hardness}
\end{figure}

\pumpkinhardness

\subsection*{Reduction}
Suppose we have a 3-Partition instance $a_1, \ldots, a_n$. Let $B = \frac1m \sum_{i=1}^n a_i$. We are therefore looking for $m$ disjoint groups that each sum to $B$.

The reduction plan is as follows. We will construct a pumpkin with $n$ branches, one corresponding to each $a_i$. As we present the formal reduction, it may be useful to examine \Cref{fig:tikz-hardness}, which depicts this pumpkin. Our pumpkin has $n$ branches; the $i^{th}$ branch corresponding to 3-Partition element $a_i$. Each branch has $2m - 1$ nodes, excluding the shared source $s$ and sink $t$ which we use to produce a memory profile with $m$ valleys and $m-1$ peaks. The precise weights of each node are as follows:
\begin{enumerate}
  \item The shared source node $s$ is very large and has $\sz{s} = W$ that we will determine later.
  \item The shared sink node $t$ has $\sz{t} = 0$.
  \item The $i^{th}$ branch has $2m-1$ internal nodes labeled with reversed-indices $v^i_{2m-1}, \ldots, v^i_1$. Nodes $v^i_k$ with an odd index $k$ have $\sz{v^i_k} = \frac{k+1}{2} a_i$ and will correspond to valleys. Nodes $v^i_k$ with an even index $k$ have $\sz{v^i_k} = \frac12 (m - \frac{k}2) (m + \frac{k}2 + 1) B - a_i$.
\end{enumerate}

For this to be a valid problem, we need to ensure that all weights are nonnegative. This trivially holds for the source $s$, the sink $t$, and the branch nodes with odd $k$ since 3-Partition involves positive integers by definition. To show it for the branch nodes with even $k$, we observe that the original 3-Partition instance cannot possibly be solvable unless all elements satisfy $a_i \le B$ (if there is a violating element, the group it is in will always sum to more than $B$ since elements are positive). The smallest weight node in branch $i$ with even $k$ is the earliest one, $k = 2m - 2$, which has $\sz{v^i_{2m-2}} = mB - a_i \ge B - a_i \ge 0$.

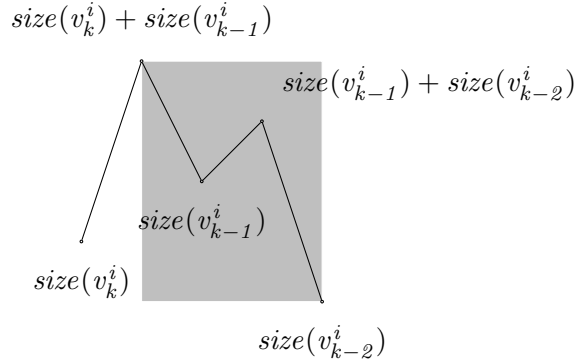
\begin{figure}
\centering
\begin{tikzpicture}[%
  auto,
  scale=0.8,
  void/.style={
    circle,
    draw=black,
    inner sep=0pt,
  },
  ]
  \draw[color=white, fill=black!25] (1, 5) rectangle (4, 1);
  
  \node[void] (v1) at (0, 2) {}; \node [below=0.15cm of v1] {$\sz{v^i_k}$};
  \node[void] (p1) at (1, 5) {}; \node [above=0.15cm of p1] {$\sz{v^i_k} + \sz{v^i_{k-1}}$};
  \node[void] (v2) at (2, 3) {}; \node [below=0.15cm of v2] {$\sz{v^i_{k-1}}$};
  \node[void] (p2) at (3, 4) {}; \node [above right=0.15cm of p2] {$\sz{v^i_{k-1}} + \sz{v^i_{k-2}}$};
  \node[void] (v3) at (4, 1) {}; \node [below=0.15cm of v3] {$\sz{v^i_{k-2}}$};

  \draw (v1) -- (p1) -- (v2) -- (p2) -- (v3);
\end{tikzpicture}
\caption{Memory profile when moving from a branch node with odd index $k$ to the next branch node with odd index $k-2$.}
\label{fig:tikz-hardness-segment}
\end{figure}

Next, we do some preliminary reasoning about the memory profile generated by branch $i$. In particular, let us consider starting with a pebble on branch node $v^i_k$ with odd index $k \in {3, 5, ..., 2m-1}$, and advancing to having a pebble on the next odd index. This involves five memory states:
\begin{enumerate}[label={(\arabic*)}] \itemsep0em
  \item Initially, we have a pebble on $v^i_k$ for $\sz{v^i_k}$ memory usage.
  \item Next, we add a pebble to $v^i_{k-1}$ for $\sz{v^i_k} + \sz{v^i_{k-1}}$ memory usage.
  \item Next, we remove a pebble from $v^i_k$ for $\sz{v^i_{k-1}}$ memory usage.
  \item Next, we add a pebble to $v^i_{k-2}$ for $\sz{v^i_{k-1}} + \sz{v^i_{k-2}}$ memory usage.
  \item Next, we remove a pebble from $v^i_{k-1}$ for $\sz{v^i_{k-2}}$ memory usage.
\end{enumerate}

We claim that these five memory states fall into the following relationships, which are depicted in \Cref{fig:tikz-hardness-segment}.

\begin{claim}
\label{clm:hardness0}
  The key memory states are (1), (2), and (5), whose memory usages falls into the relationship $(5) \le (1) \le (2)$. Furthermore, the memory usages of (3) and (4) are both between the memory usages of (2) and (5).
\end{claim}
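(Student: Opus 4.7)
The plan is to unpack the five memory states in terms of the three node weights $S_o := \sz{v^i_k}$, $S_e := \sz{v^i_{k-1}}$, and $S_o' := \sz{v^i_{k-2}}$, and then verify the stated inequalities by direct substitution. With this notation, $(1)=S_o$, $(2)=S_o+S_e$, $(3)=S_e$, $(4)=S_e+S_o'$, and $(5)=S_o'$, so every required comparison reduces to a statement about the signs of $S_o$, $S_e$, $S_o'$, and of the two differences $S_o-S_o'$ and $S_e-S_o'$.

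From the weight definitions, $S_o = \tfrac{k+1}{2}a_i$ and $S_o' = \tfrac{k-1}{2}a_i$ (with $k$ odd and $k\ge 3$), so $S_o \ge S_o' \ge 0$ is immediate since $a_i$ is a positive integer. Non-negativity of $S_e$ was already established in the paragraph introducing the weights, using the fact that a solvable 3-Partition instance must satisfy $a_i\le B$. These two observations suffice for $(5)\le (1)\le (2)$; they also show that (4) lies between (5) and (2) (from $S_e\ge 0$ and $S_o'\le S_o$), and that (3) is at most (2) (from $S_o\ge 0$). So at this stage only one direction of one comparison remains open.

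The only inequality requiring actual work is (3) $\ge$ (5), i.e.\ $S_e\ge S_o'$. Setting $j:=\tfrac{k-1}{2}\in\{1,\dots,m-1\}$, the required statement becomes $\tfrac{1}{2}(m-j)(m+j+1)B \ge (j+1)\,a_i$, and invoking $a_i\le B$ once more it suffices to verify the purely numerical inequality $(m-j)(m+j+1)\ge 2(j+1)$ for $1\le j\le m-1$. Expanding the left side and rearranging yields $(m-j)(m+j+1)-2(j+1) = m(m+1)-(j+1)(j+2)$, which is nonnegative precisely because $j+1\le m$ (so $(j+1)(j+2)\le m(m+1)$). This short algebraic check is the one ``real'' step in the proof; everything else is bookkeeping with the explicit weight formulas. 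I anticipate no further obstacle.
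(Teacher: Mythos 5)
Your proof is correct and follows essentially the same route as the paper: identify the five states in terms of the three explicit weights, dispose of everything except $(3)\ge(5)$ using $\sz{v^i_{k-1}}\ge 0$ and $\sz{v^i_{k-2}}\le \sz{v^i_k}$, and verify $(3)\ge(5)$ via $a_i\le B$ plus a short computation. The only cosmetic difference is that you reduce the last step to the single inequality $(j+1)(j+2)\le m(m+1)$, whereas the paper chains through the worst case $k=2m-1$; both are sound.
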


We first prove that the memory usages of (1), (2), and (5) are ordered $(5) \le (1) \le (2)$. The first comparison is essentially by construction: $\sz{v^i_{k-2}} = \frac{k-1}{2} a_i < \frac{k+1}{2} a_i = \sz{v^i_k}$. The second comparison relies on $\sz{v^i_{k-1}} \ge 0$ which we have already proved.

Next, we prove that the memory usages of (3) and (4) are both between the memory usages of (2) and (5). The easy half is reasoning about the memory usage of (4); it is at least the memory usage of (5) because $\sz{v^i_{k-1}} \ge 0$ and it is at most the memory usage of (2) because $\sz{v^i_{k-2}} < \sz{v^i_k}$, both of which we just reasoned about to order $(5) \le (1) \le (2)$. The harder half is reasoning about the memory usage of (3). It is at least the memory usage of (5) because:
\begin{align*}
  \sz{v^i_{k-1}} &= \frac12 \left(m - \frac{k-1}{2}\right) \left(m + \frac{k-1}{2} + 1\right) B - a_i \\
                 &\ge mB - a_i \qquad \text{(worst case is $k = 2m-1$)} \\
                 &\ge ma_i - a_i = (m-1) a_i \\
                 &\ge \frac{k-1}{2} a_i \qquad \text{(worst case is $k = 2m-1$)} \\
                 &= \sz{v^i_{k-2}}
\end{align*}
It is at most the memory usage of (2) because trivially $\sz{v^i_k} \ge 0$.

The upshot of all this reasoning is that all valleys in a branch occur when a pebble is on a node with odd index $k$ and all peaks in a branch occur when pebbles are on a node with odd index $k$ and the subsequent node with even index $k-1$. Furthermore, the valleys decrease from $m a_i$ to $a_i$ while the peaks increase from $mB + (m-1) a_i$ to $\frac12 (m-1)(m+2)B + a_i$ (each subsequent peak gains at least one multiple of $B$ while losing exactly one multiple of $a_i$). When the branch $i$ is clear from context, we drop the superscript and refer to nodes $v_k$ of branch $i$.

Why did we go through the trouble of setting up this careful valley and peak structure for every branch/element? At a high level, the tension in our pebbling game is that we want to remove the pebble from the heavyweight source $s$. However, in order to do so, we must be somewhere in the middle of every branch since we can never re-add the pebble to $s$ again. We are incentivized to get as far along each branch as possible because the valleys decrease as we do so, but this is counterbalanced by the pressure of the increasing peaks. In the end, we will show that the best strategies correspond to taking a group of branches to their final valley, then another group of branches to their second-to-last valley, and so on until we take a group of branches to their first valley. For such a strategy, its overall performance will rest on ensuring all of these groups have the same sum (namely $B$), because any deviation from this sum will cause a part of this strategy to exceed the desired pebbling cost.

With this in mind, we will choose $W$ to be $\frac12 m(m+1)n B + mB$, which is at least than the sum of the final peaks from every branch in the construction. This will ensure that once we manage to remove the pebble from $s$, the remainder of the pebbling strategy is irrelevant since it will always be less than the initial cost of placing a pebble on $s$.

\begin{claim}
\label{clm:hardness1}
  If the 3-partition instance is feasible, then there exists a strategy with pebbling cost at most $W + \dfrac{1}{2} m(m+1) B$.
\end{claim}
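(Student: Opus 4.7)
My plan is to exhibit a concrete pebbling strategy driven by the feasible 3-partition and then check its peak memory. Write the partition as $G_1,\dots,G_m$ with $\sum_{i \in G_j} a_i = B$ for each $j$. The strategy has three stages. First, pebble the source $s$ (cost $W$). Second (the \emph{descent}), process the groups in the order $G_m, G_{m-1}, \dots, G_1$; within phase $j$, handle the branches $i \in G_j$ one at a time, advancing each from $v^i_{2m-1}$ through consecutive valleys down to the ``target'' level-$j$ valley $v^i_{2m-2j+1}$, using the five-state local transition pattern of \cref{clm:hardness0} (add the next peak node, remove the old valley, then add the next valley) to cross every intermediate peak. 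After the descent, every branch $i \in G_j$ rests at cost $(m-j+1)a_i$, so the total non-$s$ memory equals $\sum_j (m-j+1)B = \tfrac{m(m+1)}{2}B$. Third, remove $s$ (legal since every branch carries a pebble), \emph{ascend} each unfinished branch $i \in G_j$ ($j<m$) up to $v^i_1$, and finally pebble $t$ (size $0$).

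For the descent peak, I would freeze the algorithm at an arbitrary worst moment in phase $j$: some $S_j \subseteq G_j$ is already at level $j$ and the current branch $i^* \in G_j \setminus S_j$ sits at its state-2 peak en route to some level $\ell \le j$. The memory equals $W + \tfrac{(m-j)(m-j+1)}{2}B + (m-j+1)\sum_{i \in S_j}a_i + (m-\ell+1)a_{i^*} + \tfrac{(\ell-1)(2m-\ell+2)}{2}B$. The key move is to use $\ell \le j$ so that the two $a$-terms fit under a single coefficient $(m-\ell+1)$, and then to invoke 3-partition feasibility, $\sum_{i \in S_j}a_i + a_{i^*} \le \sum_{i \in G_j}a_i = B$, to bound them by $(m-\ell+1)B$. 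Substituting $a = m-j$ and $b = m-\ell$ (so $0 \le a \le b \le m-1$), the remaining $B$-coefficient collapses to $\tfrac{1}{2}(a^2+a+1+m^2-b^2)$, and the target bound $\tfrac{m(m+1)}{2}$ reduces to $a^2 - b^2 + a + 1 \le m$, which follows from $a \le b$ and $a+1 \le m$.

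The ascent is comparatively routine. Once $s$ is removed the starting memory is $\tfrac{m(m+1)}{2}B$, and each full five-state local transition strictly decreases total memory by $a_i$, so the pre-transition memory never exceeds $\tfrac{m(m+1)}{2}B$; each transient state-2 peak raises memory by at most $\tfrac{(\ell-1)(2m-\ell+2)}{2}B \le \tfrac{(m-1)(m+2)}{2}B$, giving an ascent-peak bound of $(m^2+m-1)B$. This sits safely under $W + \tfrac{m(m+1)}{2}B$ because $W = \tfrac{1}{2}m(m+1)nB + mB \ge \tfrac{m(m+1)}{2}B$ whenever $n \ge 1$. The main obstacle I expect is the descent bound, since it is the only place the 3-partition hypothesis is genuinely used, and the tightness of the reduction $a^2 - b^2 + a + 1 \le m$ (equality at $j=\ell=1$) shows that perfect balance of the partition is exactly what the bound needs.
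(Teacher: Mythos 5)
Your strategy is exactly the paper's (pebble $s$, descend the groups one branch at a time in order of decreasing target depth, then release $s$ and finish, with $W$ chosen large enough to absorb the ascent), and your freeze-and-account analysis of the descent peak matches the paper's computation, so the proof is correct and takes essentially the same route; the only real difference is that you bound every intermediate state-2 peak via the parameter $\ell \le j$, whereas the paper uses the monotonicity of peaks along a branch to check only the final peak of each descent. One harmless arithmetic slip: the collapsed $B$-coefficient is $\tfrac12\left(a^2+a+m^2+m-b^2-b\right)$ rather than $\tfrac12\left(a^2+a+1+m^2-b^2\right)$, so the target bound reduces to $a(a+1)\le b(b+1)$, i.e.\ exactly $a\le b$ (with equality whenever $\ell=j$, not only at $j=\ell=1$); the conclusion is unaffected.
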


\begin{proof}
  Suppose the input 3-partition instance is feasible and let $S_1, S_2, \ldots, S_m$ denote a feasible partition. Then we have $\sum_{a_i \in S_j} a_i = B$ for each $j \in [m]$. For notational convenience, we will let $S^{-1}(i)$ denote the $j$ such that $i \in S_j$ and without loss of generality reindex the elements so that all elements of $S_1$ come before all elements of $S_2$ and so on, i.e. $S^{-1}(\cdot)$ is an increasing function.
    
  Consider the strategy $\sigma$ constructed as follows. First, schedule the source node $s$, then schedule each branch corresponding to integers in $S_1$ up to their last odd index node $v_1$, then schedule each branch corresponding to integers in $S_2$ up to their second-to-last odd node $v_3$, and so on until we schedule each branch corresponding to integers in $S_m$ up to their first odd node $v_{2m-1}$. Formally, the schedule $\sigma \triangleq (s) \circ \sigma^1_1 \circ \cdots \circ \sigma^i_{2\cdot S^{-1}(i) - 1} \circ \cdots \circ \sigma^n_{2m-1} \circ *$ where $\sigma^i_k = (v^i_{2m-1},  \ldots, v^i_k)$ is the subschedule that executes branch $i$ until the node with (odd) index $k$ and $*$ indicates any feasible subschedule for the remaining nodes of the graph. Note that this schedule implies a corresponding pebbling strategy where we remove a pebble from a node as soon as all of its immediate successors have had a pebble at some point.

  Since the peaks are increasing in each branch, the pebbling cost while executing $\sigma^i_{2j - 1}$ (where $i \in S_j$) occurs in the peak right before arriving at the last node $v^i_{2j - 1}$. By \Cref{clm:hardness0}, this peak involves exactly the previous odd and even nodes: $v^i_{2j + 1}$ and $v^i_{2j}$. Additionally, we will have pebbles on $s$ and the previous branches $1, \ldots, i-1$. Hence the total pebbling cost is
  \begin{align*}
    &\hphantom{{}={}} \sz{s} +
                      \left( \sum_{j' < j} \sum_{i' \in S_{j'}} \sz{v^{i'}_{2j' - 1}} \right) +
                      \left( \sum_{i' \in S_j . i' < i} \sz{v^{i'}_{2j - 1}} \right) +
                      \sz{v^i_{2j + 1}} + \sz{v^i_{2j}} \\
    &= W +
       \left( \sum_{j' < j} \sum_{i' \in S_{j'}} j' a_{i'} \right) +
       \left( \sum_{i' \in S_j . i' < i} j a_{i'} \right) +
       \left( (j+1) a_i \right) +
       \left( \frac12 (m - j)(m + j + 1)B - a_i \right) \\
    &= W +
       \left( \sum_{j' < j} \sum_{i' \in S_{j'}} j' a_{i'} \right) +
       \left( \sum_{i' \in S_j . i' < i} j a_{i'} \right) +
       \left( j a_i \right) +
       \left( \frac12 (m - j)(m + j + 1)B \right) \\
    &= W +
       \left( \sum_{j' < j} \sum_{i' \in S_{j'}} j' a_{i'} \right) +
       \left( \sum_{i' \in S_j . i' \le i} j a_{i'} \right) +
       \left( \frac12 (m - j)(m + j + 1)B \right) \\
    &\le W +
       \left( \sum_{j' \le j} \sum_{i' \in S_{j'}} j' a_{i'} \right) +
       \left( \frac12 (m - j)(m + j + 1)B \right) \\
    &= W +
       \left( \sum_{j' \le j} j' B \right) +
       \left( \frac12 (m - j)(m + j + 1)B \right) \\
    &= W +
       \frac12 j(j+1) B +
       \frac12 (m - j)(m + j + 1)B \\
    &= W + \frac12 (m^2 + m) B
  \end{align*}
  Finally, once node $v^n_{2m-1}$ has been scheduled, the source node $s$ can have its pebble removed. We chose the weight of $s$ to be sufficiently large so that the remaining nodes of the graph can be arbitrarily scheduled without increasing the memory consumption beyond $W$. This completes the proof.
\end{proof}

\begin{claim}
\label{clm:hardness2}
  If there exists a strategy with pebbling cost at most $W + \dfrac{1}{2}m(m+1)B$, then there exists a feasible solution to the 3-partition instance.
\end{claim}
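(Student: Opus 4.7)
The plan is to analyze what any pebbling strategy of cost at most $W + \tfrac{1}{2}m(m+1)B$ must look like at the instant immediately before $s$'s pebble is removed, extract a candidate partition from that moment, and use the peak-memory bound to force the partition to be balanced. Fix such a strategy $\sigma$ and focus on the moment just before $s$'s pebble is removed; each branch must still carry a pebble (otherwise it cannot be continued, since $s$ is non-recreatable). A short case analysis shows we may assume without loss of generality (by removing extraneous pebbles and by deferring additions that would otherwise land on even-indexed nodes) that each branch $i$ holds exactly one pebble on an odd-indexed node $v^i_{2j^*_i-1}$ with $j^*_i \in [1,m]$; any other configuration either pushes memory above the peak budget or can be improved at no cost. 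The goal then reduces to showing that $c_j := \sum_{i:\, j^*_i = j} a_i$ equals $B$ for every $j \in [m]$, whereupon $S_j := \{i:\, j^*_i = j\}$ is the desired 3-partition.

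The first key step is a reduction to a \emph{canonical} schedule: one where the branches are processed one at a time in some order $\pi(1), \ldots, \pi(n)$, each fully advanced to its final level before the next is touched, and $s$ is removed only afterward. A pairwise-exchange argument shows that whenever two branches are interleaved, separating them cannot increase the peak memory: at each rearranged peak the contribution of any ``not yet processed'' branch drops to zero, while ``already finished'' branches sit at their final (minimum) level. Crucially, this exchange preserves every $j_i^*$. In the canonical schedule the peak during branch $\pi(k)$'s processing equals $W + P_{\pi(k)}(j_{\pi(k)}^*) + \sum_{l<k} j_{\pi(l)}^* a_{\pi(l)}$, where $P_i(j) := j\, a_i + \tfrac{(m-j)(m+j+1)}{2}B$ is the largest on-branch memory reached while advancing branch $i$ to level $j$ (the $2$-pebble state with $v^i_{2j+1}$ and $v^i_{2j}$ simultaneously on, or the rest state when $j = m$). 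Demanding this is at most $W + \tfrac{1}{2}m(m+1)B$ and using $m(m+1) - (m-j)(m+j+1) = j(j+1)$ yields
\[
\sum_{l \le k} j_{\pi(l)}^*\, a_{\pi(l)} \;\le\; \tfrac{j_{\pi(k)}^*\bigl(j_{\pi(k)}^* + 1\bigr)}{2}\, B \qquad \text{for every } k \in [n].
\]
A second exchange argument shows that swapping any adjacent pair $\pi(k), \pi(k+1)$ with $j_{\pi(k)}^* > j_{\pi(k+1)}^*$ preserves all these inequalities, so we may further assume $j_{\pi(\cdot)}^*$ is nondecreasing. Evaluating the inequality at $k = k_j := |\{i:\, j_i^* \le j\}|$ then produces the prefix-sum bounds $\sum_{j' \le j} j'\, c_{j'} \le \tfrac{j(j+1)}{2}\, B$ for every $j \in [m]$ (even when $c_j = 0$, applying the inequality at a smaller $k$ gives at least as strong a bound).

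Finally, set $\epsilon_j := c_j - B$, so $\sum_j \epsilon_j = 0$ (from $\sum_j c_j = mB$) and the inequalities become $T_j := \sum_{j' \le j} j'\, \epsilon_{j'} \le 0$ for every $j$. Writing $S_j := \sum_{j' \le j} \epsilon_{j'}$, summation by parts gives $T_j = j\, S_j - \sum_{l < j} S_l$. The base $T_1 = S_1 \le 0$ and an easy induction (the step uses $j S_j \le \sum_{l<j} S_l \le 0$) show $S_j \le 0$ for every $j$. Combined with $S_m = 0$, the $j = m$ inequality $T_m = -\sum_{l<m} S_l \le 0$ forces $\sum_{l<m} S_l \ge 0$, so every $S_l$ vanishes. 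Hence $\epsilon_j = S_j - S_{j-1} = 0$ and $c_j = B$ for every $j$, yielding the desired partition. The main obstacle in executing this plan is the canonical-schedule reduction: verifying carefully that each interleaving swap neither raises any peak nor shifts the $j_i^*$ values requires careful bookkeeping, particularly near the moment just before $s$ is removed.
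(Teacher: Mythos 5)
Your proposal is correct and follows essentially the same route as the paper's proof: reduce by exchange arguments to a non-interleaved schedule sorted by stopping depth, read off the triangular system $\sum_{j'\le j} j' c_{j'} \le \tfrac{1}{2}j(j+1)B$ from the peak bound, and combine with $\sum_j c_j = mB$ to force $c_j = B$. The only differences are cosmetic — you perform the sorting exchange on the inequalities rather than on the schedule itself, and you supply the summation-by-parts argument for the final step, which the paper merely asserts as "the unique solution."
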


\begin{proof}
  Consider a strategy $\sigma$ with pebbling cost at most  $W + \dfrac{1}{2}m(m+1)B$. Without loss of generality, $\sigma$ executes each branch $i$ up to some node $v^i_k$ with odd index $k$ until the source node is released since interleaving the branches does not help to reduce memory consumption (find the first instance where we return to some branch $i$, observe that we could have done that when we initially made progress on branch $i$ because there were only fewer branches then and for those other branches we could have been sitting at a lower valley).

  We next claim that without loss of generality, the strategy first executes some set of branches up to node $v_1$, then executes some set of branches up to node $v_3$, and so on. Suppose not, then consider the first time that we make progress on some branch $j$ up to $v^j_{2k-1}$ such that we make progress on the next branch $j+1$ up to $v^{j+1}_{2k'-1}$ for some $k' < k$. Consider a new strategy $\sigma'$ that's exactly the same as the earlier one, but now executes branch $j$ also up to $v^j_{2k'-1}$. Consider the pebbling cost in $\sigma$ at the peak before node $v^{j+1}_{2k'-1}$: we have $W + P + k \cdot a_j + \frac12 (m-k')(m+k'+1)B + k'a_{j+1}$ where $P$ is the total size contributed by the valleys of all other branches (apart from $j$). On the other hand, in strategy $\sigma'$, consider the pebbling cost at the peak before executing node $v^j_{2k'-1}$: we have $W + P + \frac12 (m-k')(m+k'+1)B + k'a_{j}$ which is less than the pebbling cost in $\sigma$ at the peak before node $v^{j+1}_{2k'-1}$ so it must fit in our budget of $W + \frac12 m (m+1) B$ as well. Repeating this argument, we can assume that the strategy $\sigma$ has the assumed form.

  Let $S_1, S_2, \ldots, S_m$ be the sets of branches that are executed up to nodes $v_1, v_3, \ldots v_{2m-1}$ respectively. Let $a(S_k) \triangleq \sum_{j \in S_k} a_j$ for convenience. We focus on the pebbling cost attained during the last branch in the first set $S_1$, at the peak right before the valley it stops at. At that peak, we need to pay for the source ($W$), the valleys of all previous branches, and the peak itself. The total cost incurred is the following expression.
  \begin{align*}
    W + \dfrac{1}{2}(m-k)(m+k+1)B + \sum_{k' \leq k}k'a(S_{k'})
  \end{align*}
  But since the pebbling cost of the strategy is at most $W + \dfrac{1}{2}(m)(m+1)B$, we have
  \begin{align*}
    &\hphantom{{}\Rightarrow{}} W + \dfrac{1}{2}(m-k)(m+k+1)B + \sum_{k' \leq k}k'a(S_{k'}) \le W + \frac12 (m)(m+1)B\\
    &\Rightarrow \dfrac{1}{2}(m-1)(m+2)B + a(S_{1}) \leq \dfrac{1}{2}(m)(m+1)B\\
    &\Rightarrow a(S_1) \leq B
  \intertext{Similarly, by examining the other sets $S_k$, we get the following sets of inequalities:}
  \sum_{i=1}^k i a(S_i) &\leq \dfrac{1}{2}k(k+1)B, \ \forall 1\leq k\leq m
  \intertext{Further, by definition we also have:}
  \sum_{i=1}^m a(S_i) &= mB
  \end{align*}
  Setting $a(S_i) = B,\ \forall i$ is the unique solution and hence the 3-partition instance is feasible.
\end{proof}

Claims \ref{clm:hardness1} and \ref{clm:hardness2} together complete the proof of \Cref{thm:pumpkin-hardness}.
In fact, we can utilize a small modification of the same reduction to show that the problem remains strongly NP-hard even when the underlying computation graph is an out-tree.

\treehardness

\begin{proof}[Proof Sketch]
Consider the same reduction as above with just one change. For each branch $i$, the last internal node on the branch $(v^i_1)$ connects to a node $t_i$ (instead of a common sink node $t$). We set $\sz{t_i} = W$ for all $i$. The constructed graph is thus a very simple out-tree with only a single node that has out-degree greater than 1.

If the input 3-Partition instance is feasible, then we redo \Cref{clm:hardness1} by being more careful in the subschedule $*$ to exit one branch at a time. On the other hand, if we have a feasible strategy, then we patch \Cref{clm:hardness2} by observing that since $\scr{t_i} = W > \frac12 m(m+1)B$, no strategy can execute any node in $\{t_i\}$ until the large source node $s$ has been released. With this observation, the same proof as Claim \ref{clm:hardness2} continues to hold.
\end{proof} 

\bibliographystyle{alpha}
\nocite{*}  
\bibliography{scheduling}

\appendix
\section{Dominance (Deferred Proofs from \cref{sec:dominant-def})}
\label{sec:missing-proofs}

\pointerlemma*
\begin{proof}
~
 \paragraph*{$\Leftarrow$ direction:} Suppose there is a pointer-advancement sequence $\{(a_k, b_k)\}$ as in the lemma statement. Fix any $i \in [|A|]$. Since $a_1 = 1$, $a_{|A| + |B| - 1} = |A|$, and $a_i$ only increments by at most one at each step, there is some $k$ so that $a_k = i$. We claim $j = b_k$ satisfies the properties required by \cref{defn:algdef}.

   To show the first claim $\max_{i' \ge i} A[i'] \le \max_{j' \ge j} B[j']$, consider the index $i'\ge i$ that attains the maximum on the left-hand side. 
  Then, in the pointer-advancement sequence, there is some $k'\ge k$ such that $a_{k'} = i'$, and we know $b_{k'}\ge b_k = j$. Then the claim follows from 
  \[ A[i'] = A[a_{k'}] \le B[b_{k'}] \le \max_{j' \ge b_k}B[j'].\]
  The third claim  $\max_{i' \le i} A[i'] \le \max_{j' \le j} B[j']$ can be proved analogously.
  
   To show the second claim $\min_{i' \ge i} A[i'] \le \min_{j' \ge j} B[j']$, the proof is symmetric to above. Consider the index $j'\ge j$ that attains the minimum on the right-hand side.  Then, in the pointer-advancement sequence, there is some $k'\ge k$ such that $b_{k'} = j'$, and we know $a_{k'}\ge a_k = i$. Then the claim follows from 
  \[ B[j'] = B[b_{k'}] \ge A[a_{k'}] \ge \min_{i' \ge a_k}A[j'].\]
  The fourth claim  
  $\min_{i' \le i} A[i'] \le \min_{j' \le j} B[j']$  can be proved analogously.
  
  
  \paragraph*{$\Rightarrow$ direction:}  Suppose $A$ and $B$ satisfy \cref{defn:algdef}. 
  
  Let $i_1\le i_2$ be the leftmost and rightmost indices such that $A[i_1]=A[i_2] = \min_{i}A[i]$.
  Let $j_1\le j_2$ be the leftmost and rightmost indices such that $B[j_1]=B[j_2] = \max_{j}B[j]$.
  
  Note that the properties from \cref{defn:algdef} easily imply $\min_{i}A[i]\le \min_{j}B[j] \le \max_{j}B[j]$, and also $\max_{j}B[j] \ge \max_i A[i]$.
  
  We would like to construct a pointer-advancement process, where pointer $i$ initially points to $A[1]$ and pointer $j$ initially points to $B[1]$, and we want to advance them so that eventually they point to $A[|A|]$ and $B[|B|]$ respectively. We will do this in four stages:
  \begin{itemize}
      \item From $(i,j)=(1,1)$ to $(i,j)=(i_1,j_1)$.
      \item From $(i,j)=(i_1,j_1)$ to $(i,j)=(i_1,j_2)$.
      \item From $(i,j)=(i_1,j_2)$ to $(i,j)=(i_2,j_2)$.
      \item From $(i,j)=(i_2,j_2)$ to $(i,j)=(|A|,|B|)$.
  \end{itemize}
  Note that the second stage is obviously valid: since $i$ stays at $A[i_1] = \min_{i}A[i] \le \min_j B[j]$, we know that $i$ never points to a higher value than $j$ throughout this stage. Similarly, the third stage is also valid, since $j$ stays at $B[j_2] =\max_{j}B[j] \ge \max_i A[i]$. It remains to find ways to perform the first stage and the fourth stage. In the following, we explain the fourth stage, and the first stage follows from a symmetric proof.
  
Starting from $(i,j) = (i_2,j_2)$, we perform the pointer advancement using a while loop, and we maintain the following invariant at the begining/end of every loop iteration:
 \[  \min_{i'\ge i}A[i'] = A[i] \le \min_{j'\ge j}B[j'], \text{ and }  \max_{j' \ge j}B[j'] = B[j] \ge \max_{i'\ge i}A[i'].\]
 Then, each loop iteration runs as follows (it terminates until either $i=|A|$ or $j=|B|$): 
 \begin{itemize}
     \item Define $i_h = \arg \max_{i'\ge i} A[i']$, and $i_v = \arg \min_{i'\ge i_h} A[i']$. To break ties, we always pick $i'$ as large as possible.
     \item Define $j_v = \arg \min_{j'\ge j} B[j']$, and $j_h = \arg \max_{j'\ge j_v} B[j']$. To break ties, we always pick $j'$ as large as possible.
     \item If $A[i_h]>B[j_h]$, then advance pointer $i$ to $i_v$.
     \item Otherwise, advance pointer $j$ to $j_h$.
 \end{itemize}
 First, we observe the pointer movement is always valid. This is due to the loop invariant: $B[j] \ge \max_{i'\ge i} A[i']$, so moving $i$ while holding $j$ still will not violate the movement requirement $A[i]\le B[j]$. Similarly, moving $j$ while holding $i$ still also does not violate the requirement, due to $A[i] \le \min_{j'\ge j}B[j']$.
 
 Now we show the loop invariant is preserved after the movement.  We separately consider the two cases:
 \begin{itemize}
     \item Suppose $A[i_h]>B[j_h]$, and by definition we should advance $i$ to $i_v$. 
     
     We first claim $A[i_v] \le B[j_v]$ in this case. To show this, we apply \cref{defn:algdef} to $i_h$, and obtain $j^*$ such that $\max_{j'\ge j^*}B[j'] \ge \max_{i'\ge i_h}A[i']$ and $\min_{j'\ge j^*}B[j'] \ge \min_{i'\ge i_h}A[i']$.
      By definition of $i_h$, the former inequality gives $\max_{j'\ge j^*}B[j'] \ge \max_{i'\ge i_h}A[i'] =  A[i_h]>B[j_h]$, which then implies $j^*<j_v$ (due to the definitions of $j_h\ge j_v$). Then, the latter inequality gives $A[i_v] = \min_{i'\ge i_h}A[i'] \le \min_{j'\ge j^*}B[j'] = B[j_v]$.
     
     After advancing $i$ to $i_v$, the second invariant $\max_{j' \ge j}B[j'] = B[j] \ge \max_{i'\ge i_v}A[i']$ obviously still holds. The first invariant, $\min_{i'\ge i_v}A[i'] = A[i_v]\le \min_{j'\ge j}B[j']$ also holds, due to the property  $A[i_v]\le B[j_v]$ we have just shown.
     \item Suppose $A[i_h] \le B[j_h]$. Then we can analogously show the invariant still holds, similar to the last paragraph of the previous case.
 \end{itemize}
 Next, we make sure that each iteration must move one of the pointers by at least one step.
 Since the loop has not terminated, we know from the definition of the loop that $i<|A|$ and $j<|B|$ both hold.
In this case, by the way we break ties, we must have $i_v>i$, and also $j_h>j$. Hence we must make progress in either case.

Finally,  after the loop finishes when $i=|A|$ or $j=|B|$, we still have the loop invariant, which allows us to move the unfinished pointer to the end without violating the $A[i]\le B[j]$ requirement. This completes our pointer-advancement argument.
\end{proof}
\section{Every Graph has a Dominant Schedule (Deferred Proofs from \cref{sec:dominantexists})}
\label{sec:dominance-proofs}

    \subsection{The Prefix Lemma}
    \label{subsec:prefixlemma}
Now we state our main technical lemma. 
Informally, the lemma states that given any two schedules $\sigma$ and $\tau$ such that $\set(\tau_{1:i}) \subseteq \set(\sigma) \subseteq \set(\tau)$, one can always find a subschedule $\sigma'$ such that moving $\sigma'$ to be the prefix of $\tau$ only improves the resulting schedule. 

\begin{lemma}[Prefix Lemma] \label{lem:prefix-lemma}
Let $\sigma$ and $\tau$ be partial schedules of a DAG $G$. 
Suppose there exists an $i\geq 0$ such that $\set(\tau_{1:i}) \subseteq \set(\sigma) \subseteq \set(\tau)$. 
Then, there exists a valid partial schedule $\sigma'$  that satisfies all of the following properties.
\begin{enumerate}
    \item \label{prefix_lemma:prefix_valid} 
$\sigma'$ is a subschedule of $\sigma$ and $\set(\tau_{1:i}) \subseteq \set(\sigma')$.
    \item \label{prefix_lemma:s_better} 
For all $j$, $\nodesumcost{\sigma' \capdot \sigma_{1:j}} \le  \nodesumcost{\sigma_{1:j}}$, with strict inequality if $\sigma' \capdot \sigma_{1:j} \neq \sigma_{1:j}$.

\item \label{prefix_lemma:t_better} 
Let $\tau' = \sigma' \cupdot  \tau$, and let $n = |\tau| = |\tau'|$. 
Then 
 $\max_{j > i}  \nodesumcost{\tau_{1:j}} \ge  \max_{j > |\sigma'|} \nodesumcost{\tau'_{1:j}}$.

\item \label{prefix_lemma:t_dominates} 
Further, if $\sigma \ledot \tau_{1:i}$, then $\tau' \ledot \tau$.
\end{enumerate}
For convenience, we write $\sigma' = MOVE(\tau_{1:i}, \sigma, \tau)$.
\end{lemma}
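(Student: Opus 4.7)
The plan is to construct $\sigma'$ explicitly as the restriction of $\sigma$ to a carefully chosen topological cut $U^*$. Namely, over all topological cuts $U$ of $G$ satisfying $\set(\tau_{1:i}) \subseteq U \subseteq \set(\sigma)$, choose $U^*$ to minimize $\nodesumcost{U}$ and, subject to that, maximize $|U|$. Such a $U^*$ is well-defined: by \cref{lemma:capcuptopo}, the union of two min-cost cuts in this family is again a topological cut lying in the same sandwich, and it must also be of min cost (otherwise the corresponding intersection would beat the minimum, using the additive identity $\nodesumcost{U_1 \cup U_2} + \nodesumcost{U_1 \cap U_2} = \nodesumcost{U_1} + \nodesumcost{U_2}$). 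So $U^*$ is just the union of all min-cost cuts in the family. Then I set $\sigma' := \sigma \capdot U^*$; \cref{lem:isvalidpartialschedule} immediately gives that $\sigma'$ is a valid partial schedule, and \cref{prefix_lemma:prefix_valid} is then built into the construction.

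For \cref{prefix_lemma:s_better}, note that $\sigma' \capdot \sigma_{1:j}$ is the prefix of $\sigma'$ whose underlying set is $U^* \cap \set(\sigma_{1:j})$, so the required inequality is equivalent to $\nodesumcost{\set(\sigma_{1:j}) \setminus U^*} \ge 0$. I would verify this by applying min-cost-ness to the competitor cut $U^* \cup \set(\sigma_{1:j})$, which lies in the sandwich family by \cref{lemma:capcuptopo}. Strictness follows from the max-size tiebreak in the construction: if the weight of $\set(\sigma_{1:j}) \setminus U^*$ were zero and nonempty, then $U^* \cup \set(\sigma_{1:j})$ would be a strictly larger min-cost cut, contradicting maximality. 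For \cref{prefix_lemma:t_better}, for each $j > |\sigma'|$ I would identify the corresponding $m > i$ with $\set(\tau'_{1:j}) = U^* \cup \set(\tau_{1:m})$; symmetric reasoning applied to the competitor cut $U^* \cap \set(\tau_{1:m})$ gives $\nodesumcost{U^* \setminus \set(\tau_{1:m})} \le 0$, hence $\nodesumcost{\tau'_{1:j}} \le \nodesumcost{\tau_{1:m}} \le \max_{m>i}\nodesumcost{\tau_{1:m}}$.

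The main obstacle I expect is \cref{prefix_lemma:t_dominates}: from $\sigma \ledot \tau_{1:i}$, deduce $\tau' \ledot \tau$. I plan to use the pointer-advancement formulation (\cref{defn:pointerdef}) and build the walk in two pieces. The first piece handles positions $1$ through $|\sigma'|$ on the $\tau'$ side: property \cref{prefix_lemma:s_better} immediately yields $\sigma' \ledot \sigma$ by taking the natural walk that advances both pointers together whenever $\sigma_j \in U^*$ and only the $\sigma$-pointer otherwise (with a tiny $B$-first/$A$-first choice depending on the sign of the next increment, as in the proof of \cref{defn:cost-seq-superior}); combined with the assumed walk for $\sigma \ledot \tau_{1:i}$ via transitivity (\cref{lem:superiority-reflexive-transitive}), this gives $\sigma' \ledot \tau_{1:i}$, i.e., the first $|\sigma'|+1$ profile entries of $\tau'$ are dominated by the first $i+1$ of $\tau$. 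The second piece handles positions beyond $|\sigma'|$ in $\tau'$, matching each to the corresponding $m > i$ in $\tau$ as in the proof of \cref{prefix_lemma:t_better}; the monotonicity of the map $j \mapsto m_j$ ensures the resulting pointer moves are legal, and concatenating the two pieces (using \cref{lem:concatendominance} at the seam $(|\sigma'|, i)$, where both profiles agree on the set-sum identity $\nodesumcost{\sigma'} = \nodesumcost{U^* \cup \set(\tau_{1:i})}$) completes the walk to $(|\tau'|, |\tau|)$. The delicate bookkeeping is synchronizing the two walks at the seam and verifying that both pointwise bounds needed at transition are simultaneously available; this is where the careful sandwich choice of $U^*$ pays off.
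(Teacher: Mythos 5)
Your proof is correct, but it takes a genuinely different route from the paper's. The paper constructs $\sigma'$ by an explicit block-moving procedure (MOVE): it seeds a block $B_0=\set(\tau_{1:i})$, gives each remaining node of $\sigma$ its own block inside $\tau$, and repeatedly swaps or merges blocks of nonpositive total cost toward the front until they are absorbed into $B_0$; the four properties are then read off from combinatorial invariants of that process (every proper prefix of a surviving block has positive cost, every suffix of an absorbed block has nonpositive cost). You instead characterize $\set(\sigma')$ extremally as the maximum-cardinality minimum-cost topological cut $U^*$ with $\set(\tau_{1:i})\subseteq U^*\subseteq\set(\sigma)$, and derive each inequality from a one-line competitor-cut argument: this family is closed under union and intersection by \cref{lemma:capcuptopo} and the cost is modular, so $U^*$ is well defined; minimality against the union competitor $U^*\cup\set(\sigma_{1:j})$ gives $\nodesumcost{\set(\sigma_{1:j})\setminus U^*}\ge 0$ (Property~\ref{prefix_lemma:s_better}, with strictness from the cardinality tie-break), and minimality against the intersection competitor $U^*\cap\set(\tau_{1:m})$ gives $\nodesumcost{U^*\setminus\set(\tau_{1:m})}\le 0$ for $m\ge i$ (Property~\ref{prefix_lemma:t_better}). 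For Property~\ref{prefix_lemma:t_dominates} your plan also goes through: the suffix walk advances only the $\tau$-pointer on nodes of $U^*\setminus\set(\tau_{1:i})$ and both pointers (sign-ordered) on nodes outside $U^*$, and the comparison needed at every pair is exactly the intersection-competitor inequality; gluing the two walks at the common pair $(|\sigma'|,i)$ is the same (mildly informal) step the paper itself takes. Your route is shorter and more conceptual, and your $U^*$ is the same kind of object the paper already computes at \cref{line:defn-mincut-L} of \cref{alg:awesomizer}, so it unifies the two uses of minimum topological cuts; what the paper's algorithmic construction buys is an explicit superset-differing-by-prefixes/suffixes decomposition that it reuses verbatim when verifying all four inequalities of \cref{defn:algdef}, whereas you must carry out the suffix pointer walk separately.
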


\begin{proof}
We'll create $\sigma'$ algorithmically, proving properties based on the algorithm. To this end, we define the MOVE algorithm. It tries to improve $\tau$ by moving blocks of nodes with negative total weight toward the beginning of the schedule.

\begin{algorithm}[H] \label{alg:move}
\DontPrintSemicolon
\caption{$\textsc{MOVE}$ }
  \SetAlgoLined
  \KwData{Partial schedules $\sigma$, $\tau$, and a prefix schedule $\tau_{1:i}$, with $\set(\tau_{1:i}) \subseteq \set(\sigma) \subseteq \set(\tau)$.}
  \KwResult{A partial schedule $\sigma'$ satisfying the properties in Lemma \ref{lem:prefix-lemma}.}
  Let $\hat{\tau}$ be a copy of $\tau$. We'll modify $\hat{\tau}$.\;
  Designate some subsequences of $\hat{\tau}$ as \emph{blocks} (disjoint sets of nodes that are consecutive in $\hat{\tau}$): one block $B_0 = \set(\tau_{1:i})$, and a separate block for each $\sigma_j \in \set(\sigma) \setminus \set(\tau_{1:i})$. Let $\mathcal{B}$ be the collection of these blocks. (Some nodes of $\hat{\tau}$ may not be in any block.)\;
  \For{$j=1$ to $|\sigma|$} {
    Let $B \in \mathcal{B}$ be the block that contains $\sigma_j$. If $B=B_0$, continue to the next value of $j$. \;    
    \While{$\nodesumcost{B} \leq 0$} {
        \tcp{Note:\ $B$ can't be at the start of $\hat{\tau}$ (which always starts with $B_0$). }
        \uIf {$B$ is immediately preceded in $\hat{\tau}$ by $B_0$}{
          Set $B_0 \leftarrow B_0 \cup B$, remove $B$ from $\mathcal{B}$, and break (go to the next value of $j$).\label{line:break}\;
        }
        \uElseIf {$B$ is immediately preceded in $\hat{T}$ by a block $B'$ such that $\sigma_{j'}\in B'$ for some $j' < j$ \label{line:bprime} } {
          Merge $B'$ and $B$: set $B \leftarrow B' \cup B$ and remove $B'$ from $\mathcal{B}$.\; \label{line:merge}
        }
        \Else {
          Swap places of $B$ and its immediately preceding node or block in $\hat{\tau}$. \label{line:swap}
        }
    }
  }
  \Return a partial schedule $\sigma'$: nodes of $B_0$, ordered according to $\sigma$.
\end{algorithm} 

We first show some useful properties of Algorithm \ref{alg:move}.
\begin{claim} \label{claim:prefix}
Let $B\in \mathcal{B}$ be a block during an execution of Algorithm \ref{alg:move} such that $B\neq B_0$.
Let $B = \{\sigma_{u_1}, \sigma_{u_2}, ..., \sigma_{u_h}\}$, with $u_1 < u_2 <...< u_h$ being the positions in schedule $\sigma$. Then:
\begin{itemize}
    \item \emph{\textbf{PrefixPositive:}}  If $|B| > 1$, then $\nodesumcost{\sigma_{u_1}} + \nodesumcost{\sigma_{u_2}} + ... + \nodesumcost{\sigma_{u_j}} > 0$ for all $1 \le j < h$. Consequently, if $\nodesumcost{B}>0$, this is true for all $j\leq h$.
    \item \emph{\textbf{SuffixNegative:}} If $\nodesumcost{B} \le 0$, then $\nodesumcost{\sigma_{u_j}} + \nodesumcost{\sigma_{u_{j+1}}} + ... + \nodesumcost{\sigma_{u_h}} \le 0$ for all $j$.
\end{itemize}
\end{claim}

\begin{proof}
    We begin by showing two properties that hold during the $j$th iteration of the \emph{for} loop.
    
    First, any block $B\in \mathcal{B}$ with $B\neq B_0$ and $|B| > 1$ consists only of nodes with index in $\sigma$ less than or equal to $j$. This follows by induction on the steps of the algorithm. Initially there are no such blocks, and a new one can be created on line \ref{line:merge}. If $|B'|=1$, its node is before $j$ in $\sigma$ by the \emph{if} condition on line \ref{line:bprime}; if $|B'|>1$, its nodes are before $j$ by the induction hypothesis. Similarly for $B$, which is either the singleton $\{\sigma_j\}$, or has nodes earlier in $\sigma$ by the induction hypothesis. Thus, the merged block also satisfies this property.

    Second, any block $B\neq B_0$ such that $\max \{j' : \sigma_{j'} \in B\} < j$ has $\nodesumcost{B}>0$. Let $\ell=\max \{j' : \sigma_{j'} \in B\}$ and consider a past \emph{for} loop iteration when $j$ was equal to $\ell$. The \emph{while} loop couldn't have ended on line \ref{line:break}, as $B\neq B_0$, so it must have ended due to $\nodesumcost{B}>0$. The block $B$ couldn't have changed in later iterations, because then it would have acquired a node with index higher than $\ell$.

    To show the PrefixPositive part of the claim, we proceed by induction on the steps of the algorithm. Initially, there are no blocks satisfying the criteria of the claim. A new block satisfying the criteria can only be created on line \ref{line:merge}. Let $B'$ and $B$ be the two blocks that are merged into $B$ there. Since $\sigma_j\in B$, by the first property above, $j$ must be the highest node index in $B$. Let $\ell$ be the highest node index in $B'$, and by the same property it must be that $\ell<j$. Thus, $j$ will be the highest node index of the merged block. So any strict prefix of the merged block will be the union of some prefix of $B'$ and some prefix of $B$ (where a \emph{prefix} refers to an ordering of the block according to indices in $\sigma$). Now, by the second property above, $\nodesumcost{B'}>0$. So all non-empty prefixes of $B'$ (including the full $B'$) have positive cost (by induction hypothesis if $|B|>1$ or from $\nodesumcost{B'}>0$, if $|B'|=1$). If $B$ contributes a non-empty prefix, it must have positive cost by the induction hypothesis. Since both prefixes can't be empty, the total cost of the prefix of the merged block must be positive.

    For SuffixNegative, suppose that $\nodesumcost{B} \le 0$. If $u_j=1$, the claim is trivial. Otherwise, 
    $\nodesumcost{\sigma_{u_j}} + ... + \nodesumcost{\sigma_{u_h}} = \nodesumcost{B} - [\nodesumcost{\sigma_{u_1}} +... + \nodesumcost{\sigma_{u_{j-1}}}].$
    Since the expression in brackets is positive by PrefixPositive, the claim follows.    
\end{proof}

We now proceed with the proof of Lemma \ref{lem:prefix-lemma}.

\vspace{-3mm}
\paragraph{Property \ref{prefix_lemma:prefix_valid}.}
Note that $\set(\tau_{1:i})\subseteq B_0$ from the start and never leaves it, so $\set(\tau_{1:i}) \subseteq \set(\sigma')$. Further, only nodes of $\sigma$ are in any blocks and can join $B_0$, so $\set(\sigma') \subseteq \set(\sigma)$. Since $\sigma'$ is ordered according to $\sigma$, it is a subschedule of $\sigma$.

\vspace{-3mm}
\paragraph{$\sigma'$ is a valid partial schedule of $G$.}
To show that $\sigma'$ is a valid partial schedule of $G$, we need to show that there are no precedence violations. For the sake of contradiction, suppose $G$ has an edge from a node $u$ to a node $\sigma_j$, such that $\sigma_j$ is in $\sigma'$, but $u$ does not appear before it in $\sigma'$. It's not possible that $u \not\in \sigma$ or that $u$ appears after $\sigma_j$ in $\sigma$, otherwise $\sigma$ would not be a valid partial schedule of $G$. Therefore $u=\sigma_{j'}$ for some $j'<j$. Since $\sigma'$ is ordered like $\sigma$, $u$ can't be after $\sigma_j$ in $\sigma'$, so $u$ is not in $\sigma'$ at all. It's also not possible that $u$ appears after $\sigma_j$ in $\tau$, otherwise $\tau$ would not be a valid partial schedule of $G$. So, in Algorithm \ref{alg:move}, the block with $u=\sigma_{j'}$ started earlier in $\hat{\tau}$ than the block with $\sigma_j$. However, since a block with $\sigma_j$ joined $B_0$ at some point, and the block with $\sigma_{j'}$ did not, the two must have been swapped. But this is not possible, since the swaps on line \ref{line:swap} only happen when there is no smaller-indexed node in the preceding block.

\vspace{-3mm}
\paragraph{Property \ref{prefix_lemma:s_better}.}
Let $\mathcal{B}' = \mathcal{B} \setminus \{B_0\}$ for value of $\mathcal{B}$ at the very end of Algorithm \ref{alg:move}, i.e. $\mathcal{B}'$ is a collection of blocks that never got merged into $B_0$. Thus, $\set(\sigma) \setminus \set(\sigma') = \cup_{B\in \mathcal{B}'} B$.  Any $B \in \mathcal{B}'$ must have positive cost, and, by Claim \ref{claim:prefix}, all its non-empty prefixes (defined based on indices in $\sigma$) must have positive cost as well. 
The nodes in $\set(\sigma' \capdot \sigma_{1:j})$ are the nodes of $\sigma'$ with indices up to $j$. Nodes of $\sigma_{1:j}$ that are not in this set constitute a union of prefixes of blocks in $\mathcal{B}'$. If all these prefixes are empty, the compared sets are identical, and the claim is trivial. Otherwise, these prefixes have a positive cost in total, and  $\nodesumcost{\sigma' \capdot \sigma_{1:j}} <  \nodesumcost{\sigma_{1:j}}$.

\vspace{-3mm}
\paragraph{$\sigma'$ dominates $\sigma$.}
We also show that $\sigma' \ledot \sigma$, which is needed for the proof of property \ref{prefix_lemma:t_dominates} below.
To do this, we show the four inequalities in Definition \ref{defn:algdef}. For any $i\in [|\sigma'|]$, let $j\in [|\sigma|]$ be such that $\sigma'_i = \sigma_j$ (recall that $\sigma'$ is a subschedule of $\sigma$, so such $j$ exists, and, moreover, the value of $j$ increases monotonically with $i$). 
For the first inequality, 
$\max_{i'\geq i} \nodesumcost{\sigma'_{1:i'}} \leq \max_{j'\geq j} \nodesumcost{\sigma_{1:j'}}$,
let $i^*\ge i$ be the index that realizes the max on the left.
Let $j^* \ge j$ be such that $\sigma'_{i^*} = \sigma_{j^*}$. Then $\set(\sigma_{1:j^*})$ is a superset of $\set(\sigma'_{1:i^*})$, differing by some prefixes of blocks in $\mathcal{B}'$, which have positive cost. So $\max_{j'\geq j} \nodesumcost{\sigma_{1:j'}} \ge
\nodesumcost{\sigma_{1:j^*}} \geq \nodesumcost{\sigma'_{1:i^*}} = \max_{i'\geq i} \nodesumcost{\sigma'_{1:i'}}$. The proof of the third inequality is nearly identical. 
To show the second and fourth inequalities, we let $j^*$ be the value of $j'$ realizing the min on the right, and then define $i^* = \max \{i' : \set(\sigma'_{1:i'}) \subseteq \set(\sigma_{1:j^*})\}$ to be the latest node that $\sigma'$ has in common with $\sigma_{1:j^*}$. Since $\set(\sigma'_{1:i^*}) \subseteq \set(\sigma_{1:j^*})$, differing by some positive prefixes of blocks in $\mathcal{B}'$, the inequalities follow.

\vspace{-3mm}
\paragraph{Property \ref{prefix_lemma:t_better}.}
To show property \ref{prefix_lemma:t_better}, we actually show a stronger claim, which will be needed later. The actual property \ref{prefix_lemma:t_better} follows as a consequence.
We claim that the memory profile of $\tau'$ after $|\sigma'|$ dominates that of $\tau$ after $i$:
\[\big(\nodesumcost{\tau'_{1:|\sigma'|}}, 
\nodesumcost{\tau'_{1:|\sigma'|+1}}, ...,
\nodesumcost{\tau'_{1:n}} \big) ~\ledot~
\big(\nodesumcost{\tau_{1:i}}, 
\nodesumcost{\tau_{1:i+1}}, ...,
\nodesumcost{\tau_{1:n}}\big).\]
%
Again, we show the four inequalities in Definition \ref{defn:algdef}.
We let the indices $j$ for the portion of the memory profile of $\tau'$ range from $|\sigma'|$ to $n$, and the indices $k$ for the portion of the memory profile of $\tau$ range from $i$ to $n$, to better correspond with the expression in the current lemma. 
To meet the requirement of Definition \ref{defn:algdef}, for any $j\in \{|\sigma'|+1, ..., n\}$, let $k\in \{i+1, ..., n\}$ be such that $\tau'_j = \tau_k$. 
Since $\set(\tau') = \set(\tau)$,  $\set(\sigma')\supseteq \set(\tau_{1:i})$, and $\sigma'$ is a prefix of $\tau'$, we have $\set(\tau'_{|\sigma'|+1:n}) \subseteq \set(\tau_{i+1:n})$. Thus, the required value of $k$ in the correct range exists. Also, since $\tau'_{|\sigma'|+1:n}$ is a subschedule of $\tau$, these values of $k$ increase monotonically with $j$.

To show the first inequality of Definition \ref{defn:algdef}, 
$\max_{j'\geq j} \nodesumcost{\tau'_{1:j'}} \le \max_{k'\geq k} \nodesumcost{\tau_{1:k'}}$, let $j^*$ be the value of $j'$ that achieves the max on the left. Let $k^*$ be such that $\tau'_{j^*} = \tau_{k^*}$. By monotonicity discussed above, $j^*\geq j$ implies $k^*\geq k$. The set $\tau'_{1:j^*}$ is a superset of $\tau_{1:k^*}$, differing by some suffixes of blocks $B$ that started later in $\tau$, but merged with $B_0$ on line \ref{line:break} of Algorithm \ref{alg:move}. Since these blocks had $\nodesumcost{B} \le 0$, the SuffixNegative part of Claim \ref{claim:prefix} implies that the cost of these suffixes is non-positive, so $\nodesumcost{\tau'_{1:j^*}} \le \nodesumcost{\tau_{1:k^*}}$. Since $\max_{k'\geq k} \nodesumcost{\tau_{1:k'}} \ge \nodesumcost{\tau_{1:k^*}}$, the inequality follows. The proof of the third inequality is nearly identical (but we need to modify the range of $j'$ to $|\sigma'| \le j' \le j$ and the range of $k'$ to $i \le k' \le k$ in the subscripts of $\max$).
We also note that the inequality 
$\max_{j > i}  \nodesumcost{\tau_{1:j}} \ge  \max_{j > |\sigma'|} \nodesumcost{\tau'_{1:j}}$
in property \ref{prefix_lemma:t_better} follows by the same reasoning, by using $i$ and $|\sigma'|$ instead of $j$ and $k$, respectively.

For the second inequality, $\min_{j'\geq j} \nodesumcost{\tau'_{1:j'}} \le \min_{k'\geq k} \nodesumcost{\tau_{1:k'}}$, let $k^*$ be the value of $k'$ that achieves the min on the right. Let $j^* = \max \{j' : \set(\tau'_{|\sigma|+1:j'}) \subseteq \set(\tau_{i+1:k^*})\}$ be the latest node before $k^*$ that hasn't been merged into $B_0$. Then, as before, $\tau'_{1:j^*}$ is a superset of $\tau_{1:k^*}$ differing by some non-positive suffixes, which implies that 
$\min_{j'\geq j} \nodesumcost{\tau'_{1:j'}} \le  \nodesumcost{\tau'_{1:j^*}} \le \nodesumcost{\tau_{1:k^*}} = \min_{k'\geq k} \nodesumcost{\tau_{1:k'}}$. The fourth inequality has a similar proof.

\vspace{-3mm}
\paragraph{Property \ref{prefix_lemma:t_dominates}.}
Finally, we prove property \ref{prefix_lemma:t_dominates} by using Lemma \ref{lem:concatendominance} about concatenation.
To prove the result, we consider the memory profile of $\tau' = \sigma' \cupdot  \tau$ as consisting of two parts: the memory profile of $\sigma'$, followed by the remaining part. The memory profile of $\tau$ consists of one for $\tau_{1:i}$, followed by the remaining part.
Now, by transitivity, $\sigma' \ledot \sigma$ (proven above) and $\sigma \ledot \tau_{1:i}$ (by assumption) imply $\sigma' \ledot \tau_{1:i}$. Further, the remaining part of $\tau'$ dominates the remaining part of $\tau$ by the proof of property \ref{prefix_lemma:t_better} above. 
Thus, the conclusion follows.
\end{proof}

\subsection{Proof of Common Prefix Extension Lemmas}
\label{subsec:prefixextension}
Before proving \cref{lem:extendtopeak} and \cref{lem:extendtovalley} for \cref{alg:awesomizer},
we first introduce another auxiliary lemma.
Recall that in \cref{alg:awesomizer}, $\alpha$ is defined as a peak-minimizing schedule of $G$, and $\sigma$ is the schedule returned by \cref{alg:awesomizer}.
The following lemma roughly says that $\sigma$  has  the same leftmost peak as $\alpha$ does.

\begin{lemma}[$\sigma$ has the same leftmost peak as $\alpha$]
\label{lem:peaknotchanged}
In \cref{alg:awesomizer}, $S_h = A_h$, and hence $\nodesumcost{S_h} = \hat c$. Moreover, $\max \nodesumcost{\sigma} = \hat c$, and $h$ is the smallest $h'\in \{0,1,\dots,|V|\}$ such that $\nodesumcost{S_{h'}} = \hat c$.  
\end{lemma}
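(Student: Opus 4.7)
The plan is to analyze $\sigma = \lambda \circ \textsc{FindSchedule}(G[V\setminus L])$ piece by piece, controlling the prefix on $L$ via the min-cut property of $L$ and the suffix via the inductive correctness of the recursive call.

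First I would establish $S_h = A_h$. Since $A_h \subseteq L$ by the choice of $L$, the first $h$ nodes of $\alpha$ (which are exactly the elements of $A_h$) all belong to $L$. Because $\lambda = \alpha \capdot L$ preserves the relative order from $\alpha$, these $h$ nodes appear in the same positions $1,2,\dots,h$ of $\lambda$. Hence $S_h = A_h$ and $\nodesumcost{S_h} = \hat c$. This also immediately gives the ``leftmost'' assertion: for every $h' < h$, the prefix $\sigma_{1:h'}$ coincides with $\alpha_{1:h'}$, and by the defining property of $h$ in \cref{line:defn-i} we have $\nodesumcost{A_{h'}} < \hat c$.

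Next I would show that the peak of $\sigma$ does not exceed $\hat c$. Split into two ranges. For $1 \le j \le |L|$, write $\lambda_j = \alpha_{p_j}$, so that $S_j = A_{p_j} \cap L$. The identity
\begin{equation*}
\nodesumcost{A_{p_j} \cap L} + \nodesumcost{A_{p_j} \cup L} = \nodesumcost{A_{p_j}} + \nodesumcost{L}
\end{equation*}
combined with $\nodesumcost{A_{p_j} \cup L} \ge \nodesumcost{L}$ (which holds because $A_{p_j} \cup L$ is a topological cut containing $A_h$, and $L$ is the minimum such cut by \cref{line:defn-mincut-L}) yields $\nodesumcost{S_j} \le \nodesumcost{A_{p_j}} \le \hat c$.

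For $j > |L|$, I would use induction. Let $\rho := \alpha \capdot (V\setminus L)$, a valid schedule of $G[V\setminus L]$. By the same sum/min-cut argument applied to the complement, every prefix $R_k$ of $\rho$ satisfies $\nodesumcost{R_k} = \nodesumcost{A_{q_k}} - \nodesumcost{A_{q_k}\cap L} \le \hat c - \nodesumcost{L}$, where $q_k$ is the position of the $k$-th element of $V\setminus L$ in $\alpha$ and $A_{q_k}\cap L \supseteq A_h$. Thus $\rho$ has peak at most $\hat c - \nodesumcost{L}$ in $G[V\setminus L]$. By the inductive hypothesis, $\textsc{FindSchedule}(G[V\setminus L])$ returns a dominant, in particular peak-minimizing, schedule $\sigma'$, so its peak is at most $\hat c - \nodesumcost{L}$. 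For $j > |L|$ with $k = j - |L|$, $\nodesumcost{S_j} = \nodesumcost{L} + \nodesumcost{\set(\sigma'_{1:k})} \le \hat c$. Combining the two ranges with the fact that $\alpha$ is peak-minimizing (so no schedule of $G$ has peak below $\hat c$) gives $\max \nodesumcost{\sigma} = \hat c$.

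The main obstacle is the suffix bound: we need the recursive subgraph $G[V\setminus L]$ to itself admit a schedule of peak at most $\hat c - \nodesumcost{L}$, which is not obvious a priori. The right perspective is to extract such a schedule from $\alpha$ by restriction and then apply submodularity, i.e.\ the identity $\nodesumcost{A\cap L} + \nodesumcost{A\cup L} = \nodesumcost{A} + \nodesumcost{L}$ together with min-cut optimality of $L$. Once that inequality is in hand, the inductive hypothesis converts it into a peak bound for $\sigma'$ and everything assembles cleanly.
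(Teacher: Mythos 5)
Your proof is correct and follows essentially the same route as the paper's: establish $\sigma_{1:h}=\alpha_{1:h}$ from $A_h\subseteq L$, bound the prefix over $L$ using the minimality of $L$ among topological cuts containing $A_h$ (your modular identity $\nodesumcost{A_{p_j}\cap L}+\nodesumcost{A_{p_j}\cup L}=\nodesumcost{A_{p_j}}+\nodesumcost{L}$ is the same computation the paper runs by contradiction), and bound the suffix by extracting the witness schedule $\rho=\alpha\capdot(V\setminus L)$ of $G[V\setminus L]$ with peak at most $\hat c-\nodesumcost{L}$ and invoking the inductive hypothesis. The paper packages the two bounds as a single claim about the auxiliary schedule $\lambda'=\lambda\cupdot\alpha=\lambda\circ\rho$, but the content is identical.
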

\begin{proof}
By \cref{line:recursion}, $\lambda$ is a prefix of $\sigma$.
Since $A_h \subseteq L$ (by \cref{line:defn-mincut-L}), and $\lambda = \alpha \capdot L$ consists of the nodes of $L$ arranged in the same order as $\alpha$ (by \cref{line:lambda}), we know $\alpha_{1:h}$ must be a prefix of $\lambda$. Hence, 
\[\alpha_{1:h} = \lambda_{1:h} = \sigma_{1:h}.\]
In particular, $S_h = A_h$ and $\nodesumcost{S_h} = \nodesumcost{A_h} =  \hat c$, which means this peak of $\alpha$ also occurs in $\sigma$ at the same position.
Moreover, by the definition of $h$ in \cref{line:defn-i}, for all $0\le h_0< h$, we have $\nodesumcost{S_{h_0}} = \nodesumcost{A_{h_0}} < \hat c$, so $\sigma$ cannot have a peak to the left of $S_h$. Hence, to prove the claimed statement, it remains to show the upper bound, $\max \nodesumcost{\sigma} \le \hat c$. 
Since the definition of $\sigma$ involves recursion, we prove the following claim as an intermediate step.
\begin{claim}
\label{claim:lambdaprime}
Extend partial schedule $\lambda$ to the full schedule $\lambda' = \lambda \cupdot \alpha$. Then, $\max \nodesumcost{\lambda'} \le \hat c$
\end{claim}
We first show how \cref{claim:lambdaprime} implies the desired upper bound $\max \nodesumcost{\sigma} \le \hat c$ by an induction.
Write $\lambda'$ as the concatenation $\lambda' = \lambda \circ \beta$, where $\beta$ is a schedule of $G[V\setminus L]$. Since $\max\nodesumcost{\lambda'} \le \hat c$ by \cref{claim:lambdaprime}, we know the peak cost of $\beta$ in $G[V\setminus L]$ is at most $\hat c - \nodesumcost{L}$. Hence, the peak-minimizing schedule of $G[V\setminus L]$ has peak cost at most $\hat c - \nodesumcost{L}$. We know as an inductive hypothesis that $\textsc{FindSchedule}(G[V\setminus L])$ returns a schedule whose peak cost is not higher than the peak-minimizing schedule of $G[V\setminus L]$, and is in turn at most $\hat c - \nodesumcost{L}$. Hence, 
\begin{align*}
\max    \nodesumcost{\sigma} &= \max \nodesumcost{\lambda \circ \textsc{FindSchedule}(G[V\setminus L])}\\
    & \le \max \{ \max \nodesumcost{\lambda}, \nodesumcost{L} + (\hat c - \nodesumcost{L}) \}\\
    & \le \hat c,
\end{align*} 
as desired.

Now, it remains to prove \cref{claim:lambdaprime}.
\begin{proof}[Proof of \cref{claim:lambdaprime}]
We need to prove
$\nodesumcost{L'_{h'}} \le \hat c$ for all $h'\in \{0,1,\dots,|V|\}$. For $h'\le h$, this immediately follows from $\nodesumcost{L'_{h'}} = \nodesumcost{L_{h'}} = \nodesumcost{A_{h'}}\le \hat c$. We consider the following two remaining cases:
\begin{itemize}
    \item \textbf{Case $h+1 \le h'\le |L|$:}
Note that we still have $L_{h'} = L'_{h'}$     in this case.
Suppose for contradiction that $\nodesumcost{L_{h'}}> \hat c$. Let the nodes in $L_{h'}$ be 
\[ \alpha_1,\alpha_2, \dots,\alpha_h,\, \alpha_{j_1},\alpha_{j_2},\dots,\alpha_{j_{h'-h}},\]
where $h<j_1<j_2<\dots<j_{h'-h}\le |V|$.
Note that $A_{j_{h'-h}} \cup L$ is a topological cut of $G$ (by \cref{lemma:capcuptopo}) that contains $A_h$, and we claim it has strictly smaller cost than $L$, contradicting the definition of $L$ at \cref{line:defn-mincut-L}. Indeed,
\begin{align*}
 \nodesumcost{A_{j_{h'-h}} \cup L} -   \nodesumcost{L}   &= 
    \nodesumcost{A_{j_{h'-h}}} -  \nodesumcost{L\cap A_{j_{h'-h}}}\\
   & = \nodesumcost{A_{j_{h'-h}}} -  \nodesumcost{L_{h'}}\\
   & < \nodesumcost{A_{j_{h'-h}}} - \hat c\\
   & \le \max \nodesumcost{\alpha} - \hat c \\
   & = 0.
\end{align*}
    \item \textbf{Case $h'>|L|$:}
 Since $\lambda' = \lambda \cupdot \alpha$ and $A_h \subseteq L$, we can write $L'_{h'} = L \cup A_j$ for some $j > h$. 
 Suppose for contradiction that $\nodesumcost{L'_{h'}}>\hat c$. Then, 
\begin{align*}
   \nodesumcost{L}  - \nodesumcost{L \cap A_j} & = \nodesumcost{L\cup A_j}  -\nodesumcost{A_j}\\
   & = \nodesumcost{L'_{h'}}  -\nodesumcost{A_j}\\
   & > \hat c -\nodesumcost{A_j}\\
   & \ge \hat c - \max \nodesumcost{\alpha}\\
   & = 0.
\end{align*}
Hence, $L \cap A_j$ is a topological cut of $G$ (by \cref{lemma:capcuptopo}) with strictly smaller cost than $L$, and contains $A_h$ (by $j>h$), contradicting the definition of $L$ at \cref{line:defn-mincut-L}. \qedhere
\end{itemize}
\end{proof}
\end{proof}

Now we are ready to prove \cref{lem:extendtopeak} and \cref{lem:extendtovalley} using the prefix lemma (\cref{lem:prefix-lemma}). The two proofs have a similar overall structure.

\begin{proof}[Proof of \cref{lem:extendtovalley}]
We are given some schedule $\tau$ of $G$ that is not dominated by $\sigma = \textsc{FindSchedule}(G)$, where $\tau_{1:h}=\sigma_{1:h}$. To construct another schedule $\tau'$, we  apply the prefix lemma (\cref{lem:prefix-lemma}) to obtain a partial schedule
    \[\sigma' = MOVE(\tau_{1:h}, \sigma_{1:|L|}, \tau),\] and extend it to the full schedule 
    \begin{equation}
    \label{eqn:case1defnofT'}
     \tau' = \sigma' \cupdot \tau   
    \end{equation}
    as in \cref{lem:prefix-lemma}.
     We have the following two claims.
\begin{claim}$ \tau' \ledot \tau$.
\label{claim:case1claim1}
\end{claim}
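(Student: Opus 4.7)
My plan is to reduce the claim directly to property~\ref{prefix_lemma:t_dominates} of the prefix lemma (\cref{lem:prefix-lemma}). Since $\tau' = \sigma' \cupdot \tau$ with $\sigma' = \textsc{MOVE}(\tau_{1:h}, \sigma_{1:|L|}, \tau)$, that property yields the desired conclusion $\tau' \ledot \tau$ as soon as we verify the hypothesis $\sigma_{1:|L|} \ledot \tau_{1:h}$. So the entire proof reduces to establishing this single dominance between partial schedules.

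Before invoking the prefix lemma, I would briefly check that its hypotheses apply to $\textsc{MOVE}(\tau_{1:h}, \sigma_{1:|L|}, \tau)$: we need $\set(\tau_{1:h}) \subseteq \set(\sigma_{1:|L|}) \subseteq \set(\tau)$. The right inclusion is trivial since $\tau$ is a full schedule. For the left inclusion, combine $\tau_{1:h} = \sigma_{1:h}$ with $\sigma_{1:h} = \alpha_{1:h}$ (from \cref{lem:peaknotchanged}), so $\set(\tau_{1:h}) = A_h$; and $A_h \subseteq L = \set(\sigma_{1:|L|})$ by the defining choice of $L$ at \cref{line:defn-mincut-L}.

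The remaining work is to prove $\lambda \ledot \alpha_{1:h}$, where $\lambda = \sigma_{1:|L|}$ and we have rewritten $\tau_{1:h}$ using the above identities. I would use the pointer-advancement definition (\cref{defn:pointerdef}). The memory profiles of $\lambda$ and $\alpha_{1:h}$ agree on their first $h+1$ entries, namely $0, \nodesumcost{A_1}, \ldots, \nodesumcost{A_h}$, because $\lambda_{1:h} = \alpha_{1:h}$; hence I advance both pointers in lockstep across this common prefix, matching identical values at every step. After position $h$, the pointer for $\alpha_{1:h}$ has reached the end of its profile at value $\nodesumcost{A_h} = \hat c$. I hold it fixed while draining $\lambda$'s pointer all the way to position $|L|$, which is legal exactly because every remaining profile value of $\lambda$ is bounded by $\hat c$. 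This last fact is precisely \cref{claim:lambdaprime} (proved inside \cref{lem:peaknotchanged}), which establishes $\nodesumcost{L_{h'}} \le \hat c$ for all $h' \le |L|$; the entries $\nodesumcost{L_{h'}}$ are exactly the entries of $\lambda$'s memory profile.

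The main conceptual point to keep straight is that $\sigma$ in the prefix lemma plays the role of $\sigma_{1:|L|} = \lambda$, not the full returned schedule $\sigma$; once this bookkeeping is in place, the dominance $\lambda \ledot \alpha_{1:h}$ is a two-phase pointer-advancement argument powered entirely by the peak-preserving property of \cref{lem:peaknotchanged}, and property~\ref{prefix_lemma:t_dominates} of the prefix lemma finishes the proof.
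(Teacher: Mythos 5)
Your proposal is correct and follows essentially the same route as the paper: reduce to Property~\ref{prefix_lemma:t_dominates} of \cref{lem:prefix-lemma}, then establish $\sigma_{1:|L|}\ledot\tau_{1:h}$ by advancing both pointers in lockstep over the common prefix $\sigma_{1:h}=\tau_{1:h}$ and then draining $\sigma$'s pointer to $|L|$, using the fact that all intermediate values stay at most $\hat c=\nodesumcost{S_h}$. The only cosmetic difference is that you cite \cref{claim:lambdaprime} for the bound on $\lambda$'s profile where the paper cites \cref{lem:peaknotchanged} (whose proof rests on that claim), and you make explicit the set-containment hypotheses of the prefix lemma that the paper leaves implicit.
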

\begin{proof}[Proof of \cref{claim:case1claim1}]
By Property~\ref{prefix_lemma:t_dominates} of \cref{lem:prefix-lemma}, it suffices to establish $\sigma_{1:|L|}\ledot \tau_{1:h}$, using the following pointer-advancement argument (\cref{defn:cost-seq-superior}).

We start with two pointers at $S_0$ and $T_0$ respectively.
By reflexivity  of dominance (\cref{lem:superiority-reflexive-transitive}), we have $\sigma_{1:h}\ledot \tau_{1:h}$, so we can advance the two pointers to $S_h$ and $T_h$ respectively.  Then by \cref{lem:peaknotchanged}, $S_h$ is a peak of $\sigma$, so it holds for all $j$ that $\nodesumcost{S_j} \le \nodesumcost{S_h} = \nodesumcost{T_h}$. Hence, we can continue to advance the pointer on $\sigma$ from $S_h$ to $S_{|L|}$, while keeping the cost value along the way at most $\nodesumcost{T_h}$. This proves that $\sigma_{1:|L|} \ledot \tau_{1:h}$.
\end{proof}

\begin{claim} $\sigma' = \sigma_{1:|L|}$.
\label{claim:case1claim2}
\end{claim}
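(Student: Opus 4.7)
The plan is a proof by contradiction that leverages the minimality of the topological cut $L$. Assume $\sigma' \neq \sigma_{1:|L|}$. By Property~\ref{prefix_lemma:prefix_valid} of the Prefix Lemma (\cref{lem:prefix-lemma}), $\sigma'$ is a subschedule of $\sigma_{1:|L|}$, so under our assumption it is a \emph{proper} subschedule, and hence $\sigma' \capdot \sigma_{1:|L|} = \sigma' \neq \sigma_{1:|L|}$.

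Applying Property~\ref{prefix_lemma:s_better} with $j = |L|$ then yields the strict inequality
\[
\nodesumcost{\sigma'} \;=\; \nodesumcost{\sigma' \capdot \sigma_{1:|L|}} \;<\; \nodesumcost{\sigma_{1:|L|}} \;=\; \nodesumcost{L}.
\]
Since $\sigma'$ is a valid partial schedule of $G$ (also guaranteed by \cref{lem:prefix-lemma}), the set $\set(\sigma')$ is a topological cut of $G$. Moreover, using Property~\ref{prefix_lemma:prefix_valid} together with the hypothesis $\tau_{1:h} = \sigma_{1:h}$ and \cref{lem:peaknotchanged} (which gives $\set(\sigma_{1:h}) = S_h = A_h$), we have
\[
A_h \;=\; \set(\sigma_{1:h}) \;=\; \set(\tau_{1:h}) \;\subseteq\; \set(\sigma').
\]

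Thus $\set(\sigma')$ is a topological cut of $G$ containing $A_h$ with $\nodesumcost{\set(\sigma')} < \nodesumcost{L}$, which directly contradicts the choice of $L$ at \cref{line:defn-mincut-L} as a topological cut of minimum weight among those containing $A_h$. The main conceptual step is recognizing that Property~\ref{prefix_lemma:s_better} converts any ``shrinkage'' from $\sigma_{1:|L|}$ to $\sigma'$ into a \emph{strict} cost improvement, which combined with the topological-cut validity of $\set(\sigma')$ immediately refutes the optimality of $L$. There is no real obstacle beyond carefully invoking the right property of $\textsc{MOVE}$ and tracking that $\set(\sigma')$ is both a topological cut and a superset of $A_h$.
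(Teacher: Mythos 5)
Your proof is correct and follows essentially the same route as the paper: both assume $\sigma' \subsetneq \sigma_{1:|L|}$, invoke Property~\ref{prefix_lemma:s_better} of \cref{lem:prefix-lemma} at $j=|L|$ to get $\nodesumcost{\set(\sigma')} < \nodesumcost{L}$, and observe via Property~\ref{prefix_lemma:prefix_valid}, the hypothesis $\tau_{1:h}=\sigma_{1:h}$, and \cref{lem:peaknotchanged} that $\set(\sigma')$ is a topological cut containing $A_h$, contradicting the minimality of $L$ at \cref{line:defn-mincut-L}. No gaps.
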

\begin{proof}[Proof of \cref{claim:case1claim2}]
Recall from Property~\ref{prefix_lemma:prefix_valid} of \cref{lem:prefix-lemma} that $\sigma'$ is a subschedule of $\sigma_{1:|L|}$.
Suppose for contradiction that $\sigma' \neq \sigma_{1:|L|}$, which means $\sigma'$ is strictly contained in $\sigma_{1:|L|}$. Then, we apply Property~\ref{prefix_lemma:s_better} of \cref{lem:prefix-lemma} to $S_{|L|}$, and obtain the strict inequality \[\nodesumcost{S_{|L|}} > \nodesumcost{\set(\sigma') \cap S_{|L|}} = \nodesumcost{\set(\sigma')}.\]
Since  $S_{|L|} = L$ (see \cref{line:recursion}),
 this means we have found a topological cut $\set(\sigma')$ of $G$ with strictly smaller cost than $L$. 
Note also that $\set(\sigma')$ contains $A_h$, which is a consequence of $T_h \subseteq \set(\sigma')$ (by Property~\ref{prefix_lemma:prefix_valid} of \cref{lem:prefix-lemma}), $T_h = S_h$ (by $\sigma_{1:h}=\tau_{1:h}$), and $S_h = A_h$ (by \cref{lem:peaknotchanged}). Together, this contradicts the definition of $L$ at \cref{line:defn-mincut-L} in \cref{alg:awesomizer}.
\end{proof}

Since $\sigma$ does not dominate $\tau$, we know from \cref{claim:case1claim1} that $\sigma$ does not dominate $\tau'$ either (due to the transitivity of dominance in  \cref{lem:superiority-reflexive-transitive}). 
Moreover, $\sigma_{1:|L|}$ is a prefix of $\tau'$ (due to \cref{eqn:case1defnofT'} and \cref{claim:case1claim2}). Hence $\tau'$ satisfies the desired properties.
\end{proof}

\vspace{0.5cm}

Now we prove \cref{lem:extendtopeak}.
\begin{proof}[Proof of \cref{lem:extendtopeak}]
We are given some schedule $\tau$ of $G$ that is not dominated by $\sigma = \textsc{FindSchedule}(G)$.


Recall from \cref{line:defnalpha,line:definecost} in  \cref{alg:awesomizer} that 
$\hat c$ is the peak cost of a peak-minimizing schedule of $G$.  This means schedule $\tau$ cannot have a lower peak, so there exists $i \in \{0,1,\dots,|V|\}$  such that 
\begin{equation}
    \nodesumcost{T_i} \ge \hat c,
    \label{eqn:defnti}
\end{equation}
and we pick the largest such $i $.
Then, we pick the smallest $j \in \{0,1,\dots,|V|\}$ such that \[\set(T_i) \subseteq \set(S_j).\]

Apply the prefix lemma (\cref{lem:prefix-lemma}) to obtain a partial schedule
\[\sigma' = MOVE(\tau_{1:i}, \sigma_{1:j}, \tau),\]
and extend it to the full schedule
    \begin{equation}
    \label{eqn:case2defnofT'}
     \tau' = \sigma' \cupdot \tau
    \end{equation}
    as in \cref{lem:prefix-lemma}.

To prove $\tau'$ satisfies the desired properties, we again have two claims.
\begin{claim}$ \tau' \ledot \tau$.
\label{claim:case2claim1}
\end{claim}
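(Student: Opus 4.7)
The plan is to apply Property~\ref{prefix_lemma:t_dominates} of the prefix lemma (\cref{lem:prefix-lemma}), which reduces the task to establishing $\sigma_{1:j} \ledot \tau_{1:i}$. Before invoking the prefix lemma I would first double-check its hypotheses, namely that $\set(\tau_{1:i}) \subseteq \set(\sigma_{1:j}) \subseteq \set(\tau)$: the first inclusion is exactly how $j$ was defined (as the smallest index such that $\set(T_i) \subseteq \set(S_j)$), and the second is trivial. So once the dominance $\sigma_{1:j} \ledot \tau_{1:i}$ is in hand, the claim $\tau' \ledot \tau$ follows directly from Property~\ref{prefix_lemma:t_dominates}.

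To prove $\sigma_{1:j} \ledot \tau_{1:i}$ I would use the pointer-advancement formulation (\cref{defn:pointerdef}), since it matches the structure of the available facts more cleanly than the algebraic definition. I assemble three ingredients: (a) by \cref{lem:peaknotchanged}, $\max_k \nodesumcost{S_k} = \hat c$, so every prefix of $\sigma$ satisfies $\nodesumcost{S_k} \le \hat c$; (b) by the choice of $i$, $\nodesumcost{T_i} \ge \hat c$; and (c) by the standing cut-nonnegativity assumption of \cref{subsec:dominance-prelim}, every $T_k$ is a topological cut of $G$ and therefore $\nodesumcost{T_k} \ge 0$.

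With these in hand I construct the pointer-advancement sequence in two phases. Initialize both pointers at position $0$, where each profile has value $0$. In the first phase, keep the $\sigma$-pointer at $0$ and advance the $\tau$-pointer one step at a time from $0$ to $i$; the required inequality $\nodesumcost{S_0} = 0 \le \nodesumcost{T_k}$ holds throughout by (c). In the second phase, keep the $\tau$-pointer at $i$ (value $\nodesumcost{T_i} \ge \hat c$) and advance the $\sigma$-pointer one step at a time from $0$ to $j$; the inequality $\nodesumcost{S_k} \le \hat c \le \nodesumcost{T_i}$ holds throughout by (a) and (b). This reaches $(j, i)$ and certifies the desired dominance.

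The main obstacle I anticipate is not the construction itself but verifying that this pointer schedule is actually usable, and in particular that cut-nonnegativity is available to justify the first phase: without it, some intermediate $\nodesumcost{T_k}$ could be negative and the strategy of pinning the $\sigma$-pointer at $0$ would fail. Once that is noted, the remaining work is routine, and no additional structural facts about $\sigma$ or the algorithm (beyond \cref{lem:peaknotchanged}) are needed.
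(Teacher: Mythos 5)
Your proposal is correct and follows essentially the same route as the paper: reduce to $\sigma_{1:j} \ledot \tau_{1:i}$ via Property~\ref{prefix_lemma:t_dominates} of \cref{lem:prefix-lemma}, then certify that dominance by a two-phase pointer advancement using cut-nonnegativity for the $\tau$-pointer and $\nodesumcost{S_k} \le \hat c \le \nodesumcost{T_i}$ (from \cref{lem:peaknotchanged} and the choice of $i$) for the $\sigma$-pointer. The only addition is your explicit check of the prefix lemma's set-inclusion hypotheses, which the paper leaves implicit.
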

\begin{proof}[Proof of \cref{claim:case2claim1}]
By Property~\ref{prefix_lemma:t_dominates} of \cref{lem:prefix-lemma}, it suffices to establish $\sigma_{1:j} \ledot \tau_{1:i}$, using the following pointer-advancement argument (\cref{defn:cost-seq-superior}).

We start with two pointers at $S_0$ and $T_0$ respectively.

By the cut-nonnegativity of $G$, we have $\nodesumcost{T_k}\ge 0$ for all $k$.
Hence, we can move the pointer on $\tau$ from $T_0$ to $T_i$, without ever going below $\nodesumcost{S_0}=0$. 

Next, by \cref{lem:peaknotchanged} we have $\nodesumcost{S_k} \le \hat c$ for all $k$. Combined with \cref{eqn:defnti}, this means $\nodesumcost{S_k} \le \nodesumcost{T_i}$ for all $k$.
Hence,  we can move the pointer on $\sigma$  from $S_0$ to  $S_j$, without going above $\nodesumcost{T_i}$.  This proves that $\sigma_{1:j} \ledot \tau_{1:i}$.
\end{proof}
\begin{claim}
$\sigma_{1:h}$ is a prefix of $\sigma'$.
\label{claim:case2claim2}
\end{claim}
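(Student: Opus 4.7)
The plan is to show $S_h \subseteq \set(\sigma')$; since $\sigma'$ is a subschedule of $\sigma_{1:j}$ (and thus visits its nodes in $\sigma$-order), this immediately gives that $\sigma_{1:h}$ is a prefix of $\sigma'$. My strategy is to pin down the peak of $\sigma'$'s memory profile to be \emph{exactly} $\hat c$ at some position, and then invoke the strict-inequality clause of Property~\ref{prefix_lemma:s_better} of the prefix lemma at the corresponding $\sigma$-index.

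For the upper bound on $\sigma'$'s profile, the proof of the prefix lemma already derives $\sigma' \ledot \sigma_{1:j}$ (in the paragraph labeled ``$\sigma'$ dominates $\sigma$''); combined with $\max \sigma_{1:j} \le \max \sigma = \hat c$ from \cref{lem:peaknotchanged}, this gives $\nodesumcost{\set(\sigma'_{1:k})} \le \hat c$ for all $k$. For the matching lower bound, I would use the maximality of $i$: since $\nodesumcost{T_k} < \hat c$ for every $k > i$, Property~\ref{prefix_lemma:t_better} forces $\max_{k > |\sigma'|} \nodesumcost{\set(\tau'_{1:k})} < \hat c$. But $\tau'$ is a full schedule, so its peak is at least $\hat c$ (by definition of $\hat c$), and therefore must be attained at some position $k_0 \le |\sigma'|$. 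Because $\tau' = \sigma' \cupdot \tau$, the prefix $\tau'_{1:k_0}$ coincides with $\sigma'_{1:k_0}$, so $\nodesumcost{\set(\sigma'_{1:k_0})} = \hat c$ exactly.

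To finish, let $\ell_0$ be the position in $\sigma$ of the $k_0$-th node of $\sigma'$. Since $\sigma'$ is ordered according to $\sigma$, the first $k_0$ nodes of $\sigma'$ are precisely $\set(\sigma') \cap S_{\ell_0}$. Applying Property~\ref{prefix_lemma:s_better} at index $\ell_0$ gives
$\hat c = \nodesumcost{\set(\sigma') \cap S_{\ell_0}} \le \nodesumcost{S_{\ell_0}}$,
while \cref{lem:peaknotchanged} gives $\nodesumcost{S_{\ell_0}} \le \hat c$; together these pinch $\nodesumcost{S_{\ell_0}} = \hat c$. Because Property~\ref{prefix_lemma:s_better} holds with \emph{equality} (not strict) at $j=\ell_0$, its strict-inequality clause (in contrapositive form) yields $\sigma' \capdot \sigma_{1:\ell_0} = \sigma_{1:\ell_0}$, i.e.\ $S_{\ell_0} \subseteq \set(\sigma')$. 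Finally, by \cref{lem:peaknotchanged} $h$ is the smallest index at which $\sigma$ attains cost $\hat c$, so $\ell_0 \ge h$, and hence $S_h \subseteq S_{\ell_0} \subseteq \set(\sigma')$, as required.

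The main subtlety is the lower bound: one must realize that Property~\ref{prefix_lemma:t_better}, combined with the maximality of $i$ and the peak-minimality of $\hat c$, squeezes \emph{every} peak of $\tau'$ of value $\ge \hat c$ into the $\sigma'$ prefix; this is what forces $\max \sigma' = \hat c$ exactly. Once the tight peak is located, Property~\ref{prefix_lemma:s_better} does all the remaining work through its strict clause.
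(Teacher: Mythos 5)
Your proof is correct and follows essentially the same route as the paper's: both use the maximality of $i$ plus Property~\ref{prefix_lemma:t_better} to force the $\ge\hat c$ peak of $\tau'$ into the $\sigma'$-prefix, then squeeze $\nodesumcost{\set(\sigma')\cap S_{\ell_0}} \le \nodesumcost{S_{\ell_0}} \le \hat c$ into equalities and invoke the equality conditions of Property~\ref{prefix_lemma:s_better} and \cref{lem:peaknotchanged}. The only cosmetic difference is that you separately establish $\max\nodesumcost{\sigma'}=\hat c$ via the internal fact $\sigma'\ledot\sigma_{1:j}$ from the prefix lemma's proof, whereas the paper obtains the same pinch purely from the chained inequalities (ineq1)--(ineq3); either way the argument goes through.
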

\begin{proof}[Proof of \cref{claim:case2claim2}]
By Property~\ref{prefix_lemma:t_better} 
of \cref{lem:prefix-lemma}, we have \[\max_{k > i} \nodesumcost{T_k} \ge \max_{k > |\sigma'|} \nodesumcost{T'_k},\]
and since we chose $i$ to be the largest one satisfying \cref{eqn:defnti}, we have \[\max_{k > i} \nodesumcost{T_k} < \hat c .\]
Combining the two inequalities gives 
\begin{equation}
\label{eqn:nopeakafterm}
 \max_{k > |\sigma'|} \nodesumcost{T'_k}< 
 \hat c.
\end{equation}
 
On the other hand, using again the fact that $\hat c$ is the peak cost of a peak-minimizing schedule of $G$, we have
\begin{equation}
\label{eqn:haspeak}
 \max\nodesumcost{\tau'} \ge   \hat c.
\end{equation}

Combining \cref{eqn:nopeakafterm} and \cref{eqn:haspeak}, we conclude that $\max \nodesumcost{\tau'}$ must be  attained by $T'_k$ for some $k\le |\sigma'|$.
Since $\sigma'$ is a prefix of $\tau'$ (by \cref{eqn:case2defnofT'}),  $T'_k$ is the same as $S'_k$, so we have 
\[
 \nodesumcost{S'_k} \ge \hat c
 \]
for some $k\le |\sigma'|$.
We then express this $S'_{k}$ as $S'_{k} = \set (\sigma' \capdot \sigma_\ell)$ for some $0\le \ell\le j$, which is possible because $\sigma'$ is a subschedule of $\sigma_{1:j}$ (by Property~\ref{prefix_lemma:prefix_valid} of \cref{lem:prefix-lemma}). Hence, 
\begin{equation}
\label{eqn:ineq1}
 \nodesumcost{\set(\sigma'\capdot \sigma_\ell)} \ge \hat c. 
\end{equation}

By Property~\ref{prefix_lemma:s_better} of \cref{lem:prefix-lemma} applied to $S_\ell$, 
\begin{equation}
\label{eqn:ineq2}
    \nodesumcost{\set(\sigma' \capdot \sigma_\ell)} \le \nodesumcost{S_\ell},
\end{equation} 
where the equality can hold only if $\set(\sigma'\capdot \sigma_\ell) = S_\ell$.

By \cref{lem:peaknotchanged}, $S_h$ is the leftmost peak of $\sigma$ with peak cost $\hat c$, so
\begin{equation}
\label{eqn:ineq3}
 \hat c \ge \nodesumcost{S_\ell},
\end{equation}
where the equality can hold only if $\ell\ge h$.

Combine the three inequalities \cref{eqn:ineq1,eqn:ineq2,eqn:ineq3}, we conclude that all equalities must hold, and the equality conditions $\ell\ge h$, $\set(\sigma'\capdot \sigma_\ell) = S_\ell$ are achieved.  Together they imply $S_h \subseteq S_\ell \subseteq \set(\sigma')$. Hence, $\sigma'$ must contain $\sigma_{1:h}$ as a prefix, since $\sigma'$ is a subschedule of $\sigma_{1:j}$ (by Property~\ref{prefix_lemma:prefix_valid} of \cref{lem:prefix-lemma}).
\end{proof}

Since $\sigma$ does not dominate $\tau$, we know by \cref{claim:case2claim1} that $\sigma$ does not dominate $\tau'$. 
Moreover, $\sigma_{1:h}$ is a prefix of $\tau'$ (due to \cref{eqn:case2defnofT'} and \cref{claim:case2claim2}).
Hence $\tau'$ satisfies the desired properties.
\end{proof}

\subsection{A Dominant Schedule Passes Through Min-Cut}
\label{subsec:mincut}
\mincutlemma*
\begin{proof}
Suppose $\max \nodesumcost{\tau}$ is attained by $\nodesumcost{T_i}$.
Apply the prefix lemma (\cref{lem:prefix-lemma}) and obtain
\[ \sigma' = MOVE(\tau_{1:i}, \sigma, \tau),\]
which is a subschedule of $\sigma$ (by Property~\ref{prefix_lemma:prefix_valid}), and let $\tau'=\sigma'\cupdot \tau$ as in \cref{lem:prefix-lemma}.
  We claim that $\sigma' = \sigma$. Otherwise, $\sigma'$ is strictly contained in $\sigma$, and  applying Property~\ref{prefix_lemma:s_better} of \cref{lem:prefix-lemma} to $\sigma$ gives the strict inequality $\nodesumcost{\set(\sigma')} < \nodesumcost{\set(\sigma)}$, contradicting the fact that $S = \set(\sigma)$ is a minimum topological cut. Hence, $\sigma'= \sigma$ and $\tau'=\sigma \cupdot \tau$.

For all $0\le j\le |V|$,   we have
\begin{align}
  \nodesumcost{S \cap T_j}   &=   \nodesumcost{T_j}-(\nodesumcost{S\cup T_j} - \nodesumcost{S}) \nonumber\\
  &\le  \nodesumcost{T_j}, \label{eqn:ktiti}
\end{align}
where we used the fact $S\cup T_j$ is a topological cut and cannot have cost lower than $\nodesumcost{S}$.

By Property~\ref{prefix_lemma:t_dominates} of \cref{lem:prefix-lemma}, in order to show the  desired statement $\tau'\ledot \tau $, it suffices to show $\sigma \ledot \tau_{1:i}$.
We now show $\sigma \ledot \tau_{1:i}$ using a pointer-advancement argument (\cref{defn:cost-seq-superior}) with the following two phases: 
\begin{itemize}
    \item 
Recall that $\sigma$ is a subschedule of $\tau$. Denote $i'=|S \cap T_i|$, so that $\set(\sigma_{1:i'})=S\cap T_i$. Applying \cref{eqn:ktiti} to all $0\le j\le i$ implies $\sigma_{1:i'} \ledot \tau_{1:i}$.
Hence, we can advance the pointer for $\sigma$ from the beginning to $S \cap T_i$, and advance the pointer for $\tau_{1:i}$ from $T_0$ to $T_i$.
\item 
Move the pointer for  $\sigma$ from $S \cap T_i$ to $S$.
The costs along the way can be expressed as $\nodesumcost{S\cap T_j}$ for $i\le j\le |V|$, since $\sigma$ is a subschedule of $\tau$.
These costs never exceed $\nodesumcost{T_i}$,  since $\nodesumcost{S\cap T_j}\le  \nodesumcost{T_j} \le \nodesumcost{T_i }$ for all $j$ (by \cref{eqn:ktiti} and the definition of $i$).
\end{itemize}
This proves that $\sigma \ledot \tau_{1:i}$, which implies $\sigma \cupdot \tau \ledot \tau$ by Property~\ref{prefix_lemma:t_dominates} of \cref{lem:prefix-lemma}.
\end{proof}
\section{Linearization (Deferred Proofs from \cref{sec:linearization})}
\label{sec:linearization-proofs}

In this section we prove \cref{lem:exchangelinearize} and \cref{lem:linearization}. 
First, we need an auxiliary lemma about memory profiles in the \truemem\  model.
\begin{lemma}
\label{lem:oddhigher}
Let $G=(V,E)$ be a DAG with $|V|=n$ in the \truemem\  model, and $\sigma$ a schedule of $G$. Then, for odd $i$, 
\[ 
\nodeweightedprofile{\sigma}_i \ge \max\{\nodeweightedprofile{\sigma}_{i-1},\nodeweightedprofile{\sigma}_{i+1}\}. \]
Similarly, when $\internalprofile{\sigma}$ is defined, it also holds that
\[ 
\internalprofile{\sigma}_i \ge \max\{\internalprofile{\sigma}_{i-1},\internalprofile{\sigma}_{i+1}\}. \]
\end{lemma}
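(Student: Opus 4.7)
Using $0$-indexing so that $\nodeweightedprofile{\sigma}_0 = 0$, $\nodeweightedprofile{\sigma}_{2k-1} = \memduring{\sigma_k}$, and $\nodeweightedprofile{\sigma}_{2k} = \memafter{\sigma_k}$ (and analogously for $\internalprofile{\sigma}$, where the endpoints $\sz{s}$ and $\sz{t}$ sit at even positions), the odd entries of either profile are exactly the $\memduring{\cdot}$ values. The lemma therefore reduces to proving, for each $k$ in the appropriate range, the two inequalities
\[ \memduring{\sigma_k} \ge \memafter{\sigma_{k-1}} \quad \text{and} \quad \memduring{\sigma_k} \ge \memafter{\sigma_k}, \]
with the convention $\memafter{\sigma_0} := 0$. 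For $\internalprofile{\sigma}$ the same two inequalities (restricted to $k \ge 2$) cover every odd index, since $\sz{s}$ and $\sz{t}$ are at even positions and are never referenced by the lemma.

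For the first (``left'') inequality, I plan to observe directly from \cref{def:irl} that the predecessor-output sum appearing in $\memduring{\sigma_k}$ coincides with the one appearing in $\memafter{\sigma_{k-1}}$: both are $\sum_{v \in S_{k-1}:\, \outn{v} \cap T_k \ne \varnothing} \sizefn(v)$. Hence $\memduring{\sigma_k} - \memafter{\sigma_{k-1}} = \sizefn(\sigma_k) + \scratch(\sigma_k)$, which is nonnegative by the first clause $\scratch(v) \ge -\sizefn(v)$ of \eqref{eqn:scratchcondition}. The boundary case $k = 1$ works identically since $\memafter{\sigma_0}$ is set to $0$ and $\memduring{\sigma_1} = \sizefn(\sigma_1) + \scratch(\sigma_1) \ge 0$.

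For the second (``right'') inequality, which is the observation made informally just after \cref{def:irl}, I plan to split the sum defining $\memafter{\sigma_k}$ into the contributions from $v = \sigma_k$ and from $v \in S_{k-1}$, then cancel against $\memduring{\sigma_k}$. Using $T_k \setminus T_{k+1} = \{\sigma_k\}$ and $\outn{\sigma_k} \subseteq T_{k+1}$, this yields
\[ \memduring{\sigma_k} - \memafter{\sigma_k} \;=\; \sum_{\substack{v \in S_{k-1}:\, \outn{v} \subseteq S_k,\\ \sigma_k \in \outn{v}}} \sizefn(v) \;+\; \sizefn(\sigma_k)\cdot \mathbf{1}[\outn{\sigma_k} = \varnothing] \;+\; \scratch(\sigma_k). \]
Any $u$ with $\outn{u} = \{\sigma_k\}$ automatically lies in $S_{k-1}$ and satisfies the conditions of the sum, so the sum is at least $\sum_{u:\, \outn{u} = \{\sigma_k\}} \sizefn(u)$. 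A case split on whether $\outn{\sigma_k}$ is empty then makes the right-hand side nonnegative: if $\outn{\sigma_k} \ne \varnothing$, invoke the second clause $\scratch(v) \ge -\sum_{u:\, \outn{u}=\{v\}} \sizefn(u)$ of \eqref{eqn:scratchcondition}; if $\outn{\sigma_k} = \varnothing$, the extra $\sizefn(\sigma_k)$ term combines with $\scratch(\sigma_k)$ via the first clause.

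The main obstacle is purely bookkeeping: tracking which predecessor outputs are ``still alive'' in memory at the four successive moments around $\sigma_{k-1}$ and $\sigma_k$, and recognizing that the two clauses of \eqref{eqn:scratchcondition} were designed precisely to make these two comparisons work out.
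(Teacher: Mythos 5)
Your proof is correct and takes essentially the same route as the paper's: both reduce the claim to the two inequalities $\memduring{\sigma_k}\ge\memafter{\sigma_{k-1}}$ (first clause of \cref{eqn:scratchcondition}) and $\memduring{\sigma_k}\ge\memafter{\sigma_k}$ (second clause, via the observation that every $u$ with $\delta^+(u)=\{\sigma_k\}$ belongs to the set of inputs released after $\sigma_k$), and your bookkeeping of which outputs are still live is the same as the paper's set $\mathcal{R}$. The one inaccuracy is the remark that $\sz{t}$ is ``never referenced'': at the last odd index of $\internalprofile{\sigma}$ the right neighbor is $\sz{t}$ rather than $\memafter{\sigma_n}=0$, so the required inequality there is $\memduring{\sigma_n}\ge\sz{t}$ --- but this follows immediately from your displayed identity together with the second clause of \cref{eqn:scratchcondition}, so nothing substantive is missing (the paper glosses over the same point).
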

\begin{proof}
In the following proof, we use ``odd'' to mean memory during the execution of a node (because they appear in odd indices of a schedule), and use ``even'' to mean memory after finishing a node. Recall from \cref{def:irl}, that for any odd index $i = 2j + 1$, $\profile{\sigma}_i = \memduring{\sigma, \sigma_{j+1}}$ and similarly for an even index, $\profile{\sigma}_{2j} = \memafter{\sigma, \sigma_j}$.

Thus, for $i = 2j+1$ we have the following:
\begin{align*}
    \profile{\sigma}_i - \profile{\sigma}_{i-1} = \scr{\sigma_{j+1}} + \sz{\sigma_{j+1}} \geq 0
\end{align*}
where the last inequality follows from the first part of \cref{eqn:scratchcondition}. Similarly, we have the following.
\begin{align*}
   \profile{\sigma}_i - \profile{\sigma}_{i+1} &= \memduring{\sigma, \sigma_{j+1}} - \memafter{\sigma, \sigma_{j+1}} \\
   &= \sz{\sigma_{j+1}} + \scr{\sigma_{j+1}} + \sum_{v \in \mathcal{R}} \sz{v} - \sz{\sigma_{j+1}}\cdot\mathbbm{1}_{\delta^+(\sigma_{j+1}) = \phi}
   \end{align*}
where $\mathcal{R} = \{\sigma_k : 1\leq k \leq j, \delta^+(\sigma_k) \cap \{\sigma_{j+1}, \ldots,\sigma_n\} = \{\sigma_{j+1}\}\}$ is the set of previously scheduled nodes whose outputs can now be deallocated and $\mathbbm{1}_{\delta^+(\sigma_{j+1}) = \phi}$ is an indicator function to indicate whether node $\sigma_{j+1}$ has any out-neighbors. However, note that $\{\sigma_k: \delta^+(\sigma_k) = \{\sigma_{j+1}\}\} \subset \mathcal{R}$ and hence, the second part of \cref{eqn:scratchcondition} implies that $\scr{\sigma_{j+1}} + \sum_{v \in \mathcal{R}} \sz{v} \geq 0$. Plugging this into the equation above yields, $\profile{\sigma}_i - \profile{\sigma}_{i+1} \geq 0$.
Hence we have shown that 
\[ 
\profile{\sigma}_i \ge \max\{\profile{\sigma}_{i-1},\profile{\sigma}_{i+1}\}. \]

The second inequality in the lemma statement follows from the same proof.
\end{proof}

\begin{corollary}
\label{cor:oddeven}
Let $G=(V,E)$ be a DAG with $|V|=n$ in the \truemem\  model, and $\sigma,\tau$ are schedules of $G$ such that $\nodeweightedprofile{\sigma}\ledot \nodeweightedprofile{\tau}$.  Recall that $\nodeweightedprofile{\cdot }$ is $0$-indexed.
Then there is a pointer-advancement process showing $\nodeweightedprofile{\sigma}\ledot \nodeweightedprofile{\tau}$ (see \cref{defn:cost-seq-superior}), such that: 
\begin{itemize}
    \item When the pointer on $\tau$ advances, the pointer on $\sigma$ must be at an even index.
    \item When the pointer on $\sigma$ advances, the pointer on $\tau$ must be at an odd index.
\end{itemize}
The same statement holds for $\internalprofile{\cdot }$.
\end{corollary}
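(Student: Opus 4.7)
The plan is to start from any valid pointer-advancement (which exists since $\profile{\sigma} \ledot \profile{\tau}$, cf.\ \cref{defn:cost-seq-superior}) and iteratively modify it via local swaps until the parity constraints hold. Call a move \emph{bad} if it is a $b$-advance while $a$ is odd (type~A) or an $a$-advance while $b$ is even (type~B); the corollary asserts that a pointer-advancement with zero bad moves exists.

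The key swap operation rearranges two consecutive moves. If positions $t, t+1$ advance $b$ then $a$, traversing $(a,b) \to (a,b+1) \to (a+1,b+1)$, and $a$ is odd, I replace them by $(a,b) \to (a+1,b) \to (a+1,b+1)$. The new intermediate condition $\profile{\sigma}[a+1] \le \profile{\tau}[b]$ follows from $\profile{\sigma}[a+1] \le \profile{\sigma}[a] \le \profile{\tau}[b]$: the first inequality is \cref{lem:oddhigher} (odd indices are peaks of $\profile{\sigma}$), and the second is the validity of the original advancement. A symmetric swap handles the case where $a$ then $b$ advance with $b$ even, using $\profile{\tau}[b+1] \ge \profile{\tau}[b]$ from the same lemma applied to $\profile{\tau}$.

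To eliminate bad moves, I would repeatedly dissolve the leftmost bad move together with its cluster. Suppose the leftmost bad move is of type~A at position $t$. Since all earlier moves are good, a short parity analysis shows the state at the start of position $t$ must have been reached by a good $a$-advance at position $t-1$ (any earlier $b$-advance at odd $a$ would itself have been bad, contradicting $t$ leftmost), so $b = b_0$ is odd. Scan forward to the first $a$-advance at position $t_k$; such $t_k$ exists because $a$ must eventually return to an even value to finish at $|A|-1$. All intervening moves at positions $t, \ldots, t_k-1$ are $b$-advances at $a$ odd, hence bad. Apply the swap above first to the pair $(t_k-1, t_k)$, then $(t_k-2, t_k-1)$, and so on down to $(t, t+1)$. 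Each swap is valid because $a$ remains odd throughout the sliding. After all $k$ swaps, the $a$-advance has migrated to position $t$ where $b = b_0$ is odd (good), and positions $t+1, \ldots, t_k$ are $b$-advances at $a+1$ even (good). Since the state at the end of position $t_k$ is unchanged from the original sequence, the remainder of the advancement is untouched, so no new bad moves are introduced elsewhere. Thus the bad count strictly decreases, and iteration terminates. The type~B case is symmetric, and the argument for $\internalprofile{\cdot}$ is identical since \cref{lem:oddhigher} applies to it too.

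The main obstacle is the parity bookkeeping: verifying that the leftmost bad cluster always has the claimed structure (immediately preceded by a good $a$-advance and followed eventually by an $a$-advance, with only bad $b$-advances in between), and confirming that sliding the $a$-advance through the cluster preserves validity at every intermediate step without introducing bad moves further along the sequence.
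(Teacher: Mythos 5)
Your proposal is correct and follows essentially the same route as the paper: both rest entirely on \cref{lem:oddhigher} (odd-indexed entries are local maxima) and both repair an arbitrary pointer-advancement by rescheduling advances so that the $\sigma$-pointer leaves odd indices immediately and the $\tau$-pointer leaves even indices immediately. The paper states this as a one-line "advance immediately" argument, while you formalize the same modification as a sequence of local swaps with an explicit decreasing count of bad moves; the details you verify (validity of each swap, structure of the leftmost bad cluster) are exactly the bookkeeping the paper's terse proof leaves implicit.
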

\begin{proof}
From \cref{lem:oddhigher} we know an odd-indexed entry in $\nodeweightedprofile{\cdot}$ is no lower than its two neighbors. Hence, in any pointer-advancement process showing $\nodeweightedprofile{\sigma}\ledot \nodeweightedprofile{\tau}$, whenever the pointer on $\sigma$ is at an odd index, we can immediately advance it to the next even index which has a lower or equal value, without hurting the property that it points to a lower or equal value than the one pointed value on $\tau$. An analogous claim holds whenever the pointer on $\tau$ is at an even index. Hence we can make the pointer-advancement process have the two desired properties.
\end{proof}

Now we are ready to prove the exchange lemma. 
\exchangelemma*
\begin{proof}
Let $s,t$ be the unique source and unique sink of $G[U]$.  Then $\pi_1=\pi'_1=s, \pi_{|U|}=\pi_{|U|}'=t$. Consider a pointer-advancement process showing $\internalprofile{\pi'}\ledot \internalprofile{\pi}$ that satisfies the property in \cref{cor:oddeven}. For each node $\pi_i$ ($2\le i\le |U|$), consider the time window when the pointer on $\internalprofile{\pi}$ is at the value 
$\memduring{\pi, \pi_i}$
, and let $P_i$ denote the sequence of nodes $\pi_{k}'$ for which the pointer on $\internalprofile{\pi'}$ passes through 
$\memduring{\pi', \pi'_k}$
during this time window. Note that the concatenation $P_2\circ P_3\circ \dots P_{|U|}$ is exactly $(\pi'_2,\pi'_3,\dots,\pi'_{|U|})$.

We define the claimed schedule $\sigma'$ as follows: in schedule $\sigma$, simultaneously for all $2\le i\le |U|$, we replace the occurrence of $\pi_i$ by the sequence $P_i$.   Since $\pi$ is a subschedule of $\sigma$, and $\pi_1=\pi_1'=s$ appears before $P_2$ in $\pi'$, we immediately know that $\pi'$ appears as a subsequence of $\sigma'$.

First we show that $\sigma'$ is indeed a valid schedule of $G$. Suppose it is not, then $G$ contains a directed edge $(\sigma'_i,\sigma'_j)$ (for some $i>j$). There are a few cases:
\begin{itemize}
    \item $\sigma'_i\notin U$ and $\sigma'_j\notin U$. Then these two nodes are not involved by the replacement, and should have the same order as in the original schedule $\sigma$. We know this is impossible because $\sigma$ is a valid schedule of $G$.
    \item $\sigma'_i\in U$ and $\sigma'_j\in U$. Then by definition of our replacement, these two nodes should be ordered by $\sigma'$ in the same way as they are ordered by $\pi'$. We know this is impossible because $\pi'$ is a valid schedule of $G[U]$.
     \item $\sigma'_i\in U$ and $\sigma'_j\notin U$. Since $G[U]$ is a closed subgraph of $G$, by definition, the path $(\sigma'_i,\sigma'_j)$ must go through $t$, meaning that $\sigma'_i = t$. Suppose $\sigma'_i \in P_{k}$, which means $\sigma'$ is generated from the replacement of $\pi_k$. Then, since $\sigma'_j\notin U$ is not involved in the replacement, and it occurs to the left of $\sigma'_i$ in schedule $\sigma'$ (because $j<i$),  we know that before the replacement, $\sigma'_j$ must occur to the left of $\pi_k$ in schedule $\sigma$. However, in $G$ there is a path from $\pi_k$ to $t=\sigma'_i$ (because $t$ is the unique sink of $G[U]$), and then to $\sigma'_j$,  contradicting the fact that $\sigma$ is a valid schedule of $G$.
     \item $\sigma'_i\notin U$ and $\sigma'_j\in U$. This case can be proved in the same way as the previous case.
\end{itemize}

Next, we show $\internalprofile{\sigma'}\ledot \internalprofile{\sigma}$ via a pointer-advancement argument (\cref{defn:cost-seq-superior}). 
Since the prefixes of $\internalprofile{\sigma'}$ and $\internalprofile{\sigma}$ before $s$ are identical,  we  can first advance both pointers to the point where $s$ has just finished, i.e., in both schedules $\sigma$ and $\sigma'$ currently incur cost $\memafter{\sigma, s}$. Then, because of the way we constructed $\sigma'$, it is possible to run the following loop (with the invariant that before/after each loop iteration, both pointers are at even indices):
\begin{itemize}
    \item If the next step in $\internalprofile{\sigma}$ is $\memduring{\sigma, \pi_i}$ for some node $\pi_i \in U$: Then we first advance the pointer on $\internalprofile{\sigma}$ to $\memduring{\sigma, \pi_i}$, and then keep advancing the pointer on $\internalprofile{\sigma'}$ until all nodes in $P_i$ are completed, i.e., it should now point to $\memafter{\sigma', v}$ where $v$ is the last node in $P_i$. Finally we advance the pointer on $\internalprofile{\sigma}$ by one more step to $\memafter{\sigma, \pi_i}$.

    We now argue that this pointer increment sequence maintains the required invariant that the pointer on $\internalprofile{\sigma'}$ always points to a lower memory value than the pointer on $\internalprofile{\sigma}$.
     We have
$\memduring{\sigma, \pi_i} = \memduring{\pi, \pi_i} + \sum_{u \in \mathcal{S}} \sz{u}$ where $\mathcal{S} \subseteq V \setminus U$ is the set of nodes not in $U$ that have been scheduled before $\pi_i$ in $\sigma$ and have at least one unscheduled out-neighbor. Similarly in the schedule $\sigma'$, for any node $v \in P_i$, we have $\memduring{\sigma', v} = \memduring{\pi', v} + \sum_{u \in \mathcal{S}} \sz{u}$ (Note the use of the same set $\mathcal{S}$ in both cases. This is because by induction, the two schedules have scheduled exactly the same set of vertices from outside $U$.) and similarly $\memafter{\sigma', v} = \memafter{\pi', v} + \sum_{u \in \mathcal{S}} \sz{u}$. However, since $\internalprofile{\pi'} \ledot \internalprofile{\pi}$, by construction of $P_i$, we must have that $\memduring{\pi, \pi_i} \geq \memduring{\pi', v} \geq \memafter{\pi', v}$ and the invariant is maintained.
        
    \item Otherwise, if the next step in $\internalprofile{\sigma}$ is $\memduring{\sigma, x}$ for some $x \notin U$: Then we first advance the pointer on $\internalprofile{\sigma}$ to $\memduring{\sigma, x}$. Note that by construction, the next step in $\internalprofile{\sigma'}$ must be $\memduring{\sigma', x}$. So, we then advance the pointer on $\internalprofile{\sigma'}$ by two steps to $\memafter{\sigma', x}$. Finally, we advance the pointer on $\internalprofile{\sigma}$ by one more step to $\memafter{\sigma, x}$.

    Once again, it can be easily verified that the required invariant from dominance is maintained in this case.
\end{itemize}
After $t$ is finished by both  pointers, both sequences are identical, so we can advance both pointers to the very end. The constructed pointer advancement sequence satisfies the requirements of \cref{defn:pointerdef} and thus we have shown that $\internalprofile{\sigma'} \ledot \internalprofile{\sigma}$.

\end{proof}

Next, we show that for any memory profile (simply an odd length sequence of positive numbers), there exists a path graph such that in the computation graph model, the memory profile of the unique schedule equals the given memory profile.

\begin{lemma}
\label{lem:chain}
Let $A = (0=a_0, a_1, ... a_{2n})$ be a memory profile with $a_i \ge 0$ for all $i$. Then there is a computation graph $\fancyg$ whose underlying graph $G$ is a path with the property that the unique valid schedule $\sigma$ on $\fancyg$ is such that $\nodeweightedprofile{\sigma} = A$.
\end{lemma}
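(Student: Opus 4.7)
The plan is to explicitly construct the desired path computation graph $\fancyg = (G, \sizefn, \scratchfn)$ with $n$ nodes $v_1 \to v_2 \to \cdots \to v_n$ and solve for $\sizefn$ and $\scratchfn$ node by node so that the forced schedule $\sigma = (v_1, v_2, \ldots, v_n)$ produces exactly the profile $A$. On a path, the formulas for memory are local: for $1 \le i \le n$,
\begin{align*}
  \memduring{\sigma, v_i} &= \sizefn(v_{i-1}) + \sizefn(v_i) + \scratchfn(v_i), \\
  \memafter{\sigma, v_i}  &= \sizefn(v_i) \cdot \mathbbm{1}[i < n],
\end{align*}
where we set $\sizefn(v_0) := 0$. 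Matching these to the odd/even entries of $A$ then forces the parameters.

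First I would read off the sizes from the even entries: set $\sizefn(v_i) := a_{2i}$ for $i < n$, and $\sizefn(v_n) := 0$ (this is consistent because the definition of $\memafter{\sigma, v_n}$ always equals $0$ on the last node, so any legitimate $A$ must have $a_{2n} = 0$). Next I would solve for the scratch values from the odd entries: $\scratchfn(v_1) := a_1 - a_2$, and for $1 < i < n$, $\scratchfn(v_i) := a_{2i-1} - a_{2i-2} - a_{2i}$, with $\scratchfn(v_n) := a_{2n-1} - a_{2n-2}$. A direct substitution then verifies $\nodeweightedprofile{\sigma} = A$ on every coordinate, which establishes the equality of profiles modulo the feasibility checks.

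The main obstacle is verifying that the constructed parameters satisfy the admissibility constraint \cref{eqn:scratchcondition}, i.e.\ $\scratchfn(v_i) \ge -\sizefn(v_i)$ and $\scratchfn(v_i) \ge -\sum_{u:\delta^+(u)=\{v_i\}} \sizefn(u)$. Plugging in the above formulas, these reduce, respectively, to $a_{2i-1} \ge a_{2i-2}$ and $a_{2i-1} \ge a_{2i}$ — that is, each odd entry is at least as large as both of its even neighbors. This is exactly the ``odd entries are peaks'' property established in \cref{lem:oddhigher}, which any genuine computation-graph memory profile must satisfy (and which is implicit in calling $A$ a \emph{memory profile}). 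Once this structural condition on $A$ is noted, the verification is a single line per case. Nonnegativity of $\sizefn$ is immediate from $a_{2i} \ge 0$.

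Finally I would observe that $\sigma$ is the only topological ordering of the path, so the constructed $\fancyg$ admits no other schedule, completing the lemma. The whole construction is crucially enabled by the paper's allowance of negative scratch: whenever $a_{2i-1} < a_{2i-2} + a_{2i}$ the scratch value $\scratchfn(v_i)$ is genuinely negative, which is precisely the use-case foreshadowed in the discussion preceding \cref{def:irl}.
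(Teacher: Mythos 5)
Your construction is the same as the paper's: read $\sizefn(v_i)$ off the even entries of $A$, solve for $\scratchfn(v_i)$ from the odd entries, and observe that a path has a unique topological order. You are in fact more careful than the paper on two points its proof silently skips: (a) verifying \cref{eqn:scratchcondition}, which, as you correctly reduce it, requires each odd entry to dominate its two even neighbors --- a property of genuine profiles by \cref{lem:oddhigher} but not implied by the stated hypothesis $a_i \ge 0$ alone; and (b) the last entry, where by \cref{def:irl} $\memafter{\sigma, v_n} = 0$ regardless of $\sizefn(v_n)$, so matching forces $a_{2n} = 0$ (the paper instead sets $\sz{v_n} = a_{2n}$ and asserts $\memafter{\sigma, v_n} = \sz{v_n}$, which only makes sense once the path is glued into a larger graph where $v_n$ retains out-neighbors). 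Your reading of ``memory profile'' as implicitly carrying these conditions is the intended one, and with it the argument is complete.
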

\begin{proof}
    Let $G = (V,E)$ be a directed path with $V = \{v_1, v_2, \ldots, v_{n}\}$ as the set of vertices and let $E = \{(v_i, v_{i+1})\}_{i=1}^{n-1}$ as the set of edges. For all $i\leq i \leq n$, let $\sz{v_i} := a_{2i}$ and $\scr{v_i} := a_{2i-1} - \sz{v_i} - \sz{v_{i-1}}$ where we define $\sz{v_{0}} = 0$ for notational convenience.

    Let $B = (0=b_0, b_1, \ldots, b_{2n}) = \profile{\sigma}$ denote the memory profile of the unique schedule $\sigma$ of $G$. For any even $j = 2i$, by definition, we have $b_j =  \memafter{\sigma, v_i}$. However, since $G$ is a path graph, we have $\memafter{\sigma, v_i}= \sz{v_i}$ and hence $b_{2i} = a_{2i}$. Similarly, for and odd $j = 2i-1$, we have $b_j = \memduring{\sigma, v_i}$. Again, since $G$ is a path graph, we have $\memduring{\sigma, v_i} = \sz{v_{i-1}} + \scr{v_i} + \sz{v_i} = a_{2i-1}$ and the lemma follows. 
\end{proof}

Finally, we can show the necessary linearization lemma.

\linearization*
\begin{proof}
Given a dominant schedule $\pi$ of the isolated subgraph $G[U]$, we use \cref{lem:chain} to construct a path graph whose unique schedule has memory profile identical to that of $\pi$. We replace the isolated subgraph $G[U]$ in $G$ by the newly constructed path graph.
It is clear that any schedule of $G'$ is also a valid schedule of $G$ (since the unique ordering on nodes on $U$ enforced by the path is a valid topological ordering). 

Moreover, take any schedule $\sigma$ of $G$. We can use \cref{lem:exchangelinearize} to replace the subschedule of $\sigma$ on $G[U]$ by the dominant schedule $\pi$ of $G[U]$, so that the obtained schedule $\sigma'$ dominates $\sigma$. Moreover, $\sigma'$ is also a valid schedule of $G'$ since it contains $\pi$ as a subschedule. So the dominant schedule of $G'$ also dominates $\sigma'$. By transitivity \cref{lem:superiority-reflexive-transitive}, we know the dominant schedule of $G'$ dominates $\sigma$. Since $\sigma$ is arbitrary, we have proved that the dominant schedule of $G'$ is a dominant schedule of $G$.
\end{proof}

\section{Segmentation (Deferred Proofs from \cref{sec:segments})}
\label{sec:appendix:segments}

We first prove \cref{lem:segment-peak-valley}.
\segmentpeakvalley*
\begin{proof}
  We prove the equalities one-by-one. Since the left-hand-sides are always the appropriate endpoint of the region, it suffices to prove that the given index minimizes (for valleys) or maximizes (for peaks) the function $x \to \nodesumcost{\pi_{1:x}}$ within the appropriate range.
  
  We now consider the first inequality. If $i = 0$, then $\valley{i} = \mincut$ and it is a minimizer over all of $1:(k-1)$. If $i > 0$, then it is a minimizer over $\peak{i-1}:(k-1)$ which is a superset of $\valley{i}:\peak{i}$. If $i < 0$, then it is a minimizer over $1:\peak{i}$ which is a superset of $\valley{i}:\peak{i}$.
  
  We now consider the second and third inequalities. If $i \ge 0$, then it is a maximizer over $\valley{i}:(k-1)$ which is a superset of $\valley{i}:\peak{i}$ and $\peak{i}:\valley{i+1}$. If $i < 0$, then it is a maximizer over $1:\valley{i+1}$ which is a superset of $\valley{i}:\peak{i}$ and $\peak{i}:\valley{i+1}$.
  
  We now consider the fourth inequality. If $i = -1$, then $\valley{i+1} = \mincut$ and it is a maximizer over all of $1:(k-1)$. If $i > -1$, then $\valley{i+1}$ is a maximizer over $\peak{i}:(k-1)$ which is a superset of $\peak{i}:\valley{i+1}$. If $i < -1$, then $\valley{i+1}$ is a maximizer over $1:\peak{i+1}$ which is a superset of of $\peak{i}:\valley{i+1}$.
\end{proof}

Using this technical lemma, now we can prove \cref{thm:segment-consecutive}.
\segmentconsecutive*
\begin{proof}
   The high-level proof plan is as follows. We construct $\sigma'$ from $\sigma$ by going segment-by-segment and moving all of the nodes of that segment to the location its peak currently is. This will dominate the original schedule because for all the nodes between the left endpoint of the segment (a valley) and the peak, the contribution of this path to memory will drop to the value of the left endpoint. Similarly, for all the nodes between the peak and the right endpoint of the segment (also a valley), the contribution of this path to memory will drop to the value of the right endpoint.
  
  Let $\segment$ be the any segment that does not appear consecutively in $\sigma$. We will make it appear consecutively while only moving its nodes within $\sigma$ to arrive at some $\sigma''$ such that $\sigma'' \ledot \sigma$. Each time we do this, an additional segment appears consecutively within $\sigma$, so eventually all segments must appear consecutively within $\sigma$.
  
  By definition, $\segment = \valley{i} + 1:\valley{i+1}$ for some $i$. Let $\sigma'''$ be $\sigma$ with every node of $\pi_{\segment}$ removed except $\pi_{\peak{i}}$. Let $\sigma''$ be $\sigma'''$ with $\pi_{\peak{i}}$ replaced with $\pi_{\segment}$ (consecutively). Note that this does not violate any precedence constraints because $\pi$ is an isolated subgraph for node sum.
  
  To prove dominance, we appeal to \cref{defn:cost-seq-superior} and construct a sequence of pairs of indices. As long as neither index refers to a vertex in $\pi_{\segment}$, they must refer to the same vertex (since $\sigma$ and $\sigma''$ agree on everything except the order of $\pi_{\segment}$). In this case, we can safely increment both by examining the weight of the vertex referred to: if it is positive, then we first increment the dominated schedule, $\sigma$; negative, then we first increment the dominating schedule, $\sigma'$.
  
  On the other hand, if at least one of them points to a vertex in $\pi_{\segment}$, we handle it as follows. If the index of $\sigma$ points to such a node, increment it immediately until such a time that it points to $\pi_{\peak{i}}$. Since this is where we inserted $\pi_{\segment}$ in $\sigma''$, we can now advance the index of $\sigma''$ past all of $\pi_{\segment}$. We then resume increment the index of $\sigma$ every time we see a node from $\pi_{\segment}$.
  
  To finish, we need to show that the memory usage of $\sigma$ at its pointer is at least the memory usage of $\sigma''$ at its pointer. Consider two paired indices produced by this construction; $j$ which indexes into $\sigma$ and $j''$ which indexes into $\sigma''$. The situation becomes more clear when we split memory usage into the contribution from the $\set(\pi_{\segment})$ and the contribution from $V \setminus \set(\pi_{\segment})$:
  \begin{align*}
    \nodesumcost{\set(\sigma_{1:j})}
      &= \sum_{v \in \set(\sigma_{1:j})} w_v \\
      &= \sum_{v \in \set(\sigma_{1:j}) \cap \set(\pi_{\segment})} w_v +
         \sum_{v \in \set(\sigma_{1:j}) \setminus \set(\pi_{\segment})} w_v \\
    \nodesumcost{\set(\sigma''_{1:j''})}
      &= \sum_{v \in \set(\sigma''_{1:j''})} w_v \\
      &= \sum_{v \in \set(\sigma''_{1:j''}) \cap \set(\pi_{\segment})} w_v +
         \sum_{v \in \set(\sigma''_{1:j''}) \setminus \set(\pi_{\segment})} w_v
  \end{align*}
  First, we consider $\set(\sigma_{1:j}) \cap \set(\pi_{\segment})$ versus $\set(\sigma''_{1:j''}) \cap \set(\pi_{\segment})$, i.e. the contribution from $\set(\pi_{\segment})$. We break into cases: (i) if $\set(\sigma_{1:j})$ does not contain $\pi_{\peak{i}}$, then $\set(\sigma''_{1:j''}) \cap \set(\pi_{\segment})$ is empty, (ii) if $\set(\sigma_{1:j})$ does contain $\pi_{\peak{i}}$ and $\sigma_j \ne \pi_{\peak{i}}$, then $\set(\sigma''_{1:j''}) \cap \set(\pi_{\segment})$ is $\set(\pi_{\segment})$, and (iii) if $\set(\sigma_{1:j})$ does contain $\pi_{\peak{i}}$ and $\sigma_j \ne \pi_{\peak{i}}$, then $\set(\sigma''_{1:j''}) \cap \set(\pi_{\segment})$ could be anything from empty to all of $\set(\pi_{\segment})$.
  
  We invoke \cref{lem:segment-peak-valley} to assert that the memory contribution to $\nodesumcost{\set(\sigma''_{1:j''})}$ is smaller than the memory contribution to $\nodesumcost{\set(\sigma_{1:j})}$ in all cases. Case (i) is covered by the first equality: $\sigma$ is between $\valley{i}$ and $\peak{i}$ but $\sigma''$ is still at $\valley{i}$. Case (ii) is covered by the fourth equality: $\sigma$ is between $\peak{i}$ and $\valley{i}$ but $\sigma''$ has made it to $\valley{i+1}$. Case (iii) is covered by the second and third equalities: $\sigma$ is at $\peak{i}$ and $\sigma''$ is between $\valley{i}$ and $\valley{i+1}$.
  
  Next, we consider $\set(\sigma_{1:j}) \setminus \set(\pi_{\segment})$ versus $\set(\sigma''_{1:j''}) \setminus \set(\pi_{\segment})$, i.e. the contribution from outside $\set(\pi_{\segment})$. This is easy because we advance the indices together for such nodes; the memory contribution to $\nodesumcost{\set(\sigma''_{1:j''})}$ is smaller than the memory contribution to $\nodesumcost{\set(\sigma_{1:j})}$ due to our decision to favor the former by giving it positive nodes second and negative nodes first.
  
  Hence $\sigma''$ dominates $\sigma$ and by repeatedly applying this we can generate the desired $\sigma'$, completing the proof.
\end{proof}

Finally, we prove \cref{thm:segment-merge}.
\segmentmerge*
\begin{proof}
  We begin by applying \cref{thm:segment-consecutive} to $\sigma$ to assume without loss of generality that $\sigma$ arranges the nodes of each segment consecutively. From here, to get from $\sigma$ to $\sigmasort$, we can just move entire segments as scheduling units. In particular, we just need to be able to exchange adjacent segments that are either in the wrong order according to their sort value or which have equal sort value. That is enough to get from $\sigma$ to $\sigmasort$ by simulating bubble-sort (and using $\sigmasort$ to break ties).
  
  When proving that such an exchange produces a schedule that dominates the original one, it suffices to only consider the two adjacent segments being swapped and none of the other segments. This is due to \cref{lem:concatendominance}: everything before the two segments identical and hence dominates due to reflexivity of dominance (\cref{lem:superiority-reflexive-transitive}), as well as everything after the two segments. Hence if we can prove that the two segments swapped dominates the two segments unswapped, we can concatenate all these pieces and get that the resulting schedule dominates the original.
  
  At this point, we break into three cases. Either (i) one segment is nonnegative and one segment is negative, (ii) both segments are nonnegative, or (iii) both segments are negative.
  
  \textbf{Case (i): One segment is nonnegative and one segment is negative}. In this case, the original order must be the nonnegative segment followed by the negative segment. We prove dominance using \cref{defn:cost-seq-superior} and construct a sequence of pairs of indices. We advance the index for the original order to its peak, then completely advance the index for the new order, then finish advancing the index for the original order.
  
  We just need to prove that at all times, the index for the original order incurs more memory usage than the index for the new order. Since everything before these two segments was the same, both memory profiles begin at the same memory usage. When we advance the index for the original order to its peak, we definitely (step 1) advance from the left valley of the nonnegative segment its peak and may additionally (step 2) advance from peak of the nonnegative segment to its right valley and (step 3) from the left valley of the negative segment to its peak. All steps are okay due to \cref{lem:segment-peak-valley}. As we move along step 1, we know that memory usage remains higher than the left valley due to the first equality of the lemma. As we move along step 2, we know that memory usage remains higher than the right valley due to the fourth equality of the lemma, which is higher than the left valley due to being a nonnegative segment. As we move along step 3, we know that memory usage remains higher than the left valley due to the first equality of the lemma.
  
  When we next completely advance the index for the new order, we invoke the second and third inequalities of \cref{lem:segment-peak-valley}. In particular, the peak of a segment dominates all other points in the segment. Since the index into the original schedule is currently sitting at an original peak, we only need to show the memory usage at that original peak is larger than the memory usage at either of our new peaks, which are larger than the memory usage of their respective segments. Here is where we use the fact that one of the segments is nonnegative and one is negative. In the original order, the nonnegative segment is relative to the initial memory and the negative segment is relative to inital memory plus the difference between the nonnegative segment's valleys (which is nonnegative). In the new, swapped order, the negative segment is relative to the initial memory, and the nonnegative segment is relative to the initial memory plus the difference between the negative segment's valleys (which is negative). This implies that the original nonnegative peak is higher than the new nonnegative peak and the original negative peak is at least as high as the new negative peak. Hence an original peak must be higher, and this pointer advancement is safe.
  
  Finally, we finish advancing the index for the old order. This is just a mirror case to advancing the index to the old peak. We may (step 1) advance from the peak of the nonnegative segment to its right valley and (step 2) from the left valley of the negative segment to its peak. Then we definitely (step 3) advance from the peak of the negative segment to its right valley. We again invoke \cref{lem:segment-peak-valley}. As we move along step 1, we know that memory usage remains higher than the right valley due to the fourth equality, and we are missing the total contribution of the negative segment, which is negative. As we move along step 2, we know that memory usage remains higher than the left valley due to the first equality, which is higher than the right valley due to being a negative segment. As we move along step 3, we know that the memory usage remains higher than the right valley due to the fourth equality. That completes the analysis of this case.
  
  \textbf{Case (ii): Both segments are nonnegative} We again prove dominance using \cref{defn:cost-seq-superior} and construct a sequence of pairs of indices. We advance the index for the original order to the peak of the second segment, then completely advance the index for the new order, then finish advancing the index for the original order.
  
  We just need to prove that at all times, the index for the original order incurs more memory usage than the index for the new order. Since everything before these two segments was the same, both memory profiles begin at the same memory usage. When we advance the index for the original order to the peak of the second segment, we definitely (step 1) advance from the left valley of the first segment to its peak, (step 2) advance from peak of the first segment to its right valley, and (step 3) advance from the left valley of the second segment to its peak. All steps are okay due to \cref{lem:segment-peak-valley}. As we move along step 1, we know that memory usage remains higher than the left valley due to the first equality of the lemma. As we move along step 2, we know that memory usage remains higher than the right valley due to the fourth equality of the lemma, which is higher than the left valley due to being a nonnegative segment. As we move along step 3, we know that memory usage remains higher than the left valley due to the first equality of the lemma.
  
  When we next completely advance the index for the new order, we invoke the second and third inequalities of \cref{lem:segment-peak-valley}. In particular, the peak of a segment dominates all other points in the segment. We claim that the index into the original schedule is currently sitting at an original peak (i.e. the second original peak is at least as high as the first original peak), and that this peak is higher than either of the new peaks, which are in turn higher than their respective segments. Here is where we use the definition of sort value for nonnegative segments. In the original order, the first segment has equal or higher sort value. This means it has an equal or smaller gap between its peak and its right valley (due to the fact we inverted the positive numbers). In the original order, the first peak is initial memory plus the difference between the first segment's valleys (which is nonnegative) plus the first segment's gap. In the original order, the second peak is initial memory plus the difference between the first segment's valleys (which is nonnegative) plus the second segment's valleys (which is nonnegative) plus the second segment's gap. Again, since the second segment has a equal or larger gap, the second peak is at least as high as the first one, in the original order.
  
  In the new, swapped order, the first peak to occur is initial memory plus the difference between the originally-second segment's valleys plus the originally-second segment's gap. This is dominated by the original second peak. The second peak to occur is initial memory plus the both differences between the segment's valleys plus the first segment's gap. This is also dominated by the original second peak. Hence this part of the pointer advanacement is safe.
  
  Finally, we finish advancing the index for the old order. We advance from the peak of the negative segment to its right valley. We again invoke \cref{lem:segment-peak-valley}. We know that the memory usage remains higher than the right valley due to the fourth equality. That completes the analysis of this case.
  
  \textbf{Case (iii): Both segments are negative} This is essentially a mirror case to case (ii). We again prove dominance using \cref{defn:cost-seq-superior} and construct a sequence of pairs of indices. We advance the index for the original order to the peak of the first segment, then completely advance the index for the new order, then finish advancing the index for the original order.
  
  We just need to prove that at all times, the index for the original order incurs more memory usage than the index for the new order. Since everything before these two segments was the same, both memory profiles begin at the same memory usage. When we advance the index for the original order to the peak of the first segment, we advance from the left valley of the first segment to its peak. This step is okay due to \cref{lem:segment-peak-valley}. We know that memory usage remains higher than the left valley due to the first equality of the lemma.
  
  When we next completely advance the index for the new order, we invoke the second and third inequalities of \cref{lem:segment-peak-valley}. In particular, the peak of a segment dominates all other points in the segment. We claim that the index into the original schedule is currently sitting at an original peak (i.e. the first original peak is at least as high as the second original peak), and that this peak is higher than either of the new peaks, which are in turn higher than their respective segments. Here is where we use the definition of sort value for negative segments. In the original order, the first segment has equal or higher sort value. This means it has an equal or larger gap between its peak and its left valley (due to the fact we inverted the negative numbers). In the original order, the first peak is initial memory plus the first segment's gap. In the original order, the second peak is initial memory plus the difference between the first segment's valleys (which is negative) plus the second segment's gap. Again, since the first segment has a equal or larger gap, the first peak is at least as high as the second one, in the original order.
  
  In the new, swapped order, the first peak to occur is initial memory plus the plus the originally-second segment's gap. This is dominated by the original first peak. The second peak to occur is initial memory plus the differences between the originally-second segment's valleys (negative) plus the originally-first segment's gap. This is also dominated by the original first peak. Hence this part of the pointer advanacement is safe.
  
  Finally, we finish advancing the index for the old order. We (step 1) advance from the peak of the first segment to its right valley, (step 2) advance from the left valley of the second segment to its peak, and (step 3) advance from the peak of the second segment to its right valley. We again invoke \cref{lem:segment-peak-valley}. As we move along step 1, we know that memory usage remains higher than the right valley due to the fourth equality, and we are missing the total contribution of the negative segment, which is negative. As we move along step 2, we know that memory usage remains higher than the left valley due to the first equality, which is higher than the right valley due to being a negative segment. As we move along step 3, we know that the memory usage remains higher than the right valley due to the fourth equality. That completes the analysis of this case.
  
  We have now finished the proof of all cases, showing that exchanging segments out of sort order or with equal sort value is okay. Hence we can produce the desired $\sigmasort$ by conducting a bubble-sort. This completes the proof.
\end{proof}

\end{document}